\tikzstyle{ipe stylesheet} = [
\definecolor{red}{rgb}{1,0,0}
\definecolor{blue}{rgb}{0,0,1}
\definecolor{green}{rgb}{0,1,0}
\definecolor{yellow}{rgb}{1,1,0}
\definecolor{orange}{rgb}{1,0.647,0}
\definecolor{gold}{rgb}{1,0.843,0}
\definecolor{purple}{rgb}{0.627,0.125,0.941}
\definecolor{gray}{rgb}{0.745,0.745,0.745}
\definecolor{brown}{rgb}{0.647,0.165,0.165}
\definecolor{navy}{rgb}{0,0,0.502}
\definecolor{pink}{rgb}{1,0.753,0.796}
\definecolor{seagreen}{rgb}{0.18,0.545,0.341}
\definecolor{turquoise}{rgb}{0.251,0.878,0.816}
\definecolor{violet}{rgb}{0.933,0.51,0.933}
\definecolor{darkblue}{rgb}{0,0,0.545}
\definecolor{darkcyan}{rgb}{0,0.545,0.545}
\definecolor{darkgray}{rgb}{0.663,0.663,0.663}
\definecolor{darkgreen}{rgb}{0,0.392,0}
\definecolor{darkmagenta}{rgb}{0.545,0,0.545}
\definecolor{darkorange}{rgb}{1,0.549,0}
\definecolor{darkred}{rgb}{0.545,0,0}
\definecolor{lightblue}{rgb}{0.678,0.847,0.902}
\definecolor{lightcyan}{rgb}{0.878,1,1}
\definecolor{lightgray}{rgb}{0.827,0.827,0.827}
\definecolor{lightgreen}{rgb}{0.565,0.933,0.565}
\definecolor{lightyellow}{rgb}{1,1,0.878}
\definecolor{black}{rgb}{0,0,0}
\definecolor{white}{rgb}{1,1,1}
\newcommand{\cC}{\mathcal{C}}
\newcommand{\cH}{\mathcal{H}}
\newcommand{\cT}{\mathcal{T}}
\newcommand{\tw}{\mathrm{tw}}
\newcommand{\yw}{\mathsf{tree} \textnormal{-} \mu}
\newcommand{\ywT}{\mu_G}
\newcommand{\tin}{\mathsf{tree} \textnormal{-} \alpha}
\newcommand{\N}{\mathbb{N}}
\newcommand{\Q}{\mathbb{Q}}
\newcommand{\MWIS}{\textsc{MWIS}\xspace}
\newcommand{\HH}{\mathcal{H}}
\renewcommand{\P}{\textsf{P}\xspace}
\newcommand{\NP}{\textsf{NP}\xspace}
\newcommand{\FPT}{\textsf{FPT}\xspace}
\newcommand{\Wone}{\textsf{W[1]}\xspace}
\renewcommand{\leq}{\leqslant}
\renewcommand{\geq}{\geqslant}
\renewcommand{\le}{\leqslant}
\renewcommand{\ge}{\geqslant}
\newcommand{\wei}{\mathfrak{w}}
\newcommand{\skel}[1]{\ensuremath{\skull({#1})}}
\newcommand{\limb}[1]{\ensuremath{\dddot{\text{\textleaf}}({#1})}}
\newcommand{\Bag}{X}
\renewcommand{\root}{\mathsf{root}}
\renewcommand{\O}{\mathcal{O}}
\newcommand{\Oh}{\O}
\newcommand{\dist}{\textsf{dist}}
\newcommand{\forked}[1]{\Breve{#1}}
\newcommand{\msotwo}{$\mathsf{MSO}_2$\xspace}
\newcommand{\cmsotwo}{$\mathsf{CMSO}_2$\xspace}
\newcommand{\cpmsotwo}{$\mathsf{C}_{\leq p}\mathsf{MSO}_2$\xspace}
\newcommand{\msotypes}{\mathsf{Types}}
\newcommand{\msotype}{\mathsf{type}}
\newcommand{\dom}{\mathrm{dom}}
\newcommand{\Tab}{\mathsf{Tab}\xspace}
\newcommand{\misets}{\mathbb{I}\xspace}
\newcommand{\FF}{\mathbb{F}\xspace}
\newtheorem{theorem}{Theorem}[section]
\newtheorem{claim}{Claim}[theorem]
\newtheorem{corollary}[theorem]{Corollary}
\newtheorem{proposition}[theorem]{Proposition}
\newtheorem{lemma}[theorem]{Lemma}
\newtheorem{conjecture}[theorem]{Conjecture}
\newtheorem{observation}[theorem]{Observation}
\theoremstyle{definition}
\newtheorem{remark}[theorem]{Remark}
\crefname{claim}{Claim}{Claims}
\newenvironment{claimproof}[1][\unskip]{\noindent {\emph{Proof of Claim #1.\space}}}{\hfill$\triangleleft$ \smallskip}
\newlength{\RoundedBoxWidth}
\newsavebox{\GrayRoundedBox}
\newenvironment{GrayBox}[1]%
   {\setlength{\RoundedBoxWidth}{.93\columnwidth}
    \def\boxheading{#1}
    \begin{lrbox}{\GrayRoundedBox}
       \begin{minipage}{\RoundedBoxWidth}}%
   {   \end{minipage}
    \end{lrbox}
    \begin{center}
    \begin{tikzpicture}%
       \node(Text)[draw=black!20,fill=white,rounded corners,inner sep=2ex,text width=\RoundedBoxWidth]
             {\usebox{\GrayRoundedBox}};
        \coordinate(x) at (current bounding box.north west);
        \node [draw=white,rectangle,inner sep=3pt,anchor=north west,fill=white]
        at ($(x)+(6pt,.75em)$) {\boxheading};
    \end{tikzpicture}
    \end{center}}
\newenvironment{defproblemx}[1]{\noindent\ignorespaces%
                                \FrameSep=6pt%
                                \parindent=0pt%
                \begin{GrayBox}{#1}%
                \begin{tabular*}{\columnwidth}{!{\extracolsep{\fill}}@{\hspace{.1em}} >{\itshape} p{1.5cm} p{0.86\columnwidth} @{}}%
            }{
                \end{tabular*}%
                \end{GrayBox}%
                \ignorespacesafterend
            }
\newcommand{\problemTask}[3]{%
  \begin{defproblemx}{#1}
    Input: & #2 \\
    Task: & #3
  \end{defproblemx}
}
\title{Tree decompositions meet induced matchings:\\ beyond Max Weight Independent Set\thanks{Lima is supported by the Independent Research Fund Denmark grant agreement number 2098-00012B. Milani{\v c}, Mur{\v s}i{\v c}, and \v Storgel are supported in part by the Slovenian Research and Innovation Agency (Milani{\v c}: I0-0035, research program P1-0285 and research projects J1-3001, J1-3002, J1-3003, J1-4008, J1-4084, and N1-0102; Mur{\v s}i{\v c}: research project J1-4008; \v Storgel: research program P1-0383 and research projects  J1-3002 and J1-4008).
Milani{\v c} is also supported in part by the research program CogniCom (0013103) at the University of Primorska.
Okrasa is supported by the Polish National Science Centre grant no. 2021/41/N/ST6/01507.
Rzążewski is supported by the European Research Council (ERC) under the European Union’s Horizon 2020 research and innovation programme grant agreement number 948057.
}}
\author{
Paloma T.\ Lima\\
\small IT University of Copenhagen, Denmark\\
\small \texttt{palt@itu.dk}
\and
Martin Milani\v c\\
\small FAMNIT and IAM,\\
\small University of Primorska, Slovenia\\
\small \texttt{martin.milanic@upr.si}
\and
Peter Mur\v si\v c\\
\small FAMNIT, University of Primorska, Slovenia\\
\small \texttt{peter.mursic@famnit.upr.si}
\and
Karolina Okrasa\\
\small Warsaw University of Technology, Poland\\
\small \texttt{karolina.okrasa@pw.edu.pl}
\and
Pawe\l{} Rz\k{a}\.zewski\\
\small Warsaw University of Technology and\\
\small University of Warsaw, Poland\\
\small \texttt{pawel.rzazewski@pw.edu.pl}
\and
Kenny \v Storgel\\
\small Faculty of Information Studies and\\
\small FAMNIT, University of Primorska, Slovenia\\
\small \texttt{kennystorgel.research@gmail.com}
}
\date{}
\begin{document}
\begin{titlepage}
\maketitle

\begin{abstract}
For a tree decomposition $\mathcal{T}$ of a graph $G$, by $\mu(\mathcal{T})$ we denote the size of a largest induced matching in $G$ all of whose edges intersect one bag of $\mathcal{T}$.
\emph{Induced matching treewidth} of a graph $G$ is the minimum value of $\mu(\mathcal{T})$ over all tree decompositions $\mathcal{T}$ of~$G$.
Yolov~[SODA 2018] proved that \textsc{Max Weight Independent Set} can be solved in polynomial time for graphs of bounded induced matching treewidth.

In this paper we explore what other problems are tractable in such classes of graphs.
As our main result, we give a polynomial-time algorithm for \textsc{Min Weight Feedback Vertex Set}.
We also provide some positive results concerning packing induced subgraphs, which in particular imply a PTAS for the problem of finding a largest induced subgraph of bounded treewidth.

These results suggest that in graphs of bounded induced matching treewidth, one could find in polynomial time a maximum-weight induced subgraph of bounded treewidth satisfying a given \textsf{CMSO}$_2$ formula.
We conjecture that such a result indeed holds and prove it for graphs of bounded \emph{tree-independence number}, which form a rich and important family of subclasses of graphs of bounded induced matching treewidth.

We complement these algorithmic results with a number of complexity and structural results concerning induced matching treewidth.
\end{abstract}
\def\thepage{}
\thispagestyle{empty}
\end{titlepage}

 \newpage
 \tableofcontents
 \newpage
 \clearpage

\section{Introduction}
Structured graph decompositions and the corresponding graph width parameters have become one of the central tools for dealing with algorithmically hard graph problems.
One of the  best known and well-studied graph width parameters is \emph{treewidth}, which was {introduced} independently by several groups of authors with various motivations~\cite{bertele1972nonserial,Halin1976,DBLP:journals/jct/RobertsonS84,Arnborg1987}.
Roughly speaking, treewidth measures how similar the graph is to a tree.
A graph $G$ with small treewidth can be represented by a tree $T$ whose every node holds a small subset (called {\emph{bag}}) of vertices of $V(G)$, such that the connectivity properties of $G$ are reflected in the local structure of $T$ (and the bags).
A tree decomposition $\cT$ consists of the tree $T$ and a function that assigns a bag to each node of $T$,
and the \emph{width} of $\cT$ is the size of its largest bag minus 1 (this is by convention).
The \emph{treewidth} $\tw(G)$ is the minimum width of a tree decomposition of $G$.

Having a tree decomposition of small width is excellent for algorithmic applications, as we can mimic the standard bottom-up dynamic programming on trees.
For example, in order to solve \textsc{Max Weight Independent Set} (\MWIS), the dynamic programming table for each node of $T$ should be indexed by all possible independent sets in the corresponding bag.
Consequently, if each bag is \emph{small}, say, its size is bounded by a constant $k$, the running time we obtain is {$2^k\cdot (n+|V(T)|)^{\mathcal{O}(1)}$ where $n \coloneqq |V(G)|$.}
As $|V(T)|$ can also be assumed to be polynomial in~$n$, this yields a polynomial-time algorithm for \MWIS{} {when restricted to} graphs of bounded treewidth.

Such an approach {has} found numerous applications in solving classic \NP-hard problems on graphs with bounded treewidth; see the handbook by Cygan et al.~\cite[Section 7]{platypus}.
The richness of the family of problems that can be solved by a dynamic programming over a tree decomposition is witnessed by the celebrated meta-theorem by Courcelle~\cite{DBLP:journals/iandc/Courcelle90}. It asserts that each problem expressible in \emph{Counting Monadic Second Order Logic} (\textsf{CMSO$_2$})%
\footnote{In this logic one can use vertex, edge, and (vertex or edge) set variables,
check vertex-edge incidence, quantify over variables,
and apply counting predicates modulo fixed integers. See~\cref{sec:cmso} for a formal introduction.}%
can be solved in polynomial time (actually, in linear time) {for} graphs of bounded treewidth.

An astute reader might notice a caveat in the results mentioned above: they assume that the input graph is given with a corresponding decomposition. This might be a serious obstacle, as computing an optimal tree decomposition is \NP-hard~\cite{Arnborg1987,DBLP:journals/corr/abs-2301-10031}.
However, if treewidth is bounded by a constant, an optimal decomposition can be found in polynomial time~\cite{DBLP:journals/siamcomp/Bodlaender96,DBLP:conf/stoc/KorhonenL23}.
Actually, in the context discussed above even an approximation of an optimal decomposition is sufficient.
Luckily, such approximation algorithms are known not only for treewidth~\cite{DBLP:journals/jct/RobertsonS95b,DBLP:journals/siamcomp/BodlaenderDDFLP16,DBLP:conf/focs/Korhonen21}, but also for other parameters discussed in the paper~\cite{DBLP:conf/soda/Yolov18,dallard2022computing}.
Thus, we will implicitly assume that the instance graphs are always provided with a corresponding tree decomposition.

Let us go back to the \MWIS problem, which serves as the starting point of our investigations.
Notice that in the argument sketched above we do not really need the fact that the treewidth is bounded by a constant.
Indeed, if the size of each bag is bounded by a \emph{logarithmic function of $n$}, then the number of all independent sets in a single bag is bounded by $2^{\Oh(\log n)} = n^{\Oh(1)}$, i.e., by a polynomial, which still yields a polynomial-time algorithm for \MWIS. This is one of the motivations to study classes of graphs with \emph{logarithmic treewidth},
which has been a very active topic in structural graph theory in recent years~\cite{threepath,DBLP:conf/soda/BonamyBDEGHTW23}.

However, the crux of the algorithm above is not really the size of the bag, but {the number of independent sets inside each bag}.
Consider, for example, the class of chordal graphs, i.e., graphs that do not contain an induced cycle with at least four vertices. Equivalently, these are graphs that admit a tree decomposition whose every bag is a clique.
Even though these cliques can be arbitrarily large, they still contain very few (that is, polynomially many) independent sets and thus such a decomposition can be used to solve \MWIS in polynomial time.

This observation is heavily extended by the framework of \emph{potential maximal cliques} by Bouchitt\'e and Todinca~\cite{BouchitteT01}. Essentially, there we work with an implicitly given tree decomposition whose every bag is promised to contain only a constant number of vertices from the optimal independent set.
This approach allows us, for example, to solve \MWIS and many related problems {for} graph classes with polynomially many minimal separators~\cite{BouchitteT01, FominTV15}.
With some care, it can also be used for other graph classes, e.g., graphs that exclude a fixed induced path~\cite{LokshantovVV14,p6FreeMaxInd22} or long induced cycles~\cite{DBLP:conf/soda/AbrishamiCPRS21}.

Another, more direct way of generalizing bounded-treewidth graphs and chordal graphs was suggested by Yolov~\cite{DBLP:conf/soda/Yolov18} and later, independently, by Dallard, Milani\v{c}, and \v{S}torgel~\cite{dallard2022firstpaper}. We use the notation and terminology of Dallard et al.~\cite{dallard2022firstpaper}.
Given a tree decomposition $\cT$ of a graph $G$, by $\alpha(\cT)$ we denote the size of a largest independent set contained in a bag of $\cT$. The \emph{tree-independence number} of $G$, denoted by $\tin(G)$, is the minimum value of $\alpha(\cT)$ over all tree decompositions $\cT$ of $G$.
Note that given a tree decomposition $\cT$ of $G$ with $\alpha(\cT) \leq k$, we can solve \MWIS on $G$ in time $n^{\Oh(k)}$, which is polynomial for constant $k$.
Let us remark that we always have $\tin(G) \leq \tw(G)+1$ and if $G$ is chordal, then $\tin(G) \leq 1$ (actually, the reverse implication also holds~\cite{dallard2022firstpaper}).
Tree-independence number proved to be a fruitful topic in recent years and attracted some attention, both with structural~\cite{dallard2022secondpaper,abrishami2023tree} and algorithmic~\cite{dallard2022firstpaper,dallard2022computing} motivations.

In a recent manuscript, Dallard, Fomin, Golovach, Korhonen, and Milani\v{c}~\cite{dallard2022computing} showed that tree-independence number is in a sense the {most general} ``natural'' parameter of a tree decomposition whose boundedness yields a polynomial-time algorithm for \MWIS.
More precisely, let $\gamma$ be any graph invariant satisfying $\gamma(G - v)  \leq \gamma(G) \leq \gamma(G-v)+1$ for any graph $G$ and any $v \in V(G)$.
For a tree decomposition $\cT$ of a graph $G$, define $\gamma(\cT)$ as the maximum value of $\gamma(B)$ taken over all subgraphs $B$ induced by a single bag of $\cT$.
By $\gamma_{\textrm{tw}}(G)$ we denote the minimum value of $\gamma(\cT)$ over all tree decompositions $\cT$ of $G$.\footnote{While not immediately relevant for our paper, it may be worth mentioning that such {tree decomposition} based parameters can also be defined and studied in the more general context of hypergraphs (see~\cite{adler2006width,DBLP:journals/jacm/Marx13,DBLP:conf/soda/Yolov18}).}
Note that if $\gamma(G) = |V(G)|-1$, then $\gamma_{\textrm{tw}}$ is precisely treewidth, and if $\gamma(G)$ is the size of a largest independent set in $G$, then $\gamma_{\textrm{tw}}$ is the tree-independence number.
Dallard et al.~\cite{dallard2022computing} showed that for every invariant $\gamma$ {as above}, either $\gamma_{\textrm{tw}}(G)$ is upper-bounded by a function of $\tin(G)$ (i.e., boundedness of $\gamma_{\textrm{tw}}(G)$ implies boundedness of $\tin(G)$), or \MWIS remains \NP-hard even {for} graphs $G$ with constant $\gamma_{\textrm{tw}}(G)$.

{Four years earlier}, Yolov~\cite{DBLP:conf/soda/Yolov18} (obviously unaware of the work of Dallard et al.~\cite{dallard2022computing}) had defined another, ``stronger'' parameter of a tree decomposition that can still be used to solve \MWIS. Of course there is no contradiction here -- Yolov's parameter does not fall into the category of ``natural'' parameters considered by Dallard et al.
In the original paper of Yolov~\cite{DBLP:conf/soda/Yolov18}, the parameter in question is called \emph{minor-matching hypertree width} and is defined in a much more general setting of hypergraphs.
As such a general definition is not relevant to our work, let us focus on the case of graphs; here we will call this parameter \emph{induced matching treewidth}.
For a tree decomposition $\cT$ of $G$, by $\mu(\cT)$ let us denote the size of a largest induced matching in $G$ {all of whose edges} intersect a single bag of $\cT$.
Now, the induced matching treewidth of $G$, denoted by $\yw(G)$, is the minimum value of $\mu(\cT)$ over all tree decompositions $\cT$ of $G$.
Note that $\mu(\cT)$ does not depend only on subgraphs induced by single bags, as some edges of an induced matching defining $\mu(\cT)$ might have one endpoint outside the bag.

It follows immediately from the definitions that for every graph it holds that $\yw(G) \leq \tin(G)$.
On the other hand, boundedness of induced matching treewidth does not imply boundedness of tree-independence number: for a biclique $K_{n,n}$ with each part of size $n$ we have $\yw(K_{n,n}) = 1$ but, as observed by Dallard et al.~\cite{dallard2022firstpaper}, $\tin(K_{n,n})=n$.
Thus indeed, induced matching treewidth is a ``stronger'' parameter.

{It is also important to note that the families of graph classes with bounded induced matching treewidth and classes of graphs with polynomially many minimal separators are incomparable.
For example, the class of all graphs $G_k$ consisting of $k$ internally disjoint paths of length three with the same endpoints has bounded treewidth and hence bounded induced matching treewidth, but graphs in the class have exponentially many minimal separators (it is not difficult to see that the graph $G_k$ has at least $2^k$ minimal separators).
On other other hand, $P_4$-free graphs have a polynomial number of minimal separators (see~\cite{MR2204116}) but unbounded induced matching treewidth.
This follows from a construction that we will present later in the paper, in \cref{prop:matching-biclique}.}

At first it is not clear why boundedness of induced matching treewidth is helpful in solving \MWIS.
However, the following structural result by Yolov~\cite{DBLP:conf/soda/Yolov18} is the key to understanding the connection between these two notions.
Let $\cT$ be a tree decomposition of $G$ with $\mu(\cT) \leq k$.
Then, for any node $t$ of $\cT$, in time $n^{\Oh(k)}$ we can enumerate a family $\misets_t$ such that
for any  inclusion-wise maximal independent set $I$ of $G$, the intersection of $I$ with the bag associated with $t$ is in $\misets_t$.
Since every \emph{maximum} independent set is in particular \emph{maximal}, this immediately implies that the number of ways an optimal solution might intersect each bag of $\cT$ is polynomial in $n$, if $k$ is a constant.
Using also the fact that given a graph with induced matching treewidth at most $k$, in polynomial time we can compute a tree decomposition such that {$\mu(\cT)=\Oh(k)$} (see \cref{thm:approximate-yolov} in \cref{sec:prelim}), we obtain the following algorithmic result as a consequence.

\begin{theorem}
\label{thm:yolovmwis}
For every fixed $k$, the \MWIS problem on $n$-vertex graphs $G$ with $\yw(G) \leq k$ can be solved in time {$n^{\Oh(k)}$}.
\end{theorem}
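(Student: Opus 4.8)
The plan is to combine the two ingredients already available — the approximation algorithm for induced matching treewidth (\cref{thm:approximate-yolov}) and Yolov's enumeration of bag-traces of maximal independent sets — with a bottom-up dynamic programming over a tree decomposition, mimicking the classical treewidth dynamic program for \MWIS{} but with the table at each node indexed by the enumerated family rather than by \emph{all} independent subsets of the bag.

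First I would use \cref{thm:approximate-yolov} to compute, in polynomial time, a tree decomposition $\cT = (T, \{X_t\}_{t \in V(T)})$ of $G$ with $\mu(\cT) = \Oh(k)$, and then turn it into a nice rooted tree decomposition in the standard way. This does not increase $\mu$: every bag of the nice decomposition is contained in a bag of $\cT$ (leaf, introduce, and forget nodes) or is a copy of a bag of $\cT$ (join nodes), so any induced matching of $G$ whose edges all meet a single bag of the nice decomposition also has all its edges meeting a single bag of $\cT$. One may also assume $|V(T)|$ is polynomial in $n$. Next, for each node $t$ I would invoke Yolov's structural result to compute, in time $n^{\Oh(k)}$, a family $\misets_t$ of independent subsets of $X_t$, of size $n^{\Oh(k)}$, such that $I \cap X_t \in \misets_t$ for every inclusion-wise maximal independent set $I$ of $G$ (discarding from $\misets_t$ any set that is not independent).

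Then comes the dynamic programming. For a node $t$, write $V_t$ for the union of the bags in the subtree rooted at $t$, and for $S \in \misets_t$ define $c[t,S]$ to be the maximum weight of an independent set $J$ of $G[V_t]$ with $J \cap X_t = S$ and $J \cap X_{t'} \in \misets_{t'}$ for every descendant $t'$ of $t$ (with $c[t,S] = -\infty$ if no such $J$ exists). The values $c[t,\cdot]$ are computed bottom-up by the usual leaf/introduce/forget/join recurrences, except that the recurrences quantify only over members of the relevant families $\misets_{t'}$; each table has $n^{\Oh(k)}$ entries, each computed in time $n^{\Oh(k)}$ — including at join nodes, where one combines pairs of compatible child entries — so together with the computation of the families the whole algorithm runs in time $n^{\Oh(k)}$. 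The algorithm outputs $\max_{S \in \misets_r} c[r,S]$ for the root $r$, and a standard top-down trace recovers an optimal set.

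For correctness, any set counted by $c[r,\cdot]$ is an independent set of $G = G[V_r]$, so $\max_S c[r,S]$ is at most the optimum. Conversely, if $I^*$ is a maximum-weight independent set of $G$, then $I^*$ is inclusion-wise maximal, so $I^* \cap X_t \in \misets_t$ for \emph{every} node $t$; hence $I^*$ is an admissible witness for the entry $c[r, I^* \cap X_r]$, giving $c[r, I^* \cap X_r] \ge w(I^*)$. The one step that requires care is checking that restricting the recurrences to the enumerated families does not lose the optimum, and this is exactly why the descendant-trace-in-family condition is built into the definition of $c[t,S]$: whenever $S$ is genuinely the trace of some maximal independent set $I$, the relevant sub-traces of $I$ at the child nodes automatically lie in the corresponding families (again because those families capture the traces of \emph{all} maximal independent sets), so the recurrence is an equality for the quantity being computed; entries for ``spurious'' $S$ that are not traces of any maximal independent set may be suboptimal or $-\infty$, but they are never needed since the traces of $I^*$ are all genuine. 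I expect this bookkeeping — phrasing the recurrences so that they are exactly correct for the restricted tables — to be the only real obstacle; everything else is the standard treewidth dynamic program for \MWIS{}.
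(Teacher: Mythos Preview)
Your proposal is correct and follows exactly the approach the paper sketches (and attributes to Yolov): approximate the induced matching treewidth via \cref{thm:approximate-yolov}, enumerate the families $\misets_t$ via \cref{prop:yolov-misets}, and run the standard bottom-up \MWIS{} dynamic program with tables indexed by $\misets_t$ rather than by all independent subsets of the bag. The paper does not spell out the DP or its correctness argument---it simply states \cref{thm:yolovmwis} as a consequence of these two ingredients---so your write-up, including the careful definition of $c[t,S]$ that bakes the descendant-trace condition into the table semantics, is in fact more detailed than what the paper provides.
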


The tools developed by Yolov~\cite{DBLP:conf/soda/Yolov18} can also be used to solve other problems that boil down to finding a constant number of (maximal) independent sets, like $r$-\textsc{Coloring} or finding a homomorphism to a fixed target graph. Even though this was not stated explicitly in~\cite{DBLP:conf/soda/Yolov18}, one can also obtain a polynomial-time algorithm for the problem of finding a largest induced $r$-colorable subgraph. For $r=2$ this problem is equivalent (by taking the complement of the solution) to \textsc{Min Odd Cycle Transversal}.

\medskip
The main goal of this paper is to explore what other problems that do not fall into the above category can be solved in polynomial time in classes of bounded induced matching treewidth.
Recall that for bounded-treewidth graphs, a rich family of tractable problems is provided by the meta-theorem of Courcelle~\cite{DBLP:journals/iandc/Courcelle90}.
For graphs $G$ with polynomially many minimal separators, where the framework of potential maximal cliques can be applied, a somewhat similar general result is provided by Fomin, Todinca, and Villanger~\cite{FominTV15}. For fixed integer $r$ and fixed \cmsotwo formula $\psi$,
by $(r,\psi)$-\MWIS we denote the following computational problem (here ``\MWIS'' stands for \textsc{Max Weight Induced Subgraph}).
\problemTask{$(r,\psi)$-\MWIS}%
{A graph $G$ equipped with a weight function $\wei\colon V(G) \to \Q_+$.}%
{Find a set $F \subseteq V(G)$, such that
\begin{itemize}
\item $G[F] \models \psi$,
\item $\tw(G[F]) \leq r$,
\item $F$ is of maximum weight subject to the conditions above,
\end{itemize}
or conclude that no such set exists.}

Fomin, Todinca, and Villanger~\cite{FominTV15} proved that for each fixed $r$ and $\psi$, the $(r,\psi)$-\MWIS problem can be solved in time polynomial in the size of the input graph $G$ and the number of minimal separators in $G$.
Thus the running time is polynomial when restricted to classes of graphs with polynomial number of minimal separators.

Algorithms for $(r,\psi)$-\MWIS are also known for graphs excluding a fixed induced path, or graphs excluding long induced cycles~\cite{DBLP:conf/soda/AbrishamiCPRS21,DBLP:conf/stoc/GartlandLPPR21}.
We conjecture that $(r,\psi)$-\MWIS can also be solved in polynomial time for graphs of bounded induced matching treewidth.

\begin{conjecture}\label{conjecture}
For every fixed $k,r$ and a \cmsotwo formula $\psi$, the $(r,\psi)$-\MWIS{} {problem can be solved in polynomial time for graphs with induced matching treewidth at most $k$}.
\end{conjecture}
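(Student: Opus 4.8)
The plan is a bottom-up dynamic programming over a tree decomposition $\cT$ of $G$ with $\mu(\cT)=\Oh(k)$, which \cref{thm:approximate-yolov} supplies in polynomial time, and which solves $(r,\psi)$-\MWIS. For fixed $r$ the property ``$\tw\le r$'' is characterised by finitely many forbidden minors, and minor containment is \msotwo-expressible, so $\tw(G[F])\le r\wedge G[F]\models\psi$ is equivalent to a single \cmsotwo formula $\psi'$; we retain bounded treewidth of the sought solution only as an algorithmic resource. For a node $t$ of $\cT$ write $X_t$ for its bag and $V_t$ for the union of the bags in the subtree rooted at $t$. The table at $t$ is indexed by pairs $(S,\rho)$, where $S\subseteq X_t$ is the intended trace $F\cap X_t$ of the solution and $\rho$ is the \cmsotwo type of $G[F\cap V_t]$ relative to the boundary $S$, of the fixed quantifier rank determined by $\psi'$; the table stores, for each reachable pair, the maximum weight of $F\cap V_t$ realising it. The introduce/forget/join transitions are the standard Courcelle automaton on \cmsotwo types, which is routine once the trace $S$ at each node is fixed, and the answer is read off at the root.

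For graphs of bounded \emph{tree-independence} number this already suffices, and is essentially the route by which the conjecture is established in that case: there one runs the same DP on a tree decomposition $\cT$ with $\alpha(\cT)$ bounded in terms of $k$ (also computable in polynomial time), so that $\alpha(G[X_t])\le\Oh_k(1)$, while $\tw(G[F])\le r$ forces $\omega(G[F\cap X_t])\le r+1$; by Ramsey's theorem $|F\cap X_t|$ is then bounded by a function of $k$ and $r$, so there are only $n^{\Oh_{k,r}(1)}$ candidate traces per bag and, for each, only a constant number of relevant \cmsotwo types, and the DP runs in polynomial time. For graphs of bounded induced matching treewidth this breaks down at the very first step: $|F\cap X_t|$ is genuinely unbounded. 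Indeed, the maximum induced subgraph of treewidth $\le r$ in a biclique $K_{n,n}$ (which has $\yw(K_{n,n})=1$) is a biclique $K_{r,n-r}$, living entirely inside one bag and using $\Theta(n)$ of its vertices.

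So two ingredients are needed, and I expect the first to be the real obstacle, and presumably the reason the statement is only a conjecture. The first is an analogue of Yolov's enumeration of the families $\misets_t$: a polynomial-size, polynomial-time-computable family $\FF_t\subseteq 2^{X_t}$ such that, for some maximum-weight solution $F$, one has $F\cap X_t\in\FF_t$ at every node $t$. Yolov's enumeration is driven by the \emph{maximality} of an independent set — every unselected bag vertex must have a selected neighbour, possibly outside the bag, and it is exactly this local certificate that the bound $\mu(\cT)=\Oh(k)$ keeps under control — whereas a bounded-treewidth induced subgraph carries no such certificate: which vertices lie in $F$ is dictated by the global constraints $\psi'$ and $\tw(G[F])\le r$, not by anything locally verifiable, so one cannot impose maximality without possibly leaving the set of optimal solutions. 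A natural attempt is to exploit that $G[F]$ is $(r+1)$-colourable, fix an optimal $F$ together with a canonically chosen proper $(r+1)$-colouring so that $F\cap X_t$ splits into $r+1$ independent traces, and enumerate each of them by a Yolov-type argument; identifying a canonical form that makes this locally certifiable, or a different route to a polynomial-size family of candidate traces (via balanced separators of $G[F]$ and recursion, or a container-type argument), is the crux. The second ingredient, needed because the enumerated traces may still be large, is to control the \cmsotwo types relative to such large boundaries — for instance by arranging that the traces produced by the enumeration come with bounded ``descriptions'' on which the relevant types depend — but this seems tractable once the first ingredient is in place.
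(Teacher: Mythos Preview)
This statement is \cref{conjecture}, an open conjecture; the paper does not prove it, and you correctly do not claim a proof either. Your proposal is not a proof but a diagnosis of where the difficulty lies, and that diagnosis matches the paper's own assessment closely.

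Your sketch of the bounded tree-independence case is exactly the paper's proof of \cref{thm:cmsomain}: the Ramsey bound $|F\cap X_t|<\textsf{Ramsey}(k+1,r+1)$ followed by the standard DP over \cmsotwo types of $\ell$-boundaried graphs (\cref{prop:mso-type}), with states $(S,\iota,\tau)$ and the usual introduce/forget/join transitions. So that portion is correct and essentially identical to what the paper does.

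For the general conjecture, you identify the missing ingredient as a polynomial-size family $\FF_t$ of candidate traces analogous to Yolov's $\misets_t$, and you pinpoint the obstruction: maximality of an independent set gives a local certificate that $\mu(\cT)\le k$ can exploit, whereas ``$\tw(G[F])\le r$ and $G[F]\models\psi$'' does not. This is precisely the paper's position. The paper carries out exactly this enumeration for $r=1$ (induced forests) in \cref{lem:ff}, where maximality of the forest supplies the needed local certificate, and in the conclusion (\cref{sec:conclusion}, around \cref{con:treewidth}) the authors explicitly name the analogue of \cref{lem:ff} for general $r$ as ``the right way of approaching'' the problem, while warning that the state space is ``much more complicated than when finding a largest induced forest.'' Your second ingredient, controlling \cmsotwo types relative to large boundaries, is a genuine additional issue the paper does not discuss; it is plausible that it becomes tractable once the first ingredient is in hand, but this is not obvious.

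One minor correction to your biclique example: the maximum induced subgraph of $K_{n,n}$ with treewidth at most $r$ is $K_{r,n}$ (one full side plus $r$ vertices of the other), not $K_{r,n-r}$; the conclusion that the trace has $\Theta(n)$ vertices in a single bag is unaffected.
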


Even though we are not (yet) able to prove \cref{conjecture}, we provide some substantial evidence by approaching it from three different directions.

\section{Overview of our results}

The paper contains three main algorithmic contributions, as well as an initial set of results regarding induced matching treewidth, including computational results and bounds.
We give an overview of these four sets of results and their implications in the following subsections.

\subsection{Solving \textsc{Max Weight Induced Forest}.}
As our first and main result, we show that in graphs of bounded induced matching treewidth, one can in polynomial time find an induced forest of maximum possible weight.

\begin{restatable}{theorem}{thmfvs}
\label{thm:fvs}
For every fixed $k$, the \textsc{Max Weight Induced Forest} problem on $n$-vertex graphs $G$ with $\yw(G) \leq k$ can be solved in time {$n^{\Oh(k)}$}.
\end{restatable}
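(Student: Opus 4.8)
The plan is to work with \textsc{Max Weight Induced Forest} directly: we look for a maximum-weight set $F\subseteq V(G)$ with $G[F]$ a forest (this is exactly the complement of a minimum-weight feedback vertex set), and we may assume $F$ is an \emph{inclusion-wise maximal} induced forest, since all weights are positive. First I would fix, using \cref{thm:approximate-yolov}, a tree decomposition $\cT$ of $G$ with $\mu(\cT)=\Oh(k)$, and set up a bottom-up dynamic program over a nice version of $\cT$. The correct DP state at a node $t$ is a pair $(Z,\cP)$ together with a weight, where $Z=F\cap \Bag_t$ is the trace of the partial solution on the bag and $\cP$ is the partition of $Z$ into the connected components of the partial forest constructed so far; this is exactly the information needed to process introduce/forget/join nodes while never closing a cycle. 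The difficulty is that a bag $\Bag_t$ may be arbitrarily large, so a naive DP has exponentially many traces and partitions.

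To tame this I would rely on a structural lemma about forests: a forest $H$ with $\mim(H)\le k$ has only $\Oh(k)$ vertices of degree at least two --- call this set its \emph{core} --- while every other vertex is either a leaf attached to the core, or lies in one of at most $\Oh(k)$ trivial (single-edge) components, or is isolated in $H$; moreover the leaves attached to any single core vertex, as well as the set of vertices isolated in $H$, are independent in $G$. Applying this to $H=G[F\cap \Bag_t]$ (whose induced matching number is at most $\mu(\cT)=\Oh(k)$), the trace $Z$ decomposes into a \emph{skeleton} of $\Oh(k)$ vertices --- for which there are $n^{\Oh(k)}$ possible vertex sets and, since it has $\Oh(k)$ vertices, only $k^{\Oh(k)}$ possible connectivity patterns --- and a \emph{peripheral} part of possibly many vertices, each of which is a leaf of, or isolated in, $G[F\cap \Bag_t]$. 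The key point is then that, once the skeleton and the pattern describing to which skeleton vertex each peripheral vertex attaches are fixed, the optimal peripheral part of the solution is the answer to a \textsc{Max Weight Independent Set} instance on an induced subgraph of $G$ (roughly: among the bag vertices that are adjacent in $G$ to the prescribed skeleton vertex and to no other already-committed solution vertex, choose a maximum-weight independent set). Since the class of graphs of induced matching treewidth at most $k$ is closed under taking induced subgraphs, each such instance is solvable in time $n^{\Oh(k)}$ by \cref{thm:yolovmwis}. Together with Yolov's enumeration of the traces of maximal independent sets on a bag~\cite{DBLP:conf/soda/Yolov18} (which yields $n^{\Oh(k)}$ candidates and helps enumerate the peripheral attachment patterns), this shows that only $n^{\Oh(k)}$ \emph{relevant} DP states need to be stored at each node.

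Granting this, the dynamic program is standard: at introduce/forget/join nodes we combine skeleton configurations, update the connectivity partition (discarding any combination that closes a cycle), and re-optimize the peripheral weight with the bounded-$\mim$ \MWIS subroutine; each table has polynomial size, is filled in time $n^{\Oh(k)}$, and the optimum is read off at the root, giving total running time $n^{\Oh(k)}$.

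I expect the real work --- and the main obstacle --- to be the middle step: choosing the right division of a trace into ``skeleton'' and ``periphery'' so that (i) the skeleton really has size $\Oh(k)$ in \emph{every} bag, which requires a Yolov-style argument extracting a large induced matching touching $\Bag_t$ whenever a bag would otherwise contain too many solution vertices that behave like branch vertices; (ii) the peripheral part is faithfully described by a single \MWIS call; and, most delicately, (iii) this division is compatible with the join operation --- a vertex that is peripheral (a leaf, or isolated) in one child's partial solution may be forced to become a branch vertex, or to lie on a new cycle, once the two children are merged, so the state must carry precisely enough information about such pending vertices without the skeleton growing. Reconciling (iii) with ``recompute the best periphery by a fresh \MWIS call at each join'' while keeping the number of reachable states polynomial is the crux, and is likely to require invoking Yolov's enumeration machinery directly rather than a black-box \MWIS algorithm.
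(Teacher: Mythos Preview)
Your high-level framework (DP over a nice tree decomposition, state $=(Z,\pi)$) matches the paper, and you correctly anticipate that the crux is bounding the number of relevant states per bag. However, the concrete plan for doing so contains a genuine gap, and it is exactly the one you flag in (iii).

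The idea ``fix a small skeleton, then recompute the optimal periphery by a black-box \MWIS call'' cannot work. The peripheral vertices in $Z=F\cap X_t$ are not a quantity you are free to re-optimise bag by bag: they are part of a \emph{fixed} maximal induced forest $F$ whose connectivity outside $X_t$ constrains which bag vertices can be leaves/isolated in $F$ without closing a cycle, and determines to which block of $\pi$ each of them belongs. Replacing them with a heavier independent set will in general (a) create cycles through $F\setminus X_t$, and (b) scramble $\pi$, and there is no way to detect this from skeleton information alone. Your ``skeleton'' is also defined locally (degree $\ge 2$ in $G[F\cap X_t]$); the right object is global: $S\coloneqq \skel{F}\cap X_t$, where $\skel{F}$ are the vertices of degree $\ge 2$ in $F$ itself. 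A vertex can have degree $\le 1$ in the bag but degree $\ge 2$ in $F$, and these vertices are precisely the ones through which the periphery interacts with the outside. Bounding $|S|$ by $\Oh(k)$ is a separate (and harder) argument than your bag-local one: one must extract a large induced matching \emph{touching} $X_t$, possibly using edges with one endpoint in $N(X_t)\setminus X_t$.

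The paper does not optimise the periphery; it \emph{enumerates} the full signature. The set $\limb{F}=L(F)\cup T(F)$ of leaves and isolated vertices of $F$ is independent in $G$, hence lies in some maximal independent set $I^*(F)$, and $I\coloneqq I^*(F)\cap X_t$ is one of the $n^{\Oh(k)}$ sets in $\misets_t$ from \cref{prop:yolov-misets}. The new difficulty is that $I$ contains \emph{impostors}: vertices of $I\setminus F$ (or vertices of $S$) that must be filtered out. By maximality of $F$, an impostor $v\in I\setminus S$ has two neighbours in one component of $F$, and since $\limb{F}\subseteq I^*(F)$ is independent these neighbours lie in $\widehat S\coloneqq \skel{F}\cap N[X_t]$. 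One then shows that an inclusion-minimal $Q\subseteq \widehat S$ that simultaneously $2$-dominates all impostors, $1$-dominates $L(F)\cap X_t$, and is non-adjacent to $T(F)\cap X_t$ has $|Q|=\Oh(k)$ (again by extracting a large induced matching touching $X_t$). The quadruple $(S,I,Q,\widehat\pi)$ --- with $\widehat\pi$ a partition of the $\Oh(k)$-sized set $S\cup Q$ --- then determines $(Z,\pi)$ uniquely, giving $n^{\Oh(k)}$ candidate signatures. This is the step your proposal is missing; once you have it, the DP is routine and no \MWIS subroutine is needed.
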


Note that \textsc{Max Weight Induced Forest} is equivalent to finding a maximum-weight induced subgraph of treewidth at most 1, i.e., is a special case of  $(r,\psi)$-\MWIS for $r=1$ and $\psi$ being any formula satisfied by all graphs.
Furthermore, by complementation, \textsc{Max Weight Induced Forest} is equivalent to the well-studied \textsc{Min Weight Feedback Vertex Set} problem.

Before sketching the proof of \cref{thm:fvs}, let us recall the textbook algorithm for \textsc{Max Weight Induced Forest} {for} graphs of bounded treewidth (see~\cite{platypus}).
Let $\cT$ be a tree decomposition of the instance graph, $t$ be a node of $\cT$, and $X_t$ be the bag corresponding to $t$.
For each set $Z \subseteq X_t$ and each partition $\pi$ of $Z$,
we keep the maximum weight of an induced forest $F$ contained in the subgraph of $G$ induced by the bags of the subtree of $\cT$ rooted at $t$, such that:
\begin{itemize}
\item the intersection of $V(F)$ with $X_t$ is exactly $Z$, and
\item the partition $\pi$ corresponds to the connected components of $F$.
\end{itemize}
For any induced forest $F$ in $G$, we call a pair $(Z,\pi)$ as above the \emph{signature of $F$ at $t$}.
Now, processing $\cT$ in a bottom-up fashion, we can find a maximum-weight induced forest in $G$ using the information stored for each node.

The key insight leading to the proof of \cref{thm:fvs} is the following structural lemma.

\begin{lemma}[Simplified statement of \cref{lem:ff}]
\label{lem:ff-inf}
Let $k$ be a fixed integer. For an $n$-vertex graph $G$, a tree decomposition $\cT$ of $G$ with $\mu(\cT) \leq k$, and a node $t$ of $\cT$,
in time $n^{\Oh(k)}$ we can enumerate a set $\FF_t$ that contains the signature at $t$ of every maximal induced forest in $G$.
\end{lemma}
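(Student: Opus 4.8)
The plan is to show that the signature $(Z,\pi)$ of a maximal induced forest is pinned down by a bounded amount of data together with a single independent set of $G$, to enumerate that data by brute force, and to enumerate the independent set via Yolov's enumeration of the traces of maximal independent sets on a bag. Fix the node $t$; write $X\coloneqq X_t$, and for a maximal induced forest $F$ of $G$ write $F$ also for its restriction to the subgraph induced by the bags in the subtree rooted at $t$ (an induced subgraph of $G$), put $Z\coloneqq V(F)\cap X$, and let $\pi$ be the partition of $Z$ according to the connected components of $F$. I would first record three consequences of $\mu(\cT)\le k$. First, $G[Z]$ is a forest and all its edges lie in $X$, so an induced matching of $G[Z]$ is an induced matching of $G$ meeting $X$, whence $\mim(G[Z])\le\mu(\cT)\le k$. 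Second, any forest has matching number at most $3\,\mim$: root it and colour each edge of a maximum matching by the depth modulo $3$ of its endpoint farther from the root; a short case check shows each colour class is an induced matching. Hence $G[Z]$ has a maximum matching whose vertex set $W$ satisfies $|W|\le 6k$ and is a vertex cover of $G[Z]$, so $B\coloneqq Z\setminus W$ is independent in $G$ and contained in $X$. Third, at most $k$ classes of $\pi$ have size at least $2$: for such a class $P=C\cap Z$ ($C$ a component of $F$) let $e_P$ be the first edge of a path in $C$ joining two vertices of $P$, so $e_P$ has an endpoint in $X$; as vertices of distinct components of $F$ are non-adjacent in $G$, the $e_P$'s form an induced matching of $G$ meeting $X$.

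Next I would analyse the ``big'' classes $P_1,\dots,P_s$ with $s\le k$, where $P_i=C_i\cap Z$ for a component $C_i$ of $F$ with $|P_i|\ge 2$. Since $C_i$ is connected and has at least two vertices, every vertex of $B\cap C_i$ has a neighbour in $C_i$, so we may fix an inclusion-minimal $\Gamma_i\subseteq V(C_i)$ that dominates $B\cap C_i$ via neighbours in $C_i$; then each $\gamma\in\Gamma_i$ has a private neighbour $b_\gamma\in B\cap C_i\subseteq X$. Two-colour each tree $C_i$ by depth parity; same-coloured vertices of $\Gamma_i$ are non-adjacent, and vertices of distinct $C_i$'s are non-adjacent, so all same-coloured edges $\gamma b_\gamma$ together form an induced matching of $G$ meeting $X$; hence $|\Gamma|\le 2k$ where $\Gamma\coloneqq\bigcup_i\Gamma_i$. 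One then checks that $B\cap C_i=B\cap N_G(\Gamma_i)$ (these sets being pairwise disjoint), that $B\setminus N_G(\Gamma)$ is exactly the set of vertices $v\in Z$ with $\{v\}$ a class of $\pi$, and that $W\subseteq C_1\cup\dots\cup C_s$. Consequently, once one knows $W$, $\Gamma$, the partition of $W\cup\Gamma$ recording in which $C_i$ each vertex lies, and the independent set $B$, the whole signature is recovered: $Z=W\cup B$, each $P_i$ is the set of vertices of $Z$ in the $i$-th part of the partition together with $B\cap N_G(\Gamma_i)$, and all remaining vertices of $Z$ form singleton classes.

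To assemble $\FF_t$ I would iterate over all candidates for $W$ (there are $\binom{|X|}{\le 6k}=n^{\Oh(k)}$), all candidates for $\Gamma$ ($n^{\Oh(k)}$), and all partitions of $W\cup\Gamma$ into at most $k$ parts (a number depending only on $k$); for each choice I would iterate over candidates for $B$ and add the reconstructed pair $(Z,\pi)$. No filtering is needed, since $\FF_t$ only has to contain the true signatures. To enumerate candidates for $B$ I would apply Yolov's result that the traces on a bag of the maximal independent sets of a graph carrying a tree decomposition of induced-matching-value at most $k$ can be listed in time $n^{\Oh(k)}$ — applied not to $G$ but to the graph $G'$ obtained from $G$ by deleting the vertices of $X\setminus V(F)$: extending $B$ first to a maximal independent set of the forest $F$ and then to a maximal independent set $J'$ of $G'$ yields $J'\cap V(F)$ maximal in $F$, hence $J'\cap X=B\cup(J'\cap W)$, so $B$ is recovered from one of the $n^{\Oh(k)}$ listed traces by subtracting $W$ after trying all $2^{\Oh(k)}$ choices of $J'\cap W$. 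Multiplying the bounds gives $|\FF_t|=n^{\Oh(k)}$ and an $n^{\Oh(k)}$-time enumeration.

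The hard part is exactly this last step. Unlike in \MWIS, where maximal independent sets are literally the objects enumerated by Yolov's theorem, here $B$ is only the trace of a maximal independent set of $G[V(F)]$, and the auxiliary graph $G'$ is defined in terms of $X\setminus V(F)$, which is not known in advance; the enumeration of $B$ must therefore be grounded in data we can guess. Making this rigorous — for instance by replacing ``delete $X\setminus V(F)$'' with a gadget that forces those vertices out of every maximal independent set, or by re-running Yolov's greedy induced-matching argument directly in the forest setting so that the traces of maximal induced forests are enumerated from the outset — is where the real work lies, and is what makes the bookkeeping with $W$, $\Gamma$, and the component partition necessary; everything else reduces to the three induced-matching observations above.
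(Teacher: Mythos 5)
Your overall strategy mirrors the paper's: split $Z$ into a bounded ``skeleton-like'' part plus an independent set, bound the bounded part by extracting induced matchings that touch $X_t$, enumerate the independent part via \cref{prop:yolov-misets}, and reconstruct $(Z,\pi)$ from $O(k)$ additional guessed vertices. Your combinatorial bounds (matching number of a forest at most $3\mim$, at most $k$ non-singleton blocks of $\pi$, $|\Gamma|\le 2k$ via private neighbours and depth-parity colouring) all check out. But there is a genuine gap, and it sits exactly where you say it does: the enumeration of $B$. You invoke Yolov's theorem on the graph $G'=G-(X_t\setminus V(F))$, which cannot be constructed without already knowing the solution $F$; the two repairs you gesture at (a gadget forcing $X_t\setminus V(F)$ out of every maximal independent set, or rerunning Yolov's argument ``in the forest setting'') are not carried out, and neither is obviously available — the first still requires knowing $X_t\setminus V(F)$, and the second is essentially the content of the lemma being proved. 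Since $B$ is the only unbounded piece of data in your encoding, the lemma is not established.

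The paper closes this gap differently, and the difference is instructive. It takes the independent set to be $\limb{F}=L(F)\cup T(F)$, the set of \emph{all} leaves and isolated vertices of $F$ globally (not just the trace of an independent set on $X_t$), extends it to a maximal independent set $I^*$ of $G$ \emph{itself}, and accepts that the trace $I=I^*\cap X_t\in\misets_t$ contains spurious vertices (``impostors''). The point of choosing $\limb{F}$ globally is that for any impostor $v$, the two forest-neighbours witnessing that $F\cup\{v\}$ has a cycle cannot lie in $\limb{F}$ (since $\limb{F}\cup\{v\}\subseteq I^*$ is independent), hence lie in $\skel{F}\cap N[X_t]$; an inclusion-minimal subset $Q$ of that set which 2-dominates the impostors, 1-dominates $L(F)\cap X_t$, and is non-adjacent to $T(F)\cap X_t$ has size at most $4k$ by another private-neighbour/induced-matching extraction, and the guessed $Q$ then classifies every vertex of $I$ by counting its neighbours in $Q$. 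In your setup $B=Z\setminus W$ is only a subset of $X_t$, so independence of $B\cup\{v\}$ says nothing about $v$'s neighbours outside $X_t$, and the impostors cannot be recognised from bounded data; this is why the ``global'' choice of the independent set is the essential idea you are missing.
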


Recall that an analogous result for \MWIS was shown by Yolov~\cite{DBLP:conf/soda/Yolov18}: for each bag $X_t$ corresponding to a node $t$ of a tree decomposition $\cT$ of bounded $\mu(\cT)$, in polynomial time we can enumerate a family $\misets_t$ that contains an intersection of each maximal independent set in $G$ with~$X_t$.

As every maximum induced forest is in particular  inclusion-wise maximal,  \cref{lem:ff-inf} allows us to solve \textsc{Max Weight Induced Forest} by essentially mimicking the textbook algorithm for bounded-treewidth graphs.
We just need to find the set $\FF_t$ for each node $t$ of $\cT$, and instead of indexing the dynamic programming table by \emph{all} possible pairs $(Z,\pi)$, we use just the pairs that are in $\FF_t$.

So let us describe the proof of \cref{lem:ff-inf}.
Let $F$ be a maximal induced forest in $G$; we think of it as an (unknown) optimal solution. Consider a node $t$ of $\cT$ and its corresponding bag $X_t$.
The difficulty that we need to face is that the intersection of $F$ with $X_t$ might be arbitrarily large; for example, a component of $F$ might be a large induced star contained in $X_t$. Thus we need to find some compact way to encode such possible intersections.

Let $\skel{F}$ be the \emph{skeleton of $F$}, i.e., the set of vertices of degree at least two in $F$ {together with an arbitrary but fixed vertex $v_C$ of each two-vertex component $C$ of $F$}.
By $L(F)$ and $T(F)$ we denote, respectively, the sets of vertices of degree 1 (\emph{leaves}) and of degree 0 (\emph{trivial} vertices) of $F$, with the exception that for each two-vertex component $C$ of $F$ we include {in $L(F)$ exactly one vertex from $C$, namely the vertex of $C$ different from $v_C$.}
Clearly, the sets $\skel{F}, L(F), T(F)$ form a partition of the vertex set of $F$.

The first important observation is that the set $S \coloneqq \skel{F} \cap X_t$ is of size bounded by a function of $k$.
Indeed, if $S$ was too large, we could easily extract from $F$ an induced matching whose every edge intersects the bag.
Thus we can directly enumerate all candidates for $\skel{F} \cap X_t$.

So we are left with finding an encoding of the set $(L(F) \cup T(F)) \cap X_t$.
Observe that $\limb{F} \coloneqq L(F) \cup T(F)$ is an independent set in $G$.
However, this is not very useful, as $L(F) \cup T(F)$ does not have to be maximal and we only have some information about the intersections of maximal independent sets with $X_t$ (they have to be in $\misets_t$, which is of polynomial size).
Let $I^*(F)$ be some maximal independent set of $G$ containing $\limb{F}$; we know that its intersection $I$ with $X_t$ can only be chosen in a polynomial number of ways.
However, $I$ contains some vertices that are not in the solution (called \emph{impostors}) and we should not consider their weight when choosing an optimal solution.
Why does a vertex $v \in I$ not belong to $\limb{F}$?
As $F$ is a maximal forest, there are two possible reasons why $v$ is an impostor:
\begin{itemize}
\item $v \in \skel{F}$, or
\item adding $v$ to $F$ would create a cycle.
\end{itemize}
Note that the vertices of the first type can be easily recognized as $S=\skel{F} \cap X_t$ is directly represented.
On the other hand, each vertex of the second type is adjacent to at least two vertices from some connected component of $F$.
Furthermore, we know that these two vertices are in $\skel{F}$, as $\limb{F} \cup \{v\} \subseteq I^*(F)$ and $I^*(F)$ is independent.
If these two neighbors of $v$ in $\skel{F}$ are in $X_t$, we are done, as we can guess the set $S=\skel{F} \cap X_t$.
However, the neighbors may lie outside the bag $X_t$.

Summing up, the type of a vertex $v \in I \setminus S$ (which may or may not belong to $\limb{F}$) is determined by its neighborhood in $\widehat{S} \coloneqq \skel{F} \cap N[X_t]$, where by $N[X_t]$ we denote the set consisting of the vertices from $X_t$ and their neighbors.
If $v$ has no neighbors in $\widehat{S}$, then $v$ is trivial, i.e., $v \in T(F)$.
If $v$ has one neighbor in $\widehat{S}$, then $v$ is a leaf, i.e., $v \in L(F)$.
Finally, if $v$ has at least two neighbors in $\widehat{S}$, then $v$ is an impostor, i.e., $v \in  I \setminus \limb{F}$.
If we could store the information about the set $\widehat{S}$ explicitly, we could easily distinguish between these three types of vertices.
Unfortunately, the set $\widehat{S}$ might still be arbitrarily large.
So we need to make one more step: we observe that an  inclusion-wise minimal subset $Q$ of $\widehat{S}$ that still can distinguish between the three types of vertices in $I \setminus S$ is of size bounded by a function of $k$, as otherwise we could again extract from $F$ a large induced matching whose every edge intersects $X_t$.
Thus we can guess every candidate for~$Q$.

From the argument above it follows that the triple $(S,I,Q)$ can be exhaustively guessed in a polynomial number of ways. Furthermore, thanks to the properties of $Q$, we can uniquely extract $L(F) \cap X_t$ and $T(F) \cap X_t$  from $I$.
This way we can compute $Z$, i.e., the intersection of $F$ with $X_t$.

In order to obtain $\pi$, we can exhaustively guess the partition $\widehat\pi$ of $S \cup Q$ that corresponds to the components of $F$ restricted to the subgraph of $G$ induced by the bags in the subtree rooted at $t$. Here we use the fact that the size of $S \cup Q$ is bounded by a function of $k$, i.e., a constant.
From $\widehat\pi$ we can uniquely reconstruct $\pi$, as each vertex in $L(F) \cap X_t$ has a neighbor in $S \cup Q$, and each vertex in $T(F)$ forms a separate component.
This completes the sketch of proof of \cref{lem:ff-inf}.

\subsection{Independent packings of small subgraphs.}
Let $\cH = \{H_j\}_{j \in J}$ be a family of connected subgraphs of a graph $G$.
By $G^\circ[\cH]$ we denote the graph with vertex set $J$, where two distinct vertices $j,j'\in J$ are adjacent if $H_j$ and $H_{j'}$ have a vertex in common or there is an edge with one endpoint in $H_j$ and the other in $H_{j'}$.\footnote{This extends the notation $G^\circ$ used in~\cite{DBLP:conf/stoc/GartlandLPPR21} for the graph  $G^\circ[\cH]$ in the particular case when $\cH$ is the set of all non-null connected subgraphs of $G$.}
Such a construction was considered by Cameron and Hell in~\cite{MR2190818}, who focused on the particular case when $\cH$ is the set of all subgraphs of $G$ isomorphic to a member of a fixed family $\mathcal{F}$ of connected graphs; they showed that if $G$ is a chordal graph, then so is $G^\circ[\cH]$.
Dallard et al.~\cite{dallard2022firstpaper} generalized this result by showing that $\tin(G^\circ[\cH]) \leq \tin(G)$.
Other graph classes closed under taking $G^\circ[\cH]$ are classes of graphs excluding long induced paths or long induced cycles; see Gartland et al.~\cite{gartland2020finding}.
We show that an analogous statement holds for graphs with bounded induced matching treewidth.

\begin{lemma}[A part of the statement of \cref{lem:H-G}]
\label{lem:H-G-simplified}
Let $G$ be a graph and let $\mathcal{H}$ be a set of connected non-null subgraphs of $G$.
Then $\yw(G^\circ[\cH])\le \yw(G)$.
\end{lemma}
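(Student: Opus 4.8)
The plan is to push a good tree decomposition of $G$ forward to $G^\circ[\cH]$. I would fix a tree decomposition $\cT = (T, \{X_t\}_{t})$ of $G$ with $\mu(\cT) = \yw(G)$, and for each node $t$ of $T$ set $Y_t := \{\, j \in J : V(H_j) \cap X_t \neq \emptyset \,\}$. The claim to be proved is that $\cT' := (T, \{Y_t\}_{t})$ is a tree decomposition of $G^\circ[\cH]$ with $\mu(\cT') \le \mu(\cT)$; this gives $\yw(G^\circ[\cH]) \le \mu(\cT') \le \mu(\cT) = \yw(G)$.

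First I would verify that $\cT'$ is a tree decomposition. Every $j \in J$ lies in some $Y_t$ because $H_j$ is non-null: pick any vertex of $H_j$ and a bag of $\cT$ containing it. An edge $jj'$ of $G^\circ[\cH]$ is covered because, by definition of $G^\circ[\cH]$, either $H_j$ and $H_{j'}$ share a vertex $v$ — and then $j, j' \in Y_t$ for every $t$ with $v \in X_t$ — or some edge $uv$ of $G$ joins them, and then $j, j' \in Y_t$ for any bag of $\cT$ containing both $u$ and $v$. For the connectedness axiom, observe that $\{\, t : j \in Y_t \,\} = \bigcup_{v \in V(H_j)} \{\, t : v \in X_t \,\}$, a union of subtrees of $T$ whose pairwise-intersection pattern is connected (adjacent vertices of $H_j$ share a bag, and $H_j$ is connected), hence a subtree of $T$ by the standard fact about unions of subtrees of a tree; this is the point where connectedness of each $H_j$ is used.

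The main work is the bound $\mu(\cT') \le \mu(\cT)$. Let $\{\, j_1 j_1', \dots, j_m j_m' \,\}$ be an induced matching of $G^\circ[\cH]$ all of whose edges intersect a single bag $Y_t$, and put $B_i := V(H_{j_i}) \cup V(H_{j_i'})$. Since the matching is induced, for $i \neq i'$ none of the four pairs in $\{j_i, j_i'\} \times \{j_{i'}, j_{i'}'\}$ is an edge of $G^\circ[\cH]$, which by definition means that the sets $B_1, \dots, B_m$ are pairwise disjoint and that $G$ has no edge between $B_i$ and $B_{i'}$. Next, $G[B_i]$ is connected, since $H_{j_i}$ and $H_{j_i'}$ are connected and are glued either at a common vertex or by an edge of $G$ lying inside $B_i$; moreover $|B_i| \ge 2$, for otherwise $H_{j_i}$ and $H_{j_i'}$ would both be the one-vertex subgraph on the single vertex of $B_i$, contradicting that $\cH$ is a set of pairwise distinct subgraphs. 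Finally, because the edge $j_i j_i'$ intersects $Y_t$, one of $H_{j_i}, H_{j_i'}$ meets $X_t$, so I may fix $u_i \in B_i \cap X_t$, and since $G[B_i]$ is connected with at least two vertices, $u_i$ has a neighbor $w_i$ inside $B_i$; set $e_i := u_i w_i \in E(G)$. Then $e_1, \dots, e_m$ lie in pairwise disjoint, pairwise non-adjacent subsets of $V(G)$, so $\{\, e_1, \dots, e_m \,\}$ is an induced matching of $G$, and each $e_i$ meets $X_t$ through $u_i$; hence $m \le \mu(\cT)$, which is what we need.

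I expect the only genuine subtleties to be: first, that the connectedness axiom of $\cT'$ must be justified via the connectedness of the $H_j$ and the standard fact on unions of subtrees; and second, the bookkeeping in the last paragraph, where one must check all four cross-pairs of indices to obtain both disjointness and non-adjacency of the blobs $B_i$, and must not overlook the degenerate case $|B_i| = 1$ — which is precisely where the hypothesis that $\cH$ is a \emph{set} (rather than a multiset) is invoked.
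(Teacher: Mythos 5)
Your proposal is correct and follows essentially the same approach as the paper: you push the same tree decomposition forward (the bag at $t$ becomes $\{j : V(H_j) \cap X_t \neq \emptyset\}$), and you bound $\mu(\cT')$ by extracting one $G$-edge inside each blob $B_i = V(H_{j_i}) \cup V(H_{j_i'})$, using disjointness and non-adjacency of the $B_i$'s (which follows from the matching in $G^\circ[\cH]$ being induced) and the fact that $|B_i| \geq 2$ because $\cH$ is a set. The only cosmetic difference is that the paper invokes an earlier lemma (of Dallard et al.) for the fact that $\cT'$ is a tree decomposition, whereas you verify the axioms directly; your use of $G[B_i]$ in place of the paper's informal "union of $H_i$ and $H_{i'}$" is if anything slightly cleaner, since $G[B_i]$ is unambiguously connected even when $H_{j_i}$ and $H_{j_i'}$ are vertex-disjoint and only joined by a $G$-edge.
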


\cref{lem:H-G-simplified} yields several algorithmic corollaries.
First, notice that the \MWIS problem on $G^\circ[\cH]$ (with some weight function defined on $J$) corresponds to the problem of packing pairwise disjoint and nonadjacent graphs from $\cH$, called  \textsc{Max Weight Independent Packing}. Thus, combining \cref{lem:H-G} with \cref{thm:yolovmwis},
we show that \textsc{Max Weight Independent Packing} in classes of bounded induced matching treewidth can be solved in time polynomial in the size of $G$ and $\cH$.
When $\cH$ is the set of all subgraphs of $G$ isomorphic to a member of a fixed family $\mathcal{F}$ of connected graphs, then \textsc{Max Weight Independent Packing} coincides with the \textsc{Max Weight Independent $\mathcal{F}$-Packing} problem studied in Dallard et al.~\cite{dallard2022firstpaper}.
This latter problem in turn generalizes several problems studied in the literature:
\begin{itemize}
    \item the \MWIS problem, which corresponds to the case $\mathcal{F} = \{K_1\}$,
    \item the \textsc{Max Weight Induced Matching} problem (see, e.g.,~\cite{MR3776983,MR4151749}), which corresponds to the case $\mathcal{F} = \{K_2\}$,
    \item the \textsc{Dissociation Set} problem (see, e.g.,~\cite{MR3593941,MR615221,MR2812599}), which corresponds to the case when $\mathcal{F}= \{K_1,K_2\}$ and the weight function assigns to each subgraph $H_j$ the weight equal to $|V(H_j)|$,
    \item the \textsc{$k$-Separator} problem (see, e.g.,~\cite{DBLP:conf/soda/Lee17,MR3987192,MR3296270}), which corresponds to the case when $\mathcal{F}$ contains all connected graphs with at most $k$ vertices, the graph $G$ is equipped with a vertex weight function $\wei:V(G)\to \mathbb{Q}_+$, and the weight function on $\cH$ assigns to each subgraph $H_j$ the weight equal to $\sum_{x\in V(H_j)}\wei(x)$.
\end{itemize}

Observe that \textsc{Max Weight Induced Forest} for a weighted graph $G$ is equivalent to \textsc{Max Weight Independent Packing} for an instance $G$, $\cH$ (and $\wei$), where $\cH$ consists of all induced subtrees of $G$ (and the weight of $H_j$ is the total weight of the vertices in $H_j$).
{Despite that}, we note that the combination of \cref{lem:H-G-simplified} and \cref{thm:yolovmwis} cannot be used to solve \textsc{Max Weight Induced Forest} in polynomial time. Indeed, the size of $\cH$ is not bounded by a polynomial function of $|V(G)|$.

However, \cref{lem:H-G-simplified} might still be of use in this context.
As already observed by Gartland et al.~\cite{gartland2020finding}, we can sacrifice a small part of the optimal solution to break it into constant-size parts, which can then be handled efficiently.
Indeed, for each $\epsilon >0$ and any forest $F$ we can remove an $\epsilon$-fraction of the vertices of $F$ so that the remaining components are of constant size (where this constant depends on $\epsilon$ but not on $F$).
This approach can be used to obtain a PTAS for \textsc{Max Induced Forest} (the unweighted variant of \textsc{Max Weight Induced Forest}).
Actually, this reasoning goes far beyond trees and can be used to pack large induced subgraphs from any \emph{weakly hyperfinite} class; see~\cref{sec:blob} for the definition.
In particular, we obtain the following result which provides further evidence for \cref{conjecture}.

\begin{restatable}{corollary}{ptas}
\label{cor:ptas}
For every fixed $k,r \in \N$ and $\epsilon >0$, given a graph $G$ with induced matching treewidth at most $k$,
in polynomial time we can find a set $F \subseteq V(G)$ such that:
\begin{enumerate}
\item $\tw(G[F]) \leq r$,
\item the size of $F$ is at least $(1-\epsilon) \textsf{OPT}$, where  $\textsf{OPT}$ is the size of a largest set satisfying the first condition.
\end{enumerate}
\end{restatable}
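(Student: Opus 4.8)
The plan is to combine the hyperfiniteness of bounded-treewidth graphs with the independent-packing machinery (Lemma~\ref{lem:H-G-simplified}) and the polynomial-time algorithm for \MWIS on graphs of bounded induced matching treewidth (Theorem~\ref{thm:yolovmwis}). First I would recall the relevant hyperfiniteness fact: the class of graphs of treewidth at most $r$ is \emph{weakly hyperfinite}, meaning that for every $\delta > 0$ there is a constant $c = c(r,\delta)$ such that every graph $H$ with $\tw(H) \le r$ contains a set $A \subseteq V(H)$ with $|A| \le \delta |V(H)|$ for which every connected component of $H - A$ has at most $c$ vertices. (This follows from a standard balanced-separator/splitter argument on tree decompositions; I would cite this rather than reprove it.) Set $\delta \coloneqq \epsilon$ and let $c = c(r,\epsilon)$ be the resulting constant.

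Next I would build the auxiliary instance. Let $\cH$ be the family of all induced subgraphs $G[W]$ of $G$ with $|W| \le c$ and $\tw(G[W]) \le r$; note $|\cH| = n^{\Oh(c)}$, which is polynomial for fixed $r, \epsilon$, and $\cH$ can be enumerated in that time (checking $\tw(G[W]) \le r$ for a constant-size set is trivial). Assign to each $H_j = G[W_j]$ the weight $|W_j|$. By Lemma~\ref{lem:H-G-simplified}, $\yw(G^\circ[\cH]) \le \yw(G) \le k$, so by Theorem~\ref{thm:yolovmwis} we can solve \textsc{Max Weight Independent Packing} on $(G, \cH)$ — equivalently \MWIS on the weighted graph $G^\circ[\cH]$ — in time $(|G| + |\cH|)^{\Oh(k)} = n^{\Oh(k c(r,\epsilon))}$, which is polynomial. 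Let the algorithm return an independent set $\{H_j : j \in J'\}$ in $G^\circ[\cH]$ of maximum total weight, and output $F \coloneqq \bigcup_{j \in J'} V(H_j)$.

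It remains to verify the two guarantees. For correctness of condition~(1): since the $H_j$ for $j \in J'$ are pairwise non-adjacent and vertex-disjoint in $G$, the graph $G[F]$ is the disjoint union of the $G[W_j]$, each of treewidth at most $r$, so $\tw(G[F]) \le r$. For the approximation ratio in condition~(2): let $F^\star$ be a largest set with $\tw(G[F^\star]) \le r$, so $|F^\star| = \textsf{OPT}$. Apply weak hyperfiniteness to $H \coloneqq G[F^\star]$ to obtain $A \subseteq F^\star$ with $|A| \le \epsilon \,|F^\star|$ such that every component of $G[F^\star \setminus A]$ has at most $c$ vertices; each such component $C$ satisfies $\tw(C) \le \tw(G[F^\star]) \le r$ and $|C| \le c$, hence corresponds to a member of $\cH$, and distinct components are vertex-disjoint and (being distinct components of an induced subgraph) pairwise non-adjacent in $G$. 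Thus these components form a feasible independent packing of total weight $|F^\star \setminus A| \ge (1-\epsilon)\textsf{OPT}$, so the optimum of the packing instance is at least $(1-\epsilon)\textsf{OPT}$, and therefore $|F| \ge (1-\epsilon)\textsf{OPT}$ as well.

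The main obstacle, and the only place requiring genuine care, is the weak hyperfiniteness statement for bounded-treewidth graphs: one must confirm that the component-size bound $c$ depends only on $r$ and $\epsilon$ and not on the graph, and that the removed set is genuinely an $\epsilon$-fraction (rather than, say, an $\epsilon$-fraction of the edges or a fraction depending on $n$). This is exactly the "sacrifice a small part of the solution to break it into constant-size parts" step invoked informally before the corollary; since the paper defers the general definition to \cref{sec:blob}, I would either cite the standard splitter/separator argument there or quote the definition of weakly hyperfinite and note that $\{H : \tw(H) \le r\}$ is a minor-closed, hence weakly hyperfinite, class. Everything else — building $\cH$, invoking Lemma~\ref{lem:H-G-simplified} and Theorem~\ref{thm:yolovmwis}, and the disjoint-union argument for treewidth — is routine.
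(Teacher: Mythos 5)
Your proposal is correct and follows essentially the same route as the paper: the paper derives \cref{cor:ptas} from a general statement (\cref{cor:pack-weakly-hyperfinite}) about weakly hyperfinite classes closed under vertex deletion and disjoint union, whose proof is exactly your argument (break the optimum into constant-size components losing an $\epsilon$-fraction, enumerate all small candidate subgraphs as $\cH$, and solve \textsc{Max Weight Independent Packing} via \cref{lem:H-G-simplified} and \cref{thm:yolovmwis}). You merely inline the specialization to the class $\{H : \tw(H)\le r\}$, which is fine.
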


Note that the problem addressed in \cref{cor:ptas} is a special case of
 $(r,\psi)$-\MWIS{} {obtained by taking $\psi$ to be} any formula satisfied by all graphs and {$\wei$ to be a uniform weight function.}

\begin{sloppypar}
Next, we generalize the polynomial-time solvability of the \textsc{Max Weight Independent Packing} problem {for} graphs of bounded induced matching treewidth to \textsc{Max Weight Distance-$d$ Packing}: the problem of packing subgraphs at distance $d$, for all even positive integers $d$.
The case $d = 2$ is precisely \textsc{Max Weight Independent Packing}.
We remark that unless $\P = \NP$, this result cannot be generalized to odd values of $d$, since (as shown by Eto, Guo, and Miyano~\cite{DBLP:journals/jco/EtoGM14}), the distance-$3$ variant of \textsc{Max Independent Set} problem is $\NP$-hard for chordal graphs, which have induced matching treewidth (and even tree-independence number) at most one.
\end{sloppypar}

This extension again follows from a structural observation that may be of interest on its own.
Given a graph $G$ and a positive integer $k$, we denote by $G^k$ the \emph{$k$-th power} of $G$, that is, the graph obtained from the graph $G$ by adding to it all edges between pairs of {distinct} vertices at distance at most $k$.
We show that for any graph $G$ and any positive integer $k$, the induced matching treewidth of the graph $G^{k+2}$ cannot exceed that of the graph $G^k$, and the same relationship holds for the tree-independence number.
This is a significant generalization of a result of Duchet~\cite{MR778751}, who proved an analogous result for graphs with tree-independence number at most one: if $G^k$ is chordal, then so is $G^{k+2}$.
As a consequence, we obtain that for every positive integer $k$, the class of graphs with induced matching treewidth (resp.~tree-independence number) at most $k$ is closed under taking odd powers, which generalizes the analogous result for the class of chordal graphs proved by Balakrishnan and Paulraja~\cite{MR704427}.
We complement these results by observing that the class of even powers of chordal graphs (for any fixed power) is not contained in any nontrivial hereditary graph class.
This implies in particular that any such class has unbounded tree-independence number and induced matching treewidth.

\subsection{\boldmath Solving $(r,\psi)$-MWIS for bounded tree-independence number.}
Finally, we show that $(r,\psi)$-\MWIS can be solved in polynomial time for graphs of bounded tree-independence number.
As shown by Dallard et al.~\cite{dallard2022secondpaper}, many natural graph classes fall into this category. Furthermore, they all have bounded induced matching treewidth.

Actually, we show tractability of a more general problem, where instead of asking for a subgraph of bounded treewidth, we ask for a subgraph of clique number bounded by $r$; let us call this variant $(\omega \leq r,\psi)$-\MWIS.
We remark that $(\omega \leq r,\psi)$-\MWIS is a generalization of $(r-1,\psi)$-\MWIS, as every graph of treewidth at most $r-1$ has clique number at most $r$ and the property of being of bounded treewidth can be expressed in \cmsotwo~\cite[Lemma 10]{gartland2020finding}.
On the other hand, there are natural classes of graphs of bounded clique number and unbounded treewidth, e.g., bipartite graphs or planar graphs.
However, one can observe that {for} graphs of bounded tree-independence number, treewidth is upper-bounded by a function of the clique number {(see~\cite{dallard2022firstpaper}); this phenomenon is called \emph{$(\tw,\omega)$-boundedness} in the literature (see, e.g.,~\cite{dallard2021treewidth}).}
Thus we conclude that in classes of bounded tree-independence number, both $(\omega \leq r,\psi)$-\MWIS and $(r,\psi)$-\MWIS formalisms describe the same family of problems.

\begin{theorem}[Simplified statement of \cref{thm:cmsomain-intro}]
\label{thm:cmso-simple}
For every fixed $k,r$ and a \cmsotwo formula $\psi$,
the $(\omega \leq r,\psi)$-\MWIS{} {for} graphs with tree-independence number at most $k$ can be solved in polynomial time.
\end{theorem}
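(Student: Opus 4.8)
The plan is to solve $(\omega\le r,\psi)$-\MWIS by a Courcelle-style bottom-up dynamic program over a tree decomposition $\cT=(T,\{X_t\}_{t\in V(T)})$ of $G$ with small independence number; following the convention of the paper such a decomposition comes with the input, and in any case one with $\alpha(\cT)=\mathcal{O}(k)$ can be computed in polynomial time~\cite{dallard2022computing}, which is equally good. The obvious obstruction is that the bags $X_t$ may be arbitrarily large, so one cannot afford to enumerate their subsets. The observation that unlocks everything is a Ramsey argument: if $F\subseteq V(G)$ is any feasible solution, then for every node $t$ the graph $G[F\cap X_t]$ is an induced subgraph both of $G[X_t]$ and of $G[F]$, hence $\alpha(G[F\cap X_t])\le\alpha(\cT)$ and $\omega(G[F\cap X_t])\le r$; by Ramsey's theorem this forces $|F\cap X_t|\le R(\alpha(\cT)+1,\,r+1)-1=:c$ (where $R(\cdot,\cdot)$ is the Ramsey number), a constant depending only on $k$ and $r$. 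Equivalently, restricting $\cT$ to $F$ gives a tree decomposition $(T,\{F\cap X_t\}_t)$ of $G[F]$ of width less than $c$. This is the single place where bounded tree-independence number is genuinely used rather than merely bounded induced matching treewidth: in the biclique construction of \cref{prop:matching-biclique} a bag contains a large induced star, all of which a bounded-treewidth solution may keep, so there the trace of a solution on a bag is unbounded.

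Granted this, I would first turn $\cT$ into a nice tree decomposition with $\mathcal{O}(n)$ nodes and empty root bag, and then run the following DP. Let $q$ be the quantifier rank of $\psi$, and recall the machinery of \cmsotwo types of boundaried graphs (the paper's \cref{sec:cmso}): for boundaries of size at most $c$ there are only finitely many \cmsotwo types of quantifier rank at most $q$, the type of the empty graph is known, and types compose under the operations of introducing a boundary vertex, forgetting a boundary vertex, and gluing two boundaried graphs along a common boundary. For a node $t$, the table is indexed by pairs $(Y,\theta)$ with $Y\subseteq X_t$, $|Y|\le c$, $\omega(G[Y])\le r$, and $\theta$ such a type; the entry stores the maximum of $\wei(F_t)$ over all $F_t$ contained in the union of the bags of the subtree rooted at $t$ with $F_t\cap X_t=Y$, $\omega(G[F_t])\le r$, and \cmsotwo type of $(G[F_t],Y)$ equal to $\theta$. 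Since $|X_t|\le n$ and $c$ is constant, each table has polynomially many entries (at most $n^{\mathcal{O}(c)}$ times the number of types, a constant). Leaf nodes, introduce nodes (branch on whether the introduced vertex $v$ enters $F_t$; if it does, require $|Y\cup\{v\}|\le c$ and $\omega(G[Y\cup\{v\}])\le r$, update the type using the adjacencies of $v$ to $Y$ read off from $G$, and add $\wei(v)$), forget nodes (apply the forget operation, take the maximum over colliding entries), and join nodes (combine two child entries with the \emph{same} $Y$ — legitimate since $V_{t_1}\cap V_{t_2}=X_t$ — compose their types, and add their values minus $\wei(Y)$ to correct the doubly counted boundary) are handled exactly as in the textbook algorithm. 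Correctness is the compositionality of types together with the Ramsey bound, which guarantees that every optimal $F$ really does meet every bag in at most $c$ vertices and is therefore represented. At the root (empty bag) the answer is the maximum entry over types $\theta$ with $\theta\models\psi$, and if there is none we report that no feasible set exists.

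The running time is $n^{\mathcal{O}_{k,r}(1)}$, with the constants also depending on $\psi$ through the number of types, hence polynomial for every fixed $k,r,\psi$. \textbf{The main obstacle} is really just the conceptual step above — realizing that a feasible solution has constant-size trace on every bag — after which the rest is a fairly mechanical adaptation of the standard \cmsotwo dynamic programming over tree decompositions; the only points that need honest care are the bookkeeping at join nodes (gluing precisely along $Y$, not double-counting boundary weights or boundary edges) and maintaining the invariant $\omega\le r$ bag-by-bag at introduce nodes. As a bonus this also settles $(r,\psi)$-\MWIS in the same setting: a graph of treewidth at most $r$ has clique number at most $r+1$ and ``treewidth $\le r$'' is expressible in \cmsotwo~\cite[Lemma 10]{gartland2020finding}, while conversely $(\tw,\omega)$-boundedness of classes of bounded tree-independence number~\cite{dallard2022firstpaper} makes the two formalisms coincide there.
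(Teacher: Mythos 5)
Your proposal is correct and follows essentially the same route as the paper: the identical Ramsey observation that a feasible solution meets every bag in fewer than $\textsf{Ramsey}(k+1,r+1)$ vertices, followed by the standard \cmsotwo-type dynamic programming over a nice tree decomposition with states indexed by the (constant-size) bag trace and a type. The only cosmetic difference is that you enforce $\omega\le r$ by explicit bag-by-bag clique checks while the paper folds it into the formula as $\psi\land\phi_{\omega\le r}$; both are valid.
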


Some examples of problems that can be solved in polynomial time with this approach include:
\begin{itemize}
    \item finding a largest induced planar subgraph (which is equivalent to \textsc{Planarization}~\cite{DBLP:conf/soda/JansenLS14,DBLP:journals/dam/Pilipczuk17}),
    \item finding a largest induced odd cactus (which is equivalent to \textsc{Even Cycle Transversal}~\cite{DBLP:journals/siamdm/PaesaniPR22,DBLP:conf/wg/MisraRRS12,DBLP:conf/iwpec/BergougnouxBBK20}),
    \item finding the maximum number of pairwise disjoint and non-adjacent cycles (for this problem we need a slightly stronger variant of \cref{thm:cmso-simple} that is shown in \cref{sec:cmso}).
\end{itemize}

We point out that the bound on the clique number of the sought-for subgraph must be constant.
Indeed, the property of being a clique is easily expressible in \cmsotwo, but \textsc{Max Clique} is \NP-hard {for} graphs with tree-independence number 2.

\begin{sloppypar}
\subsection{Complexity of computing induced matching treewidth and bounds}\label{subsec:structural}
\end{sloppypar}

\begin{sloppypar}
We observe that known relations between the induced matching treewidth the tree-independence number~\cite{DBLP:conf/soda/Yolov18} together with known results on tree-independence number~\cite{dallard2022firstpaper,dallard2022computing} and  independence number~\cite{MR2403018} imply several hardness results related to exact and approximate computation of the induced matching treewidth.
In particular, for every constant $k\geq 4$, it is \NP-complete to determine whether $\yw(G)\leq k$ for a graph~$G$ (see \Cref{thm:hardness} for the precise statement of the results).
While the complexity of recognizing graphs with induced matching treewidth $k$ for $k\in \{2,3\}$ remains open, we then show that the problem is solvable in polynomial time for $k = 1$.
To this end, we first prove that induced matching treewidth is monotone under induced minors, that is, it cannot decrease upon deleting a vertex or contracting an edge.
Then, we use a characterization of graphs $G$ such that the square of the line graph of $G$ is chordal (see~\cite{MR3804753}) to characterize the class of graphs with induced matching treewidth at most $1$ in terms of a finite family of forbidden induced minors.
This characterization implies that the family of forbidden induced \emph{subgraphs} for this class is very restricted, containing only finitely many graphs except for cycles of length at least $6$, which immediately leads to a polynomial-time recognition algorithm.
\end{sloppypar}

\medskip
We then consider the behavior of induced matching treewidth for graphs with bounded degree, in particular, its relation to tree-independence number and treewidth.
While any graph class with unbounded induced matching treewidth is necessary of unbounded tree-independence number and hence of unbounded treewidth, it is known that the converse implications hold in the absence of a fixed complete bipartite graph as a subgraph (see p.~\pageref{thm:tw-biclique} for details).
In particular, for any class of graphs with bounded maximum degree, all the aforementioned parameters (induced matching treewidth, tree-independence number, and treewidth) are all equivalent to each other, in the sense that they are either all bounded or all unbounded.
We strengthen this result by showing that for any class of graphs with bounded maximum degree, the three parameters are in fact linearly related to each other (see~\cref{thm:delta,cor:treewidth}).

\medskip
In conclusion, we consider two families of graphs with unbounded maximum degree.
We give a lower bound on the induced matching treewidth of hypercube graphs and identify a family of $P_4$-free graphs with unbounded induced matching treewidth.

\bigskip
\paragraph*{Organization of the paper}
In \cref{sec:prelim}, we provide the necessary preliminaries.
In \cref{sec:fvs}, we prove \cref{thm:fvs}.
In \cref{sec:blob}, we show \cref{lem:H-G} and discuss its algorithmic consequences, in particular, \cref{cor:ptas}.
In \cref{sec:cmso}, we show \cref{thm:cmso-simple} and its strengthened variant.
In \cref{sec:structural} we initiate a systematic study of induced matching treewidth, by showing several computational results and bounds.
The paper is concluded with pointing out some directions for further research in \cref{sec:conclusion}.

\section{Preliminaries} \label{sec:prelim}
For an integer $n$, by $[n]$ we denote the set $\{1,\ldots,n\}$.
Let $G$ be a graph.
For a set $X \subseteq V(G)$, by $G[X]$ we denote the subgraph of $G$ induced by $X$.
If the graph $G$ is clear from the context, we will identify induced subgraphs with their vertex sets.

For a vertex $v$ of $G$, by $N(v)$ we denote the set of neighbors of $v$.
For a set $X \subseteq V(G)$, by $N(X)$ we denote the set $\left(\bigcup_{v \in X} N(v)\right) \setminus X$, and by $N[X]$ we denote the set $N(X) \cup X$.
{For a vertex $v$ of $G$, we write $N[v]$ for $N[\{v\}]$.}

Let $X,Y$ be disjoint subsets of $V(G)$.
We say that they are \emph{non-adjacent} if there is no edge with one endpoint in $X$ and the other in $Y$.
For $k\in \{1,2\}$, the set $Y$ \emph{$k$-dominates} $X$ if each vertex from $X$ has at least $k$ neighbors in $Y$.

A \emph{clique} in a graph $G$ is a set of pairwise adjacent vertices; an \emph{independent set} is a set of pairwise non-adjacent vertices.
The \emph{clique number} and the \emph{independence number} of $G$, denoted respectively by $\omega(G)$ and $\alpha(G)$, are defined as the maximum cardinality of a clique, resp.~independent set, in~$G$.

A \emph{matching} in a graph $G$ is a set $M$ of pairwise disjoint edges.
If, in addition, no two vertices from different edges in the matching $M$ are adjacent, then the matching $M$ is an \emph{induced matching} in $G$.
An edge \emph{touches} a set $X \subseteq V(G)$ if it has at least one endpoint in $X$.
A set $M$ of edges \emph{touches} $X$ if every edge from $M$ touches $X$.

The \emph{distance} between two vertices $u$ and $v$ in $G$ is denoted by $\dist_G(u,v)$ and defined as the length of a shortest path between $u$ and $v$ (or $\infty$ if there is no $u$-$v$-path in $G$).
Given a positive integer $k$, the \emph{$k$-th power} of $G$ is the graph $G^k$ with vertex set $V(G)$, in which two distinct vertices $u$ and $v$ are adjacent if and only if $\dist_G(u,v)\le k$. Note that $G^1 =G$.

Let {$\wei\colon V(G) \to \Q_+$} be a weight function.
For a set $X \subseteq V(G)$, by $\wei(X)$ we denote $\sum_{v \in X} \wei(v)$.
In particular, $\wei(\emptyset) = 0$.

\paragraph{Tree decompositions.}
A \emph{tree decomposition} of a graph $G$ is a pair $\mathcal{T} = (T, \{X_t\}_{t\in V(T)})$ where $T$ is a tree and every node $t$ of $T$ is assigned a vertex subset $X_t\subseteq V(G)$ called a \emph{bag} such that the following conditions are satisfied: every vertex is in at least one bag, for every edge $uv\in E(G)$ there exists a node $t\in V(T)$ such that $X_t$ contains both $u$ and $v$, and for every vertex $u\in V(G)$ the subgraph of $T$ induced by the set $\{t\in V(T): u\in X_t\}$ is connected (that is, a tree).

Usually we consider tree decompositions to be rooted, and the root is denoted by $\root(T)$.
For a node $t$ of $T$, by $T_t$ we denote the subtree of $T$ rooted at $t$, and by $G_t$ we denote the subgraph of $G$ induced by $V_t \coloneqq \bigcup_{t' \in V(T_t)} X_{t'}$.
A rooted tree decomposition $\cT = (T,\{X_t\}_{t \in V(T)})$ of a graph $G$ is \emph{nice}
if each node $t$ is of one of the following types:
\begin{description}
\item[leaf] a {leaf in $T$} such that $X_t=\emptyset$,
\item[introduce] an inner node with a single child $t'$ such that $X_t = X_{t'} \cup \{v\}$ for some $v \in V(G) \setminus X_{t'}$,
\item[forget] an inner node with a single child $t'$ such that $X_t = X_{t'} \setminus \{v\}$ for some $v \in X_{t'}$,
\item[join] an inner node with exactly two children $t',t''$, such that $X_t = X_{t'} = X_{t''}$,
\end{description}
and $X_{\root(T)} = \emptyset$.
It is well known that given a tree decomposition $\cT=(T,\{X_t\}_{t \in V(T)})$ of $G$, one can obtain in time $\Oh(|V(G)|^2|V(T)|)$ a nice tree decomposition $\cT'=(T',\{X'_t\}_{t \in V(T')})$ of $G$ with at most $|V(G)| \cdot |V(T)|$ nodes and whose every bag is contained in some bag of $\cT$ (see, e.g., the textbook~\cite{platypus}, or~\cite{dallard2022firstpaper} for a treatment focused on the independence number).

\paragraph{Tree-independence number and induced matching treewidth.}
Consider a graph $G$ and a tree decomposition $\mathcal{T} = (T, \{X_t\}_{t\in V(T)})$ of $G$.
The \emph{independence number} of $\mathcal T$, denoted by $\alpha_G(\mathcal{T})$, is defined as follows:
\[\alpha_G(\mathcal{T}) = \max_{t\in V(T)} \alpha(G[X_t])\,.\]

The \emph{tree-independence number} of $G$, denoted by $\tin(G)$, is the minimum independence number among all possible tree decompositions of $G$.
A graph $G$ is chordal if and only if $\tin(G)\le 1$ (see~\cite{dallard2022firstpaper}).

Let again $G$ be a graph and $\mathcal{T} = (T, \{X_t\}_{t\in V(T)})$ be its tree decomposition.
Let
\[\mu_G(\mathcal{T}) = \max\left\{|M|\colon \textnormal{ $M$ is an induced matching in $G$ that touches a bag of } \mathcal{T}\right\}\,.\]
The \emph{induced matching treewidth} of $G$, denoted by $\yw(G)$, is the minimum value of $\mu_G(\cT)$ among all possible tree decompositions $\cT$ of $G$. Recall that every graph $G$ satisfies $\yw(G)\le \tin(G)$.

Note that if $\cT$ and $\cT'$ are tree decompositions such that every bag of $\cT'$ is contained in some bag of $\cT$ (in particular, if $\cT'$ is a nice tree decomposition obtained from $\cT$), then we have $\alpha_G(\cT') \leq \alpha_G(\cT)$ and $\mu_G(\cT') \leq \mu_G(\cT)$.
If the graph $G$ is clear from the context, we simply write $\alpha(\cT)$ and $\mu(\cT)$ for $\alpha_G(\cT)$ and $\mu_G(\cT)$, respectively.

{We now recall two results by Yolov from~\cite{DBLP:conf/soda/Yolov18}, expressing each of these two parameters in terms of the value of the other parameter on a derived graph.
Let $G=(V,E)$ be a graph.
We denote by $L^2(G)$ the \emph{square of the line graph} of $G$, that is, the graph with vertex set $E$ in which two distinct vertices $e,f\in E$ are adjacent if and only if the subgraph of $G$ induced by the endpoints of $e$ and $f$ is connected.
Furthermore, by $G\odot K_1$ we denote the \emph{corona} of $G$, that is, the graph obtained from $G$ by adding a pendant edge to each vertex.
Formally, $V(G\odot K_1) = V\cup V'$
where $V' = \{v':v\in V\}$ is a set of $|V|$ new vertices, and $E(G\odot K_1) = E\cup \{vv':v\in V\}$.

\begin{proposition}[Yolov~\cite{DBLP:conf/soda/Yolov18}]
\label{lem:Yolov}
For every graph $G$, it holds that \[\yw(G) = \tin(L^2(G))\quad\quad \textrm{and} \quad\quad\tin(G) = \yw(G\odot K_1)\,.\]
\end{proposition}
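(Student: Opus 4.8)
The plan is to prove the two identities $\yw(G) = \tin(L^2(G))$ and $\tin(G) = \yw(G \odot K_1)$ by exhibiting, for each inequality, a way to transform a tree decomposition of one graph into a tree decomposition of the other that controls the relevant parameter. In both cases the combinatorial heart is a clean dictionary between ``independent sets in a bag'' on one side and ``induced matchings touching a bag'' on the other side, so I would first nail down that dictionary and then push it through tree decompositions.

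For the first identity, the key observation is that a set $M \subseteq E(G)$ of edges is an induced matching in $G$ if and only if $M$ is an independent set in $L^2(G)$: indeed, two edges $e,f$ are non-adjacent in $L^2(G)$ exactly when the subgraph of $G$ induced by their (at most four) endpoints is disconnected, i.e.\ $e$ and $f$ are vertex-disjoint and there is no edge of $G$ joining an endpoint of $e$ to an endpoint of $f$ — precisely the condition for $\{e,f\}$ to be an induced matching. So ``induced matchings in $G$'' and ``independent sets in $L^2(G)$'' are literally the same objects. It remains to match up the ``touches a bag of a tree decomposition of $G$'' side with the ``contained in a bag of a tree decomposition of $L^2(G)$'' side. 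Given a tree decomposition $\cT = (T, \{X_t\})$ of $G$, I would build a tree decomposition $\cT'$ of $L^2(G)$ on the same tree $T$ by putting into the bag $Y_t$ every edge of $G$ that touches $X_t$ (equivalently, has at least one endpoint in $X_t$). One checks the three axioms: every edge of $G$ lies in some $Y_t$ since its endpoints lie together in some bag; for an edge $ef$ of $L^2(G)$, the endpoints of $e$ and $f$ induce a connected subgraph of $G$, hence some bag $X_t$ contains two adjacent vertices among them, and that bag's $Y_t$ contains both $e$ and $f$; and the subtree $\{t : e \in Y_t\}$ is the union of subtrees $\{t : x \in X_t\}$ over endpoints $x$ of $e$, which is connected because these subtrees pairwise intersect (the two endpoints of $e$ coexist in a bag). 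Since an independent set of $L^2(G)$ contained in $Y_t$ is an induced matching of $G$ touching $X_t$, we get $\alpha_G(\cT') \le \mu_G(\cT)$ is wrong in direction — rather $\alpha_{L^2(G)}(\cT') \le \mu_G(\cT)$ — giving $\tin(L^2(G)) \le \yw(G)$. For the reverse inequality $\yw(G) \le \tin(L^2(G))$, I would take an optimal tree decomposition of $L^2(G)$ and convert it to a tree decomposition of $G$ by replacing each bag (a set of edges of $G$) with the set of all their endpoints; the connectivity axiom for $G$ follows because for a vertex $v$ with incident edge $e$, the subtree holding $e$ in $L^2(G)$ sits inside the subtree of nodes whose vertex-bag contains $v$, and one argues these are connected (this uses that for a vertex $v$ of $G$ of degree $\ge 1$, the edges incident to $v$ form a clique in $L^2(G)$, hence occupy a connected subtree; isolated vertices of $G$ can be handled by adding them to one bag). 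An induced matching of $G$ touching such a vertex-bag corresponds to an independent set of $L^2(G)$ inside the original edge-bag, yielding the bound.

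The second identity, $\tin(G) = \yw(G \odot K_1)$, I would derive either directly by an analogous bag-translation argument, or — more slickly — by combining the first identity with the structural fact that $L^2(G \odot K_1)$ is isomorphic to a graph obtained from $G$ in a transparent way (the pendant edge $vv'$ becomes a vertex of $L^2(G\odot K_1)$ adjacent to $v$'s incident edges only, and two original edges $e,f$ stay adjacent iff their endpoints induce something connected in $G \odot K_1$, which since the pendant vertices are leaves is the same as in $G$); one checks $G$ is an induced subgraph of $L^2(G\odot K_1)$ via $v \mapsto vv'$, and that the extra vertices (original edges of $G$) do not increase the tree-independence number, so $\tin(G) = \tin(L^2(G\odot K_1)) = \yw(G\odot K_1)$, the last step by the first identity applied to $G \odot K_1$. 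I would present whichever of the two routes is shorter to write carefully; the direct route requires the same kind of bag bookkeeping as above, while the slick route offloads the work onto a (routine but slightly fiddly) isomorphism check.

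The main obstacle in all of this is not any single deep idea but getting the tree-decomposition connectivity axiom to hold after the bag translations — in particular, verifying that when we expand edge-bags to vertex-bags (or contract vertex-bags to edge-bags), the set of nodes containing a given vertex (resp.\ edge) stays connected. This hinges on the fact that for any vertex $v$ the set of edges incident to $v$ induces a clique (hence a connected subgraph, hence lives in a connected subtree) in $L^2(G)$, together with the standard fact that in a tree decomposition the subtrees associated to the elements of a clique have a common node. I would isolate this as a small lemma and cite the ``Helly property of subtrees of a tree'' to keep the argument clean. Degenerate cases — isolated vertices of $G$, graphs with no edges — should be dispatched separately and are easy.
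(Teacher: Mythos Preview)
The paper does not give its own proof of this proposition; it is quoted from Yolov and used as a black box. So there is nothing to compare against, and your proposal has to be judged on its own merits.

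Your forward directions are fine, but there is a genuine gap in the reverse inequality $\yw(G)\le \tin(L^2(G))$ of the first identity. You propose to take an optimal tree decomposition $(T,\{Y_t\})$ of $L^2(G)$ and set $X_t$ equal to the set of all endpoints of edges in $Y_t$; you then assert that an induced matching of $G$ touching $X_t$ is an independent set of $L^2(G)$ contained in $Y_t$. This is false: an edge $e$ of $G$ may touch $X_t$ (via a vertex that is an endpoint of some \emph{other} edge in $Y_t$) without $e$ itself lying in $Y_t$. Concretely, take $G$ the path $a\text{--}b\text{--}c\text{--}d\text{--}e$. Then $L^2(G)$ is $K_4$ minus the edge $\{ab,de\}$, and the two-bag decomposition $Y_1=\{ab,bc,cd\}$, $Y_2=\{bc,cd,de\}$ has independence number $1$. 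Your rule gives $X_1=\{a,b,c,d\}$, and the induced matching $\{ab,de\}$ of $G$ touches $X_1$ (via $d$), so $\mu_G$ at that bag is $2>1$. Thus your construction does not certify $\yw(G)\le \alpha_{L^2(G)}(\cT')$.

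The fix is to shrink, not expand: set $X_t=\{v\in V(G):\text{every edge of }G\text{ incident to }v\text{ lies in }Y_t\}$. The Helly argument you already isolated (edges incident to $v$ form a clique in $L^2(G)$; for an edge $uv$, the edges incident to $u$ or to $v$ even form a clique) shows this is a valid tree decomposition of $G$, and now any edge of an induced matching touching $X_t$ genuinely lies in $Y_t$, giving the desired bound. For the second identity, your ``slick'' route via $\tin(L^2(G\odot K_1))$ leaves the hard direction $\tin(L^2(G\odot K_1))\le \tin(G)$ as a bare assertion; the direct argument (restrict bags of an optimal decomposition of $G\odot K_1$ to $V(G)$ and use that the pendant edges $vv'$ over an independent set of a bag form an induced matching touching that bag; conversely, append leaf bags $\{v,v'\}$ to an optimal decomposition of $G$) is short and complete, and I would recommend writing that one out.
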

}

Furthermore, given a graph with bounded {tree-independence number or bounded} induced matching treewidth, we can compute in polynomial time a tree decomposition with bounded {$\alpha(\cT)$ or $\mu(\cT)$, respectively.
Dallard et al.~showed in~\cite{dallard2022computing} that there is an algorithm that, given an $n$-vertex graph $G$ and an integer $k$, in time $2^{\Oh(k^2)}n^{\Oh(k)}$ either outputs a tree decomposition of $G$ with with independence number at most $8k$, or determines that the tree-independence number of $G$ is larger than~$k$.
This improves on an earlier analogous result of Yolov~\cite{DBLP:conf/soda/Yolov18} giving an algorithm with running time $n^{\mathcal{O}(k^3)}$ and a bound on the independence number of $\mathcal{O}(k^3)$.
A~consequence of the aforementioned result is the following.

\begin{theorem}[Dallard et al.~\cite{dallard2022computing}]\label{thm:approximate-dallard}
Let $k$ be a positive integer and let $G$ be an $n$-vertex graph satisfying $\tin(G) \leq k$.
Then, in time $n^{\Oh(k)}$ we can obtain a tree decomposition $\cT$ of $G$ such that $\alpha(\cT)\le 8k$.
\end{theorem}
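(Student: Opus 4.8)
The statement is an immediate corollary of the algorithmic result of Dallard et al.\ recalled in the paragraph preceding it, so the plan is essentially to record a short observation rather than to carry out a genuine argument; I do not expect a real obstacle here.

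Concretely, I would proceed as follows. Given $G$ and $k$ together with the promise $\tin(G)\le k$, run the algorithm of Dallard et al.~\cite{dallard2022computing} which, in time $2^{\Oh(k^2)}n^{\Oh(k)}$, either returns a tree decomposition $\cT$ of $G$ with $\alpha(\cT)\le 8k$, or correctly determines that $\tin(G)>k$. The key point is that the second outcome is ruled out by hypothesis: since that algorithm reports $\tin(G)>k$ only when this is actually the case, and we assumed $\tin(G)\le k$, it cannot produce that answer on our input. Hence it must output a tree decomposition of $G$, and by its specification this decomposition satisfies $\alpha(\cT)\le 8k$. This is exactly the object we were asked to produce, so the only remaining task is a bit of running-time bookkeeping.

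For the running time, the invoked algorithm runs in time $2^{\Oh(k^2)}\cdot n^{\Oh(k)}$. Since $k$ is treated as a fixed constant throughout the paper, the factor $2^{\Oh(k^2)}$ is a constant and the bound collapses to $n^{\Oh(k)}$, as claimed. (If one wishes to keep $k$ non-constant, one may note that $2^{\Oh(k^2)}=n^{\Oh(k)}$ whenever $k=\Oh(\log n)$, which absorbs the extra factor in the relevant regime.) I would also mention, as an alternative that avoids the strongest black box, that one can instead invoke the earlier algorithm of Yolov~\cite{DBLP:conf/soda/Yolov18}, which runs in time $n^{\Oh(k^3)}$ and yields the weaker guarantee $\alpha(\cT)=\Oh(k^3)$; this already suffices for the qualitative fact that bounded tree-independence number allows one to compute, in polynomial time, a tree decomposition of bounded independence number, but with worse dependence on $k$. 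The only ``difficulty'', such as it is, is simply to be precise about which of these cited algorithms is being used and about the constant hidden in the running time; there is no combinatorial content to supply beyond what is already stated.
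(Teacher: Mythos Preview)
Your proposal is correct and matches the paper's treatment: the paper does not give a proof of this statement but simply records it as ``a consequence of the aforementioned result'' of Dallard et al., and your observation that the promise $\tin(G)\le k$ rules out the second branch of their algorithm, together with the absorption of the $2^{\Oh(k^2)}$ factor into $n^{\Oh(k)}$ when $k$ is fixed, is exactly the intended reading.
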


Combining \cref{thm:approximate-dallard} with \cref{lem:Yolov} (and its constructive proof) yields a similar result for the induced matching treewidth.

\begin{theorem}
\label{thm:approximate-yolov}
Let $k$ be a positive integer and let $G$ be an $n$-vertex graph satisfying \hbox{$\yw(G) \leq k$}.
Then, in time $n^{\Oh(k)}$ we can obtain a tree decomposition $\cT$ of $G$ such that $\mu(\cT)\le 8k$.
\end{theorem}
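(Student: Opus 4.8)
The plan is to derive the statement from \cref{thm:approximate-dallard} through the identity $\yw(G)=\tin(L^2(G))$ of \cref{lem:Yolov}. Suppose $\yw(G)\le k$. Then $L^2(G)$ is a graph whose vertex set is $E(G)$, so it has at most $\binom{n}{2}$ vertices and can be built in time $n^{\Oh(1)}$, and $\tin(L^2(G))=\yw(G)\le k$. Applying \cref{thm:approximate-dallard} to $L^2(G)$ with parameter $k$ yields, in time $|E(G)|^{\Oh(k)}=n^{\Oh(k)}$, a tree decomposition $\cT'=(T,\{Y_t\}_{t\in V(T)})$ of $L^2(G)$ with $\alpha(\cT')\le 8k$; here every bag $Y_t$ is a set of edges of $G$, and we may assume $|V(T)|=n^{\Oh(1)}$ (e.g.\ by first turning $\cT'$ into a nice tree decomposition, which only shrinks bags). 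What remains is to turn $\cT'$ into a tree decomposition of $G$ with $\mu\le 8k$, that is, to supply a constructive and quantitatively sharp version of the inequality $\yw(G)\le\tin(L^2(G))$.

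For the conversion I would proceed as follows. For a vertex $v$ of $G$ let $\delta_G(v)$ denote the set of edges of $G$ incident with $v$. Keeping the tree $T$, define $X_t\coloneqq\{v\in V(G):\delta_G(v)\subseteq Y_t\}$ for each $t\in V(T)$ and put $\cT\coloneqq(T,\{X_t\}_{t\in V(T)})$. I would then check that $\cT$ is a tree decomposition of $G$: for connectivity, $\{t:v\in X_t\}=\bigcap_{e\in\delta_G(v)}\{t:e\in Y_t\}$ is an intersection of subtrees of $T$ (and equals $V(T)$ if $v$ is isolated), hence connected; for edge coverage, given $uv\in E(G)$, the edge set $\delta_G(u)\cup\delta_G(v)$ is a clique of $L^2(G)$ — any two of its members either share an endpoint or, having one endpoint at $u$ and one at $v$, induce a connected subgraph of $G$ via the edge $uv$ — so some bag $Y_t$ contains all of it, giving $u,v\in X_t$; this covers every non-isolated vertex as well, and isolated vertices lie in every bag. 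Forming all the bags $X_t$ from the bags $Y_t$ clearly takes time $n^{\Oh(1)}$.

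It remains to bound $\mu(\cT)$. Recall that a set of edges of $G$ is an induced matching of $G$ exactly when it is an independent set of $L^2(G)$. Let $M$ be an induced matching of $G$ touching a bag $X_t$. Every edge $e\in M$ has an endpoint $u\in X_t$; since $e$ is incident with $u$, the vertex $u$ is not isolated, so $\delta_G(u)\subseteq Y_t$ and thus $e\in Y_t$. Hence $M\subseteq Y_t$, so $M$ is an independent set of $L^2(G)[Y_t]$ and $|M|\le\alpha\bigl(L^2(G)[Y_t]\bigr)\le\alpha(\cT')\le 8k$. Therefore $\mu(\cT)\le 8k$, and the whole procedure runs in time $n^{\Oh(k)}$, as required.

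I expect the delicate point to be exactly this conversion step — choosing the bags $X_t$. The naive choice $X_t=\bigcup_{e\in Y_t}e$ does \emph{not} work: an edge can touch such an $X_t$ through a shared endpoint with some $f\in Y_t$ without lying in $Y_t$, and one can build trees $G$ together with tree decompositions $\cT'$ of $L^2(G)$ for which the resulting $\cT$ has arbitrarily large $\mu$. Requiring $v\in X_t$ only when \emph{all} edges at $v$ belong to $Y_t$ is precisely what forces every edge touching $X_t$ to lie inside $Y_t$, while the clique/Helly argument above is what keeps enough bags $Y_t$ large enough for $\cT$ to remain a valid tree decomposition; once this definition is fixed, the rest is routine bookkeeping of correctness and running time.
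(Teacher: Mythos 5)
Your proof is correct and follows the same route as the paper: compute $L^2(G)$, apply \cref{thm:approximate-dallard} to it, and convert the resulting decomposition of $L^2(G)$ back into one of $G$. The only difference is that the paper delegates the conversion step to the constructive proof of Yolov's identity $\yw(G)=\tin(L^2(G))$ (citing \cite[Theorem 4.2]{DBLP:conf/soda/Yolov18}), whereas you supply an explicit and correct conversion via the bags $X_t=\{v:\delta_G(v)\subseteq Y_t\}$, including the Helly/clique argument needed for validity.
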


\begin{proof}
By \cref{lem:Yolov}, $\tin(L^2(G)) = \yw(G) \le k$.
Note that the graph $L^2(G)$ has $ \mathcal{O}(n^2)$ vertices and can be computed in time $n^{\Oh(1)}$.
Applying \cref{thm:approximate-dallard} to the graph $L^2(G)$, we obtain in time $n^{\Oh(k)}$ a tree decomposition $\cT'$ of $L^2(G)$ such that $\alpha(\cT')\le 8k$.
Following the proof of~\cite[Theorem 4.2]{DBLP:conf/soda/Yolov18} (establishing $\yw(G)=\tin(L^2(G))$), this tree decomposition can be turned in polynomial time into a  tree decomposition $\cT$ of $G$ such that $\mu(\cT)\le 8k$.
\end{proof}
}

The following property of tree decompositions with bounded induced matching treewidth allows us to exploit them algorithmically, in order to solve the \textsc{MWIS} problem.

\begin{sloppypar}
\begin{theorem}[Yolov~\cite{DBLP:conf/soda/Yolov18}]\label{prop:yolov-misets}
Let $k$ be a positive integer, let $G$ be an $n$-vertex graph, and let $\cT=(T,\{X_t\}_{t \in V(T)})$ be a tree decomposition of $G$ such that $\mu(\cT) \leq k$.
Then for every $t \in V(T)$, in time $n^{\Oh(k)}$ one can enumerate a family $\misets_t$ of at most $n^{2k+2}$ subsets of $X_t$ with the following property:
For any maximal independent set $I$ of $G$, the set $I \cap X_t$ belongs to $\misets_t$.
\end{theorem}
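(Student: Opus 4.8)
The plan is to exploit the structure imposed by $\mu(\cT)\le k$: whenever a bag $X_t$ contains many vertices of a maximal independent set $I$, those vertices are "anchored" to the outside world by a small number of vertices (since a large set of edges leaving $X_t$ towards pairwise-nonadjacent anchor vertices would itself contain a large induced matching touching $X_t$). Concretely, fix $t$ and a maximal independent set $I$; write $A \coloneqq I\cap X_t$. First I would record the easy fact that $A$ is an independent set in $G[X_t]$, but that $|A|$ may be huge, so we must encode $A$ by a bounded-size "fingerprint". For each vertex $v\in X_t\setminus I$, maximality of $I$ gives a \emph{private} neighbour $p(v)\in I$ with $p(v)$ adjacent to $v$; moreover we may choose $p(v)\in N[X_t]$ always, and when $v$ has a neighbour in $I\cap X_t$ we take $p(v)\in A$. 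Let $P$ be the set of these private neighbours. The key claim is that there is an inclusion-minimal subset $Q\subseteq N[X_t]$ of size $\Oh(k)$ such that membership of a vertex $u\in X_t$ in $I$ is determined by whether $u$ has \emph{no} neighbour in $Q$ (then $u\in I$) or at least one neighbour in $Q$ (then $u\notin I$); equivalently, $Q$ is a small "hitting set" that certifies, for every $v\in X_t\setminus I$, that $v$ is excluded from $I$, while not hitting $I$ itself.

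The heart of the argument is the size bound on $Q$, and this is the step I expect to be the main obstacle — everything else is bookkeeping. Suppose $Q$ is inclusion-minimal with the stated separating property and $|Q|$ is large. Then for each $q\in Q$ there is, by minimality, a "witness" vertex $w_q\in X_t\setminus I$ whose only neighbour in $Q$ is $q$ (otherwise $Q\setminus\{q\}$ would still work). The pairs $\{w_q q\}_{q\in Q}$ are edges of $G$ touching $X_t$ (each $w_q\in X_t$). I would then extract from this family an induced matching: greedily pick edges $w_q q$, discarding any edge whose endpoints are adjacent to a previously chosen endpoint; since each $w_q$ and each $q$ has bounded degree \emph{into the chosen structure}? — no, that is false in general, so instead I would use the standard trick of passing to a sub-family via Ramsey/greedy on the auxiliary conflict graph, OR, more cleanly, first note each $w_q$ is \emph{non}-adjacent to $I\supseteq\{p(w_{q'}): q'\}$ except at its own witnesses, and $q\in I$ is non-adjacent to all other elements of $Q$... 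I would work out the precise combinatorial extraction so that a large $Q$ forces an induced matching of size $>k$ touching $X_t$, contradicting $\mu(\cT)\le k$; this yields $|Q|=\Oh(k)$. (This is exactly the mechanism behind the statement, and Yolov's original argument gives the clean bound $|Q|\le 2k$, leading to the exponent $2k+2$.)

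Given the bound $|Q|=\Oh(k)$, the enumeration is immediate: we iterate over all choices of $Q\subseteq N[X_t]$ with $|Q|\le 2k$, of which there are at most $n^{\Oh(k)}$ (note $|N[X_t]|\le n$), and for each such $Q$ we output the single set $\{u\in X_t : N(u)\cap Q=\emptyset\}$. This produces a family $\misets_t$ of at most $n^{2k+2}$ subsets of $X_t$ in time $n^{\Oh(k)}$, and by the claim above, for every maximal independent set $I$ the correct $Q=Q(I)$ appears among the guesses, so $I\cap X_t = \{u\in X_t: N(u)\cap Q(I)=\emptyset\}\in\misets_t$. (One should double-check that the "$Q$ does not hit $I$" half of the separating property is automatic: $Q\subseteq N[X_t]$ but elements of $Q$ lie \emph{outside} $I$ only if they are among the private neighbours — actually $Q$ may contain vertices of $I$; the right formulation is that a vertex $u\in X_t$ lies in $I$ iff $u$ has no neighbour in $Q\setminus I$, and since $I$ is independent, "$u\in I$" forces "$N(u)\cap I=\emptyset$", so it is cleanest to take $Q$ disjoint from $I$ from the start, which is possible because each private neighbour $p(v)$ can be chosen in $N[X_t]\setminus I$ when $v$ has no $I$-neighbour in $X_t$, and handled separately otherwise.) I would present the final version with $Q$ chosen disjoint from $I$ so that the membership test "$N(u)\cap Q=\emptyset \iff u\in I$" holds on the nose, matching the bound $n^{2k+2}$ in the statement.
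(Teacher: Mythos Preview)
Your overall strategy---encode $A=I\cap X_t$ by a small set $Q$ such that $A=\{u\in X_t:N(u)\cap Q=\emptyset\}$---is right, and the set $Q$ should indeed live inside $I$ (not outside it; your final paragraph has this backwards: the certificates for ``$v\notin I$'' are neighbours of $v$ in $I$, so $Q\subseteq I$, and then $N(u)\cap Q=\emptyset$ is automatic for $u\in A$ because $I$ is independent). The real problem is the size bound on $Q$, and here your argument has a genuine gap that cannot be closed by Ramsey or greedy extraction.

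If $Q\subseteq I$ is inclusion-minimal dominating \emph{all} of $X_t\setminus I$, then each $q\in Q$ has a private witness $w_q\in X_t\setminus I$, and the edges $qw_q$ form a matching with the $q$'s independent and $qw_{q'}\notin E(G)$ for $q\neq q'$---but nothing prevents the $w_q$'s from being pairwise adjacent. Concretely: let $X_t$ induce a clique $C$ of size $m$, attach to each $c\in C$ a private pendant $p_c\notin X_t$, and take $I=\{p_c:c\in C\}$. Then $X_t\setminus I=C$, the unique minimal dominating $Q$ is all of $I$ with $|Q|=m$, yet every induced matching touching $X_t$ has size~$1$ because any two edges meeting $C$ have adjacent endpoints. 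So $\mu(\cT)=1$ while your $Q$ is unbounded; no extraction argument can save this.

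The missing idea is exactly the one the paper uses: do not try to dominate all of $X_t\setminus I$. Instead, first extend $J:=I\cap X_t$ to a maximal independent set $J'$ of $G[X_t]$, and then take $Q\subseteq I\cap N(X_t)$ minimal dominating only $J'\setminus J$. Now both $Q$ and $J'\setminus J$ are independent, so the private-neighbour edges $\{q\,r_q:q\in Q\}$ form an induced matching touching $X_t$ on the nose, giving $|Q|\le k$ directly. One then recovers $J=J'\setminus N(Q)$ from the pair $(J',Q)$. The enumeration over $J'$ uses Alekseev's bound that a graph with no induced matching larger than $k$ has at most $|V|^{2k}$ maximal independent sets, applied to $G[X_t]$; together with $|Q|\le k$ this yields $n^{\Oh(k)}$ candidates. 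In the clique example above, $J=\emptyset$, $J'=\{c\}$ for some $c\in C$, and $Q=\{p_c\}$ has size~$1$.
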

\end{sloppypar}

We conclude this section with a short alternative proof of \cref{prop:yolov-misets}.
We include it for two reasons.
First, we believe it is easier to follow (at least in the context of the current paper), as the terminology and notation of~\cite{DBLP:conf/soda/Yolov18} is quite heavy and very different from ours.
Second, it serves as a warm-up for the main combinatorial result of \cref{sec:fvs}, i.e., \cref{lem:ff}, which uses the same basic idea, but is significantly more complicated.
We remark that the presented argument gives a slightly worse bound on the size of $\misets_t$; we do not attempt to optimize it, rather aiming for simplicity.

\begin{proof}[Proof of \cref{prop:yolov-misets} (with the bound of $n^{3k}$ on the size of $\misets_t$).]
Fix some $t \in V(T)$.
Define
\[\misets^* = \{ J \colon \exists \text{ a maximal independent set $I$ in $G$ s.t.~$I \cap X_t =J$}\}.\]
We will show that in time $n^{\Oh(k)}$ we can enumerate a set $\misets_t$ of size at most $n^{3k}$ that contains $\misets^*$.

Fix some $J \in \misets^*$ and let $I$ be a maximal independent set in $G$ such that $I \cap X_t = J$.
There exists some set $J'$ that is a maximal independent set in $G[X_t]$ and contains $J$.
Notice that every vertex of $J' \setminus J$ is adjacent to some vertex in $I \cap N(X_t)$. Let $Q$ be an inclusion-wise minimal subset of $I \cap N(X_t)$ such that every vertex from $J' \setminus J$ has a neighbor in $Q$.

Notice that by the minimality of $Q$ each element $q \in Q$ has a \emph{private neighbor} $r_q \in J' \setminus J$, i.e., $qr_{q'}$ is an edge if and only if $q = q'$. Furthermore, as each of $Q$ and $J' \setminus J$ is independent, we observe that
the vertices $\bigcup_{q \in Q}\{q,r_q\}$ induce in $G$ a matching of size $|Q|$ that touches $X_t$.
Consequently, $|Q| \leq k$.
Now observe that the pair $(J', Q)$ uniquely determines $J$ as $J=J' \setminus N(Q)$.

By a result of Alekseev~\cite[Corollary 1]{Alekseev+2007+355+359}, if a graph $H$ has no induced matching of size larger than $k$, then the number of maximal independent sets in $H$ is at most $|V(H)|^{2k}$. As $G[X_t]$ is such a graph, we conclude that the number of candidates for $J'$ is at most $n^{2k}$.
Furthermore, maximal independent sets can be enumerated with polynomial delay~\cite{DBLP:journals/siamcomp/TsukiyamaIAS77}, so we can enumerate all candidates for $J'$ in time $n^{\Oh(k)}$.
On the other hand, there are clearly at most $n^k$ candidates for $Q$ and they can be enumerated in time $n^{\Oh(k)}$. This completes the proof.
\end{proof}

\section{Solving \textsc{Max Weight Induced Forest}}\label{sec:fvs}
Let $F$ be an induced forest in $G$ and $\cT = (T, \{X_t\}_{t \in V(T)})$ be a tree decomposition of $G$.
The \emph{signature of $F$ at $t \in V(T)$} is a pair $(Z,\pi)$, such that:
\begin{enumerate}
    \item $Z = F \cap X_t$, and
    \item $\pi$ is the partition of $Z$ such that $z_1,z_2 \in Z$ are in the same block of $\pi$ if and only if there is a $z_1$-$z_2$-path in $F \cap V_t$.
\end{enumerate}
In other words, $\pi$ corresponds to connected components of the subgraph of $F$ induced by $V_t$.

Note that, given a tree decomposition $\cT$ with $\mu(\cT) \leq k$, we cannot bound the number of possible signatures of any induced forest at each bag of $\cT$ by a function of $k$. Indeed, it might happen that a bag induces an independent set of size linear in $n = |V(G)|$; then, every subset of such a bag forms an induced forest in $G$. However, this is no longer the case if we talk about \emph{ inclusion-wise maximal} induced forests; note that we can safely assume that every optimal solution is maximal.
The main combinatorial insight used in the proof of \cref{thm:fvs} is the following lemma, which is an analogue of \cref{prop:yolov-misets} for the \textsc{Max Weight Induced Forest} problem.

\begin{lemma}
\label{lem:ff}
Let $k$ be a fixed integer.
Let $G$ be an $n$-vertex graph and let $\cT=(T,\{X_t\}_{t \in V(T)})$ be a tree decomposition of $G$ satisfying $\mu(\cT) \leq k$.
Then, for each $t \in V(T)$ there is a set $\FF_t$ of size at most $(12k)^{12k} \cdot n^{14k+2}$ with the following property:
For every maximal induced forest $F$ of $G$, the signature of $F$ at $t$ is in $\FF_t$.
Furthermore, $\FF_t$ can be enumerated in time $n^{\Oh(k)}$.
\end{lemma}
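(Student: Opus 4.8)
The plan is to follow closely the extended sketch given in the text, making each ``guessing'' step precise and bounding the number of candidates. Fix a node $t$ and an (unknown) maximal induced forest $F$; we want to show its signature $(Z,\pi)$ at $t$ lies in a polynomially-sized, efficiently-enumerable family. Recall the partition of $V(F)$ into $\skel{F}$ (vertices of degree $\geq 2$, plus one designated vertex $v_C$ per two-vertex component), $L(F)$ (degree-1 vertices, plus the non-designated vertex of each two-vertex component), and $T(F)$ (isolated vertices of $F$), and write $\limb{F}=L(F)\cup T(F)$, which is an independent set of $G$.

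First I would bound $S\coloneqq \skel{F}\cap X_t$. The claim is $|S|\leq f(k)$ for an explicit $f$: if $S$ were large, then since skeleton vertices have degree $\geq 2$ in $F$, a greedy argument inside $F$ extracts from $S$ an induced matching touching $X_t$ of size exceeding $k$ (pick skeleton vertices in $X_t$ one at a time, each time taking an incident $F$-edge to an unused vertex; the forest structure and the ``one designated vertex per $K_2$-component'' convention keep the matching induced). Hence $|S|$ is bounded by a constant depending on $k$, and there are at most $n^{|S|}\leq n^{\Oh(k)}$ candidates for $S$, enumerable in time $n^{\Oh(k)}$. Next, handle $\limb{F}\cap X_t$ via \cref{prop:yolov-misets}: pick any maximal independent set $I^*\supseteq\limb{F}$ and let $I\coloneqq I^*\cap X_t\in\misets_t$; there are at most $n^{2k+2}$ (in the weaker bound, $n^{3k}$) choices, enumerable in time $n^{\Oh(k)}$. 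Now $Z\cap(I\setminus S)$ are exactly the vertices of $I\setminus S$ that are genuinely in $\limb{F}$ (degree $0$ or $1$ in $F$); the ``impostors'' are the $v\in I\setminus S$ with at least two $F$-neighbors, which (since $\limb{F}\cup\{v\}$ would otherwise be independent, contradicting maximality of $F$ only through a created cycle — and $I^*$ is independent) must have two neighbors in $\skel{F}$, hence in $\widehat S\coloneqq\skel{F}\cap N[X_t]$. So the type of each $v\in I\setminus S$ (trivial / leaf / impostor) is read off from $|N(v)\cap\widehat S|$ being $0$ / $1$ / $\geq 2$.

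The main obstacle, exactly as the text flags, is that $\widehat S$ can be arbitrarily large, so we cannot store it. The fix is to take $Q$ to be an inclusion-wise minimal subset of $\widehat S$ that still correctly witnesses the types, i.e.\ $Q$ is minimal subject to: every impostor in $I\setminus S$ has $\geq 2$ neighbors in $Q\cup S$, and (to keep leaves leaves) every leaf of $F$ in $I\setminus S$ still has $\geq 1$ neighbor in $Q\cup S$, while trivial vertices have none in $\widehat S$ anyway. I would argue $|Q|\leq g(k)$ by the by-now-standard private-neighbor/induced-matching extraction: by minimality each $q\in Q$ has a private ``reason'' vertex $r_q\in I\setminus S$ (a vertex whose type-classification would change upon removing $q$), and $q$ has an $F$-edge to some vertex of $\skel{F}$; assembling these edges and using that $I$ is independent and $F$ is a forest yields an induced matching touching $X_t$ whose size is $\Omega(|Q|)$, forcing $|Q|=\Oh(k)$. (This is where the constant $(12k)^{12k}$ and the exponent $14k+2$ come from: $|S|,|Q|,|I|$ each contribute $n^{\Oh(k)}$ to the count, plus the $(12k)^{12k}$ from enumerating partitions of the $\Oh(k)$-set $S\cup Q$ below.) Thus all of $S,I,Q$ can be guessed in $n^{\Oh(k)}$ ways jointly.

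Finally I would reconstruct $(Z,\pi)$ from the triple $(S,I,Q)$. Set $L\coloneqq\{v\in I\setminus S:|N(v)\cap(S\cup Q)|=1\}$, $T\coloneqq\{v\in I\setminus S:N(v)\cap(S\cup Q)=\emptyset\}$ (and discard the rest of $I$ as impostors), and put $Z\coloneqq S\cup L\cup T$; by the discussion above this equals $F\cap X_t$ for the intended $F$. For $\pi$: additionally guess the partition $\widehat\pi$ of the constant-size set $S\cup Q$ induced by the components of $F$ within $G_t$ — at most $(|S\cup Q|)^{|S\cup Q|}=(12k)^{12k}$ choices — and then define $\pi$ on $Z$ by declaring $z_1\sim z_2$ iff their attachment points in $S\cup Q$ lie in the same block of $\widehat\pi$, where a vertex of $S$ is its own attachment point, a vertex of $L$ uses its unique neighbor in $S\cup Q$, and each vertex of $T$ forms a singleton block. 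One checks this agrees with the true signature, since every leaf of $F$ in $X_t$ has an $F$-neighbor in $\skel{F}\cap N[X_t]$ that (by minimality of $Q$, after possibly noting $Q$ also captures connectivity witnesses, or by folding such witnesses into the definition of $Q$) is represented in $S\cup Q$, and connectivity inside $G_t$ of the skeleton is faithfully recorded by $\widehat\pi$. Collecting all produced pairs $(Z,\pi)$ over all guesses gives $\FF_t$; the size bound $(12k)^{12k}\cdot n^{14k+2}$ and the $n^{\Oh(k)}$ enumeration time follow by multiplying the per-step bounds, and \cref{prop:yolov-misets} together with polynomial-delay enumeration of the relevant maximal independent sets makes every step constructive.
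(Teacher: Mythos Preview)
Your outline follows the paper's approach, but the two load-bearing bounds --- $|S|=|\skel{F}\cap X_t|\le 8k$ and $|Q|\le 4k$ --- are precisely where the work is, and your arguments for both do not go through as stated.

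For $|S|$: the ``greedy argument inside $F$'' you describe fails on the simplest obstruction. Suppose $S$ contains a long path $v_1v_2\cdots v_m$, all of whose vertices lie in $\skel{F}\cap X_t$ (this is perfectly possible: a path of degree-2 skeleton vertices inside a single bag). Your greedy picks $v_1$ and the edge $v_1v_2$; the next unused skeleton vertex is $v_3$, with incident edge $v_3v_4$, but $v_2v_3\in E(F)$ destroys inducedness. ``Forest structure'' alone does not rescue this. The paper's proof of this claim is not a greedy: it analyzes the forest $\skel{F}\cap N[X_t]$ restricted to components meeting $X_t$, contracts edges outside $X_t$, separately handles small components, long threads (by marking every third edge as a matching edge and shrinking), and leaves of the residual forest, obtaining three induced matchings $M_A,M_B,M_C$ whose union is still induced and has size $a+b+c\ge |S|/8$. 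The thread analysis is exactly what circumvents the long-path obstruction above, and you have no substitute for it.

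For $|Q|$: your private-reason argument is also too loose. ``$r_q$'s classification would change upon removing $q$'' does not give $N(r_q)\cap Q=\{q\}$ (an impostor may keep two $Q$-neighbors after losing $q$ yet still have its count drop from $3$ to $2$ --- no change --- or from $2$ to $1$ --- change but not private in the matching sense). More importantly, even with genuine private neighbors, the edges $qr_q$ only form an induced matching if the $q$'s themselves are pairwise nonadjacent, which is false in general since $Q\subseteq\skel{F}$ can contain adjacent vertices. The paper handles this by first extracting $Q_1\subseteq Q$ minimal subject to $1$-dominating $L\cup A$ (so true private neighbors exist), then using that $Q_1$ induces a forest to $2$-color it and take an independent half $Q_1'$; the edges $\{q\,m(q):q\in Q_1'\}$ are then genuinely induced. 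A second pass with $Q_2=Q\setminus Q_1$ on the still-not-$2$-dominated impostors finishes the bound. Your sketch has neither the split into $Q_1,Q_2$ nor the $2$-coloring step, and without them the conclusion $|Q|=\Oh(k)$ is unsupported.

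The reconstruction of $(Z,\pi)$ from $(S,I,Q,\widehat\pi)$ and the final counting are fine and match the paper; the gap is entirely in the two size bounds.
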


Equipped with \cref{lem:ff}, we can basically mimic the textbook algorithm for \textsc{Max Weight Induced Forest}~\cite{platypus}.
This is encapsulated in the following lemma.

\begin{lemma}\label{lem:algo-fvs}
Let $k$ be a fixed integer.
Let $G$ be an $n$-vertex graph equipped with a vertex weight function $\wei$ and a tree decomposition $\cT=(T,\{X_t\}_{t \in V(T)})$ satisfying $\mu(\cT) \leq k$.
Assume that for each $t \in V(T)$ we are given a set $\FF_t$ as in \cref{lem:ff}.
Then, we can solve the \textsc{Max Weight Induced Forest} problem in time $n^{\Oh(k)} \cdot |V(T)|^{\Oh(1)}$.
\end{lemma}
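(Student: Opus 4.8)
The plan is to adapt the standard bottom-up dynamic-programming algorithm for \textsc{Max Weight Induced Forest} on tree decompositions (as in~\cite{platypus}), but restricting the set of signatures considered at each node to those that are compatible with the precomputed sets $\FF_t$. First I would convert $\cT$ into a nice tree decomposition $\cT'$ in time $\Oh(n^2|V(T)|)$; since every bag of $\cT'$ is contained in a bag of $\cT$, we still have $\mu(\cT')\le k$, and moreover every signature of a forest at a node of $\cT'$ restricts to a signature (via subset and induced partition) at the corresponding bag of $\cT$, so the sets $\FF_t$ still suffice after intersecting each bag appropriately. I would then define, for each node $t$ of $\cT'$ and each pair $(Z,\pi)$ with $Z\subseteq X_t$ and $\pi$ a partition of $Z$, the quantity $c[t,(Z,\pi)]$ to be the maximum weight of an induced forest $F$ in $G_t$ whose signature at $t$ equals $(Z,\pi)$, or $-\infty$ if no such forest exists. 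The goal is the value $c[\root(T'),(\emptyset,\emptyset)]$.

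Next I would give the recurrences for the four node types of a nice decomposition, exactly as in the textbook: for a leaf, $c[t,(\emptyset,\emptyset)]=0$; for an introduce node adding $v$, either $v\notin Z$ (copy the child entry) or $v\in Z$, in which case we require that $\pi$ restricted to $Z\setminus\{v\}$ be a refinement consistent with $v$'s neighbours in $Z$ (and that adding $v$ does not create a cycle), updating the weight by $\wei(v)$; for a forget node removing $v$, we take a maximum over child entries where $v$ may or may not have been in the forest, merging blocks of $\pi$ as appropriate; for a join node we combine the two children's partitions by taking the transitive closure of their union and checking acyclicity. The only change from the textbook is that whenever we would store an entry $c[t,(Z,\pi)]$, we only do so if $(Z,\pi)\in\FF_t$ — and, crucially, we only ever \emph{read} entries from $\FF_{t'}$ at children. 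The correctness claim is that for every node $t$ and every \emph{maximal} induced forest $F$ of $G$, the signature of $F\cap V_t$ at $t$ lies in $\FF_t$; this is precisely what \cref{lem:ff} guarantees for $G_t$ in place of $G$ — but some care is needed here, since \cref{lem:ff} speaks about maximal forests of the whole graph, not of $G_t$.

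The main obstacle I anticipate is exactly this last point: an optimal (hence maximal) induced forest $F$ in $G$ need not induce a maximal forest inside $G_t$, so its signature at $t$ is not directly covered by applying \cref{lem:ff} to $G_t$. To handle this I would argue as follows. Run the DP as described; by a straightforward induction on $\cT'$ one shows $c[t,(Z,\pi)]$ is a valid lower bound whenever it is finite, so the returned value never exceeds the optimum. For the matching upper bound, fix a maximum-weight (hence maximal) induced forest $F^\star$ of $G$. For each node $t$, consider the signature $(Z_t,\pi_t)$ of $F^\star$ at $t$. We need $(Z_t,\pi_t)\in\FF_t$ so that the DP actually stores and propagates the corresponding partial solution. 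To secure this, I would apply \cref{lem:ff} not just to $G$ but observe that $\mu_{G_t}(\cT'|_{T'_t})\le\mu(\cT')\le k$, so \cref{lem:ff} applied to the graph $G_t$ with its induced tree decomposition produces a set $\FF'_t$ containing the signature at $t$ of every maximal induced forest of $G_t$; then I would either (a) take $\FF_t$ to be the union of these local sets — still of size $n^{\Oh(k)}$ and enumerable in time $n^{\Oh(k)}$ — or (b) note that $F^\star\cap V_t$ can be extended to a maximal induced forest $\widehat F$ of $G_t$ by greedily adding vertices of $V_t\setminus X_t$ only (which changes neither $Z_t=F^\star\cap X_t$ nor, by the connectivity of tree decompositions, the partition $\pi_t$, since no added vertex in $V_t\setminus X_t$ can join two blocks of $\pi_t$ through $X_t$ without already being forced), so the signature $(Z_t,\pi_t)$ of $\widehat F$ at $t$ lies in $\FF_t$ by \cref{lem:ff}. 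I expect option (b) to be the cleaner route and the crux of the proof; once it is in place, a final induction shows $c[t,(Z_t,\pi_t)]\ge \wei(F^\star\cap V_t)$ for all $t$, and in particular $c[\root(T'),(\emptyset,\emptyset)]\ge\wei(F^\star)$, giving equality. The reconstruction of an optimal set $F$ is standard traceback. For the running time: $|V(T')|\le n\cdot|V(T)|$, each node has at most $|\FF_t|\le(12k)^{12k}n^{14k+2}$ relevant entries, and each DP transition (in particular the join step, which ranges over pairs of entries of two children and performs an acyclicity check on a graph with $\Oh(k)$ relevant vertices plus partition bookkeeping of size $\Oh(n)$) costs $n^{\Oh(k)}$; multiplying through yields total time $n^{\Oh(k)}\cdot|V(T)|^{\Oh(1)}$, as claimed.
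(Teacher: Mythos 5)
Your dynamic program is the same as the paper's (the standard signature-indexed DP over a nice decomposition, with tables restricted to the sets $\FF_t$, and correctness split into a ``finite entries are realizable'' direction and a ``the optimum's signature is tracked'' direction). However, the step you single out as the crux rests on a misreading of \cref{lem:ff}, and the workaround you propose for it is itself flawed. The signature of $F$ at $t$ is \emph{defined} (at the start of \cref{sec:fvs}) so that $Z=F\cap X_t$ and $\pi$ records connectivity in $F\cap V_t$; hence the signature at $t$ of any forest $F$ of $G$ depends only on $F\cap V_t$, and \cref{lem:ff} is stated precisely for maximal induced forests of the \emph{whole} graph $G$ and their signatures at $t$. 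So for a maximum-weight (hence maximal) forest $F^\star$ of $G$, the pair $(Z_t,\pi_t)$ you need at node $t$ is literally ``the signature of $F^\star$ at $t$'' and lies in $\FF_t$ by hypothesis. Nowhere in the induction do you need $F^\star\cap V_t$ to be a maximal forest of $G_t$, so there is no obstacle to overcome; this is exactly how the paper's \cref{clm:fvs-correctness-Tabatleast} proceeds.

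Your route (b) is not only unnecessary but incorrect as argued: greedily adding vertices of $V_t\setminus X_t$ to $F^\star\cap V_t$ can change $\pi_t$. A vertex $w\in V_t\setminus X_t$ adjacent to exactly one vertex in each of two distinct components of $F^\star\cap V_t$ can be added without creating a cycle, and doing so merges two blocks of $\pi_t$; moreover, adding only vertices outside $X_t$ does not produce a maximal forest of $G_t$ in the first place, so the resulting object would not be covered by \cref{lem:ff} applied to $G_t$ either. Route (a) has the same defect (the sets built from maximal forests of $G_t$ need not contain the signature of $F^\star$, which restricts to a possibly non-maximal forest of $G_t$). Delete this detour and invoke \cref{lem:ff} for $G$ as stated. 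One smaller remark: your claim that the sets $\FF_t$ ``still suffice after intersecting each bag appropriately'' when passing to a nice decomposition is not justified as written (the partition component of a signature depends on $V_t$, not just on the bag); the clean resolution, as in the paper's derivation of \cref{thm:fvs}, is to perform the conversion to a nice decomposition \emph{before} computing the sets $\FF_t$, so that \cref{lem:algo-fvs} receives the decomposition for which they were built.
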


Let us postpone the proofs of \cref{lem:ff,lem:algo-fvs} and now let us show that they indeed imply \cref{thm:fvs}.
Let $G$ be a graph satisfying $\yw(G) \leq k$, equipped with a weight function $\wei\colon V(G) \to \Q_+$.
We call the algorithm from~\cref{thm:approximate-yolov} to obtain a tree decomposition $\cT'$ with {$\mu(\cT')  =\Oh(k)$}.
The running time of this algorithm is {$n^{\Oh(k)}$.}
In polynomial time we can transform $\cT'$ into a nice tree decomposition $\cT = (T, \{X_t\}_{t \in V(T)})$ with $\mu(\cT) \leq \mu(\cT')$.
Define $k' \coloneqq \mu(\cT)$.

Next, for each $t \in V(T)$ we use \cref{lem:ff} to compute the set $\FF_t$; this takes total time $n^{\Oh(k')}$.
Finally, we call \cref{lem:algo-fvs} in order to solve \textsc{Max Weight Induced Forest} with instance $(G,\wei)$. The overall running time is $n^{\Oh(k')}={n^{\Oh(k)}}$.

\subsection{Proof of \texorpdfstring{\cref{lem:ff}}{Lemma 3.1}}

The proof is based on a structural analysis of potential signatures of an optimal solution at a node $t$ of $\cT$.

\paragraph{What lives in a forest.}
Let $F$ be a forest.
A one-vertex component of $F$ is called \emph{trivial}, each other component is \emph{nontrivial}.
The set of vertices of trivial components of $F$ is denoted by $T(F)$.
Let $L(F)$ consist of \emph{leaves} of $F$, i.e., the vertices of degree 1, with the exception that for each component $C$ isomorphic to $K_2$ we include in $L(F)$ exactly one vertex from $C$, {breaking ties arbitrarily.\footnote{For example, by fixing an arbitrary linear order on $V(G)$ and selecting as the vertex in $V(C)\cap L(F)$ the first vertex of $C$ in the ordering.}}
The \emph{skeleton} of $F$, denoted by $\skel{F}$, is $F \setminus (T(F) \cup L(F))$.
Clearly the sets $\skel{F}, T(F), L(F)$ form a partition of $F$, and the set $\limb{F} \coloneqq L(F) \cup T(F)$ is independent.

\paragraph{Skeletons of maximal induced forests.}
Let $G$ be a graph, $\cT = (T, \{X_t\}_{t \in V(T)})$ be a tree decomposition of $G$ with $\mu(\cT) \leq k$, and $t \in V(T)$.
Let $F$ be a maximal induced forest in $G$.

First, {let us show that, even} though $F \cap X_t$ might be large, the size of $S\coloneqq \skel{F} \cap X_t$ is bounded by a linear function of $k$.

\begin{claim}\label{clm:sizeS}
$|S| \leq 8k$.
\end{claim}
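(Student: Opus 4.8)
The plan is to bound $|S| = |\skel{F} \cap X_t|$ by extracting, from a hypothetical large $S$, an induced matching in $G$ that touches $X_t$ and has size exceeding $k$, contradicting $\mu(\cT) \leq k$. The point is that every vertex of $\skel{F}$ has degree at least two in $F$ (skeleton vertices are non-leaves of nontrivial components, and the chosen vertex $v_C$ of a $K_2$-component also has degree one, so one must be slightly careful — but in the definition here $\skel{F} = F \setminus (T(F) \cup L(F))$, and for a $K_2$-component exactly one endpoint is put in $L(F)$; so actually the other endpoint of a $K_2$ lands in $\skel{F}$ despite having degree one. I need to handle that case, but such a vertex still has a neighbor, namely its partner in the $K_2$, which lies in $L(F)$). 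So every $v \in S$ has at least one neighbor in $F$; more precisely every $v \in S$ has at least two neighbors in $F$ unless $v$ is the non-selected endpoint of a $K_2$-component, in which case it has exactly one neighbor, itself in $L(F)$.

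The main step is to build the induced matching greedily. I would process the vertices of $S$ one at a time; for each not-yet-handled $v \in S$, pick a neighbor $u$ of $v$ in $F$ (for a $K_2$-endpoint, its partner; otherwise any one of its $\geq 2$ $F$-neighbors), add the edge $uv$ to our matching $M$, and mark as ``used'' the vertices $u$, $v$, and all $F$-neighbors of $u$ and of $v$. Since $F$ is a forest of maximum degree... well, $F$ has no degree bound, so this is where care is needed: a single step may mark many vertices if $u$ or $v$ has high $F$-degree. The fix is the standard one: observe that $F[S]$ together with the leaves hanging off $S$ forms a forest, and we want a large induced matching inside a forest subject to touching $X_t$. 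A cleaner route: consider the forest $F' = F[N_F[S]]$ (the subgraph of $F$ induced by $S$ and all $F$-neighbors of vertices of $S$); every vertex of $S$ is non-isolated in $F'$, so $F'$ has at least $|S|/2$ edges touching $S \subseteq X_t$; and a forest with $m$ edges contains an induced matching of size at least $m/3$ (greedily: each chosen edge kills at most two others sharing an endpoint... actually in a forest an induced matching of size $\geq m/(\Delta+1)$ is not good enough either).

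So the real crux — and the step I expect to be the main obstacle — is getting from ``$F'$ has many edges touching $X_t$'' to ``$F'$ has a large \emph{induced} matching touching $X_t$'', with the induced-matching bound linear (not just logarithmic) in the number of edges, which forces us to exploit the forest structure rather than treating $F'$ as an arbitrary graph. The clean statement to use is: in any forest, a maximal (under inclusion) induced matching $M$ has the property that every edge of the forest is within distance $1$ of $M$; hence the number of edges is at most $(2\Delta - 1)|M| + \ldots$, which again has a $\Delta$. The way out, matching the $8k$ bound and the informal remark ``otherwise we could extract a large induced matching,'' is to restrict attention to the right subforest: take one pendant edge $e_v = u_v v$ per $v \in S$ as above, but then note that the graph on these chosen edges in which $e_v \sim e_{v'}$ iff they are adjacent or joined by an edge in $G$ is itself the graph $G^{\circ}[\{e_v\}]$ restricted to a forest-like structure — more concretely, since $F$ is an induced forest and the $u_v, v$ all have their $F$-neighborhoods controlled, the ``conflict graph'' on $\{e_v : v \in S\}$ has bounded degree (each $e_v$ conflicts with $e_{v'}$ only if some endpoint of $e_{v'}$ is an $F$-neighbor of an endpoint of $e_v$, and in a forest the relevant neighborhoods, after a suitable selection of the $u_v$, overlap boundedly), so a greedy/Turán argument yields an induced sub-collection of size $\Omega(|S|)$, i.e. $> k$ once $|S| > 8k$. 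I would carry this out by: (1) reducing to the case where $F'$ is a tree, (2) choosing the pendant edges so that no vertex of $F'$ is an endpoint of more than a bounded number of chosen edges and no vertex is an $F$-neighbor of endpoints of more than a bounded number, using that in a tree a vertex of high degree can ``serve'' as $u_v$ for many $v$ but then we simply don't reuse it, and (3) applying that a graph on $N$ vertices with maximum degree $D$ has an independent set of size $\geq N/(D+1)$ to the conflict graph. Setting the constants so that $|S|/c > k$ forces $|S| \leq ck$ with $c = 8$ gives the claim; I expect the bookkeeping of these constants to be the fussiest part, and I would present it by first proving a clean auxiliary fact (``a forest on $m$ edges all touching a set $X$ contains an induced matching of size $\geq m/8$ all of whose edges touch $X$'') and then applying it to $F'$.
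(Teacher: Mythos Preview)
Your proposed ``clean auxiliary fact'' --- that a forest with $m$ edges all touching a set $X$ contains an induced matching of size $\ge m/8$ touching $X$ --- is false: take the star $K_{1,n}$ with $X=\{\text{center}\}$. All $n$ edges touch $X$, yet every two of them share the center, so any induced matching has size~$1$. The reason this does not break the actual claim is precisely the feature your abstraction throws away: vertices of $S\subseteq\skel{F}$ have degree $\ge 2$ in $F$ (with the $K_2$ exception), so in the star only the center lies in $S$ and $|S|=1$. Any correct argument must use this degree condition, not just the count of edges touching $X_t$. Your alternative conflict-graph plan also does not go through as stated: after choosing one incident edge $e_v$ per $v\in S$, the conflict graph need \emph{not} have bounded degree regardless of how the $u_v$ are selected --- if some $v\in S$ has many $S$-neighbours $v'_1,\dots,v'_m$ (all of which, being in $\skel{F}$, have further children so must be in $S$), then $e_v$ is adjacent in $G$ to every $e_{v'_i}$ and has conflict-degree $\ge m$. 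So neither route you outline yields a proof; the real work is showing that the \emph{global} forest structure of $\skel{F}$ around $X_t$ forces a large induced matching, and this cannot be reduced to a local degree bound.

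For comparison, the paper does not attempt a per-vertex edge selection at all. It passes to the forest $\skel{F}\cap N[X_t]$, contracts all edges with both endpoints outside $X_t$ and prunes leaves outside $X_t$, obtaining a forest $S_2\supseteq S$ whose every edge touches $X_t$. It then splits $S_2$ into three pieces --- components of size $\le 2$ (contributing $a$), long degree-$2$ threads (contributing $b$), and the remaining bounded-size tree determined by its $c$ leaves --- and builds three pairwise compatible induced matchings $M_A,M_B,M_C$ of sizes $a,b,c$ touching $X_t$, while bounding $|S|\le 2a+3b+8c$. The inequality $a+b+c\le k$ then gives $|S|\le 8k$. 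The key difference from your plan is that the matching edges come from a structural decomposition of the skeleton (components/threads/leaves), not from a greedy per-vertex choice.
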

\begin{claimproof}
Let $S_1 \subseteq \skel{F} \cap N[X_t]$ consist of {those} components of $\skel{F} \cap N[X_t]$ that intersect $X_t$.
Now define $S_2$ to be the forest obtained from $S_1$ by (i) repeatedly contracting all edges not intersecting $X_t$, and then (ii) removing leaves that are not in $X_t$.
Observe that $S \subseteq S_2$ and every edge of $S_2$ touches $X_t$.
Each vertex that was removed in step (ii) has a unique neighbor in $S_2$; denote the set of these neighbors by $U$.

A component of $S_2$ is \emph{small} if it has at most two vertices.
Let $a$ be the number of small components of $S_2$ and let $S_3$ be obtained from $S_2$ by removing all small components.
A \emph{thread} in $S_3$ is a maximal path with all internal vertices of degree 2.
{Consider a thread in $S_3$} with consecutive vertices $v_0,v_1,\ldots,v_\ell$, where $v_0$ and $v_\ell$ are of degree other than 2 and $\ell \geq 5$.
Let $q = \lfloor (\ell-5)/3 \rfloor$.
We mark edges $v_{3i+2},v_{3i+3}$ for $i \in [0, q]$.
Finally, we remove all vertices $v_j$ for $j \in [2,\ldots,3q+4]$ and add an edge between $v_1$ and $v_{3q+5}$; note that $3q+5 \leq \ell$.
In this step we marked $q+1$ edges and removed $3(q+1)$ vertices.
The marked edges are pairwise non-adjacent (i.e., they form an induced matching).
Furthermore, they are not incident with vertices of degree other than 2 in $S_3$.
Let $S_4$ be obtained from $S_3$ by the application of the step above to each thread of $S_3$.
Define $b \coloneqq (|S_3| - |S_4|)/3$.

Finally, define $c$ to be the number of leaves in $S_4$.
Note that the number of vertices of degree at least 3 in $S_4$ is at most $c$,
thus $S_4$ has at most $2c-1$ threads,
each with at most 3 internal vertices.
Consequently, $S_4$ has at most $2c + (2c-1) \cdot 3 \leq 8c$ vertices.

Summing up, we obtain that
\begin{equation}
|S| \leq |S_2| \leq 2a + 3b + 8c. \label{eq:sizeS}
\end{equation}

Let $A$ be a set of size $a$ containing one vertex per each small component of $S_2$; note that $A \subseteq X_t$.
{For each $v \in A$, as $v \in \skel{F}$, it must be adjacent to some other vertex $u(v)$ of $F$.
Note that for each small component $H$ of $S_2$, the unique component of $S_1$ that contains $H$ intersects $X_t$ in no component other than $H$.}
Thus $M_A = \{ vu(v) ~|~ v \in A\}$ is an induced matching in $G$ of size $a$ that touches $X_t$.

Let $B$ be the set of all edges that were marked while obtaining $S_4$ from $S_3$.
Note that they induce a matching $M_B$ in $G$ of size $b$ that touches $X_t$.

Finally, let $C$ be the set of leaves of $S_4$.
Recall that $C \subseteq X_t$ and the vertices from $C$ are also leaves in $S_2$. Furthermore, as $S_4$ does not contain small components,
the vertices in $C$ form an independent set in $G$.
For each $v \in C$, define a neighbor $u(v)$ of $v$ as follows.
If $v \notin U$, then none of its neighbors got removed in step (ii) and, {since the remaining steps preserve the degrees in $\skel{F}$ of vertices in $(S_4\cap X_t) \setminus U$, we infer that} $v$ is a leaf of $\skel{F}$.
Consequently, $v$ is adjacent to some leaf {$u(v)\in L(F)$}.
If $v \in U$, there exists $u(v) \in \skel{F} \cap N(X_t) \cap N(v)$ that was deleted in step (ii).
We observe that $M_C\coloneqq\{vu(v)~|~v \in C\}$ is an induced matching in $G$ of size $c$ that touches $X_t$.

We claim that $M = M_A \cup M_B \cup M_C$ is an induced matching in $G$ of size $a + b + c$.
It is straightforward to verify that $M$ is a matching of the desired size that touches $X_t$.
The only nontrivial thing to verify is that there are no edges between the vertices inducing $M_B$ and $M_C$. However, recall that none of the endpoints of edges marked in the process of creating $S_4$ from $S_3$ is adjacent to a vertex of degree other than 2, so $M$ is indeed an induced matching.

We know that $a + b + c = |M| \leq k$, which, combined with \eqref{eq:sizeS}, completes the proof.
\end{claimproof}

Let us remark that we did not make too much effort to optimize the multiplicative constant in \cref{clm:sizeS}.
We decided to aim for simplicity as the constant is, anyway, hidden in the $\Oh(\cdot)$ notation in the final running time.

\paragraph{Identifying the impostors.}
Now we need to understand the set $\limb{F} \cap X_t$; recall that it might be arbitrarily large.
By \cref{prop:yolov-misets} we know that the number of ways \emph{maximal} independents sets of $G$ might intersect $X_t$ is polynomially bounded.
We are going to exploit this.
Since $\limb{F}$ is an independent set, it is contained in some maximal independent set $I^*(F)$ in $G$, and $I^*(F) \cap X_t \in \misets_t$.
Consequently, the number of possible choices for $I \coloneqq I^*(F) \cap X_t$ is bounded by $n^{2k+2}$.

However, we still need to be able to filter out the vertices from {$I\setminus \limb{F}$} using some information that can be stored and processed efficiently.
We will show that this is indeed possible.

{Consider a vertex $v \in I \setminus \limb{F}$.}
By the maximality of $F$, there are two possible reasons why $v \notin \limb{F}$:
\begin{enumerate}
    \item $v \in \skel{F}$,
    \item $F \cup \{v\}$ does not induce a forest, i.e., $v$ has two neighbors in the same component of $F$.
\end{enumerate}
Note that the vertices of the first type can be easily detected, as, by \cref{clm:sizeS}, we can afford to enumerate all possibilities for the set $S\coloneqq\skel{F} \cap X_t$ explicitly.
So let us focus on the vertices of the second type; call them \emph{impostors}.
Since $T(F) \cup L(F) = \limb{F} \subseteq I^*(F)$ and $I^*(F)$ is independent, we know that the neighbors of $v$ in $F$ must be in $\widehat{S} \coloneqq N[X_t] \cap \skel{F}$.

It will also be convenient to distinguish vertices from $T(F)$ and vertices from $L(F)$.
Note that the latter ones have \emph{exactly} one neighbor in $\widehat{S}$, while the former ones have no neighbors in $\widehat{S}$.
Summing up, we observe that $\widehat{S}$
\begin{enumerate}[({Q}1)]
    \item 2-dominates all impostors and does not 2-dominate any vertex from $\limb{F} \cap X_t$,
    \item 1-dominates all vertices from $L(F) \cap X_t$,
    \item is non-adjacent to $T(F) \cap X_t$.
\end{enumerate}
Let $Q \subseteq \widehat{S}$ be an  inclusion-wise minimal subset of $\widehat{S}$ that still satisfies (Q1), (Q2), and (Q3).

\begin{claim}\label{clm:sizeQ}
    $|Q| \leq 4k$.
\end{claim}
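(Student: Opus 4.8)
The plan is to bound $|Q|$ by exhibiting, from a minimal set $Q$ satisfying (Q1)--(Q3), an induced matching in $G$ of size comparable to $|Q|$ that touches $X_t$, and then invoke $\mu(\cT)\le k$. The driving principle is exactly the one used in \cref{clm:sizeS} and in the warm-up proof of \cref{prop:yolov-misets}: minimality forces private witnesses, and private witnesses that are pairwise non-adjacent assemble into an induced matching.

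First I would exploit minimality to get, for each $q\in Q$, a \emph{private reason} $w(q)$ for keeping $q$: a vertex (or one of two vertices) of $G$ such that removing $q$ from $Q$ would violate (Q1), (Q2) or (Q3) precisely at $w(q)$. Concretely, if dropping $q$ breaks (Q3) there is some $x\in T(F)\cap X_t$ with $qx\in E(G)$ and $x$ non-adjacent to $Q\setminus\{q\}$; if it breaks (Q2) there is some $x\in L(F)\cap X_t$ whose only $Q$-neighbor is $q$; if it breaks (Q1) either some impostor $x$ loses its second $Q$-neighbor (so $x$ has exactly two neighbors in $Q$, one of them $q$, and is $1$-dominated by $Q\setminus\{q\}$), or some $x\in\limb{F}\cap X_t$ becomes $2$-dominated only through $q$ --- but this last case cannot happen, since $x$ is $1$-dominated by at most one vertex of $\widehat S$ by (Q1)--(Q2), so it can never be $2$-dominated. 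This leaves three types of private witnesses; partition $Q = Q_{T}\cup Q_{L}\cup Q_{I}$ accordingly (breaking ties so each $q$ gets exactly one type), with private witnesses $w(q)\in X_t$ in all cases.

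Next I would build the induced matching. For $q\in Q_{T}\cup Q_{L}$ take the edge $q\,w(q)$. The witnesses are distinct (they are private), lie in $\limb{F}\cap X_t$, hence form an independent set, and since $Q\subseteq\widehat S\subseteq\skel F$ while $\limb F$ is independent, no two of these edges are joined by an extra edge except possibly through $Q$-to-$Q$ adjacencies. This is the one genuinely delicate point: two skeleton vertices $q,q'$ may well be adjacent, which would make $\{q\,w(q),\,q'\,w(q')\}$ \emph{not} induced. I handle it the same way \cref{clm:sizeS} handles threads: the subgraph $G[Q]$ lies inside $\skel F$, so it is a forest, and I pass to a large \emph{independent} subset of $Q$ --- in a forest on $m$ vertices one can find an independent set of size $\ge m/2$ --- losing only a factor of $2$. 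Call the surviving set $Q'$. For $q\in Q_I'$ the witness $w(q)$ is an impostor with exactly two $Q$-neighbors; here I pick \emph{one} of its two $Q$-neighbors, say $q$ itself, to form the edge $q\,w(q)$, but now I must also ensure the impostor witnesses are mutually non-adjacent and non-adjacent to the leaf/trivial witnesses: impostors lie in $I^*(F)$ together with $\limb F$, and $I^*(F)$ is independent, so all the witnesses across all three types form one independent set. Hence $M := \{\,q\,w(q) : q\in Q'\,\}$ is a matching of size $|Q'|\ge |Q|/2$, it touches $X_t$ (every edge has an endpoint $q\in\widehat S$, and $q\in X_t$ or $q\in N(X_t)$ — and in fact $w(q)\in X_t$), and it is induced because its endpoints split into an independent set of witnesses and an independent set $Q'$, with the only cross edges being the matching edges themselves.

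Therefore $|Q|/2 \le |M|\le \mu(\cT)\le k$, giving $|Q|\le 2k$, which is well within the claimed bound $4k$ (the slack absorbs any bookkeeping in the impostor case, e.g.\ if one prefers to keep both endpoints of impostor witnesses or to be less careful about the forest-to-independent-set step). The main obstacle, as flagged, is controlling adjacencies \emph{inside} $Q$: without the observation that $Q$ induces a forest in $F$ one cannot directly claim an induced matching, and this is exactly where the hypothesis $Q\subseteq\skel F\subseteq V(F)$ is used. A secondary subtlety is checking that the ``$x\in\limb F$ becomes $2$-dominated'' sub-case of a (Q1)-violation is vacuous; this follows cleanly from (Q1) and (Q2) together, since a vertex of $\limb F\cap X_t$ has at most one neighbor in $\widehat S$ and hence at most one in $Q$.
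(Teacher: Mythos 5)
Your overall strategy (minimality of $Q$ forces private witnesses, which together with an independent subset of $Q$ assemble into an induced matching touching $X_t$) is the same as the paper's in spirit, but the single-round execution has a genuine gap at exactly the point you flag as delicate, and it is not repaired. The problem is the impostor witnesses. When removing $q$ breaks the 2-domination part of (Q1), your witness $w(q)$ is an impostor with \emph{exactly two} neighbors in $Q$, say $q$ and $q''$. Passing to an independent subset $Q'$ of the forest $G[Q]$ controls adjacencies inside $Q$, but nothing prevents $q''$ from also lying in $Q'$; in that case the edge $w(q)q''$ joins the matching edge $q\,w(q)$ to the matching edge $q''\,w(q'')$, so $M$ is not induced. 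Your assertion that ``the only cross edges are the matching edges themselves'' is precisely what fails: the endpoints do split into two independent sets, but a witness may be adjacent to a non-partner vertex of $Q'$. The slack you invoke does not absorb this either: each impostor witness contributes one conflict edge inside $Q'$, and the resulting conflict graph is in general only a pseudoforest (it may contain odd cycles), so a further independent-set extraction loses a factor of $3$ and yields only $|Q|\le 6k$, overshooting the claimed $4k$.

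The paper avoids this by splitting the domination into two rounds: first a minimal $Q_1\subseteq Q$ that merely 1-dominates $L\cup A$ (leaves and impostors together), whose private neighbors then have a \emph{unique} neighbor in $Q_1$; then $Q_2=Q\setminus Q_1$, which by minimality of $Q$ is a minimal 1-dominating set for the impostors not yet 2-dominated by $Q_1$, and whose private neighbors have a unique neighbor in $Q_2$. Each round produces an induced matching of size at least half of $|Q_i|$ (via the bipartiteness of the forest, as you do), and since one of $Q_1,Q_2$ has at least $|Q|/2$ vertices this gives $|Q|\le 4k$. Two smaller remarks: removing a vertex from $Q$ can never violate (Q3) or the ``does not 2-dominate any vertex of $\limb{F}\cap X_t$'' half of (Q1), since both are preserved under taking subsets, so your type $Q_T$ is empty (harmless, but the case analysis should say so for the right reason); and your claimed bound of $2k$ is consequently not established, while the obvious patches to your one-round argument do not recover $4k$.
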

\begin{claimproof}
Define $L \coloneqq L(F) \cap X_t$ and let $A$ be the set of impostors, i.e., vertices in $I \setminus F$ (see \Cref{fig:l31}).

\begin{figure}[ht]
\centering
\includegraphics[width = \textwidth]{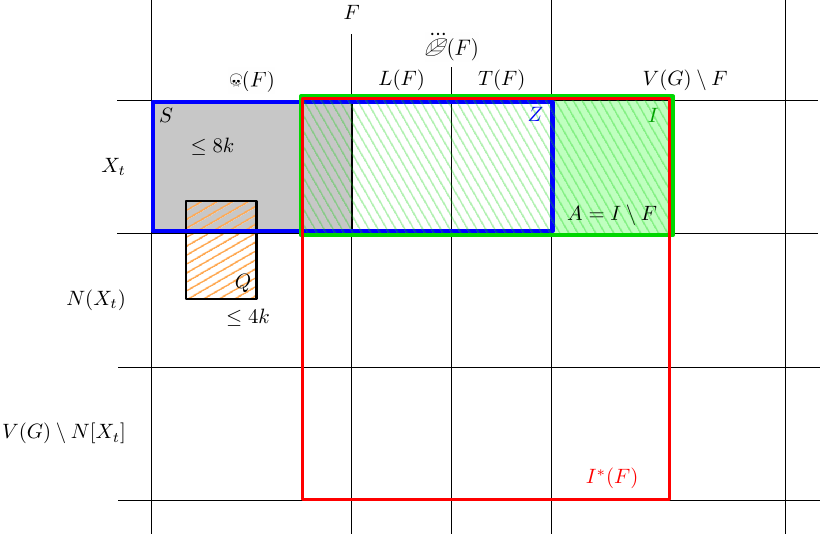}
\caption{Various subsets of $V(G)$ relevant in the proof of \Cref{lem:ff}.}\label{fig:l31}
\end{figure}

Let $Q_1 \subseteq Q$ be a minimal set that 1-dominates $L \cup A$.
From the minimality of $Q_1$ it follows that for every $q \in Q_1$ there exists $m(q) \in L \cup A$ such that $N(m(q)) \cap Q_1 = \{q\}$.
Clearly the set $M_1 \coloneqq\{qm(q)~|~q \in Q_1\}$ is a matching.
Furthermore, the only edges between the vertices of $\bigcup_{e \in M_1} e$ that are not in $M_1$ are between the vertices of $Q_1$.
As $Q_1$ is a forest, it is in particular 2-colorable, hence, there exists an independent set $Q'_1 \subseteq Q_1$ of size at least $|Q_1|/2$.
Hence, $M'_1\coloneqq\{qm(q)~|~q \in Q'_1\}$ is an induced matching of size at least $|Q_1|/2$ that touches $X_t$.

Now let $A'$ be the set of vertices from $A$ that are not 2-dominated by $Q_1$.
We observe that the set $Q_2 \coloneqq Q \setminus Q_1$ 1-dominates $A'$ and, by the minimality of $Q$, is minimal with this property.
Thus, similarly as in the previous paragraph we can conclude that there exists an induced matching of size at least $|Q_2|/2$ that touches $X_t$.

As $Q_1 \cup Q_2 = Q$, one of these sets has at least $|Q|/2$ vertices.
Thus, there exists an induced matching of size at least $|Q|/4$ that touches $X_t$.
As this matching can have at most $k$ edges, we conclude that $|Q| \leq 4k$.
\end{claimproof}

\paragraph{Construction of $\FF_t$.}
Let $S, I, Q$ be the sets defined as above.
Furthermore, let $\widehat{\pi}$ be the partition of $S \cup Q$ corresponding to the connected components of $F \cap V_t$,
i.e., two vertices are in one block of $\widehat{\pi}$ if and only if they are in the same connected component of $F \cap V_t$.

We claim that the quadruple $(S,I,Q,\widehat{\pi})$ uniquely determines the signature $(Z,\pi)$ of $F$ at $t$.

Recall that $Z = F \cap X_t = (\skel{F} \cap X_t) \cup (L(F) \cap X_t) \cup (T(F) \cap X_t)$.
Indeed, we have $\skel{F} \cap X_t = S$ by the definition of $S$.
Next, $L(F) \cap X_t$ consists of these vertices from $I \setminus S$ that are 1-dominated but not 2-dominated by $Q$.
Finally, $T(F)$ are the vertices from $I \setminus S$ that are non-adjacent to $Q$.
Note that the remaining vertices from $I \setminus S$, i.e., {those} that are 2-dominated by $Q$, are impostors.

Now let us argue how to obtain $\pi $ from $\widehat{\pi}$.
We start with $\pi \coloneqq \widehat{\pi}$; we will modify this partition as follows.
First, recall that each $v \in L(F) \cap X_t$ has exactly one neighbor $u$ in $F$, and additionally $u \in Q$.
If $u \in V_t$, then $v$ is in the same component of $F \cap V_t$ as $u$, i.e., we insert $v$ into the block of $\pi$ containing $u$. If $u \notin V_t$, then $v$ is an isolated vertex in $F \cap V_t$ and thus we add a new block $\{v\}$ to $\pi$.
Similarly, for each $v \in T(F) \cap X_t$, we add to $\pi$ a new block $\{v\}$.
Finally, we remove from the blocks of $\pi$ all vertices from $Q \setminus S$.
Note that the last step might result in creating an empty block; we remove such blocks.
This completes the construction of~$\pi$.

From the description above it follows that $(Z,\pi)$ is indeed the signature of $F$ at $t$.
We can exhaustively guess $(S,I,Q,\widehat{\pi})$ which results in at most
\[
n^{8k} \cdot n^{2k+2} \cdot n^{4k} \cdot (12k)^{12k} = (12k)^{12k} \cdot n^{14k+2}
\]
possibilities.
For each choice we construct in polynomial time one pair $(Z,\pi)$ and insert it to $\FF_t$.
Note that some pairs could be immediately discarded, e.g., if $Z$ is not a forest, but this is not necessary so we do not do it.
This completes the proof of \cref{lem:ff}.

\subsection{Proof of \texorpdfstring{\cref{lem:algo-fvs}}{Lemma 3.2}\label{subsec:proof}}

The proof of \cref{lem:algo-fvs} follows the textbook dynamic programming algorithm for \textsc{Max Weight Induced Forest} in bounded-treewidth graphs.
However, the restriction that we can only iterate on elements of $\FF_t$ for $t \in V(T)$ adds some extra technical complicacy to the proof of correctness.

\paragraph{Filling the dynamic programming table.}
We index a dynamic programming table $\Tab$ by the nodes $t \in V(T)$ and the pairs $(Z, \pi) \in \FF_t$ (recall that $Z \subseteq X_t$ and $\pi$ is a partition of $Z$). The intended meaning is as follows:
\begin{description}
\item[($\star$)] The value of $\Tab_t[Z,\pi]$ is the maximum weight of an induced forest in $V_t$ with signature $(Z,\pi)$ at $t$, or $-\infty$ if such a forest does not exist.
\end{description}

Note that since $X_{\root(T)} = \emptyset$, the signature of \emph{every} induced forest in $G$ at $\root(T)$ is $(\emptyset,\emptyset)$.
In particular, $(\emptyset,\emptyset) \in \FF_{\root(T)}$. Thus, if property ($\star$) is satisfied,
the value $\Tab_{\root(T)}[\emptyset,\emptyset]$ is the weight of an optimal solution to \textsc{Max Weight Induced Forest}.

\medskip
Let $t \in V(T)$ be a node and consider an arbitrary $(Z,\pi) \in \FF_t$.
Before we proceed to the detailed description, notice that in some cases we can immediately detect that $(Z,\pi)$ is not a signature of any solution. In particular, this might happen if:
\begin{itemize}
    \item $Z$ contains a cycle, or
    \item there are two adjacent vertices in $Z$ that are in different blocks of $\pi$.
\end{itemize}
In these two cases we immediately set $\Tab_{t}[Z,\pi] = -\infty$.
So from now on assume that $Z$ is an induced forest and $\pi$ \emph{respects the components of $Z$}, i.e., any two vertices in the same component of $Z$ are in the same block of $\pi$.
{How} the value of $\Tab_t[Z,\pi]$ is computed depends on the type of $t$.

\subparagraph{Leaf node.} If $t$ is a leaf, then $X_t = \emptyset$, thus the signature of every induced forest at $t$ is $(\emptyset,\emptyset)$, so in particular $(\emptyset,\emptyset) \in \FF_t$.
We set $\Tab_t[\emptyset,\emptyset]=0$.

\medskip
For all other types of nodes the value of $\Tab_t[Z,\pi]$ depends on the values stored in $\Tab$ for children node(s).
In the description below we use the convention that the maximum over an empty set is $-\infty$.

\subparagraph{Introduce node.} Let $t$ be an introduce node, $t'$ be its child in $T$, and let $v$ be the unique vertex in $X_t \setminus X_{t'}$.

\noindent\emph{Case 1. $v \notin Z$.} We set $\Tab_t[Z,\pi] = \Tab_{t'}[Z,\pi]$.

\noindent\emph{Case 2. $v \in Z$.}
A partition $\pi'$ of $Z \setminus \{v\}$ is \emph{$v$-good} if each vertex from $N(v) \cap Z$ is in a distinct block of $\pi'$.
For a $v$-good partition $\pi'$ of $Z \setminus \{v\}$, by $\pi' \oplus v$ we mean the partition obtained from $\pi'$ by first adding a new block $\{v\}$ and then merging all blocks intersecting $N[v]$ into a single one.
We set
\begin{equation}
\Tab_t[Z,\pi] = \max_{\substack{(Z\setminus \{v\}, \pi') \in \FF_{t'} \\ \text{ s.t. } \pi' \text{ is $v$-good}\\ \text{and } \pi' \oplus v = \pi}} \Tab_{t'}[Z \setminus \{v\}, \pi'] + \wei(v). \label{eq:fillIntroduce}
\end{equation}

\subparagraph{Forget node.}  Let $t$ be a forget node, $t'$ be its child in $T$, and let $v$ be the unique vertex in $X_{t'} \setminus X_{t}$.

Let $\Sigma$ be a set defined as follows.
First, we include $(Z,\pi)$ into $\Sigma$.
Next, for each block $B$ of $\pi$, we include to $\Sigma$ the pair $(Z \cup \{v\}, \pi_B)$, where $\pi_B$ is obtained from $\pi$ by adding $v$ to $B$.
Finally, we add to $\Sigma$ the pair $(Z \cup \{v\}, \pi_{\emptyset})$, where $\pi_{\emptyset}$ is obtained from $\pi$ by adding a new block $\{v\}$.
We set
\begin{equation}
\Tab_t[Z,\pi] = \max_{(Z',\pi') \in \Sigma \cap \FF_{t'}} \Tab_{t'}[Z',\pi']. \label{eq:setForget}
\end{equation}

\subparagraph{Join node.}
Let $t$ be a join node, and let $t',t''$ be its children. Recall that we have $X_t = X_{t'} = X_{t''}$.

Consider two partitions $\pi',\pi''$ of $Z$ that respect the components of $Z$. We define an auxiliary multigraph $\mathsf{Merge}(\pi',\pi'')$ as follows. The vertices of $\mathsf{Merge}(\pi',\pi'')$ are the components of $Z$.
We add an edge between two components if they are in the same block of $\pi'$, and then similarly for $\pi''$; note that we might create parallel edges.
If $\mathsf{Merge}(\pi',\pi'')$ is a simple forest (in particular, does not contain double edges), we define a partition $\pi' \otimes \pi''$ of $Z$ as follows.
For each component $C$ of $\mathsf{Merge}(\pi',\pi'')$, we add a new block to $\pi' \otimes \pi''$ containing the union of the vertex sets of the components of $Z$ in $V(C)$.
If $\mathsf{Merge}(\pi',\pi'')$ is not a forest, then $\pi' \otimes \pi''$ is undefined.

We set
\begin{equation}
\Tab_t[Z,\pi] = \max_{\substack{(Z,\pi') \in \FF_{t'}, \\ (Z,\pi'') \in \FF_{t''},\\\text{s.t. } \pi' \otimes \pi'' = \pi}} \left( \Tab_{t'}[Z,\pi'] + \Tab_{t''}[Z,\pi''] \right) - \wei(Z). \label{eq:fillJoin}
\end{equation}

\paragraph{Running time.}
Recall that the number of nodes is polynomial in $n$, and the number of entries in $\Tab$ for each node $t$ is exactly $|\FF_t|$, where $|\FF_t| = n^{\Oh(k)}$ by \cref{lem:ff}. Let us discuss the time needed to fill a single entry.

The unique entry for a leaf node is filled in constant time.
Each entry in a forget  node is clearly filled in time $n^{\Oh(1)}$.
For an introduce {(resp., join) node}, we need to iterate over elements of $\FF_{t'}$ (resp.~$\FF_{t'} \times \FF_{t''}$), which gives us $n^{\Oh(k)}$ possibilities. For each of them the local computation takes time $n^{\Oh(1)}$.

Thus the total running time of the algorithm is $n^{\Oh(k)}$, as claimed.

\paragraph{Correctness.}
To show that the algorithm is correct, we need to show that the property ($\star$) is indeed satisfied.
We will show it in the following two claims.

\begin{claim}\label{clm:fvs-correctness-Tabatleast}
    Let $F$ be a maximal induced forest in $G$ and let $t \in V(T)$.
    Let $(Z,\pi)$ be the signature of $F$ at $t$.
    Then we have $\Tab_t[Z,\pi] \geq \wei(F \cap V_t)$.
\end{claim}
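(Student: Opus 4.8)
I would prove the claim by bottom‑up (structural) induction on the rooted tree $T$: for every node $t$ and every maximal induced forest $F$ of $G$, if $(Z,\pi)$ is the signature of $F$ at $t$, then $\Tab_t[Z,\pi]\ge\wei(F\cap V_t)$. The first thing to record is that the entry $\Tab_t[Z,\pi]$ is genuinely defined and is \emph{not} set to $-\infty$ by one of the two default rules: since $F$ is maximal, \cref{lem:ff} guarantees $(Z,\pi)\in\FF_t$; moreover $Z=F\cap X_t$ induces a forest in $G$ (it is a subset of $V(F)$), and $\pi$ respects the components of $Z$, because any path of $G[Z]$ is a path of $F$ contained in $V_t$, so its endpoints lie in a common component of $F\cap V_t$. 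Hence $\Tab_t[Z,\pi]$ is computed by the recurrence matching the type of $t$, and it suffices, in each case, to exhibit one admissible term of the corresponding maximum that is at least $\wei(F\cap V_t)$.

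\textbf{Key idea and case analysis.} The crucial point is that in the inductive step one applies the induction hypothesis to the \emph{same} forest $F$ at the child node(s): $F$ is maximal in $G$, so by \cref{lem:ff} its signature at every child again lies in the relevant family $\FF$, and the induction hypothesis bounds the corresponding child entries from below by the weights of the restrictions $F\cap V_{t'}$. For a \emph{leaf} node, $V_t=\emptyset$ and both sides are $0$. For an \emph{introduce} node $t$ with child $t'$ and introduced vertex $v$, I would use the standard facts $V_t=V_{t'}\sqcup\{v\}$ and ``every $G$‑neighbour of $v$ lying in $V_t$ lies in $X_t$''; if $v\notin F$ the signature of $F$ at $t'$ is again $(Z,\pi)$ and Case~1 of the recurrence gives equality, while if $v\in F$ I take $\pi'$ to be the component partition of $F\cap V_{t'}$ restricted to $Z\setminus\{v\}$, check that $\pi'$ is $v$‑good (two $Z$‑neighbours of $v$ in a common block of $\pi'$ would close a cycle in $F$) and that $\pi'\oplus v=\pi$, and apply \eqref{eq:fillIntroduce}. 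For a \emph{forget} node, $V_t=V_{t'}$ and, according to whether the forgotten vertex lies in $F$, the signature of $F$ at $t'$ is either $(Z,\pi)$ itself or of the form $(Z\cup\{v\},\pi_B)$ or $(Z\cup\{v\},\pi_\emptyset)$; in every case this pair lies in $\Sigma\cap\FF_{t'}$, so \eqref{eq:setForget} yields the bound. For a \emph{join} node $t$ with children $t',t''$ I would use $V_{t'}\cap V_{t''}=X_t$ and the absence of $G$‑edges between $V_{t'}\setminus X_t$ and $V_{t''}\setminus X_t$; writing $\pi',\pi''$ for the component partitions of $F\cap V_{t'}$, $F\cap V_{t''}$ restricted to $Z$, the acyclicity of $F\cap V_t$ forces $\mathsf{Merge}(\pi',\pi'')$ to be a simple forest and gives $\pi'\otimes\pi''=\pi$, while $\wei(F\cap V_t)=\wei(F\cap V_{t'})+\wei(F\cap V_{t''})-\wei(Z)$; feeding the two inductive bounds into \eqref{eq:fillJoin} closes the step.

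\textbf{Main obstacle.} I expect the bulk of the work, and the only place requiring care, to be the bookkeeping around the partition operations: verifying $\pi'\oplus v=\pi$ at introduce nodes, the precise membership in $\Sigma$ at forget nodes, and that $\mathsf{Merge}(\pi',\pi'')$ is a \emph{simple} forest with $\pi'\otimes\pi''=\pi$ at join nodes. Each of these reduces to two ingredients — $F\cap V_t$ is acyclic, and the standard separation properties of tree decompositions (a vertex introduced or forgotten at $t$ has all its $V_t$‑neighbours inside $X_t$; $V_{t'}\setminus X_t$ and $V_{t''}\setminus X_t$ are anticomplete) — so no new combinatorial insight beyond \cref{lem:ff} is needed, only routine verification.
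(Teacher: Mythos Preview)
Your proposal is correct and follows essentially the same approach as the paper: bottom-up induction on $T$, in each case applying \cref{lem:ff} to place the signature of the same maximal forest $F$ at the child node(s) in the relevant $\FF$-family, and then verifying the partition identities ($\pi'\oplus v=\pi$, membership in $\Sigma$, $\pi'\otimes\pi''=\pi$) needed to invoke the recurrence. Your explicit preliminary remark that the two default $-\infty$ rules do not apply is a small addition the paper leaves implicit, but otherwise the structure and verifications are the same.
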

\begin{claimproof}
    The proof goes by induction on the depth of $t$.
    The base case is when $t$ is a leaf.

    \subparagraph{Leaf node.} As already mentioned, the signature of $F$ at $t$ is $(\emptyset,\emptyset)$, so by \cref{lem:ff} we have $(\emptyset,\emptyset) \in \FF_t$. Thus we correctly set $\Tab_t[\emptyset,\emptyset]=0$.

    \medskip
    Now {assume} that $t$ is not a leaf and the claim holds for all children of $t$.
    \subparagraph{Introduce node.} Let $t'$ be the child of $t$ and let $v$ be the unique vertex in $X_t \setminus X_{t'}$. Recall that $V_t = V_{t'} \cup \{v\}$. We consider two cases.

\smallskip
    \noindent\emph{Case 1. $v \notin F$.}
    This means that $v \notin Z$,
    and thus the signature of $F$ at $t'$ is $(Z,\pi)$. In particular, by \cref{lem:ff}, we know that $(Z,\pi) \in \FF_{t'}$.
    We set $\Tab_t[Z,\pi] = \Tab_{t'}[Z,\pi] \geq \wei(F \cap V_{t'}) = \wei(F \cap V_t)$, where the ``$\geq$'' inequality follows by the induction hypothesis.

\smallskip
    \noindent\emph{Case 2. $v \in F$.}
    Let $(Z',\pi')$ be the signature of $F$ at $t'$.
    By \cref{lem:ff} we have $(Z',\pi') \in \FF_{t'}$.
    We note that $Z' = Z \setminus \{v\}$.
    Let $C$ be the component of $F \cap V_t$ containing $v$,
    and let $C_1,\ldots,C_p$ be the components of $C \setminus \{v\}$ (possibly $p=0$ if $C$ has just one vertex, $v$).
    Note that each $C_i$ contains exactly one neighbor of $v$.
    We note that blocks of $\pi'$ are:
    \begin{itemize}
        \item blocks of $\pi$ corresponding to components of $F \cap V_t$ other than $C$; they do not contain neighbors of $v$, and
        \item $p$ blocks, each containing the vertex set of one $C_i$; each such component contains one neighbor of $v$.
    \end{itemize}
    Thus $\pi'$ is $v$-good and $\pi' \oplus v = \pi$.
    Summing up, $(Z',\pi')=(Z \setminus \{v\},\pi')$ is one of the options considered when we set the value of $\Tab_t[Z,\pi]$ in \eqref{eq:fillIntroduce}.
    Therefore, by the induction hypothesis, we have
    \[
    \Tab_t[Z,\pi] \geq \Tab_{t'}[Z,\pi'] + \wei(v) \geq \wei(F \cap V_{t'}) + \wei(v)= \wei(F \cap V_t),
    \]
    as claimed.

    \subparagraph{Forget node.} Let $t'$ be the child of $t$ and let $v$ be the unique vertex in $X_{t'} \setminus X_{t}$. Recall that $V_t = V_{t'}$. We consider two cases.

\smallskip
    \noindent\emph{Case 1. $v \notin F$.}
    Note that in this case the signature of $F$ at $t'$ is $(Z,\pi)$ and thus by \cref{lem:ff} this pair is in $\FF_{t'}$.
    Note that we have $(Z,\pi) \in \Sigma \cap \FF_{t'}$, i.e., in the algorithm we set
    \[
    \Tab_t[Z,\pi] \geq \Tab_{t'}[Z, \pi] \geq \wei(F \cap V_{t'}) = \wei(F \cap V_t),
    \]
    where the second ``$\geq$'' follows by the induction hypothesis.

\smallskip
    \noindent\emph{Case 2. $v \in F$.}
    Let $(Z',\pi')$ be the signature of $F$ at $t'$; by \cref{lem:ff} we have $(Z',\pi') \in \FF_{t'}$.
    Note that we have $Z' = Z \cup \{v\}$.
    Let $B'$ be the block of $\pi'$ containing $v$ and denote $B' \setminus \{v\}$ by $B$.
    Since $v\in V_t\setminus X_t$, the partition $\pi$ is obtained from $\pi'$ by replacing $B'$ by $B$, or by removing $B'$ if $B' = \{v\}$.
    In other words, we have $\pi' = \pi_B$ (with $B = \emptyset$ if $B' = \{v\}$).
    Therefore, in the algorithm we set
    \[
    \Tab_t[Z,\pi] \geq \Tab_{t'}[Z \cup \{v\}, \pi_B] \geq \wei(F \cap V_{t'}) = \wei(F \cap V_t),
    \]
    where, again, the second ``$\geq$'' follows by the induction hypothesis.

    \subparagraph{Join node.}
    Let $t',t''$ be the children of $t$; recall that we have $X_t = X_{t'} = X_{t''}$. Furthermore, it holds that $V_t = V_{t'} \cup V_{t''}$ and $V_{t'} \cap V_{t''} = X_t$.

    Let $(Z,\pi')$ (resp.~$(Z,\pi'')$) be the signature of $F$ at $t'$ (resp.~at $t''$). By \cref{lem:ff} we have $(Z,\pi') \in \FF_{t'}$ and $(Z,\pi'') \in \FF_{t''}$.
     Note that both $\pi'$ and $\pi''$ respect the components of $Z$.

    We claim that $\pi' \otimes \pi'' = \pi$.
    First, let us show that $\mathsf{Merge}(\pi',\pi'')$ is a forest.
    For contradiction, suppose it contains a cycle $C$.
    Recall that $C$ has two types of edges: defined by blocks of $\pi'$ and defined by blocks of $\pi''$.
    Each edge corresponds to a path in $F$ with endpoints in appropriate components of $Z$. By connectivity of each component of $Z$, we observe that $C$ defines a cycle in $F$, which is a contradiction as $F$ is a forest.

    Note that two vertices of $Z$ are in the same connected component of $F\cap V_t$ if and only if one of the following happens: they are in the same connected component of $F \cap V_{t'}$, or they are in the same connected component of $F \cap V_{t''}$ (both possibilities apply if they are in the same component of $Z$).
    Thus indeed we have that $\pi' \otimes \pi'' = \pi$.

    This means that in the algorithm we set
    \begin{align*}
    \Tab_t[Z,\pi] \geq & \ \Tab_{t'}[Z, \pi'] + \Tab_{t''}[Z,\pi''] - \wei(Z) \geq  \ \wei(F \cap V_{t'}) + \wei(F \cap V_{t''}) - \wei(Z) \\ = & \  \wei(F \cap V_{t}) + \wei(F \cap X_t) - \wei(Z) = \wei(F \cap V_t),
    \end{align*}
    where the second ``$\geq$'' follows by the induction hypothesis.
    This completes the proof.
\end{claimproof}

\begin{claim}\label{clm:fvs-correctness-Tabatmost}
    Let $t \in V(T)$ and let $(Z,\pi) \in \FF_t$ such that $\Tab_t[Z,\pi] \neq -\infty$.
    Then there exists an induced forest $F$ in $V_t$ with signature $(Z,\pi)$ at $t$ satisfying $\wei(F) \geq \Tab_t[Z,\pi]$.
\end{claim}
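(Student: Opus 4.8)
The plan is to prove \cref{clm:fvs-correctness-Tabatmost} by induction over the tree $T$ from the leaves upward: given $\Tab_t[Z,\pi]\neq -\infty$, I inspect which term of the defining maximum (according to the type of $t$) attains this value, apply the induction hypothesis to the corresponding entry or entries at the child (children) of $t$, and assemble the resulting witness forest(s) into a set $F\subseteq V_t$, checking each time that (i) $G[F]$ is a forest, (ii) the signature of $F$ at $t$ equals $(Z,\pi)$, and (iii) $\wei(F)\geq\Tab_t[Z,\pi]$. Throughout I may assume $Z$ induces a forest and $\pi$ respects the components of $Z$, since otherwise the algorithm has already set $\Tab_t[Z,\pi]=-\infty$; and I may assume the relevant maximum ranges over a nonempty set, since otherwise it equals $-\infty$ by convention. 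The base case is a leaf, where $X_t=\emptyset$, $(Z,\pi)=(\emptyset,\emptyset)$, and $F=\emptyset$ works.

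For an introduce node with child $t'$ and new vertex $v$: if $v\notin Z$, the recurrence gives $\Tab_t[Z,\pi]=\Tab_{t'}[Z,\pi]$, and the forest $F$ produced by the induction hypothesis at $t'$ lies in $V_{t'}\subseteq V_t$, avoids $v$, and hence still has signature $(Z,\pi)$ at $t$. If $v\in Z$, I pick a maximizing $v$-good partition $\pi'$ with $\pi'\oplus v=\pi$ in \eqref{eq:fillIntroduce}, take the forest $F'$ witnessing $\Tab_{t'}[Z\setminus\{v\},\pi']$, and set $F\coloneqq F'\cup\{v\}$; here I would use the standard property of tree decompositions that every neighbour of $v$ lying in $V_t$ in fact lies in $X_t$, so $v$'s neighbours in $F$ form a subset of $Z$ which, by $v$-goodness of $\pi'$, meets each component of $F'$ at most once, making $G[F]$ a forest whose component partition restricted to $Z$ is exactly $\pi'\oplus v=\pi$; the weight bound $\wei(F)=\wei(F')+\wei(v)$ is immediate. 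A forget node with child $t'$ and forgotten vertex $v$ is the easiest case: picking the maximizing $(Z',\pi')\in\Sigma\cap\FF_{t'}$ in \eqref{eq:setForget} and the corresponding witness $F$ at $t'$, the equality $V_t=V_{t'}$ means $F$ is unchanged, one checks $F\cap X_t=Z$ in each of the three shapes of $\Sigma$ ($v$ absent, $v$ inserted into a block, $v$ a new singleton block), and deleting $v$ from $\pi'$ — which commutes with restriction of the component partition — recovers $\pi$ on $Z$.

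The main work is the join node with children $t',t''$, where $X_t=X_{t'}=X_{t''}$, $V_t=V_{t'}\cup V_{t''}$, and $V_{t'}\cap V_{t''}=X_t$. I pick maximizing partitions $\pi',\pi''$ with $\pi'\otimes\pi''=\pi$ (so $\mathsf{Merge}(\pi',\pi'')$ is a simple forest) in \eqref{eq:fillJoin}, take witnesses $F'\subseteq V_{t'}$ and $F''\subseteq V_{t''}$ at $t'$ and $t''$, and set $F\coloneqq F'\cup F''$. Since $F'\cap X_t=F''\cap X_t=Z$ we get $F'\cap F''=Z$ and $F\cap X_t=Z$, whence $\wei(F)=\wei(F')+\wei(F'')-\wei(Z)\geq\Tab_{t'}[Z,\pi']+\Tab_{t''}[Z,\pi'']-\wei(Z)=\Tab_t[Z,\pi]$. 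The delicate step is showing $G[F]$ is a forest with signature $(Z,\pi)$ at $t$: because $X_t$ separates $V_{t'}\setminus X_t$ from $V_{t''}\setminus X_t$, every edge of $G[F]$ lies inside $G[F']$ or inside $G[F'']$, so a hypothetical cycle in $G[F]$ would decompose into maximal subpaths, each contained in $F'$ or in $F''$, whose endpoints all lie in $Z$; tracking the components of $G[Z]$ visited, such a cycle would give rise to a cycle in $\mathsf{Merge}(\pi',\pi'')$, contradicting that the latter is a forest. The same case analysis identifies, for $z_1,z_2\in Z$, the event that $z_1,z_2$ lie in one component of $G[F\cap V_t]=G[F]$ with the event that their $G[Z]$-components are merged by $\pi'\otimes\pi''$, so the signature of $F$ at $t$ is $(Z,\pi)$. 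I expect this join argument — translating acyclicity and connectivity of the glued forest $F'\cup F''$ into the structure of $\mathsf{Merge}(\pi',\pi'')$, with the mild complication that one component of $G[F']$ may intersect $Z$ in several components of $G[Z]$ — to be the only genuinely non-routine part; the remaining node types are bookkeeping analogous to the ``$\geq$'' direction handled in \cref{clm:fvs-correctness-Tabatleast}.
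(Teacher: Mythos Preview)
Your proposal is correct and follows essentially the same approach as the paper's proof: induction on the depth of $t$, case analysis on the node type, and in the join case reducing a hypothetical cycle in $F'\cup F''$ to a cycle in $\mathsf{Merge}(\pi',\pi'')$ via the separation property $V_{t'}\cap V_{t''}=X_t$. The only differences are cosmetic (the paper phrases the join-cycle argument via explicit contraction of edges inside $Z$ and substitution of maximal outside-$Z$ paths by single edges, while you phrase it via maximal subpaths lying in $F'$ or $F''$), and you correctly flag the one subtlety, namely that a single component of $F'$ may meet several components of $G[Z]$, which is exactly what the $\mathsf{Merge}$ construction is designed to handle.
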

\begin{claimproof}
    The proof again goes by induction on the depth of $t$.
    The base case is that $t$ is a leaf node.

\subparagraph{Leaf node.}
    If $t$ is a leaf node, then the only choice for $(Z,\pi)$ is $(\emptyset,\emptyset)$, and we set \hbox{$\Tab_t[\emptyset,\emptyset]=0$}.
    Note that $F = \emptyset$ satisfies the desired properties.

    \medskip
    So from now on we assume that $t$ is not a leaf and the claim holds for all children of $t$. Again, the argument depends on the type of $t$.

\subparagraph{Introduce node.} Let $t$ be an introduce node, $t'$ be its child in $T$, and let $v$ be the unique vertex in $X_t \setminus X_{t'}$.
Since $\Tab_t[Z,\pi] \neq -\infty$, its value was set either in Case 1 or in Case 2 of the algorithm.
Consider these cases separately.

\smallskip
\noindent\emph{Case 1.} This means that $\Tab_t[Z,\pi] = \Tab_{t'}[Z,\pi]$.
By the induction hypothesis there is an induced forest $F \subseteq V_{t'}$ with signature $(Z,\pi)$ at $t'$ with $\wei(F) \geq \Tab_{t'}[Z,\pi]$.
Notice that the signature of $F$ at $t$ is $(Z,\pi)$.
Moreover, $F$ satisfies $\wei(F)  \geq \Tab_{t'}[Z,\pi] = \Tab_{t}[Z,\pi]$, as claimed.

\smallskip
\noindent\emph{Case 2.} This means that $\Tab_t[Z,\pi] = \Tab_{t'}[Z \setminus \{v\},\pi'] + \wei(v)$ for some $(Z \setminus \{v\}, \pi') \in \FF_{t'}$, such that $\pi'$ is $v$-good and $\pi' \oplus v = \pi$, as set in \eqref{eq:fillIntroduce}.
As $\Tab_t[Z,\pi] \neq -\infty$, we also have $\Tab_{t'}[Z \setminus \{v\},\pi'] \neq -\infty$, so by the induction hypothesis there exists an induced forest $F' \subseteq V_{t'}$ with signature $(Z \setminus \{v\}, \pi')$ at $t'$, satisfying $\wei(F') \geq \Tab_{t'}[Z \setminus \{v\}, \pi']$.

We aim to show that $F \coloneqq F' \cup \{v\}$ satisfies the statement of the claim.
First we need to argue that $F$ is an induced forest.
For contradiction suppose that $F$ contains a cycle $C$.
Since $F'$ is acyclic, we observe that $v$ belongs to $C$ and $C-v$ is contained in one component of $F'$.
As the neighborhood of $v$ in $V_{t'}$ is contained in $X_{t'}$, we observe that the neighbors of $v$ on $C$, say $u$ and $w$, are in $Z \setminus \{v\}$.
Consequently, $u,w$ are in one block of $\pi'$ and so $\pi'$ is not $v$-good, a contradiction.

Now let us argue that the signature of $F$ at $t$ is $(Z,\pi)$.
Clearly $F \cap V_t = F \cap \left(V_{t'} \cup \{v\}\right) = (Z \setminus \{v\}) \cup \{v\} = Z$.
On the other hand, it is straightforward to verify that the partition of $Z$ corresponding to components of $F$ is exactly $\pi' \oplus v = \pi$.
Finally, note that $\wei(F) = \wei(F') + \wei(v) \geq  \Tab_{t'}[Z \setminus \{v\}, \pi'] + \wei(v) =  \Tab_{t}[Z, \pi]$.

\subparagraph{Forget node.}  Let $t'$ be the child of $t$ and let $v$ be the unique vertex in $X_{t'} \setminus X_{t}$.
As $\Tab_t[Z,\pi] \neq -\infty$, its value was set to $\Tab_{t'}[Z',\pi']$ for some $(Z',\pi') \in \Sigma \cap \FF_{t'}$, as in \eqref{eq:setForget}.

Let $F$ be an induced forest in $V_{t'}$ with signature $(Z',\pi')$ at $t'$, such that $\wei(F) \geq \Tab_{t'}[Z',\pi']$.
Consider two cases.

\smallskip
\noindent\emph{Case 1. $v \notin F$.} Then by the definition of $\Sigma$ we observe that $Z' = Z$ and $\pi' = \pi$.
The signature of $F$ at $t$ is $(Z,\pi)$.

\smallskip
\noindent\emph{Case 2. $v \in F$.} Then by the definition of $\Sigma$ we have that $Z' = Z \cup \{v\}$ and $\pi' = \pi_B$ for some block $B$ of $\pi$ or $\pi' = \pi_\emptyset$.
Note that the signature of $F$ at $t$ is $(Z,\pi)$.

\smallskip
In both cases we have $\wei(F) \geq \Tab_{t'}[Z',\pi'] = \Tab_t[Z,\pi]$, so $F$ satisfies the statement of the claim.

\subparagraph{Join node.} Let $t',t''$ be the children of $t$; recall that we have $X_t = X_{t'} = X_{t''}$.
Since $\Tab_t[Z,\pi] \neq -\infty$, we have $\Tab_t[Z,\pi] = (\Tab_{t'}[Z,\pi'] + \Tab_{t''}[Z,\pi'']) - \wei(Z)$
for some $(Z,\pi') \in \FF_{t'}$ and $(Z,\pi'') \in \FF_{t''}$ such that $\pi' \otimes \pi'' = \pi$,
as set in \eqref{eq:fillJoin}.

Let $F'$ (resp.~$F''$) be an induced forest in $V_{t'}$ (resp.~$V_{t''}$) with signature $(Z,\pi')$ at $t'$ (resp.~$(Z,\pi'')$ at $t''$)
such that $\wei(F') \geq \Tab_{t'}[Z,\pi']$ (resp.~$\wei(F'') \geq \Tab_{t''}[Z,\pi'']$).

We claim that $F\coloneqq F' \cup F''$ is an induced forest in $V_t$.
For contradiction, suppose that $F$ {contains} a cycle.
Note that such a cycle cannot be contained in $Z$ as we know that $Z$ induces a forest (since the entries of $\Tab$ corresponding to $Z$ have finite values).
So in particular $F$ contains a cycle $C$ whose intersection with every component of $Z$ is a subpath of $C$.
Let $C'$ be the multigraph obtained from $C$ by (1) contracting every edge with both endvertices in $Z$, and then (2) substituting every maximal path with all interior vertices outside $Z$ by a single edge.
Note that this step might {have created} multiple edges, but no loops, since both $F'$ and $F''$ are induced forests and $Z$ does not contain cycles.
Thus $C'$ is a cycle whose every vertex corresponds to a distinct component of $Z$.
Every edge of $C'$ indicates that either $F'$ or $F''$ contains a path with endpoints in particular components of $Z$.
Therefore $C'$ is a sub(multi)graph of $\textsf{Merge}(\pi',\pi'')$, so in particular $\textsf{Merge}(\pi',\pi'')$ is not a simple forest.
Consequently, $\pi' \otimes \pi''$ is not defined, so it cannot be equal to $\pi$.

Clearly $F \cap X_t = Z$. Furthermore, the partition of $Z$ corresponding to components of $F$ is precisely $\pi = \pi' \otimes \pi''$,
so $(Z,\pi)$ is the signature of $F$ at $t$.

Finally we notice that $\wei(F) = \wei(F') + \wei(F'') - \wei(F' \cap F'') = \wei(F') + \wei(F'') - \wei(Z) \geq \Tab_{t'}[Z,\pi'] + \Tab_{t''}[Z,\pi''] - \wei(Z) = \Tab_t[Z,\pi]$.
This completes the proof of claim.
\end{claimproof}

\bigskip
Now let us argue that \cref{clm:fvs-correctness-Tabatleast,clm:fvs-correctness-Tabatmost} are sufficient to prove the correctness of our algorithm.
Let $F$ be an optimal solution. We can safely assume that $F$ is a maximal induced forest.
Recall that its signature at $\root(T)$ is $(\emptyset,\emptyset)$.
By \cref{clm:fvs-correctness-Tabatleast} we have that $\Tab_{\root(T)}[\emptyset,\emptyset] \geq \wei(F)$.
Now suppose {for a contradiction} that $\Tab_{\root(T)}[\emptyset,\emptyset] > \wei(F)$.
Thus, by \cref{clm:fvs-correctness-Tabatmost}, there is an induced forest $F^*$ in $G$ with $\wei(F^*) \geq \Tab_{\root(T)}[\emptyset,\emptyset] > \wei(F)$. However, this contradicts the optimality of $F$.
Therefore, {$\Tab_{\root(T)}[\emptyset,\emptyset]$ is indeed} the weight of an optimal solution.
This completes the proof of \cref{lem:algo-fvs} and thus of \cref{thm:fvs}.

\section{Packing independent induced subgraphs}\label{sec:blob}
Recall that given a graph $G$ and a family $\cH =\{H_j\}_{j\in J}$ of  {connected} subgraphs of $G$, we denote by $G^\circ[\cH]$ the graph with {vertex set $J$,} in which two distinct {vertices} $i,j\in J$ are adjacent if and only if $H_i$ and $H_j$ either have a vertex in common or there is an edge in $G$ connecting them.
The starting point of our investigations will be the following result of Dallard et al.~\cite{dallard2022firstpaper}.

\begin{sloppypar}
\begin{lemma}[Dallard et al.~\cite{dallard2022firstpaper}]\label{tin-of-G(H)}
Let $G$ be a graph, let \hbox{$\mathcal{T} = (T, \{X_t\}_{t\in V(T)})$} be a tree decomposition of $G$, and let $\HH =\{H_j\}_{j\in J}$ be a finite family of connected non-null subgraphs of $G$.
Then \hbox{$\mathcal{T}' = \big(T, \{\Bag'_t\}_{t\in V(T)}\big)$} with $\Bag'_t = \{{j\in J}\colon V(H_j) \cap X_t \neq \emptyset\}$ for all $t \in V(T)$ is a tree decomposition of $G^\circ[\cH]$ such that $\alpha(\mathcal{T}')\le \alpha(\mathcal{T})$.
\end{lemma}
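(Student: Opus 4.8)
I want to verify the three defining properties of a tree decomposition for $\mathcal{T}' = (T, \{X'_t\}_{t \in V(T)})$ with $X'_t = \{j \in J : V(H_j) \cap X_t \neq \emptyset\}$, and then establish the bound $\alpha(\mathcal{T}') \le \alpha(\mathcal{T})$. First I would check that every vertex $j \in J$ of $G^\circ[\cH]$ appears in some bag: since $H_j$ is non-null, it has a vertex $v \in V(H_j)$, and $v$ lies in some bag $X_t$ of $\mathcal{T}$, so $j \in X'_t$. Second, for every edge $jj'$ of $G^\circ[\cH]$ I would find a bag containing both endpoints: by definition of $G^\circ[\cH]$, either $H_j$ and $H_{j'}$ share a vertex $v$, or there is an edge $uv$ of $G$ with $u \in V(H_j)$, $v \in V(H_{j'})$. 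In the first case, any bag $X_t$ containing $v$ has $j, j' \in X'_t$; in the second case, since $uv \in E(G)$, some bag $X_t$ contains both $u$ and $v$, hence $j, j' \in X'_t$.

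\textbf{The connectivity condition.} The main point to check — and the only step requiring a genuine argument — is that for each $j \in J$ the set $T_j := \{t \in V(T) : j \in X'_t\} = \{t \in V(T) : V(H_j) \cap X_t \neq \emptyset\}$ induces a connected subtree of $T$. I would write $T_j = \bigcup_{v \in V(H_j)} T_v$, where $T_v = \{t : v \in X_t\}$. Each $T_v$ is connected (a subtree) by the tree decomposition axioms for $\mathcal{T}$. Since $H_j$ is connected, it suffices to show that whenever $uv \in E(H_j)$, the subtrees $T_u$ and $T_v$ intersect: indeed $uv \in E(G)$ forces some bag to contain both $u$ and $v$. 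A union of subtrees of a tree, any two consecutive ones (along edges of the connected graph $H_j$) of which intersect, is itself a subtree — formally, one can argue that in a tree, a union of pairwise-intersecting-along-a-connected-pattern family of subtrees is connected, or simply invoke the Helly property of subtrees of a tree together with connectivity of $H_j$. This shows $T_j$ is connected, completing the verification that $\mathcal{T}'$ is a tree decomposition of $G^\circ[\cH]$.

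\textbf{The independence bound.} Finally I would bound $\alpha(\mathcal{T}')$. Fix $t \in V(T)$ and let $A \subseteq X'_t$ be an independent set in $G^\circ[\cH]$; I claim $|A| \le \alpha(G[X_t])$. For each $j \in A$, pick a vertex $\phi(j) \in V(H_j) \cap X_t$ (nonempty by definition of $X'_t$). I want to show $\{\phi(j) : j \in A\}$ is an independent set in $G$ of size exactly $|A|$. Injectivity: if $\phi(j) = \phi(j')$ for $j \neq j'$, then $H_j$ and $H_{j'}$ share a vertex, so $jj' \in E(G^\circ[\cH])$, contradicting independence of $A$. Independence in $G$: if $\phi(j)\phi(j') \in E(G)$ for $j \neq j'$, then this is an edge of $G$ with one endpoint in $H_j$ and the other in $H_{j'}$, so again $jj' \in E(G^\circ[\cH])$, a contradiction. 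Hence $\{\phi(j) : j \in A\}$ is an independent set of $G[X_t]$ of size $|A|$, giving $|A| \le \alpha(G[X_t]) \le \alpha(\mathcal{T})$. Taking the maximum over $t$ and over independent sets $A$ yields $\alpha(\mathcal{T}') \le \alpha(\mathcal{T})$. The only subtlety worth stating carefully is the subtree-union argument in the connectivity step; everything else is a direct unwinding of the definition of $G^\circ[\cH]$.
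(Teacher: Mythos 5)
Your proof is correct: the paper itself does not prove this lemma but imports it from Dallard et al.~\cite{dallard2022firstpaper}, and your argument (verifying the three tree-decomposition axioms, with the connectivity of $T_j$ obtained by walking along a path in the connected graph $H_j$ through pairwise-intersecting subtrees $T_v$, and the independence bound via an injective choice of representatives $\phi(j)\in V(H_j)\cap X_t$) is the standard and expected one. The only cosmetic remark is that the Helly property is not really the right tool for the connectivity step --- your first argument, chaining the subtrees $T_{u_i}$ along an edge-path of $H_j$, is the one that does the work.
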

\end{sloppypar}

\begin{sloppypar}
\begin{corollary}[Dallard et al.~\cite{dallard2022firstpaper}]\label{tin-of-G(H)-corollary}
Let $G$ be a graph and let $\HH =\{H_j\}_{j\in J}$ be a finite family of connected non-null subgraphs of $G$.
Then $\tin(G^\circ[\cH])\le \tin(G)$.
\end{corollary}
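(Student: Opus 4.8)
The plan is to derive \cref{tin-of-G(H)-corollary} directly from \cref{tin-of-G(H)}, using only the definition of the tree-independence number as a minimum over tree decompositions.

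First I would fix a tree decomposition $\cT = (T, \{X_t\}_{t \in V(T)})$ of $G$ that realizes $\tin(G)$, that is, with $\alpha(\cT) = \tin(G)$. Such a $\cT$ exists: over all tree decompositions of $G$ the value $\alpha(\cT)$ is a non-negative integer bounded above by $\alpha(G)$ (attained, for instance, by the one-bag decomposition with bag $V(G)$), so the infimum defining $\tin(G)$ is in fact attained.

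Next I would apply \cref{tin-of-G(H)} to $G$, to this $\cT$, and to the finite family $\HH = \{H_j\}_{j \in J}$ of connected non-null subgraphs. This produces a tree decomposition $\cT' = (T, \{X'_t\}_{t \in V(T)})$ of $G^\circ[\cH]$, on the same underlying tree $T$, with $X'_t = \{\, j \in J : V(H_j) \cap X_t \neq \emptyset \,\}$ and with $\alpha(\cT') \le \alpha(\cT)$. Since $\tin(G^\circ[\cH])$ is, by definition, the minimum of $\alpha(\cS)$ over all tree decompositions $\cS$ of $G^\circ[\cH]$, and $\cT'$ is one such decomposition, we conclude
\[
\tin\big(G^\circ[\cH]\big) \;\le\; \alpha(\cT') \;\le\; \alpha(\cT) \;=\; \tin(G),
\]
which is exactly the claimed inequality.

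I do not expect any genuine obstacle at this level: the corollary is a one-line consequence once \cref{tin-of-G(H)} is available. The substance lives entirely in \cref{tin-of-G(H)}, whose proof has to verify that $\cT'$ is a bona fide tree decomposition of $G^\circ[\cH]$ (connectivity of each index tree $\{t : j \in X'_t\}$ follows from connectivity of $H_j$ together with connectivity of the trees $\{t : v \in X_t\}$ for $v \in V(H_j)$; every edge $ij$ of $G^\circ[\cH]$ is covered because adjacency of $H_i$ and $H_j$ is witnessed by a shared vertex or by an edge of $G$, both of which lie in a common bag of $\cT$) and that $\alpha(\cT') \le \alpha(\cT)$ (an independent set of $G^\circ[\cH]$ contained in $X'_t$ corresponds to pairwise vertex-disjoint, pairwise non-adjacent subgraphs $H_j$ all meeting $X_t$, so picking one vertex of $X_t$ from each yields an independent set of $G[X_t]$ of the same size). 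As we are permitted to invoke \cref{tin-of-G(H)}, none of this needs to be redone here.
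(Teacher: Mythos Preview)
Your proposal is correct and is exactly the intended derivation: the paper states \cref{tin-of-G(H)-corollary} immediately after \cref{tin-of-G(H)} without a separate proof, precisely because it follows in the one-line way you describe---take a tree decomposition of $G$ realizing $\tin(G)$ and apply the lemma. Your parenthetical sketch of why \cref{tin-of-G(H)} holds is also accurate, though strictly unnecessary here.
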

\end{sloppypar}

\subsection{Tree-independence number and induced matching treewidth of \texorpdfstring{$G^\circ[\cH]$}{G\textdegree[\unichar{"0210B}]}}

In this section we prove results similar to \cref{tin-of-G(H)-corollary}, but with a focus on induced matching treewidth.
Note that while in the statement of \cref{tin-of-G(H)-corollary}, the family $\HH =\{H_j\}_{j\in J}$ of subgraphs of $G$ may be a multiset, in the first part of the next lemma we restrict ourselves to a \emph{set} of subgraphs, with no repetitions allowed.

\begin{lemma}
\label{lem:H-G}
Let $G$ be a graph and {$\HH =\{H_j\}_{j\in J}$} be a set of connected non-null subgraphs of $G$.
Then, $\yw(G^\circ[\cH])\le \yw(G)$.
Furthermore, if all graphs in $\mathcal{H}$ have at least two vertices, then $\tin(G^\circ[\cH])\le \yw(G)$, even if $\mathcal{H}$ is a multiset.
\end{lemma}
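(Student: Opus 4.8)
The goal is to show $\yw(G^\circ[\cH]) \le \yw(G)$ (and, under the extra hypothesis that every member of $\cH$ has at least two vertices, the stronger $\tin(G^\circ[\cH]) \le \yw(G)$, even allowing repetitions in $\cH$). I would take the tree decomposition $\cT = (T, \{X_t\}_{t \in V(T)})$ of $G$ with $\mu(\cT) = \yw(G) =: k$ and push it forward to $G^\circ[\cH]$ by exactly the recipe used in \cref{tin-of-G(H)}: set $X'_t = \{\,j \in J : V(H_j) \cap X_t \neq \emptyset\,\}$. The proof that $\cT' = (T, \{X'_t\}_{t \in V(T)})$ is a tree decomposition of $G^\circ[\cH]$ is identical to the argument in \cref{tin-of-G(H)} (each $H_j$ is connected, so the nodes $t$ with $V(H_j) \cap X_t \neq \emptyset$ form a subtree; each adjacency in $G^\circ[\cH]$ is witnessed either by a shared vertex of $G$ or by an edge of $G$, both of which live in a common bag of $\cT$), so I would just invoke that lemma. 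The content is the bound on $\mu_{G^\circ[\cH]}(\cT')$.

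**Key step: translating an induced matching of $\cT'$ back into $G$.** Fix a node $t$ and suppose $M' = \{j_1 j'_1, \dots, j_m j'_m\}$ is an induced matching in $G^\circ[\cH]$ that touches the bag $X'_t$; I want $m \le k$. For each edge $j_a j'_a \in M'$, by definition of $G^\circ[\cH]$ there is a ``witness'': either $H_{j_a}$ and $H_{j'_a}$ share a vertex of $G$, or there is a $G$-edge between them. In either case I can select a single $G$-edge (or a single vertex, in the shared-vertex case) inside $V(H_{j_a}) \cup V(H_{j'_a})$ — call it the \emph{core} $e_a$ — together with the information of which of $j_a, j'_a$ touches $X_t$ (at least one does, since the $M'$-edge touches $X'_t$; if $j_a$ touches $X'_t$ then $V(H_{j_a}) \cap X_t \neq \emptyset$). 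The nonadjacency of distinct $M'$-edges means there is no vertex shared between $H_{j_a} \cup H_{j'_a}$ and $H_{j_b} \cup H_{j'_b}$ and no $G$-edge between them for $a \neq b$; in particular the cores $e_1, \dots, e_m$ are pairwise disjoint and pairwise non-adjacent in $G$. Picking for each $a$ a vertex of $e_a$ lying in $H_{j_a}$ or $H_{j'_a}$ whose graph touches $X_t$, and extending within that touching subgraph, I obtain for each $a$ an edge of $G$ (when a core is a single vertex $v$, use any $G$-edge of the touching subgraph incident to a vertex on a shortest path to $X_t$ — here connectedness of the subgraph is used) that touches $X_t$; the disjointness/nonadjacency of the cores, together with the fact that the $H$'s involved in different $M'$-edges are vertex-disjoint and mutually non-adjacent in $G$, guarantees these $m$ edges form an induced matching of $G$ touching $X_t$. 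Hence $m \le \mu_G(\cT) = k$, giving $\mu_{G^\circ[\cH]}(\cT') \le k$ and thus $\yw(G^\circ[\cH]) \le \yw(G)$.

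**The stronger bound.** For the second statement, assume every $H_j$ has $\ge 2$ vertices and bound $\alpha(\cT')$ rather than $\mu(\cT')$. Suppose $\{j_1, \dots, j_m\} \subseteq X'_t$ is independent in $G^\circ[\cH]$. Independence means the subgraphs $H_{j_1}, \dots, H_{j_m}$ are pairwise vertex-disjoint and pairwise non-adjacent in $G$; each touches $X_t$ (being in $X'_t$), and each has an edge (here the $\ge 2$ vertices hypothesis is essential — without it $H_j$ could be a single vertex and contribute no edge). For each $a$, since $H_{j_a}$ is connected, has an edge, and meets $X_t$, it contains an edge touching $X_t$; these $m$ edges are pairwise disjoint and pairwise non-adjacent because the $H$'s are. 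So they form an induced matching of $G$ touching $X_t$, whence $m \le k$, i.e. $\alpha(\cT') \le k$ and $\tin(G^\circ[\cH]) \le \yw(G)$. Repetitions in $\cH$ cause no trouble here: two copies of the same subgraph share a vertex, hence are adjacent in $G^\circ[\cH]$, so they never both appear in an independent set.

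**Main obstacle.** The delicate point is the shared-vertex case in the first part: when the witness for an $M'$-edge is a common vertex rather than a $G$-edge, I must still manufacture a genuine \emph{edge} of $G$ touching $X_t$ without breaking induced-ness, which forces me to walk along one of the connected subgraphs toward $X_t$ and pick an edge there — and then re-verify that edges chosen for different $M'$-edges remain non-adjacent. This bookkeeping (ensuring the chosen $G$-edges lie inside $\bigcup_a (V(H_{j_a}) \cup V(H_{j'_a}))$, blockwise, so that the mutual non-adjacency of the subgraph-blocks across different $M'$-edges transfers) is the only part needing care; everything else is a routine transfer of the \cref{tin-of-G(H)} argument.
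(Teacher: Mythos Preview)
Your proof is correct and follows the same overall strategy as the paper: push the tree decomposition forward via $X'_t=\{j:V(H_j)\cap X_t\neq\emptyset\}$ (quoting \cref{tin-of-G(H)}), then translate a large induced matching (resp.\ independent set) in $G^\circ[\cH]$ touching $X'_t$ back to an induced matching in $G$ touching $X_t$. Your second part is essentially identical to the paper's.

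For the first part, however, the paper sidesteps what you call the ``main obstacle'' entirely, and the simplification is worth knowing. Instead of starting from a witness (shared vertex or crossing edge) and walking toward $X_t$ inside the touching subgraph, the paper starts \emph{at} $X_t$: for each matching edge $\{i,i'\}$ with $i\in X'_t$, pick any $v_i\in X_t\cap V(H_i)$, observe that $H_i\cup H_{i'}$ is a connected subgraph of $G$ with at least two vertices (this is exactly where ``$\cH$ is a set'' is used---it rules out $H_i=H_{i'}$ being a single vertex), and let $v_{i'}$ be any neighbour of $v_i$ in $H_i\cup H_{i'}$. The edge $v_iv_{i'}$ automatically touches $X_t$, lies inside $V(H_i)\cup V(H_{i'})$, and the induced-matching property transfers for the same reason you give. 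No case analysis on the type of witness, no walking, no worry about single-vertex subgraphs. Your route works, but the bookkeeping you flag (and the parenthetical about shortest paths when the core is a single vertex, which becomes awkward if the touching subgraph is itself a singleton) is all avoidable.
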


\begin{proof}
Let $k = \yw(G)$ and consider a tree decomposition $\cT=(T,\{X_t\}_{t\in V(T)})$ of the graph $G$ such that $\ywT(\cT) = k$.
We want to construct a tree decomposition $\cT'=(T,\{X'_t\}_{t\in V(T)})$ for $G^\circ[\cH]$ such that $\mu_{G^\circ[\cH]}(\cT')\leq k$.
{Note that $V(G^\circ[\cH]) = J$.}
For each $t\in V(T)$, let $X'_t = \{{j\in J}\colon V(H_j)\cap X_t\neq \emptyset\}$.
By \Cref{tin-of-G(H)}, $\cT'$ is a tree decomposition of the graph $G^\circ[\cH]$.
We now show that $\mu_{G^\circ[\cH]}(\cT')\leq k$.
This will establish $\yw(G^\circ[\cH])\le \yw(G)$.
Consider a bag of $\cT'$, say $X'_t$ for some $t\in V(T)$, and suppose for a contradiction that there is an induced matching $M$ in the graph $G^\circ[\cH]$ with cardinality $k+1$ such that each edge of $M$ has an endpoint in $X'_t$.
{Then, there exists a set $I\subseteq X_t'$ with $|I| = k+1$ such that $I$ contains precisely one endpoint of each edge of $M$.
Consider an arbitrary edge $\{i,i'\}$ of $M$ such that $i\in I$.}
Since {$i\in X'_t$}, there exists a vertex $v_i\in X_t\cap V(H_i)$.
Since {$\{i,i'\}$} is an edge in $G^\circ[\cH]$, the union of the graphs $H_i$ and {$H_{i'}$} is a connected subgraph of $G$ containing at least two vertices; in particular, no vertex in {$H_i\cup H_{i'}$} is isolated.
Let $v_{i'}$ be a neighbor of $v_i$ in $H_i\cup H_{i'}$.
Since $M$ is an induced matching in $G^\circ[\cH]$, note that if {$\{i,i'\}, \{j,j'\}\in M$ with $i,j\in I$ and $i\neq j$,} then there is no edge in $G$ connecting a vertex of $H_i\cup H_{i'}$ to a vertex of $H_j\cup H_{j'}$.
Since for each {$i\in I$}, $v_iv_{i'}$ is an edge of $H_i\cup H_{i'}$, we conclude that $\{v_iv_{i'}\colon {i\in I}\}$ is an induced matching in $G$.
Furthermore, each of the edges in this matching has an endpoint in $X_t$.
This implies that $\ywT(\cT) > k$, a contradiction.

Assume now that each graph in $\mathcal{H}$ has at least two vertices, but repetitions of subgraphs in $\mathcal{H}$ are allowed.
We now show that $\tin(G^\circ[\cH])\le \yw(G)$.
Consider the same tree decomposition {of} $G^\circ[\cH]$ as before.
In this case, we can show, using a similar argument as above, that $\alpha_{G^\circ[\cH]}(\cT') \le k$, which will establish $\tin(G^\circ[\cH])\le \yw(G)$.
Consider a bag of $\cT'$, say $X'_t$ for some $t\in V(T)$, and suppose for a contradiction that there is an independent set $I$ in the graph $G^\circ[\cH]$ with cardinality $k+1$ such that $I\subseteq X'_t$.
Consider an arbitrary {$i\in I$.}
Since {$i\in X'_t$}, there exists a vertex $v_i\in X_t\cap V(H_i)$.
Since the graph $H_i$ is connected and has at least two vertices, there exists a vertex {$v_{i'}$} that is adjacent to $v_i$ in $H_i$.
Since for each {$i\in I$}, the edge {$v_iv_{i'}$} is an edge of $H_i$, the fact that $I$ is an independent set in $G^\circ[\cH]$ and the definition of the graph  $G^\circ[\cH]$ imply that $\{v_i{v_{i'}\colon i\in I}\}$ is an induced matching in $G$.
Furthermore, each of the edges in this matching has an endpoint in $X_t$.
This implies that $\mu_G(\cT) > k$, a contradiction.
\end{proof}

\begin{remark}
In the first part of \Cref{lem:H-G}, the assumption that $\cH$ is a set is necessary.
To see this, fix a positive integer $n$, let $G$ be the complete bipartite graph $K_{n,n}$, and let $\cH$ be the {multiset} of connected subgraphs of $G$ consisting of two copies of each one-vertex induced subgraph of $G$.
Then $G^\circ[\cH]$ is the graph obtained from two disjoint copies of the graph $nK_2$ by adding all possible edges between them, where $nK_2$ denotes the disjoint union of $n$ copies of $K_2$.
Clearly, $\yw(G) = 1$, but, as we will show in \Cref{sec:structural} (see \cref{prop:matching-biclique}), $\yw(G^\circ[\cH])\geq n$.

The same construction also shows that the assumption that all graphs in $\mathcal{H}$ have at least two vertices is necessary in the second part of \Cref{lem:H-G}.
\end{remark}

As the proof of \Cref{lem:H-G} is algorithmic, we obtain the following corollary.

\begin{corollary}\label{coro:compute-td-gH}
Given a graph $G$, a family $\cH$ of connected non-null subgraphs of $G$, and a tree decomposition $\cT=(T,\{X_t\}_{t \in V(T)})$ of $G$, in time polynomial in $|V(G)|, |\cH|,$ and $|V(T)|$ we can compute a tree decomposition $\cT'$ of $G^\circ[\cH]$ such that $\mu_{G^\circ[\cH]}(\cT')\leq\mu_G(\cT)$ if $\mathcal{H}$ is a set, and $\alpha_{G^\circ[\cH]}(\cT')\leq\mu_G(\cT)$ if all graphs in $\mathcal{H}$ have at least two vertices.
\end{corollary}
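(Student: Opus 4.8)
The plan is to observe that the proof of \Cref{lem:H-G} (together with \Cref{tin-of-G(H)}) is entirely constructive: the tree decomposition $\cT'$ of $G^\circ[\cH]$ is built on the same tree $T$ as $\cT$, with each bag defined explicitly by $X'_t = \{j\in J\colon V(H_j)\cap X_t\neq\emptyset\}$. Hence the only thing left to check is that $\cT'$ can be written down within the claimed time bound; the inequalities on $\mu_{G^\circ[\cH]}(\cT')$ and $\alpha_{G^\circ[\cH]}(\cT')$ will then be supplied verbatim by \Cref{lem:H-G}, and the fact that $\cT'$ is indeed a tree decomposition of $G^\circ[\cH]$ by \Cref{tin-of-G(H)}.

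For the running time, first I would precompute, for each node $t\in V(T)$, a membership structure for the bag $X_t\subseteq V(G)$, which costs $\Oh(|V(T)|\cdot|V(G)|)$ in total. Then, for each pair $(t,j)$ with $t\in V(T)$ and $j\in J$, I would scan the vertices of $H_j$ and test, using this structure, whether any of them lies in $X_t$; this decides membership of $j$ in $X'_t$ in time $\Oh(|V(H_j)|)\le\Oh(|V(G)|)$. Ranging over all $t$ and all $j$, the whole family $\{X'_t\}_{t\in V(T)}$ — and hence $\cT'$ — is produced in time polynomial in $|V(G)|$, $|\cH|=|J|$, and $|V(T)|$, as required.

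Finally, correctness of the two stated inequalities is immediate: by \Cref{tin-of-G(H)}, $\cT'$ is a tree decomposition of $G^\circ[\cH]$; by the first part of the proof of \Cref{lem:H-G}, $\mu_{G^\circ[\cH]}(\cT')\le\mu_G(\cT)$ when $\cH$ is a set; and by its second part, $\alpha_{G^\circ[\cH]}(\cT')\le\mu_G(\cT)$ when every member of $\cH$ has at least two vertices. There is essentially no genuine obstacle here: the corollary is little more than an explicit bookkeeping of the already-constructive argument of \Cref{lem:H-G}. The only mild care needed is to pin down the encoding of $\cH$ and the bag-membership tests, so that the ``polynomial in $|V(G)|,|\cH|,|V(T)|$'' claim is literally justified rather than merely plausible.
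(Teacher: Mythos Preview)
Your proposal is correct and matches the paper's approach exactly: the paper derives the corollary with a single sentence, ``As the proof of \Cref{lem:H-G} is algorithmic, we obtain the following corollary,'' and you have simply unpacked that observation with explicit bookkeeping for the running time. If anything, your write-up is more careful than the paper's own justification.
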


\subsubsection{Algorithmic consequences}

Let us now explain how \cref{lem:H-G} can be used algorithmically.
Given a graph $G$ and a family $\cH =\{H_j\}_{j\in J}$ of subgraphs of $G$, a subfamily $\cH'$ of $\cH$ is said to be an \emph{independent $\cH$-packing} in $G$ if every two graphs in $\cH'$ are vertex-disjoint and there is no edge { in $G$} between them. In other words, $\cH'$ corresponds to an independent set in the graph $G^\circ[\cH]$.
Assume now that the subgraphs in $\cH$ are equipped with a weight function $\wei\colon J\to \mathbb{Q}_+$ assigning weight $\wei_j$ to each subgraph $H_j$.
For any set $J' \subseteq J$, we define the \emph{weight} of the family $\cH' = \{H_j\}_{j\in J'}$ as the sum $\sum_{j\in J'}\wei_j$.
In particular, the weight of any independent $\cH$-packing is well-defined.

Given a graph $G$, a finite family $\cH = \{H_j\}_{j\in J}$ of connected non-null subgraphs of $G$, and a weight function $\wei\colon J\to \mathbb{Q}_+$ on the subgraphs in $\cH$, the \textsc{Max Weight Independent Packing} problem asks to find an independent $\cH$-packing in $G$ of maximum weight.
A natural special case of the problem is when for each $j \in J$, the value of $\wei_j$ is equal to the number of vertices of $H_j$; we call this restriction \textsc{Max Independent Packing}~\cite{MR2190818}.

It is straightforward to verify that  \textsc{Max Weight Independent Packing} for $G$ and $\cH$ is equivalent to \textsc{MWIS} in $G^\circ[\cH]$, where the weights of vertices of $G^\circ[\cH]$ correspond to the weights of subgraphs in $\cH$.

\begin{sloppypar}
\begin{theorem}\label{max-weight-independent-subgraph-packing-for-yw}
Given a graph $G$ and a finite family $\mathcal{H} = \{H_j\}_{j \in J}$ of connected non-null subgraphs of $G$, and a weight function $\wei\colon J \to \Q_+$, the \textsc{Max Weight Independent Packing} problem can be solved in time polynomial in $|V(G)|$ and $|\cH|$ for graphs of bounded induced matching treewidth.
\end{theorem}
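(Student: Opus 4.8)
The plan is to reduce \textsc{Max Weight Independent Packing} on the instance $(G,\cH,\wei)$ to \MWIS on the auxiliary graph $G^\circ[\cH]$, and then invoke \cref{thm:yolovmwis}. First I would construct $G^\circ[\cH]$ explicitly: its vertex set is $J$, and for every pair $i,j\in J$ one tests, in time polynomial in $|V(G)|$, whether $H_i$ and $H_j$ share a vertex or are joined by an edge of $G$; over all pairs this takes time polynomial in $|V(G)|$ and $|\cH|$. Equip each vertex $j$ of $G^\circ[\cH]$ with weight $\wei_j\in\Q_+$. By the definition of $G^\circ[\cH]$, a set $J'\subseteq J$ is independent in $G^\circ[\cH]$ if and only if $\{H_j\}_{j\in J'}$ is an independent $\cH$-packing in $G$, and the two notions of weight coincide; hence a maximum-weight independent set of $G^\circ[\cH]$ corresponds to a maximum-weight independent $\cH$-packing in $G$, and one translates between them in linear time.

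It remains to control the induced matching treewidth of $G^\circ[\cH]$. By assumption $\yw(G)\le k$ for a fixed constant $k$, so \cref{lem:H-G} yields $\yw(G^\circ[\cH])\le\yw(G)\le k$ (this part of \cref{lem:H-G} requires $\cH$ to be a set, which we may assume without loss of generality: in \textsc{Max Weight Independent Packing} any two coinciding subgraphs are mutually adjacent in $G^\circ[\cH]$, hence interchangeable, so we keep only one copy of each, of maximum weight, which only shrinks the instance). The graph $G^\circ[\cH]$ has exactly $|J|=|\cH|$ vertices, so \cref{thm:yolovmwis}, applied to $G^\circ[\cH]$, solves \MWIS on it in time $|\cH|^{\Oh(k)}$; note that \cref{thm:yolovmwis} itself computes a suitable tree decomposition internally via \cref{thm:approximate-yolov}, so no decomposition needs to be supplied as part of the input. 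Combining this with the construction and the back-translation gives an algorithm for \textsc{Max Weight Independent Packing} running in time polynomial in $|V(G)|$ and $|\cH|$ for every fixed $k$.

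I do not expect a real obstacle, since every ingredient is already in place: the equivalence of the two problems, the bound $\yw(G^\circ[\cH])\le\yw(G)$ from \cref{lem:H-G}, and the polynomial-time \MWIS algorithm of \cref{thm:yolovmwis}. The only points that merit a careful sentence are that $G^\circ[\cH]$ can be built within the claimed time bound and that it has $|\cH|$ (rather than $|V(G)|$) vertices, so that the exponent $\Oh(k)$ attaches to $|\cH|$. An alternative route, avoiding \cref{thm:yolovmwis} as a black box, would be to compute a tree decomposition of $G$ with $\mu\le 8k$ via \cref{thm:approximate-yolov}, push it through \cref{coro:compute-td-gH} to obtain a decomposition $\cT'$ of $G^\circ[\cH]$ with $\mu_{G^\circ[\cH]}(\cT')\le 8k$, and then run the standard \MWIS dynamic program over $\cT'$ on top of the family $\misets_t$ provided by \cref{prop:yolov-misets}; this yields the same running time.
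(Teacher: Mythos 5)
Your proof matches the paper's proof essentially step for step: preprocess $\cH$ to remove duplicate subgraphs (keeping one of maximum weight) so that it becomes a set, observe that \textsc{Max Weight Independent Packing} on $(G,\cH,\wei)$ is equivalent to \MWIS on $G^\circ[\cH]$, apply \cref{lem:H-G} to get $\yw(G^\circ[\cH])\le \yw(G)\le k$, and then invoke \cref{thm:yolovmwis}. The alternative route you sketch at the end (computing a decomposition of $G$ via \cref{thm:approximate-yolov}, pushing it through \cref{coro:compute-td-gH}, and running the DP with $\misets_t$ from \cref{prop:yolov-misets}) is valid but is just the unwrapping of what \cref{thm:yolovmwis} does internally, so it is not a genuinely different argument.
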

\end{sloppypar}
\begin{proof}[Sketch of proof.]
The proof follows the reasoning of Dallard et al.~\cite{dallard2022firstpaper} for tree-independence number, so let us just sketch it.

Let $G$ and $\cH$ be as in the statement and let $\yw(G) \leq k$, where $k$ is a constant.
First we preprocess the input family $\mathcal{H} = \{H_j\}_{j \in J}$ by keeping at most one copy of each subgraph of $G$.
Indeed, if $H_i = H_j$ {for two} distinct $i,j\in J$ such that $\wei(i)\le \wei(j)$, then not both $H_i$ and $H_j$ can appear in an independent $\cH$-packing, and discarding $H_i$ from the family $\mathcal{H}$ {results} in an equivalent instance.
After this preprocessing step, the resulting family $\mathcal{H}' = \{H_j\}_{j \in J'}$ is a set..
Thus, by \cref{lem:H-G}, we have $\yw(G^\circ[\cH'])\le \yw(G)\le k$.
Finally, we compute $G^\circ[\cH']$ and solve the \textsc{MWIS} problem on $G^\circ[\cH']$ (and the restriction of $\wei$ to $J'$) using \Cref{thm:yolovmwis}.
Since $k$ is a constant, the overall running time is polynomial in $|V(G)|$ and $|\cH|$.
\end{proof}

A class $\cC$ of graphs is \emph{weakly hyperfinite} if for every $\epsilon >0$ there is $c(\epsilon) \in \N$
such that in every graph $G \in \cC$ there is a subset $X \subseteq V(G)$ of at least $(1-\epsilon)|V(G)|$ vertices such that every connected component of $G-X$ has at most $c(\epsilon)$ vertices~\cite[Section 16.2]{DBLP:books/daglib/0030491}.
It turns out that every class that is closed under vertex and edge deletions and admits sublinear balanced separators is weakly hyperfinite. Many well-known classes of sparse graphs are weakly hyperfinite, e.g., graphs of bounded treewidth, planar graphs and, more generally, graphs of bounded genus. In fact, all proper minor-closed classes are weakly hyperfinite.

\begin{theorem}\label{cor:pack-weakly-hyperfinite}
Let $\cC$ be a weakly hyperfinite class of graphs closed under vertex deletion and disjoint union operations.
Let $k \in \N$ and $\epsilon > 0$ be fixed.
Given a graph $G$ with induced matching treewidth at most $k$,
in polynomial time we can find a set $F \subseteq V(G)$ such that:
\begin{enumerate}
\item $G[F] \in \cC$,
\item the size of $F$ is at least $(1-\epsilon) \textsf{OPT}$, where  $\textsf{OPT}$ is the size of a largest set satisfying the first condition.
\end{enumerate}
\end{theorem}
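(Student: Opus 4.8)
The plan is to reduce the problem to \textsc{Max Weight Independent Packing} and invoke \Cref{max-weight-independent-subgraph-packing-for-yw}, using the weak hyperfiniteness of $\cC$ to guarantee that a near-optimal packing by constant-size pieces exists. Write $c \coloneqq c(\epsilon)$ for the constant from the definition of weak hyperfiniteness; since $\epsilon$ is fixed, so is $c$ (we assume this constant, and membership of a graph on at most $c$ vertices in $\cC$, are available to the algorithm — the latter is a lookup among the finitely many graphs on at most $c$ vertices up to isomorphism). First I would form the family $\cH$ of all connected induced subgraphs of $G$ on at most $c$ vertices that belong to $\cC$, with weight $\wei_j \coloneqq |V(H_j)|$. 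This family has size at most $n^{c}$ and can be enumerated in polynomial time by iterating over all vertex subsets of size at most $c$ and testing connectivity and membership in $\cC$. Applying \Cref{max-weight-independent-subgraph-packing-for-yw} to $G$, $\cH$, $\wei$ — legitimate since $\yw(G)\le k$ with $k$ fixed and $|\cH|$ polynomial — yields in polynomial time a maximum-weight independent $\cH$-packing $\cH'$; I would then output $F \coloneqq \bigcup_{H_j\in\cH'}V(H_j)$.

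For the first condition, since $\cH'$ is an independent $\cH$-packing its members are pairwise vertex-disjoint and pairwise non-adjacent in $G$, so $G[F]$ is the disjoint union of the graphs in $\cH'$, each of which lies in $\cC$; as $\cC$ is closed under disjoint union, $G[F]\in\cC$. (The degenerate case $F=\emptyset$ occurs only when $\textsf{OPT}=0$.) Moreover $|F| = \sum_{H_j\in\cH'}|V(H_j)| = \wei(\cH')$ by disjointness.

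For the second condition, let $F^*$ achieve $\textsf{OPT}$, i.e.\ $G[F^*]\in\cC$ and $|F^*|=\textsf{OPT}$. Weak hyperfiniteness applied to $G[F^*]\in\cC$ gives $X\subseteq F^*$ with $|X|\ge(1-\epsilon)|F^*|$ such that every connected component of $G[X]$ has at most $c$ vertices. For each such component $C$, the graph $G[C]$ is connected, has at most $c$ vertices, and belongs to $\cC$ (it is obtained from $G[F^*]$ by vertex deletions, and $\cC$ is vertex-deletion closed), so $G[C]\in\cH$; and distinct components of $G[X]$ are non-adjacent in $G$, because any $G$-edge with both endpoints in $X$ already appears in the induced subgraph $G[X]$ and hence has its endpoints in the same component. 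Therefore $\{G[C]\colon C \text{ a component of }G[X]\}$ is an independent $\cH$-packing of weight $\sum_C|V(C)| = |X|\ge(1-\epsilon)\textsf{OPT}$, so by optimality of $\cH'$ we get $|F|=\wei(\cH')\ge(1-\epsilon)\textsf{OPT}$.

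The hard part is essentially bookkeeping rather than a new idea: one has to ensure the constant $c(\epsilon)$ and $\cC$-membership on constant-size graphs are accessible to the algorithm, and verify that the decomposition of $F^*$ produced by weak hyperfiniteness really yields pairwise non-adjacent pieces that still belong to $\cC$ — both points reduce to closure of $\cC$ under vertex deletion together with the trivial observation that connected components of an induced subgraph are non-adjacent in the ambient graph. The running time is polynomial: enumerating $\cH$ costs $n^{O(c)}$ and the packing step is polynomial in $|V(G)|$ and $|\cH|$ by \Cref{max-weight-independent-subgraph-packing-for-yw}, with $k$ and $\epsilon$ (hence $c$) fixed.
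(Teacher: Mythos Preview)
Your proof is correct and follows essentially the same approach as the paper's own sketch: build the family $\cH$ of small connected induced subgraphs of $G$ lying in $\cC$, solve \textsc{Max Weight Independent Packing} via \Cref{max-weight-independent-subgraph-packing-for-yw}, and use weak hyperfiniteness of $\cC$ together with closure under vertex deletion and disjoint union to certify the $(1-\epsilon)$-approximation guarantee. If anything, you are slightly more careful than the paper's sketch---you explicitly restrict $\cH$ to connected subgraphs (as required by the packing theorem) and spell out why the components of $G[X]$ are pairwise non-adjacent in $G$ itself.
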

\begin{proof}[Sketch of proof.]
We again only provide a sketch of proof and refer the reader to Gartland et al.~for further details~\cite{gartland2020finding}.
Let $F^*$ be an optimal solution. Since $\cC$ is weakly hyperfinite, there exists a set $F' \subseteq F^*$ such that $|F'| \geq (1-\epsilon) |F^*|$ and every component of $F'$ has at most $c(\epsilon)$ vertices.
As $\cC$ is closed under vertex deletion, we observe that $F' \in \cC$.
Let $\cH$ be the family of induced subgraphs of $G$ of size at most $c(\epsilon)$ that belong to $\cC$.
Note that $\cH$ can be enumerated in time polynomial in $V(G)$ as $c(\epsilon)$ is a constant.

We call the algorithm from \cref{max-weight-independent-subgraph-packing-for-yw} for $G, \cH$, and the weight function defined as the number of vertices for each of $\cH$. Let $F$ be an optimal solution for this instance.
As $\cC$ is closed under disjoint union, we observe that $F \in \cC$. Furthermore, we have $|F| \geq |F'| \geq (1-\epsilon)|F^*|$.
\end{proof}

As bounded-treewidth graphs form a weakly hyperfinite class, we immediately obtain the following corollary.

\ptas*

\subsection{Tree-independence number and induced matching treewidth of graph powers}

Now let us turn our attention to powers of graphs of bounded induced matching treewidth.

\begin{sloppypar}
\begin{lemma}\label{equality-between-G^k+2d-and-G^k(H)}
Let $G$ be a graph, and $k$ and $d$ be positive integers.
For $v\in V(G)$, let $H_v$ be the subgraph of $G$ induced by the vertices at distance at most $d$ from $v$, and let $\mathcal{H} = \{H_v\}_{v\in V(G)}$.
Then, the graphs $G^{k+2d}$ and $(G^k)^\circ[\mathcal{H}]$ are isomorphic.
\end{lemma}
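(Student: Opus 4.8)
The plan is to show that the identity map on $V(G)$ is an isomorphism between $G^{k+2d}$ and $(G^k)^\circ[\mathcal{H}]$, which in particular implies the two graphs are isomorphic. First I would observe that both graphs have the same vertex set: since $\mathcal{H}=\{H_v\}_{v\in V(G)}$ is indexed by $V(G)$, the $\circ$-operation gives $V\big((G^k)^\circ[\mathcal{H}]\big)=V(G)=V(G^{k+2d})$, regardless of whether distinct vertices happen to induce the same ball. I would also record two elementary facts to be used throughout: (i) each $H_v$ is a non-null connected subgraph of $G$, and hence of $G^k$ --- it is non-null since $v\in V(H_v)$, and connected since every vertex at distance at most $d$ from $v$ reaches $v$ along a shortest path, all of whose vertices stay within distance $d$ of $v$ --- so $(G^k)^\circ[\mathcal{H}]$ is well defined; and (ii) writing $B_d(v)\coloneqq V(H_v)=\{w:\dist_G(v,w)\le d\}$, two distinct vertices $a,b$ are adjacent in $G^k$ exactly when $1\le \dist_G(a,b)\le k$.

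Next I would prove the two directions separately. For one direction, assume $u\neq v$ are adjacent in $(G^k)^\circ[\mathcal{H}]$. By the definition of the $\circ$-operation this means either $B_d(u)\cap B_d(v)\neq\emptyset$, in which case choosing $w$ in the intersection yields $\dist_G(u,v)\le \dist_G(u,w)+\dist_G(w,v)\le 2d$; or there exist $a\in B_d(u)$ and $b\in B_d(v)$ with $ab\in E(G^k)$, in which case $\dist_G(u,v)\le \dist_G(u,a)+\dist_G(a,b)+\dist_G(b,v)\le d+k+d=k+2d$. Since $u\neq v$ forces $\dist_G(u,v)\ge 1$, in both cases $1\le \dist_G(u,v)\le k+2d$, that is, $uv\in E(G^{k+2d})$.

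For the converse, assume $uv\in E(G^{k+2d})$ and set $\ell\coloneqq\dist_G(u,v)\in\{1,\dots,k+2d\}$; fix a shortest $u$-$v$ path $u=x_0,x_1,\dots,x_\ell=v$ in $G$, recalling that every subpath of a shortest path is again a shortest path. If $\ell\le 2d$, take $w\coloneqq x_{\lfloor \ell/2\rfloor}$; then $\dist_G(u,w)=\lfloor \ell/2\rfloor\le d$ and $\dist_G(w,v)=\lceil \ell/2\rceil\le d$, so $w\in B_d(u)\cap B_d(v)$ and $u,v$ are adjacent in $(G^k)^\circ[\mathcal{H}]$. If $2d<\ell\le k+2d$, take $a\coloneqq x_d$ and $b\coloneqq x_{\ell-d}$; then $\dist_G(u,a)=d$, $\dist_G(v,b)=d$, and $\dist_G(a,b)=\ell-2d\in\{1,\dots,k\}$, so $ab\in E(G^k)$ with $a\in B_d(u)$ and $b\in B_d(v)$, and again $u,v$ are adjacent in $(G^k)^\circ[\mathcal{H}]$.

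I do not expect a genuine obstacle: the whole argument is elementary distance bookkeeping along a single shortest path, and the two cases $\ell\le 2d$ and $\ell>2d$ cover exactly the two ways an edge of $(G^k)^\circ[\mathcal{H}]$ can arise. The only point worth flagging is the one already mentioned --- that the $\circ$-operation is applied to the family $\{H_v\}_{v\in V(G)}$ indexed by $V(G)$ (so repeated members are allowed), not to the set of distinct balls; this is precisely what makes the two vertex sets agree and lets the identity map serve as the isomorphism even in degenerate situations such as $G$ complete, where all balls coincide.
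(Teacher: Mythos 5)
Your proposal is correct and follows essentially the same route as the paper's proof: verify well-definedness, identify the vertex sets via $v\mapsto H_v$, and check both directions of edge equivalence by splitting on whether $\dist_G(u,v)\le 2d$ (balls intersect) or $2d<\dist_G(u,v)\le k+2d$ (pick vertices at distance $d$ from each endpoint along a shortest path to get a $G^k$-edge between the balls). Your explicit remark that $\mathcal{H}$ is indexed by $V(G)$, so coinciding balls cause no trouble, is a careful touch the paper leaves implicit.
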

\end{sloppypar}

\begin{proof}
Let us first observe that for every vertex $v\in V(G)$, the graph $H_v$ is a connected subgraph of $G^k$.
Hence, the graph $(G^k)^\circ[\mathcal{H}]$ is well-defined.

Note that the graphs $G^{k+2d}$ and $(G^k)^\circ[\mathcal{H}]$ both have vertex sets of the same size.
Moreover, there is a bijection that maps each $v\in V(G^{k+2d})$ to the vertex corresponding to $H_v$ in $(G^k)^\circ[\mathcal{H}]$. For simplicity of notation, we denote both {vertex} sets by $V(G)$.

We now show that for any $u,v\in V(G)$, $uv\in E(G^{k+2d})$ if and only if $uv\in E((G^k)^\circ[\mathcal{H}])$, which will allow us to conclude that the graphs $G^{k+2d}$ and $(G^k)^\circ[\mathcal{H}]$ are isomorphic.

{Suppose} first that $u$ and $v$ are adjacent in $(G^k)^\circ[\mathcal{H}]$.
Then the graphs $H_u$ and $H_v$ either have a vertex in common, or they are vertex-disjoint but there is an edge between them in $G^k$.
If $H_u$ and $H_v$ have a vertex in common, say $x\in V(H_u)\cap V(H_v)$, then $\dist_G(u,v)\le \dist_G(u,x)+ \dist_G(x,v) \le 2d\le k+2d$, and hence $u$ and $v$ are adjacent in $G^{k+2d}$.
If $H_u$ and $H_v$ are vertex disjoint but there exist vertices $x\in V(H_u)$ and $y\in V(H_v)$ such that $xy\in E(G^k)$, then $\dist_G(u,v)\le \dist_G(u,x)+ \dist_G(x,y) + \dist_G(y,v)\le d+k+d= k+2d$, and hence $u$ and $v$ are adjacent in $G^{k+2d}$.

{Suppose} now that $u$ and $v$ are adjacent in $G^{k+2d}$.
Then $\dist_G(u,v)\le k+2d$.
We want to show that the graphs $H_u$ and $H_v$ either have a vertex in common, or they are vertex-disjoint but there is an edge between them in $G^k$.
If $u$ and $v$ {are} at distance at most $2d$ in $G$, then there exists a vertex $x\in {V(G)}$ that is at distance in $G$ at most $d$ from each of $u$ and $v$, and hence the graphs $H_u$ and $H_v$ have a vertex in common.
We may thus assume that $\dist_G(u,v)\ge 2d+1$ or, equivalently, that the graphs $H_u$ and $H_v$ are vertex-disjoint.
Since $2d+1\le \dist_G(u,v)\le k+2d$, there exists a $u$-$v$-path $P = (u = u_0,u_1,\ldots, u_q = v)$ in $G$ such that $2d+1\le q\le k+2d$.
This implies that $1\le \dist_G(u_d,u_{q-d})\le k$ and hence the vertices $u_d$ and $u_{q-d}$ are adjacent in $G^k$.
Since $u_d$ is at distance at most $d$ from $u$, it belongs to $H_u$.
Similarly, $u_{q-d}$ belongs to $H_v$.
Thus, $u_du_{q-d}\in E(G^k)$, and hence is an edge between the subgraphs $H_u$ and $H_v$.
This shows that $u$ and $v$ are adjacent in $(G^k)^\circ[\mathcal{H}]$.
\end{proof}

\begin{sloppypar}
\begin{lemma}\label{lem:Gr+2}
Let $G$ be a graph with at least one edge and $r$ a positive integer.
Then
\[\yw(G^{r+2})\le \tin(G^{r+2})\le \yw(G^r)\le \tin(G^r)\,.\]
\end{lemma}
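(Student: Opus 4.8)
The plan is to note that the first and last inequalities, $\yw(G^{r+2})\le \tin(G^{r+2})$ and $\yw(G^r)\le \tin(G^r)$, are instances of the general bound $\yw(H)\le\tin(H)$ recorded earlier, so that the whole statement reduces to establishing the middle inequality $\tin(G^{r+2})\le \yw(G^r)$.

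I would first handle the case where $G$ has no isolated vertex. Here I apply \cref{equality-between-G^k+2d-and-G^k(H)} with $k=r$ and $d=1$, which yields an isomorphism $G^{r+2}\cong (G^r)^\circ[\mathcal{H}]$ for $\mathcal{H}=\{H_v\}_{v\in V(G)}$ with $H_v=G[N_G[v]]$. Each $H_v$ is a connected non-null subgraph of $G$, and hence of $G^r$ since $G$ is a spanning subgraph of $G^r$; moreover $H_v$ has at least two vertices precisely because $v$ is not isolated. Thus the second part of \cref{lem:H-G}, applied with the graph $G^r$ and the family $\mathcal{H}$, gives $\tin(G^{r+2})=\tin\!\big((G^r)^\circ[\mathcal{H}]\big)\le \yw(G^r)$.

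To remove the assumption, I would peel off the isolated vertices. Let $D$ be the set of isolated vertices of $G$ and set $G'=G-D$. Since $G$ has an edge, that edge lies inside $G'$, so $G'$ is non-null with at least one edge and in particular $\yw\big((G')^r\big)\ge 1$. As vertices of $D$ stay isolated in all powers and deleting isolated vertices changes no distance between vertices of $G'$, one gets $G^{r+2}=(G')^{r+2}\sqcup \overline{K_{|D|}}$ and $G^r=(G')^r\sqcup \overline{K_{|D|}}$, where $\overline{K_{|D|}}$ denotes the edgeless graph on $|D|$ vertices. Combining the (routine) fact that both $\tin$ and $\yw$ of a disjoint union equal the maximum of the values over the components with $\yw(\overline{K_{|D|}})=0$, $\tin(\overline{K_{|D|}})\le 1$, and the no-isolated-vertex case applied to $G'$, I obtain
\[
\tin(G^{r+2})=\max\!\big(\tin((G')^{r+2}),\tin(\overline{K_{|D|}})\big)\le \max\!\big(\yw((G')^r),1\big)=\yw\big((G')^r\big)=\yw(G^r),
\]
which finishes the argument.

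The one subtle point is the one already flagged: the second part of \cref{lem:H-G} genuinely requires every subgraph in the family to have at least two vertices, and for the radius-one balls $H_v$ this fails exactly at the isolated vertices of $G$, so the reduction to $G'$ is not merely cosmetic but is what makes the invocation of \cref{lem:H-G} legitimate. Everything else — the disjoint-union behaviour of $\tin$ and $\yw$, the spanning-subgraph observation, and the distance bookkeeping — is straightforward.
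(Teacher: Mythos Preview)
Your proof is correct and follows essentially the same approach as the paper: reduce to the middle inequality, invoke \cref{equality-between-G^k+2d-and-G^k(H)} with $d=1$ to realize $G^{r+2}$ as $(G^r)^\circ[\mathcal{H}]$, and apply the second part of \cref{lem:H-G}. The only cosmetic difference is in the reduction step: the paper first assumes $G$ is connected and then handles the general case component-by-component, whereas you assume only that $G$ has no isolated vertex and then strip off the isolated vertices; your version is slightly more direct since it pinpoints exactly the obstruction to applying \cref{lem:H-G}.
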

\end{sloppypar}

\begin{proof}
Since $\yw(H)\le \tin(H)$ for every graph $H$, it suffices to show that $\tin(G^{r+2})\le \yw(G^r)$.
{ Suppose} first that $G$ is connected.
By \Cref{equality-between-G^k+2d-and-G^k(H)} (applied with $k = r$ and $d = 1$), the graph $G^{r+2}$ is isomorphic to the graph $(G^r)^\circ[\mathcal{H}]$, where $\mathcal{H} = \{H_v\}_{v\in {V(G)}}$ and $H_v$ is the subgraph of $G$ induced by the vertices at distance at most {$1$} from $v$.
Therefore, $\yw(G^{r+2})=\yw((G^r)^\circ[\mathcal{H}])$.
By definition, each subgraph in $\mathcal{H}$ is connected.
Furthermore, since $G$ is a connected graph with at least two vertices, every vertex $v\in V(G)$ has a neighbor and hence $H_v$ has at least two vertices.
Hence, by \Cref{lem:H-G}, $\tin((G^r)^\circ[\mathcal{H}])\leq \yw(G^r)$, and we conclude $\tin(G^{r+2})\leq \yw(G^r)$.

Now, let us consider the general case.
Let $\ell = \yw(G^r)$.
Since $G$ has an edge, so does $G^r$ and consequently $\ell\ge 1$.
By the first case, each nontrivial connected component $C$ of $G$ satisfies $\tin(C^{r+2})\le \yw(C^r)\le \ell$.
Since $\ell\ge 1$, each component $C$ of $G$ with only one vertex  satisfies $\tin(C^{r+2})\le \ell$.
By combining tree decompositons with independence number at most $\ell$ of each component of $G^{r+2}$, we obtain a tree decomposition of $G^{r+2}$ with independence number at most $\ell$.
Therefore, $\tin(G^{r+2})\le \ell$.
\end{proof}

\Cref{lem:Gr+2} is a significant generalization of a result of Duchet~\cite{MR778751}, who proved that for every positive integer $r$, if $G^r$ is chordal, then so is $G^{r+2}$.
This corresponds to the case when $\tin(G^r) \le 1$.

\begin{corollary} \label{coro:ywGr+2}
For any graph $G$ and positive integer $r$, we have $\tin(G^{r+2})\le \tin(G^r)$
and
$\yw(G^{r+2})\le \yw(G^r)$.
\end{corollary}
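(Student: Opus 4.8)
The plan is to obtain this corollary as an essentially immediate consequence of \cref{lem:Gr+2}, the only extra work being to remove that lemma's standing hypothesis that $G$ has an edge.

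First I would dispose of the edgeless case. If $G$ has no edges, then any two distinct vertices of $G$ are at infinite distance, so $G^k=G$ for every positive integer $k$; in particular $G^{r+2}$ and $G^r$ are literally the same graph, and both claimed inequalities hold trivially, with equality.

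In the remaining case $G$ has at least one edge, and since $r\ge 1$ the power $G^r$ contains all edges of $G$ and hence also has an edge. I would then apply \cref{lem:Gr+2} to $G$ and the integer $r$ to obtain the chain
\[\yw(G^{r+2})\le \tin(G^{r+2})\le \yw(G^r)\le \tin(G^r),\]
and simply read the two desired bounds off it: the subchain $\tin(G^{r+2})\le \yw(G^r)\le \tin(G^r)$ gives $\tin(G^{r+2})\le \tin(G^r)$, while the subchain $\yw(G^{r+2})\le \tin(G^{r+2})\le \yw(G^r)$ gives $\yw(G^{r+2})\le \yw(G^r)$.

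There is no genuine obstacle here: all the substance already lives in \cref{lem:Gr+2} (and, underneath it, in \cref{equality-between-G^k+2d-and-G^k(H)} together with \cref{lem:H-G}); the corollary is merely the clean packaging of that lemma once the trivial edgeless graphs have been set aside.
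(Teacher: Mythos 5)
Your proposal is correct and takes essentially the same approach as the paper: dispose of the edgeless case trivially (you observe $G^{r+2}=G^r$ directly, while the paper computes the parameter values in the null and edgeless-but-non-null subcases, reaching the same conclusion), then read both inequalities off the chain in \cref{lem:Gr+2}.
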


\begin{proof}
If $G$ has no vertices, then $\tin(G^{r}) = \tin(G^{r+2}) = \yw(G^{r}) = \yw(G^{r+2}) = 0$ and the statement holds.
Assume now that $G$ is non-null.
If $G$ is edgeless, then so are $G^r$ and $G^{r+2}$; hence $\tin(G^{r}) = \tin(G^{r+2}) = 1$ and
$\yw(G^{r}) = \yw(G^{r+2}) = 0$ and the statement holds.
Finally, assume that $G$ has at least one edge.
Then, by \Cref{lem:Gr+2} we have $\yw(G^{r+2})\le \tin(G^{r+2})\le \yw(G^r)\le \tin(G^r)$, which implies both stated inequalities.
\end{proof}

\begin{corollary} \label{coro:ywGk}
Let $G$ be a graph and $r$ be a positive odd integer. Then $\tin(G^{r})\le \tin(G)$ and $\yw(G^r)\leq \yw(G)$.
\end{corollary}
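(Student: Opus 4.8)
The plan is a straightforward induction on the odd integer $r$, with \Cref{coro:ywGr+2} supplying the whole inductive step. For the base case $r = 1$ we have $G^1 = G$, so $\tin(G^1) = \tin(G)$ and $\yw(G^1) = \yw(G)$ and both inequalities hold with equality.

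For the inductive step, let $r \ge 3$ be odd. Then $r - 2$ is again a positive odd integer, so the induction hypothesis applies to it and gives $\tin(G^{r-2}) \le \tin(G)$ and $\yw(G^{r-2}) \le \yw(G)$. On the other hand, \Cref{coro:ywGr+2} applied with the role of $r$ played by $r-2$ yields $\tin(G^{r}) = \tin(G^{(r-2)+2}) \le \tin(G^{r-2})$ and $\yw(G^{r}) = \yw(G^{(r-2)+2}) \le \yw(G^{r-2})$. Chaining the two bounds for each parameter gives $\tin(G^r) \le \tin(G)$ and $\yw(G^r) \le \yw(G)$. Equivalently, one may just unfold the recursion directly: $\tin(G^r) \le \tin(G^{r-2}) \le \tin(G^{r-4}) \le \cdots \le \tin(G^1) = \tin(G)$, and analogously for $\yw$ — each parameter is non-increasing along the finite sequence of odd powers $r, r-2, \dots, 1$ by repeated use of \Cref{coro:ywGr+2}.

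I do not expect any genuine obstacle; the only point requiring a little care is the parity and positivity bookkeeping, namely that stepping down by $2$ from an odd $r \ge 3$ keeps the exponent a positive (indeed odd) integer, so \Cref{coro:ywGr+2} remains applicable at every step and the recursion terminates exactly at exponent $1$. (A more direct route would be to invoke \Cref{equality-between-G^k+2d-and-G^k(H)} with $d = (r-1)/2$ and $k = 1$, exhibiting $G^r$ as $(G)^\circ[\mathcal{H}]$ for $\mathcal{H}$ the family of radius-$(r-1)/2$ balls, and then apply \Cref{lem:H-G}/\Cref{tin-of-G(H)-corollary}; but this would re-introduce the separate treatment of isolated vertices that \Cref{coro:ywGr+2} has already absorbed, so the induction above is the cleaner presentation.)
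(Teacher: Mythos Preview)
Your proposal is correct and takes essentially the same approach as the paper, which leaves the proof of \Cref{coro:ywGk} implicit as an immediate consequence of \Cref{coro:ywGr+2}. Your induction on odd $r$ with \Cref{coro:ywGr+2} supplying the step from $r-2$ to $r$ is precisely the intended argument, and the parity/positivity bookkeeping you note is the only (trivial) point to check.
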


\Cref{coro:ywGk} generalizes a result due to Balakrishnan and Paulraja~\cite{MR704427} stating that the class of chordal graphs is closed under taking odd powers.

\Cref{lem:Gr+2} and a straightforward induction on $r$ also implies the following.

\begin{corollary} \label{coro:ywGr}
Let $G$ be a graph and $r\ge 3$ an odd integer. Then $\tin(G^{r})\le \yw(G)$.
\end{corollary}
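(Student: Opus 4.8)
The plan is to prove the statement by induction on the odd integer $r$, using \Cref{lem:Gr+2} as essentially the only substantial ingredient. Throughout one may assume that $G$ has at least one edge, since this is exactly the hypothesis under which \Cref{lem:Gr+2} is stated; the edgeless case is degenerate (there $\yw(G)=0$ and $G^r$ inherits the triviality) and can be dispatched separately, so I will focus on the main argument.

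For the base case $r=3$, I would apply \Cref{lem:Gr+2} with its integer parameter set to $1$. Since $G$ has an edge, the lemma gives $\tin(G^{3}) = \tin(G^{1+2}) \le \yw(G^{1}) = \yw(G)$, which is the desired inequality.

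For the inductive step, let $r\ge 5$ be odd and assume the claim holds for $r-2$, which is odd and at least $3$. Applying \Cref{lem:Gr+2} to $G$ with integer parameter $r-2\ge 1$ yields $\tin(G^{r}) = \tin(G^{(r-2)+2}) \le \yw(G^{r-2})$. Combining this with the general inequality $\yw(H)\le \tin(H)$ and the induction hypothesis $\tin(G^{r-2})\le \yw(G)$ gives $\tin(G^{r}) \le \yw(G^{r-2}) \le \tin(G^{r-2}) \le \yw(G)$, completing the induction. (Alternatively, one can avoid the explicit induction by chaining \Cref{coro:ywGr+2}, namely $\yw(G^{r-2}) \le \yw(G^{r-4}) \le \cdots \le \yw(G^{1}) = \yw(G)$, every exponent in the chain being a positive odd integer, and then invoking $\tin(G^{r})\le \yw(G^{r-2})$ from \Cref{lem:Gr+2}.)

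There is no genuinely hard step here: all the content sits in \Cref{lem:Gr+2}. The only point requiring care is the bookkeeping that keeps the exponent a positive odd integer at every stage, so that the ``drop by two'' move of \Cref{lem:Gr+2} lands on $G^{1}=G$ after an integer number of steps, together with the easy-to-overlook check that $G$ (hence every power appearing in the argument) has an edge, so that \Cref{lem:Gr+2} is applicable. It is also worth recording that, since $\yw(G)\le \tin(G)$ for every graph, this corollary strengthens the tree-independence part of \Cref{coro:ywGk} for odd $r\ge 3$.
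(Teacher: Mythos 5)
Your proof is correct and is exactly the argument the paper intends: the paper derives this corollary by citing \Cref{lem:Gr+2} together with ``a straightforward induction on $r$'', which is precisely your induction (equivalently, your alternative chain via \Cref{coro:ywGr+2}). One caveat, which you inherit from the paper's own statement rather than introduce yourself: for a non-null edgeless $G$ the inequality actually fails ($\tin(G^r)=1>0=\yw(G)$), so the edgeless case cannot truly be ``dispatched separately'' --- the corollary implicitly assumes $G$ has at least one edge, as in \Cref{lem:Gr+2}.
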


Furthermore, as all our proofs are algorithmic, we obtain the following.

\begin{corollary}\label{tin-can-only-go-down-when-taking-odd-powers-algorithmic}
For every odd integer $r\ge 3$, there exists an algorithm that takes as input a graph $G$ and a tree decomposition \hbox{$\mathcal{T} = (T, \{\Bag_t\}_{t\in V(T)})$} of $G$, and computes in time polynomial in $|V(G)|$
the graph $G^r$ and a tree decomposition \hbox{$\mathcal{T}' = (T, \{\Bag'_t\}_{t\in V(T)})$} of $G^r$ such that $\alpha_{G^r}(\mathcal{T}')\le \mu_G(\mathcal{T})$.
\end{corollary}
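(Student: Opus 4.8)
The plan is to prove this as the algorithmic refinement of \Cref{coro:ywGk,coro:ywGr}, obtained by iterating the constructive content of \Cref{equality-between-G^k+2d-and-G^k(H)} and \Cref{coro:compute-td-gH} a constant number of times. Write $r = 2s+1$ with $s \ge 1$. Starting from $\cT_1 := \cT$, I would successively construct the graphs $G = G^1, G^3, G^5, \dots, G^{2s+1} = G^r$ together with tree decompositions $\cT_1, \cT_3, \dots, \cT_r$, all on the same tree $T$, maintaining the invariant that $\alpha_{G^{p}}(\cT_{p}) \le \mu_G(\cT)$ for every odd power $p \ge 3$ constructed so far. The output is then $\mathcal{T}' := \cT_r$, and the powers $G^3, \dots, G^r$ are produced along the way.

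For the step from $G^p$ to $G^{p+2}$ (for odd $p \ge 1$), set $H_v := G[N_G[v]]$ for each $v \in V(G)$ and let $\cH := \{H_v\}_{v \in V(G)}$, regarded as a family indexed by $V(G)$ (so repetitions, e.g.\ coming from false twins of $G$, are allowed). As shown inside the proof of \Cref{equality-between-G^k+2d-and-G^k(H)}, each $H_v$ is a connected subgraph of $G^p$, so $(G^p)^\circ[\cH]$ is well defined, and that lemma (applied with $k = p$ and $d = 1$) identifies it with $G^{p+2}$ on the common vertex set $V(G)$ under the correspondence $H_v \leftrightarrow v$; moreover $G^p$, the family $\cH$, and this identification are all computable in time polynomial in $|V(G)|$. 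Now apply \Cref{coro:compute-td-gH} with input graph $G^p$, family $\cH$, and tree decomposition $\cT_p$. Since each member of $\cH$ has at least two vertices (the case of vertices isolated in $G$ being treated separately, see below), its multiset conclusion produces in polynomial time, on the same tree $T$, a tree decomposition $\cT_{p+2}$ of $(G^p)^\circ[\cH] = G^{p+2}$ with $\alpha_{G^{p+2}}(\cT_{p+2}) \le \mu_{G^p}(\cT_p)$. Using the elementary fact that $\mu_H(\mathcal{S}) \le \alpha_H(\mathcal{S})$ for any graph $H$ and any fixed tree decomposition $\mathcal{S}$ of $H$ — picking, from each edge of an induced matching touching a bag, the endpoint lying in that bag yields an independent set of the same size inside that bag — we obtain $\alpha_{G^{p+2}}(\cT_{p+2}) \le \mu_{G^p}(\cT_p) \le \alpha_{G^p}(\cT_p)$. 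For $p = 1$ this already gives $\alpha_{G^{3}}(\cT_{3}) \le \mu_G(\cT)$ (only $\mu_G(\cT)$, not $\alpha_G(\cT)$, enters here, since \Cref{coro:compute-td-gH} bounds $\alpha$ of the output by $\mu$ of the input); for $p \ge 3$ the inductive hypothesis yields $\alpha_{G^{p+2}}(\cT_{p+2}) \le \mu_{G^p}(\cT_p) \le \alpha_{G^p}(\cT_p) \le \mu_G(\cT)$. After $s$ steps we reach $\cT_r$ with $\alpha_{G^r}(\cT_r) \le \mu_G(\cT)$. Since $r$ is a fixed constant, the whole procedure is a constant number of polynomial-time computations, hence runs in time polynomial in $|V(G)|$ (and $|V(T)|$).

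It remains to deal with vertices $v$ that are isolated in $G$, for which $H_v = \{v\}$ has a single vertex; this is handled exactly as the trivial components are handled in the proof of \Cref{lem:Gr+2}, using that an isolated vertex of $G$ remains isolated in $G^p$ for every $p$ and that deleting such vertices from all bags of $\cT$ does not increase $\mu_G(\cT)$ (an induced matching of the resulting graph touching a bag is also one in $G$ touching the corresponding bag of $\cT$).

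I do not expect a substantial obstacle: the genuine combinatorics is already encapsulated in \Cref{equality-between-G^k+2d-and-G^k(H)} and \Cref{lem:H-G}/\Cref{coro:compute-td-gH}, and the proof is a matter of chaining them while tracking that the tree $T$ is preserved at each step. The one point that needs care is that the family $\{H_v\}_{v \in V(G)}$ is in general a genuine multiset rather than a set, so one must invoke the \emph{multiset} conclusion of \Cref{coro:compute-td-gH} — the bound on the independence number — which is available precisely because every $H_v$ has at least two vertices; this is also exactly why vertices isolated in $G$ must be removed before running the iteration.
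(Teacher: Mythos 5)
Your proposal is correct in substance and is exactly the argument the paper intends: the paper offers no explicit proof (only the remark that ``all our proofs are algorithmic''), and the intended content is precisely your iteration of \Cref{equality-between-G^k+2d-and-G^k(H)} with $d=1$ together with the bag-construction $\Bag'_t=\{v: V(H_v)\cap \Bag_t\neq\emptyset\}$ from \Cref{coro:compute-td-gH}, chained via the elementary inequality $\mu_H(\mathcal{S})\le\alpha_H(\mathcal{S})$. You are right that the multiset form of the bound (the $\alpha$-of-output versus $\mu$-of-input part of \Cref{lem:H-G}) is the one that must be invoked, and your bookkeeping of the inequalities $\alpha_{G^{p+2}}(\cT_{p+2})\le\mu_{G^p}(\cT_p)\le\alpha_{G^p}(\cT_p)\le\mu_G(\cT)$ is correct. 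Since the tree $T$ is unchanged at every step, the output decomposition has the required form.

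The one point where your write-up does not close is the isolated-vertex case, and it is worth seeing that it \emph{cannot} be closed in the way you suggest: the component-wise combination used at the end of the proof of \Cref{lem:Gr+2} glues together decompositions on \emph{different} trees, so it does not produce a decomposition on the original tree $T$ as the corollary's statement demands. In fact the literal statement fails for graphs with isolated vertices: take $G=K_2+K_1$ with the one-node tree decomposition whose single bag is $V(G)$; then $\mu_G(\cT)=1$, $G^3=G$, and any decomposition on that one-node tree has a bag of independence number $2$. This is an imprecision inherited from the paper (the corollary implicitly assumes what \Cref{lem:Gr+2} assumes for its connected case, namely that every vertex of $G$ has a neighbor), not a defect of your argument; for graphs without isolated vertices your proof is complete.
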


\subsubsection{Algorithmic consequences}

Let $d$ be a positive integer, let $G$ be a graph, let $\HH = \{H_j\}_{j\in J}$ be a finite family of connected non-null subgraphs of $G$, and let $J'\subseteq J$.
A \emph{distance-$d$ $\HH$-packing} in $G$ is a subfamily $\HH' = \{H_j\}_{j\in J'}$ of subgraphs from $\HH$ that are at pairwise distance at least $d$ in $G$.

\begin{observation}\label{observation}
Let $d$ be a positive integer, let $G$ be a graph, let $\HH = \{H_j\}_{j\in J}$ be a finite family of connected non-null subgraphs of $G$, and let $J'\subseteq J$.
Then, the corresponding subfamily $\HH' = \{H_i\}_{j\in J'}$ is a distance-$d$ $\HH$-packing in $G$ if and only if $\HH'$ is an independent $\HH$-packing in the graph $G^{d-1}$.
\end{observation}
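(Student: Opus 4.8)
The plan is to prove the equivalence by unwinding both definitions into pairwise conditions and matching them. First I would recall that the distance between two subgraphs $H_i,H_j$ of $G$ is $\dist_G(H_i,H_j)=\min\{\dist_G(u,v):u\in V(H_i),\,v\in V(H_j)\}$, which equals $0$ precisely when $V(H_i)\cap V(H_j)\neq\emptyset$. Hence $\HH'=\{H_j\}_{j\in J'}$ is a distance-$d$ $\HH$-packing in $G$ if and only if $\dist_G(H_i,H_j)\ge d$ for all distinct $i,j\in J'$, while $\HH'$ is an independent $\HH$-packing in $G^{d-1}$ if and only if, for all distinct $i,j\in J'$, the subgraphs $H_i,H_j$ are vertex-disjoint and no edge of $G^{d-1}$ joins $V(H_i)$ to $V(H_j)$ — that is, $J'$ is an independent set in $(G^{d-1})^\circ[\HH]$ (for $d=1$ we read $G^0$ as the edgeless graph on $V(G)$). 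So it suffices to show, for fixed distinct $i,j\in J'$, that
\[
\dist_G(H_i,H_j)\ge d \iff \big(V(H_i)\cap V(H_j)=\emptyset \text{ and no edge of } G^{d-1}\text{ joins } V(H_i)\text{ to } V(H_j)\big).
\]

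For the forward direction I would argue: if $\dist_G(H_i,H_j)\ge d\ge 1$, then the subgraphs are vertex-disjoint (otherwise the distance would be $0$), and for every $u\in V(H_i)$ and $v\in V(H_j)$ we have $\dist_G(u,v)\ge d>d-1$, so $uv\notin E(G^{d-1})$; thus no edge of $G^{d-1}$ joins the two vertex sets. For the converse: given that $H_i,H_j$ are vertex-disjoint and no edge of $G^{d-1}$ joins them, any $u\in V(H_i)$ and $v\in V(H_j)$ satisfy $u\neq v$ and $uv\notin E(G^{d-1})$, hence $\dist_G(u,v)\ge d$; taking the minimum over all such pairs gives $\dist_G(H_i,H_j)\ge d$. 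Quantifying this pairwise equivalence over all distinct $i,j\in J'$ then yields the statement.

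I do not expect a real obstacle here, since the argument is purely definitional. The only point deserving a little care is the off-by-one: being joined by an edge of $G^{d-1}$ means being at distance at most $d-1$, which — in the absence of shared vertices — is exactly the negation of being at distance at least $d$; and the degenerate case $d=1$, where $G^0$ has no edges and both sides of the equivalence reduce to pairwise vertex-disjointness.
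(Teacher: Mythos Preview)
Your proposal is correct and follows essentially the same approach as the paper: both reduce the claim to a pairwise equivalence between $\dist_G(H_i,H_j)\ge d$ and the condition that $H_i,H_j$ are vertex-disjoint with no edge of $G^{d-1}$ between them, and both verify this by unwinding the definition of $G^{d-1}$. The only cosmetic difference is that the paper argues via the contrapositive ($\dist_G\le d-1 \iff \dist_{G^{d-1}}\le 1$) while you argue the direct statement; your handling of the boundary case $d=1$ is a nice addition that the paper leaves implicit.
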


\begin{proof}
Consider two distinct elements $i,j\in J'$.
It suffices to show that $\dist_G(V(H_i),V(H_{j})) \ge  d$ if and only if $\dist_{G^{d-1}}(V(H_i),V(H_{j})) \ge 2$, or, equivalently, that $\dist_G(V(H_i),V(H_{j})) \le d-1$ if and only if $\dist_{G^{d-1}}(V(H_i),V(H_{j})) \le 1$.

{Suppose} first that $\dist_G(V(H_i),V(H_{j})) \le d-1$ and let $P = (v_0,\ldots, v_r)$ be a path in $G$ from a vertex in $H_i$ to a vertex in $H_{j}$ such that $r\le d-1$.
If $r = 0$ then $\dist_{G^{d-1}}(V(H_i),V(H_{j}))= 0$.
If $r>0$, then in $G^{d-1}$, vertices $v_0$ and $v_{r}$ are adjacent, and hence $\dist_{G^{d-1}}(V(H_i),V(H_{j}))\le 1$.

{Suppose} now that $\dist_{G^{d-1}}(V(H_i),V(H_{j})) \le 1$.
If this distance is~$0$, then the graphs $H_i$ and $H_{j}$ have a vertex in common and hence $\dist_G(V(H_i),V(H_{j})) = 0\le d-1$.
If this distance is $1$, then there is a path $P$ in $G$ of length at most $d-1$ from a vertex in $H_i$ to a vertex in $H_{j}$.
Thus, $\dist_G(V(H_i),V(H_{j}))\le d-1$.
\end{proof}

For an integer $d$, the {input to} \textsc{Max Weight Distance-$d$ Packing} is a  graph $G$, a finite family $\HH = \{H_j\}_{j\in J}$ of connected non-null subgraphs of $G$, and a weight function $\wei\colon J\to \mathbb{Q}_+$ on the subgraphs in $\HH$.
The problem asks for a maximum-weight distance-$d$ $\HH$-packing in $G$.
By \cref{observation}, for every $J'\subseteq J$, the corresponding subfamily $\HH' = \{H_j\}_{j\in J'}$ is a distance-$d$ $\HH$-packing in $G$ if and only if $\HH'$ is an independent $\HH$-packing in the graph $G^{d-1}$.

\begin{theorem}\label{max-weight-distance-d-subgraph-packing-for-bounded-yw}
For every {positive integer $k$ and} even positive integer $d$, given a graph $G$ of induced matching treewidth {at most $k$}, a finite family $\HH = \{H_j\}_{j\in J}$ of connected non-null subgraphs of $G$, and a weight function $\wei\colon J\to \mathbb{Q}_+$ on the subgraphs in $\HH$, the \textsc{Max Weight Distance-$d$ Packing} problem is solvable in time polynomial in $|V(G)|$ and $|\cH|$.
\end{theorem}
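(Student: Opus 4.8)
The plan is to reduce \textsc{Max Weight Distance-$d$ Packing} in $G$ to \textsc{Max Weight Independent Packing} in the power $G^{d-1}$ and then invoke \cref{max-weight-independent-subgraph-packing-for-yw}. The crucial point is that $d$ is even, so $d-1$ is odd, and hence by \cref{coro:ywGk} we have $\yw(G^{d-1})\le \yw(G)\le k$. So the power graph stays inside the tractable regime, which is exactly what makes the reduction go through; for odd $d$ the exponent $d-1$ is even and the bound fails, consistently with the $\NP$-hardness of the distance-$3$ variant already on chordal graphs.

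First I would compute $G^{d-1}$: it has the same vertex set as $G$, and its edges can be read off from pairwise distances, so this takes time polynomial in $|V(G)|$. For each $j\in J$, since $H_j$ is a connected non-null subgraph of $G$, the subgraph $H_j'\coloneqq G^{d-1}[V(H_j)]$ is a connected non-null subgraph of $G^{d-1}$ (taking a power only adds edges, so connectivity is preserved). Put $\HH'=\{H_j'\}_{j\in J}$, keeping the same weight function $\wei$ on $J$. By \cref{observation}, for every $J'\subseteq J$ the subfamily $\{H_j\}_{j\in J'}$ is a distance-$d$ $\HH$-packing in $G$ if and only if $\{H_j'\}_{j\in J'}$ is an independent $\HH'$-packing in $G^{d-1}$; in particular the weights of corresponding packings agree. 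Hence a maximum-weight distance-$d$ $\HH$-packing in $G$ corresponds, via the identity on $J$, to a maximum-weight independent $\HH'$-packing in $G^{d-1}$, and it suffices to solve the latter and translate the answer back (which is immediate).

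Since $\yw(G^{d-1})\le k$, I would then apply \cref{thm:approximate-yolov} to $G^{d-1}$ to obtain in time $|V(G)|^{\Oh(k)}$ a tree decomposition $\cT'$ of $G^{d-1}$ with $\mu_{G^{d-1}}(\cT')\le 8k$ (alternatively, when $d\ge 4$ one may feed a tree decomposition of $G$, itself produced by \cref{thm:approximate-yolov}, into \cref{tin-can-only-go-down-when-taking-odd-powers-algorithmic} to get such a decomposition of $G^{d-1}$; when $d=2$ one just reuses the decomposition of $G$, as $G^{1}=G$). Equipped with this bounded-width decomposition, \cref{max-weight-independent-subgraph-packing-for-yw} solves \textsc{Max Weight Independent Packing} on the instance $(G^{d-1},\HH',\wei)$ in time polynomial in $|V(G^{d-1})|=|V(G)|$ and $|\HH'|=|\HH|$, because $k$ is fixed. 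This yields the desired algorithm.

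I do not expect any serious obstacle: all the structural content is already packaged in \cref{observation} and \cref{coro:ywGk}. The one point requiring care is that \cref{max-weight-independent-subgraph-packing-for-yw} (and the underlying \cref{thm:yolovmwis}) operates on a graph \emph{together with} a tree decomposition of small induced matching width, so one must explicitly manufacture such a decomposition of $G^{d-1}$ rather than merely cite the inequality $\yw(G^{d-1})\le k$; this is precisely what \cref{thm:approximate-yolov} (or \cref{tin-can-only-go-down-when-taking-odd-powers-algorithmic}) supplies. A minor bookkeeping check is that the final running time is polynomial in $|V(G)|$ and $|\HH|$ (not $|V(G^{d-1})|$ in some larger sense), which holds since $G^{d-1}$ has exactly $|V(G)|$ vertices.
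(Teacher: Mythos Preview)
Your proof is correct and takes essentially the same approach as the paper: reduce to independent packing in the odd power $G^{d-1}$ via \cref{observation} and exploit that odd powers preserve bounded induced matching treewidth. The paper's version differs only cosmetically (it splits off $d=2$ and, for $d\ge 4$, routes through $\tin(G^{d-1})\le\yw(G)$ via \cref{coro:ywGr} and \cref{tin-of-G(H)-corollary} before invoking \cref{thm:yolovmwis} directly on $(G^{d-1})^\circ[\HH]$); note also that your concern about handing an explicit decomposition to \cref{max-weight-independent-subgraph-packing-for-yw} is unnecessary, since that theorem does not require one as input.
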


\begin{sloppypar}
\begin{proof}
If $d = 2$, then the \textsc{Max Weight Distance-$d$ Packing} problem coincides with the \textsc{Max Weight Independent Packing} problem and the conclusion follows from \cref{max-weight-independent-subgraph-packing-for-yw}.
So we may assume that $d\ge 4$.

We compute the graphs $G^{d-1}$ and $(G^{d-1})^\circ[\HH]$; note that the latter graph is well-defined, since every graph in $\HH$ is a connected non-null subgraph of $G^{d-1}$.
By \cref{coro:ywGr}, $\tin(G^{d-1})\leq \yw(G)$, and by \cref{tin-of-G(H)-corollary}, $\tin((G^{d-1})^\circ[\HH])\leq \tin(G^{d-1})$.
Therefore, $\yw((G^{d-1})^\circ[\HH])\le \tin((G^{d-1})^\circ[\HH])\le \yw(G) = k$.
By \Cref{observation}, a subfamily $\HH'$ of $\cH$ is a distance-$d$ $\HH$-packing in $G$ if and only if $\HH'$ is an independent $\HH$-packing in the graph $G^{d-1}$, which is also equivalent to $\HH'$ being an independent set in the graph $(G^{d-1})^\circ[\HH]$.
Finally, using the fact that $\yw((G^{d-1})^\circ[\HH])\le k$, we call \cref{thm:yolovmwis} to solve the instance $((G^{d-1})^\circ[\HH], \wei)$ of \MWIS.
The overall running time is polynomial in $|V(G)|$ and $|\cH|$, as $k$ is a constant.
\end{proof}
\end{sloppypar}

Note that in the last step of the above proof, instead of \cref{thm:yolovmwis} we could have also used an algorithm for \MWIS{} {for} graphs with bounded $\tin$ (see~\cite{dallard2022firstpaper}).

We remark that unless $\P = \NP$, the result of \cref{max-weight-distance-d-subgraph-packing-for-bounded-yw} cannot be generalized to odd values of $d\ge 3$, since in this case the distance-$d$ variant of \textsc{Max Independent Set} is $\NP$-hard for chordal graphs  (see~\cite{DBLP:journals/jco/EtoGM14}), which have induced matching treewidth (and even tree-independence number) at most one.
Similarly, a result analogous to \cref{coro:ywGk} does not hold for even powers.
To this end, since the induced matching treewidth (resp.\ tree-independence number) of chordal graphs is at most one, it suffices to show
that the $k$-th powers of chordal graphs may have arbitrarily large induced matching treewidth (resp.\ tree-independence number). The idea of the construction below comes from Eto et al.~\cite{DBLP:journals/jco/EtoGM14}.

\begin{lemma}
    Let $r$ be an even positive integer. For any graph $H$, there exists a chordal graph~$G$ such that $G^r$ contains $H$ as an induced subgraph.
\end{lemma}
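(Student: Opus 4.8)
The plan is to exhibit $H$ as an induced subgraph of $G^r$ for a chordal graph $G$ built by gluing one large clique to a collection of pendant paths, tuned so that the vertices representing $V(H)$ sit at distance exactly $r$ when the corresponding vertices are adjacent in $H$, and at distance at least $r+1$ otherwise. Write $r=2s$ with $s\ge 1$. I would define $G$ as follows. Its vertex set is $\{c_e : e\in E(H)\}\cup\{x_v^i : v\in V(H),\ 1\le i\le s\}$. The vertices $\{c_e : e\in E(H)\}$ form a clique $K$; for each $v\in V(H)$ the vertices $x_v^1,\dots,x_v^s$ form a path $P_v$ (with edges $x_v^i x_v^{i+1}$); and $x_v^1$ is joined to $c_e$ for every edge $e$ of $H$ incident with $v$. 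Finally put $x_v:=x_v^s$. The intended claim is that $G$ is chordal and $G^r[\{x_v : v\in V(H)\}]\cong H$.

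First I would check that $G$ is chordal. Note that $G-K$ is the disjoint union of the paths $P_v$, hence a forest, so $K$ separates these paths. Let $C$ be an induced cycle with $|V(C)|\ge 4$. Since any three vertices of $K$ are pairwise adjacent, $|V(C)\cap K|\le 2$. If $V(C)\cap K=\emptyset$ then $C$ lies in the forest $G-K$, impossible. If $|V(C)\cap K|=1$, removing that vertex leaves a path inside $G-K$ whose two endpoints are adjacent to $c_e$ for a single $e=\{v,w\}$, hence lie in the distinct components $P_v,P_w$ of the forest $G-K$ and cannot be joined there, a contradiction (a single-vertex remainder would force $|V(C)|=2$). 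If $|V(C)\cap K|=2$, say $\{c,c'\}$, these are adjacent hence consecutive on $C$, and $C-\{c,c'\}$ is a path in $G-K$ from a neighbour of $c$ to a neighbour of $c'$; inside any component $P_v$ the only vertex adjacent to $K$ is $x_v^1$, so both endpoints of this path must coincide with a single $x_v^1$, again impossible unless $|V(C)|=3$. Thus $G$ has no induced cycle of length at least four.

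Next I would compute distances. Each $P_v$ is a pendant path attached at $x_v^1$, so its ``tail'' $\{x_v^2,\dots,x_v^s\}$ gives no shortcuts, and for all $u\neq v$ we get $\dist_G(x_u,x_v)=2(s-1)+\dist_{G_0}(x_u^1,x_v^1)$, where $G_0:=G[K\cup\{x_v^1:v\in V(H)\}]$. In $G_0$ the set $\{x_v^1\}$ is independent, each $x_v^1$ is adjacent exactly to $\{c_e : v\in e\}$, and $K$ is a clique; hence $\dist_{G_0}(x_u^1,x_v^1)=2$ if $uv\in E(H)$ (common neighbour $c_{uv}$), it equals $3$ if $u\neq v$, $uv\notin E(H)$, and both $u,v$ are non-isolated in $H$ (a path $x_u^1-c_e-c_{e'}-x_v^1$ with $e\ni u$, $e'\ni v$, $e\neq e'$), and it is $\infty$ if $u$ or $v$ is isolated in $H$. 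Consequently $\dist_G(x_u,x_v)=2s=r$ exactly when $uv\in E(H)$, and $\dist_G(x_u,x_v)\ge 2s+1>r$ otherwise. Therefore in $G^r$ two distinct vertices $x_u,x_v$ are adjacent if and only if $uv\in E(H)$, i.e. $G^r[\{x_v:v\in V(H)\}]$ is isomorphic to $H$, as required. (If $E(H)=\emptyset$ then $K=\emptyset$ and $G$ is a disjoint union of paths, so the claim is immediate; isolated vertices of $H$ give isolated components $P_v$, consistent with the above.)

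The only genuine work is the chordality case analysis and the verification that non-adjacent (in $H$) representatives have no short path; both rest on the same point, namely that $K$ is a clique separating the pendant paths, so that away from $K$ the graph is a forest and every cycle is a triangle inside or touching $K$. I expect the induced-cycle case distinction ($|V(C)\cap K|\in\{0,1,2\}$) to be the most error-prone step, so I would write out those sub-cases explicitly and carefully track the small-length degeneracies.
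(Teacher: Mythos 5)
Your construction is exactly the paper's (the split graph on $E(H)$-clique plus $V(H)$, with a pendant path of length $(r-2)/2$ hung on each original vertex), with the same distance computation; you merely verify chordality by an explicit induced-cycle case analysis where the paper invokes the split-graph structure. The proof is correct and essentially identical to the paper's.
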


\begin{proof}
    Let $\widehat H$ be the graph obtained from $H$ by subdividing each edge exactly once and adding edges so that the newly introduced vertices form a clique. For convenience, we denote the vertices of $\widehat H$ \emph{not} resulting from edge subdivisions by $V(H)$. Note that the vertices of $V(\widehat H)\setminus V(H)$ correspond to the edges of $H$.
Then $V(H)$ is an independent set in $\widehat H$ and $V(\widehat H)\setminus V(H)$ is a clique in $\widehat H$.
Hence $\widehat H$ is a split graph and thus a chordal graph.
We obtain the graph $G$ from $\widehat H$ by appending to each vertex $v\in V(H)$ a path $P^v$ of length $(r-2)/2$ (which is an integer, since $r$ is even) such that one endpoint of $P^v$ is $v$ and all the other vertices of $P^v$ are new.
In particular, for any two different vertices $v,w\in V(H)$, the corresponding paths $P^v$ and $P^w$ are vertex-disjoint.
By construction, the graph $G$ is chordal.
For each vertex $v\in V(H)$, let us denote the two endpoints of $P^v$ by $v$ and $v'$ (with $v' = v$ if and only if $r = 2$), and let $X = \{v'\colon v\in V(H)\}$.
For arbitrary two distinct vertices $u,v\in V(H)$, it holds that $u'v'\in E(G^r)$ if and only if $\dist_G(u',v')\le r$, which happens if and only if $u$ and $v$ are adjacent in $H$.
Thus, the subgraph of $G^r$ induced by $X$ is isomorphic to $H$.
This shows that for every graph $H$ there exists a chordal graph $G$ such that $G^r$ contains an induced subgraph isomorphic to $H$, as claimed.
\end{proof}

Since there exist graphs with arbitrarily large induced matching treewidth and tree-independence number, we obtain the following corollaries.

\begin{corollary}
    Let $r$ be an even positive integer.
    There is no function $f$ satisfying $\tin(G^r) \leq f(\tin(G))$ for all graphs $G$.
\end{corollary}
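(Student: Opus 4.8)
The plan is to deduce the corollary from the preceding lemma, using the monotonicity of the tree-independence number under induced subgraphs together with the (already noted) existence of graphs of arbitrarily large tree-independence number.

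First I would record the following standard observation: if $H$ is an induced subgraph of a graph $G'$, then $\tin(H)\le \tin(G')$. Indeed, given a tree decomposition $\cT=(T,\{\Bag_t\}_{t\in V(T)})$ of $G'$ with $\alpha_{G'}(\cT)=\tin(G')$, the pair $\cT_H=(T,\{\Bag_t\cap V(H)\}_{t\in V(T)})$ is a tree decomposition of $H$ (the edge condition and the connectivity condition are inherited from $\cT$ upon intersecting every bag with $V(H)$), and any independent set of $H$ contained in a bag $\Bag_t\cap V(H)$ is in particular an independent set of $G'$ contained in $\Bag_t$; hence $\alpha_H(\cT_H)\le\alpha_{G'}(\cT)$ and therefore $\tin(H)\le\tin(G')$.

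Next, fix an arbitrary positive integer $N$. Since there exist graphs of arbitrarily large tree-independence number, choose a non-null graph $H$ with $\tin(H)\ge N$. By the preceding lemma there is a chordal graph $G$ such that $G^r$ contains $H$ as an induced subgraph. As $G$ is chordal, $\tin(G)\le 1$, while the monotonicity observation applied to $H$ as an induced subgraph of $G^r$ gives $\tin(G^r)\ge\tin(H)\ge N$. Now suppose for a contradiction that some function $f$ satisfies $\tin(G^r)\le f(\tin(G))$ for every graph $G$. Applying this to the chordal graph $G$ constructed above and using $\tin(G)\in\{0,1\}$, we obtain $N\le\tin(G^r)\le f(\tin(G))\le\max\{f(0),f(1)\}$ for every positive integer $N$, which is impossible; hence no such $f$ exists. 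There is no genuinely hard step here: the only point that requires (minor) care is the monotonicity of $\tin$ under induced subgraphs, which is immediate from the definition, together with the elementary remark that for chordal $G$ the quantity $f(\tin(G))$ ranges over the finite set $\{f(0),f(1)\}$, whereas $\tin(G^r)$ has already been shown to be unbounded over this family.
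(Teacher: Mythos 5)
Your argument is correct and is essentially the paper's own: the authors likewise combine the preceding lemma with the facts that chordal graphs have tree-independence number at most $1$, that $\tin$ is monotone under induced subgraphs (indeed under induced minors), and that graphs of arbitrarily large tree-independence number exist. The only difference is that you spell out the (standard) monotonicity step explicitly, which the paper leaves implicit.
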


\begin{corollary}
    Let $r$ be an even positive integer.
    There is no function $f$ satisfying $\yw(G^r) \leq f(\yw(G))$ for all graphs $G$.
\end{corollary}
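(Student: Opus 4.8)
The plan is to obtain the corollary as an immediate consequence of the preceding lemma, combined with two elementary facts: that induced matching treewidth does not increase when passing to an induced subgraph, and that there exist graphs of arbitrarily large induced matching treewidth (the latter is exactly what is invoked in the sentence preceding the corollary).

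First I would record the monotonicity fact. If $H$ is an induced subgraph of a graph $G'$ and $\cT = (T,\{X_t\}_{t\in V(T)})$ is a tree decomposition of $G'$, then $\cT' = (T,\{X_t\cap V(H)\}_{t\in V(T)})$ is a tree decomposition of $H$; since $H$ is induced, every induced matching of $H$ is also an induced matching of $G'$, and an induced matching touching a bag $X_t\cap V(H)$ of $\cT'$ also touches the bag $X_t$ of $\cT$. Hence $\mu_H(\cT')\le \mu_{G'}(\cT)$, and choosing $\cT$ so that $\mu_{G'}(\cT)=\yw(G')$ yields $\yw(H)\le \yw(G')$.

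Next, suppose for contradiction that there is a function $f$ with $\yw(G^r)\le f(\yw(G))$ for all graphs $G$. Since induced matching treewidth is unbounded over the class of all graphs, I can fix a graph $H$ with $\yw(H) > \max\{f(0),f(1)\}$. Applying the preceding lemma to $H$ yields a chordal graph $G$ whose $r$-th power $G^r$ contains $H$ as an induced subgraph. As $G$ is chordal, $\yw(G)\le \tin(G)\le 1$, so $\yw(G)\in\{0,1\}$ and therefore $f(\yw(G))\le \max\{f(0),f(1)\}$. On the other hand, the monotonicity fact gives $\yw(G^r)\ge \yw(H) > \max\{f(0),f(1)\} \ge f(\yw(G))$, contradicting the assumed inequality.

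There is no genuine obstacle in this argument; the only point needing a moment's care is that the chordal graph $G$ produced by the construction may be edgeless (so $\yw(G)=0$) or have an edge (so $\yw(G)=1$), which is why $f$ is evaluated at both $0$ and $1$. The companion statement for the tree-independence number follows verbatim with $\tin$ in place of $\yw$ throughout, using $\tin(H)\le \tin(G^r)$ (by the same restriction-of-bags argument) and $\tin(G)\le 1$.
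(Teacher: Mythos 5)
Your proof is correct and matches the paper's (largely implicit) argument: the paper likewise derives the corollary from the preceding lemma together with the unboundedness of induced matching treewidth and its monotonicity under induced subgraphs, using that chordal graphs have induced matching treewidth at most one. Your write-up simply makes explicit the restriction-of-bags monotonicity step and the harmless case distinction between $\yw(G)=0$ and $\yw(G)=1$.
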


\section{Max Weight Induced Subgraph for graphs with bounded tree-independence number}\label{sec:cmso}
We start this section by recalling some basic notions and properties related to \cmsotwo.

\subsection{\texorpdfstring{\cmsotwo}{CMSO\texttwoinferior}: basic notions and properties}

\paragraph{\cmsotwo logic on graphs.}
We assume that graphs are encoded as relational structures: each vertex and each edge is represented by a single variable (distinguishable by a unary predicate), and there is a single binary relation $\mathsf{inc}$ (\emph{incident}) binding each edge with its endvertices.

\msotwo (\emph{Monadic Second Order}) logic is a logic on graphs that allows us to use vertex variables, edge variables, {vertex set} variables, and edge set variables,
and, furthermore, to quantify over them.
An example of an \msotwo formula is the following expression checking if the chromatic number of a given graph is at most $r$ (for constant $r$).
Here, $S_i$'s are vertex set variables, $e$ is an edge variable, and $x,y$ are vertex variables.
\begin{align*}
\begin{split}
\exists_{S_1,S_2,\ldots,S_r} \; (\forall_{x} \; x \in S_1 \cup S_2 \ldots \cup S_r) \; \land
 \forall_{e} \forall_{x,y} \; (\mathsf{inc}(e,x) \land \mathsf{inc}(e,y) \land x \neq y) \to  \bigwedge_{i = 1}^r (x \notin S_i \lor y \notin S_i).	
 \end{split}
\end{align*}

For a positive integer $p$, by \cpmsotwo we mean the extension of \msotwo that allows us to use atomic formulae of the form $|S| \equiv a \bmod b$, where $S$ is a (vertex of edge) set variable, and $a,b \leq p$ are integers.
We define \cmsotwo (\emph{Counting Monadic Second Order} logic) to be  $\bigcup_{p >0}$\cpmsotwo.
The \emph{quantifier rank} of a formula is the maximum number of nested quantifiers in the formula (counting both first-order and second-order quantifiers).

\paragraph{Boundaried graphs and  \cmsotwo types.}
Let $\ell >0$ be an integer.
An \emph{$\ell$-boundaried graph} is a pair $(G,\iota)$, where $G$ is a graph and $\iota$ is a partial injective function from $V(G)$ to $[\ell]$. The domain of $\iota$, denoted by $\dom(\iota)$, is called the \emph{boundary} of $(G,\iota)$. For $v \in \dom(\iota)$, the value of $\iota(v)$ is called the \emph{label} of $v$.

Let us define two natural operations on boundaried graphs.
For an $\ell$-boundaried graph $(G,\iota)$ and $l \in [\ell]$,
the result of \emph{forgetting the label $l$} is the $\ell$-boundaried graph $(G,\iota_{\neg l})$,
here $\iota_{\neg l} \coloneqq \iota|_{\dom(\iota) \setminus \iota^{-1}(l)}$.
In other words, if the boundary of $(G,\iota)$ contains vertex $v$ with label $l$,
we remove the label from $v$, while keeping $v$ in the graph. Note that if {$l$ is not in the image of $\iota$}, then this operation does not do anything.

For two $\ell$-boundaried graphs $(G_1,\iota_1)$ and $(G_2,\iota_2)$,
the result of \emph{gluing} them is the $\ell$-boundaried graph $(G_1,\iota_1) \oplus_{[\ell]} (G_2,\iota_2)$ obtained from the disjoint union of $(G_1,\iota_1)$ and $(G_2,\iota_2)$ by identifying vertices with the same label.
Note that it might happen that some label  $l$ is used in, e.g., $\iota_1$ but not in $\iota_2$. Then the vertex with label $l$ in $(G_1,\iota_1) \oplus_{[\ell]} (G_2,\iota_2)$ is the copy of the vertex with label $l$ from $(G_1,\iota_1)$.

It is straightforward to observe that a graph has treewidth less than $\ell$ if and only if it can be constructed from $\ell$-boundaried graphs with at most two vertices by a sequence of forgetting and gluing operations (see, e.g.,~\cite{FominTV15}).

By \cmsotwo in $\ell$-boundaried graphs we mean \cmsotwo extended with $\ell$ unary predicates: for each $l \in [\ell]$ we have a predicate that selects a vertex with label $l$ (if such a vertex exists).
Let $(G,\iota)$ be an $\ell$-boundaried graph and $p$ and $q$ be positive integers.
The \emph{$(p,q)$-type} of $(G,\iota)$ is the set of all \cpmsotwo formulae $\phi$ of quantifier rank at most $q$ such that $(G,\iota) \models \phi$.
As such, the $(p,q)$-type of an $\ell$-boundaried graph is an infinite set.
However, the formulae in \cpmsotwo can be normalized so that every  formula
can be effectively transformed into an equivalent normalized formula of the same quantifier rank.
Furthermore, for every quantifier rank there are only finitely many pairwise non-equivalent normalized formulae.
This result is folklore (see, e.g.,~\cite[Lemma~6.1]{GroheK09}) and holds for general relational structures.
In the setting of $\ell$-boundaried graphs, the result is given by the following proposition.
The exact formulation we use here comes from Gartland et al.~\cite[Proposition~8]{gartland2020finding}.
\begin{proposition}\label{prop:mso-type}
For every triple of integers $\ell,p,q$, there exists a finite set $\msotypes^{\ell,p,q}$ and a function that assigns to every $\ell$-boundaried graph $(G,\iota)$ a \emph{type} $\msotype^{\ell,p,q}(G,\iota) \in \msotypes^{\ell,p,q}$ such that the following holds:
\begin{enumerate}
\item The types of isomorphic graphs are the same.
\item
  For every \cpmsotwo formula $\phi$ on $\ell$-boundaried graphs, whether $(G,\iota)$ satisfies $\phi$ depends only on the type $\msotype^{\ell,p,q}(G,\iota)$, where $q$ is the quantifier rank of $\phi$.
    More precisely,
  there exists a subset $\msotypes^{\ell,p,q}[\phi] \subseteq \msotypes^{\ell,p,q}$
  such that for every $\ell$-boundaried graph $(G,\iota)$ we have
  \[(G,\iota) \models \phi\qquad\textrm{if and only if}\qquad \msotype^{\ell,p,q}(G,\iota) \in \msotypes^{\ell,p,q}[\phi].\]
\item
 The types of ingredients determine the type of the result of the gluing operation.
More precisely, for every two types $\tau_1,\tau_2 \in \msotypes^{\ell,p,q}$
there exists a type $\tau_1 \oplus_{\ell,p,q} \tau_2$ such that
for every two $\ell$-boundaried graphs $(G_1,\iota_1)$, $(G_2,\iota_2)$,
if $\msotype^{\ell,p,q}(G_i,\iota_i) = \tau_i$ for $i=1,2$, then
\[\msotype^{\ell,p,q}((G_1,\iota_1) \oplus_{[\ell]} (G_2,\iota_2)) = \tau_1 \oplus_{\ell,p,q} \tau_2.\]
Also, the operation $\oplus_{\ell,p,q}$ is associative and commutative.
\item
 The type of the ingredient determines the type of the result of the forget label operation.
 More precisely,
 for every type $\tau \in \msotypes^{\ell,p,q}$ and $l \in [\ell]$
there exists a type $\tau_{\neg l}$ such that
for every $\ell$-boundaried graph $(G,\iota)$,
if $\msotype^{\ell,p,q}(G,\iota) = \tau$
and $(G,\iota_{\neg l})$ is the result of forgetting $l$ in $(G,\iota)$, then
\[\msotype^{\ell,p,q}(G,\iota_{\neg l}) = \tau_{\neg l}.\]
\end{enumerate}
\end{proposition}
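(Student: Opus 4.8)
The statement is the standard compositionality of monadic second-order logic on $\ell$-boundaried graphs (a Feferman--Vaught / Shelah-style composition theorem), and I would prove it in three stages: first normalize \cpmsotwo formulas to obtain a finite set of types and settle items~(1) and~(2); then dispatch the forget operation, item~(4), which is easy; and finally treat the gluing operation, item~(3), via Ehrenfeucht--Fra\"{\i}ss\'e games, which is the crux.

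\textbf{Normalization and the definition of types.} First I would fix the relational vocabulary $\sigma_\ell$ of $\ell$-boundaried graphs: the incidence relation, the unary predicates separating vertices from edges, and the $\ell$ unary label predicates $P_1,\dots,P_\ell$. The key syntactic fact, which is folklore (see~\cite[Lemma~6.1]{GroheK09}), is that every \cpmsotwo formula over $\sigma_\ell$ of quantifier rank at most $q$ is logically equivalent to one of finitely many \emph{normalized} formulas of quantifier rank at most $q$; this is proved by structural induction on formulas, using that there are only finitely many modular-counting atoms with modulus and residue bounded by $p$, and that a Boolean connective or a single quantifier applied to a set of formulas that is finite up to equivalence again yields such a set. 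Writing $\Phi^{\ell,p,q}$ for this finite set of normalized sentences, I would put $\msotypes^{\ell,p,q}\coloneqq 2^{\Phi^{\ell,p,q}}$ and let $\msotype^{\ell,p,q}(G,\iota)$ be the set of formulas from $\Phi^{\ell,p,q}$ that hold in $(G,\iota)$. Item~(1) is then immediate, since isomorphic structures satisfy the same sentences; and for item~(2) it suffices to take $\msotypes^{\ell,p,q}[\phi]$ to be the set of types containing the normal form of $\phi$, which lies in $\Phi^{\ell,p,q}$ because normalization preserves quantifier rank and $q$ is the quantifier rank of $\phi$.

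\textbf{The forget operation.} For item~(4) I would observe that $(G,\iota)$ and $(G,\iota_{\neg l})$ are the \emph{same} $\sigma_\ell$-structure except that the predicate $P_l$ is reinterpreted as the empty set. Hence for every sentence $\psi$ we have $(G,\iota_{\neg l})\models\psi$ if and only if $(G,\iota)\models\psi[P_l:=\bot]$, where $\psi[P_l:=\bot]$ is obtained from $\psi$ by replacing every atom of the form $P_l(\cdot)$ by $\bot$. Since $\psi[P_l:=\bot]$ has quantifier rank at most that of $\psi$, its normal form again lies in $\Phi^{\ell,p,q}$, so $\tau_{\neg l}\coloneqq\{\psi\in\Phi^{\ell,p,q}:(\text{normal form of }\psi[P_l:=\bot])\in\tau\}$ is the required function (and it does nothing when $l\notin\mathrm{im}(\iota)$, as it should).

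\textbf{The gluing operation --- the main obstacle.} The hard part will be item~(3). I would use the characterization that two $\sigma_\ell$-structures have the same $(p,q)$-type precisely when Duplicator wins the $q$-round monadic Ehrenfeucht--Fra\"{\i}ss\'e game enriched with modular-counting moves, and then show that if $\msotype^{\ell,p,q}(G_i,\iota_i)=\msotype^{\ell,p,q}(G_i',\iota_i')$ for $i=1,2$, then the glued graphs $(G_1,\iota_1)\oplus_{[\ell]}(G_2,\iota_2)$ and $(G_1',\iota_1')\oplus_{[\ell]}(G_2',\iota_2')$ have the same type. Duplicator's strategy on the glued game is assembled by splitting each of Spoiler's moves (an element, or a vertex/edge set) into its part inside $G_1$ and its part inside $G_2$, and replying to each part in the corresponding component game with a fixed winning strategy there. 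Two issues will need care. First, \emph{consistency on the shared boundary}: the at most $\ell$ identified vertices carry labels and are therefore pinned down by the predicates $P_l$, so every winning Duplicator strategy is forced to send a labeled vertex to the equally-labeled vertex; prefixing the game with a bounded number of rounds that pebble all labeled vertices (absorbed into a harmless additive increase of the working quantifier rank) then forces the two component plays to agree on which identified vertices, and which identified vertices inside a chosen set, are produced. Second, \emph{modular counting composes}: for a set $S$ in the glued graph, $|S|\equiv|S\cap V(G_1)|+|S\cap V(G_2)|-|S\cap B|\pmod{b}$ for all $b\le p$, where $B$ is the set of identified vertices, and $|S\cap B|\bmod b$ is determined by which labeled vertices lie in $S$ --- information already recorded in the types by low-quantifier-rank formulas. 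Hence the residue of $|S|$ modulo $b$ in the glued graph is a function of the residues in the two summands together with the boundary information, which is exactly what the types carry. This yields well-definedness of the composition map $\oplus_{\ell,p,q}$, and its associativity and commutativity are inherited from those of $\oplus_{[\ell]}$ on boundaried graphs, completing item~(3) and the proof.
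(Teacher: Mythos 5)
The paper gives no proof of this proposition: it is presented as folklore, with a pointer to \cite[Lemma~6.1]{GroheK09} for the normalization fact and to \cite[Proposition~8]{gartland2020finding} for the exact formulation on $\ell$-boundaried graphs. Your proposal therefore supplies what the paper only cites, namely the standard Feferman--Vaught/Ehrenfeucht--Fra\"{\i}ss\'e composition argument, and the sketch is essentially correct: normalization yields finiteness and items~(1)--(2); the forget operation is the syntactic substitution $P_l:=\bot$, which preserves quantifier rank; and gluing is handled by splitting Spoiler's moves and composing Duplicator strategies. One piece of bookkeeping should be made explicit, because it touches the only genuinely delicate point. Your fix for boundary consistency --- prefixing pebbling rounds and absorbing them into ``a harmless additive increase of the working quantifier rank'' --- means that the object you call $\msotype^{\ell,p,q}(G,\iota)$ is really the $(p,q+\Oh(\ell))$-theory rather than the $(p,q)$-theory. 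That is perfectly admissible here, since the proposition only asserts the \emph{existence} of some finite type set with properties (1)--(4), and a finer type still determines satisfaction of all rank-$q$ sentences; but it should be stated, because item~(3) read literally at rank $q$ is exactly where the argument can fail: on a final set-move Duplicator's two component responses may disagree on which boundary vertices they contain, and with the modular-counting atoms (which are evaluated on the instantiated set with no further quantifiers) this discrepancy is detectable, since $|S|=|S_1|+|S_2|-|S\cap B|$. An alternative that avoids any rank increase is to treat the $\le\ell$ labeled vertices as named constants, so that their membership in every chosen set is part of the atomic type and consistency is forced at rank $0$.
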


Intuitively, for any $\ell$-boundaried graph $(G,\iota)$ and every \cpmsotwo formula $\phi$ with quantified rank at most $q$, whether $(G,\iota) \models \phi$ is fully determined by $\msotype^{\ell,p,q}(G,\iota)$.
Furthermore, if $\phi$ and $\ell$ are fixed, then the sets $\msotypes^{\ell,p,q}$ and $\msotypes^{\ell,p,q}[\phi]$ are of constant size.
Finally, it is known that there is an algorithm that, given $\ell$ and $\phi$, computes $\msotypes^{\ell,p,q}$ and $\msotypes^{\ell,p,q}[\phi]$~\cite{DBLP:journals/algorithmica/BoriePT92}.

\subsection{\texorpdfstring{\cmsotwo}{CMSO\texttwoinferior}-definable subgraphs of graphs with small tree-independence number}

Now we are ready to prove our main result concerning sparse induced subgraphs satisfying some \cmsotwo-definable property.
Here, ``sparse'' means having bounded clique number.

For positive integers $a$ and $b$, we denote by $\textsf{Ramsey}(a,b)$ the smallest positive integer $n$ such that every graph with at least $n$ vertices contains either a clique of size $a$ or an independent set of size $b$.
As proved by Ramsey~\cite{MR1576401}, these  numbers are well-defined.

\begin{theorem}\label{thm:cmsomain}
For every $k$, $r$, and a \cmsotwo formula $\psi$ there exists a positive integer $f(k,r,\psi)$ such that the following holds.
Let $G$ be a graph given along with a tree decomposition  $\cT=(T,\{X_t\}_{t \in V(T)})$ of independence number at most $k$,
and let $\wei\colon V(G) \to \Q_+$ be a weight function.
Then in time $f(k,r,\psi) \cdot |V(G)|^{\Oh(\textsf{Ramsey}(k+1,r+1))} \cdot |V(T)|$ we can find a set $F \subseteq V(G)$, such that
\begin{enumerate}
\item $G[F] \models \psi$,
\item $\omega(G[F]) \leq r$,
\item $F$ is of maximum weight subject to the conditions above,
\end{enumerate}
or conclude that no such set exists.
\end{theorem}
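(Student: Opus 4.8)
The plan is to run a bottom-up dynamic programming over $\cT$, after first converting it (in time $\Oh(|V(G)|^2|V(T)|)$) to a \emph{nice} tree decomposition; this operation only shrinks bags, so it keeps the independence number at most $k$, and the resulting decomposition has $\Oh(|V(G)|\cdot|V(T)|)$ nodes, which is why the $|V(T)|$ factor appears in the claimed running time. The key structural observation is that \emph{any} set $F\subseteq V(G)$ with $\omega(G[F])\le r$ meets every bag in few vertices: $G[F\cap X_t]$ is an induced subgraph of $G[F]$ (hence has no clique of size $r+1$) and of $G[X_t]$ (hence, since $\alpha_G(\cT)\le k$, has no independent set of size $k+1$), so by the definition of the Ramsey number and the symmetry $\textsf{Ramsey}(a,b)=\textsf{Ramsey}(b,a)$ we get $|F\cap X_t|<\ell:=\textsf{Ramsey}(k+1,r+1)$. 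Moreover, since every clique of $G[F]$ lies in a single bag, the constraint $\omega(G[F])\le r$ is equivalent to requiring $\omega(G[F\cap X_t])\le r$ for all $t$; thus it suffices to enforce this local constraint in the DP, and checking it automatically guarantees $|F\cap X_t|<\ell$.

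Fix $p$ such that $\psi$ is a \cpmsotwo formula, let $q$ be its quantifier rank, and work with $\ell$-boundaried graphs and the finite, computable sets $\msotypes^{\ell,p,q}$ and $\msotypes^{\ell,p,q}[\psi]$ of \cref{prop:mso-type}. Fix a linear order on $V(G)$, and for $S\subseteq V(G)$ with $|S|<\ell$ let $\iota_S$ be the labeling sending the $i$-th smallest vertex of $S$ to $i\in[\ell]$ (so only labels $1,\dots,\ell-1$ are ever used and label $\ell$ is always free). For each node $t$, each $S\subseteq X_t$ with $\omega(G[S])\le r$, and each $\tau\in\msotypes^{\ell,p,q}$, the cell $c_t[S,\tau]$ stores the maximum weight $\wei(F_t)$ over sets $F_t\subseteq V_t$ with $F_t\cap X_t=S$, with $\omega(G[F_t\cap X_{t'}])\le r$ for every descendant $t'$ of $t$, and with $\msotype^{\ell,p,q}(G[F_t],\iota_S)=\tau$ (and $-\infty$ if no such $F_t$ exists). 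The intended answer is then $\max\{c_{\root(T)}[\emptyset,\tau]:\tau\in\msotypes^{\ell,p,q}[\psi]\}$, which equals the optimum because $X_{\root(T)}=\emptyset$ and $V_{\root(T)}=V(G)$, so $G[F_{\root(T)}]=G[F]$ with empty boundary, and $\msotype^{\ell,p,q}(G[F],\emptyset)\in\msotypes^{\ell,p,q}[\psi]$ iff $G[F]\models\psi$; if this maximum is $-\infty$ we report that no such set exists.

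The transitions follow the textbook bounded-treewidth dynamic programming, using parts (3) and (4) of \cref{prop:mso-type} to propagate types, together with one routine extra primitive: a computable ``relabel'' operation on types, recording how the type of $(G,\sigma\circ\iota)$ depends on that of $(G,\iota)$ for a permutation $\sigma$ of $[\ell]$ (immediate, since renaming the label-predicates is an isomorphism of relational structures preserving quantifier rank and modular counting). At a leaf, the only cell is $c_t[\emptyset,\tau_\emptyset]=0$, with $\tau_\emptyset$ the type of the empty graph. At a forget node forgetting $v$: if $v\notin F_t$ the cell is copied; if $v\in F_t$ we set $S_t=S_{t'}\setminus\{v\}$, forget the label $\iota_{S_{t'}}(v)$ from the child's type, and relabel to $\iota_{S_t}$. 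At an introduce node introducing $v$ (recall that in a nice decomposition $v\notin V_{t'}$ and $N(v)\cap V_{t'}\subseteq X_{t'}$, so $N(v)\cap(F_{t'})=N(v)\cap S_{t'}$): if $v\notin F_t$ the cell is copied; if $v\in F_t$ we discard states with $\omega(G[S_{t'}\cup\{v\}])>r$, glue the child's type with the constant-size, directly computable type of the boundaried graph consisting of $v$ (given the free label $\ell$) adjacent to the labels of $S_{t'}\cap N(v)$, relabel the result to $\iota_{S_t}$, and add $\wei(v)$. At a join node with children $t',t''$ (so $X_t=X_{t'}=X_{t''}$, $V_{t'}\cap V_{t''}=X_t$, and $X_t$ separates $V_{t'}\setminus X_t$ from $V_{t''}\setminus X_t$), for a common boundary $S$ one has $G[F_t]=G[F_{t'}]\oplus_{[\ell]}G[F_{t''}]$ (no clique crosses the separator, so $\omega\le r$ is preserved automatically), hence we combine all pairs of child states over $S$ by $c_t[S,\tau'\oplus_{\ell,p,q}\tau'']\gets\max\bigl(\,\cdot\,,\ c_{t'}[S,\tau']+c_{t''}[S,\tau'']-\wei(S)\bigr)$, the subtraction correcting the double count of $S$. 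Correctness is then a routine induction on the height of $t$, in both directions, exactly as in \cref{clm:fvs-correctness-Tabatleast,clm:fvs-correctness-Tabatmost}.

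For the running time, each node carries at most $\sum_{i<\ell}\binom{|X_t|}{i}\cdot|\msotypes^{\ell,p,q}|\le |V(G)|^{\ell}\cdot f_1(k,r,\psi)$ cells, and processing it — in particular a join node, where for each of the $|V(G)|^{\Oh(\ell)}$ boundaries we iterate over pairs of types — takes time $|V(G)|^{\Oh(\ell)}\cdot f_2(k,r,\psi)$; summing over the $\Oh(|V(G)|\cdot|V(T)|)$ nodes yields $f(k,r,\psi)\cdot|V(G)|^{\Oh(\textsf{Ramsey}(k+1,r+1))}\cdot|V(T)|$. I expect the main, though modest, obstacle to be the bookkeeping of \cmsotwo types through the introduce and forget transitions — making the gluing/forgetting/relabeling picture precise and checking it against the semantics guaranteed by \cref{prop:mso-type} — rather than any real combinatorial difficulty; the one genuinely new ingredient, the Ramsey bound on $|F\cap X_t|$, is short. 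It is worth stressing that, unlike in the bounded-treewidth setting, $G[F]$ itself need not have small treewidth here (bipartite graphs already have $\omega=2$): the DP works because it runs on a tree decomposition of $G$ in which the solution meets every bag in a bounded-size set, so all type manipulations take place along bounded-size boundaries.
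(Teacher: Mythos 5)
Your proof is correct and essentially the same as the paper's: both convert to a nice tree decomposition, use the Ramsey argument to bound $|F\cap X_t|<\ell:=\textsf{Ramsey}(k+1,r+1)$, and run a bottom-up DP tracking the $(\ell,p,q)$-type of $G[F_t]$ with boundary $F_t\cap X_t$, via the gluing and forgetting operations of \cref{prop:mso-type}. The two minor divergences are cosmetic: you enforce $\omega\le r$ locally in the DP (discarding states with $\omega(G[S\cup\{v\}])>r$) and correctly justify this by noting every clique of $G[F]$ lives in some bag, whereas the paper folds the constraint into the formula by taking $\phi=\psi\wedge\phi_{\omega\le r}$; and you use a canonical labeling $\iota_S$ plus a relabel primitive, whereas the paper simply iterates over all injective $\iota:S\to[\ell]$, trading a factor $\ell!$ in the state space for not needing relabeling.

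One small correction to your closing remark: in this setting $G[F]$ \emph{does} have small treewidth. Restricting the given tree decomposition to $F$ yields a tree decomposition of $G[F]$ all of whose bags have fewer than $\ell$ vertices (your own Ramsey bound), so $\tw(G[F])<\ell$. The paper uses precisely this to argue that $G[F]$ can be built from $\ell$-boundaried pieces; what you wrote (that bounded clique number alone need not bound treewidth) is true in general but does not apply here, where bounded tree-independence number of $\cT$ together with bounded $\omega(G[F])$ forces bounded treewidth of $G[F]$. Your DP is of course still valid; only the surrounding intuition needed adjusting.
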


\begin{proof}
Let $\cT=(T,\{X_t\}_{t \in V(T)})$ be a tree decomposition of $G$ with independence number at most $k$.
Recall that without loss of generality we can assume that $\cT$ is nice, since otherwise we can transform the given tree decomposition into a nice one in time $\Oh(|V(G)|^2|V(T)|)$.
Let $F$ be a fixed (unknown) optimal solution. Note that by removing all vertices from $V(G) \setminus F$ from each bag of $\cT$, we obtain a tree decomposition of $G[F]$ with independence number at most $k$.
As $\omega(G[F]) \leq r$,  we conclude that every bag of $\cT$ contains fewer than $\textsf{Ramsey}(k+1,r+1)$ vertices of $F$.
In particular, the treewidth of $G[F]$ is at most $\ell\coloneqq \textsf{Ramsey}(k+1,r+1)$.
This is the only way in which we use that $\omega(G[F]) \leq r$.
Thus, in order to make sure that the obtained solution indeed satisfies the bound on the clique number,
instead of $\psi$ we will use the \cmsotwo formula $\phi \coloneqq \psi \land \phi_{\omega \leq r}$, where $\phi_{\omega \leq r}$ is the formula that bounds the clique number by $r$:
\[
    \forall_{v_1,\ldots,v_{r+1}} \lnot \left ( \bigwedge_{1 \leq i < j \leq r+1} \exists_e (\mathsf{inc}(e,v_i) \land \mathsf{inc}(e,v_j)) \right).
\]

Since the treewidth of the graph $G[F]$ is at most $\ell$, this graph can be constructed by a sequence of forgetting and gluing operations on $\ell$-boundaried graphs.

Consider the minimum integer $p$ such that $\phi$  belongs to \cpmsotwo, and let $q$ be the quantifier rank of $\phi$.
Let $\msotypes^{\ell,p,q}$ and $\msotypes^{\ell,p,q}[\phi]$ be as in \cref{prop:mso-type}.
Recall that these sets are of constant size (since $k,r$, and $\phi$ are fixed); we assume that they are hard-coded in our algorithm.
Similarly, we assume that the forgetting and gluing operations on types, i.e., functions
\[(\tau_1,\tau_2) \mapsto \tau_1 \oplus_{\ell,p,q} \tau_2\qquad\textrm{and}\qquad (\tau,l) \mapsto \tau_{\neg l},\]
are hard-coded. Finally, our algorithm has the list of types of all $\ell$-boundaried graphs with at most $\ell$ vertices.
As $\ell,p,q$ are fixed throughout the algorithm, we will simplify the notation by dropping subscripts and superscripts, i.e., we will simply write $\msotypes, \msotypes[\phi]$, and $\oplus$.

Our algorithm is a standard bottom-up dynamic programming on the nice tree decomposition $\mathcal{T}$ of $G$.
We will describe how to compute the weight of an optimal solution; adapting our approach to return the solution itself is a standard task.

We introduce the dynamic programming table $\Tab_{\cdot}[\cdot]$ indexed by nodes $t$ of $T$ and \emph{states}.
Each state is a triple $(S, \iota, \tau)$, where $S \subseteq X_t$ has at most $\ell$ vertices,
$\iota$ is an injective function from $S$ to $[\ell]$, and $\tau \in \msotypes$.
The value of $\Tab_t[S,\iota,\tau]$ will encode the maximum weight of a subset $F_t$ of $V(G_t)$ such that
\begin{enumerate}
\item $F_t \cap X_t = S$,
\item $\msotype((G[F_t],\iota))=\tau$.
\end{enumerate}
Intuitively, we think of $S$ as the intersection of a fixed optimal solution with $X_t$ and we look for a maximum-weight induced $\ell$-boundaried subgraph of $G_t$ with boundary $S$ and type $\tau$.
If there is no subset $F_t$ satisfying the conditions above, we will have $\Tab_t[S,\iota,\tau]=-\infty$.
Clearly, having computed all entries of $\Tab_\cdot[\cdot]$, the weight of the sought-for optimal solution is
\[
\max_{\tau} \Tab_{\root(T)}[\emptyset,\emptyset,\tau], \text{ over all } \tau \in \msotypes[\phi].
\]
Let us now describe how we compute the entries $\Tab_t[\cdot]$ for a node $t$ of $T$.
The exact way depends on the type of $x$.

\paragraph*{Case: $t$ is a leaf node.}
Note that in this case $G_t$ has no vertices, and thus the values of $\Tab_t[\cdot]$ can be extracted from the information that is hard-coded in the algorithm.

\paragraph*{Case: $t$ is an introduce node.}
Let $t'$ be the child of $t$ in $T$, and let $v$ be the unique vertex in $X_t \setminus X_{t'}$.
Clearly, if $v \notin S$, then $\Tab_t[S,\iota,\tau] = \Tab_{t'}[S,\iota,\tau]$.

So {let} $v \in S$, let $S' = S \setminus \{v\}$, and let $\iota' = \iota|_{S'}$.
Let $G_S$ be the subgraph of $X_t$ induced by $S$. As $|S| \leq \ell$, the value $\tau_S \coloneqq \msotype((G_S,\iota))$ is hard-coded in our algorithm.
Now we set
\[
\Tab_t[S,\iota,\tau] = \max_{\tau'} \Tab_{t'}[S',\iota',\tau'],
\]
where the maximum is taken over all $\tau'$ such that $\tau_S \oplus \tau' = \tau$.

\paragraph*{Case: $t$ is a forget node.}
Let $t'$ be the child of $t$ in $\cT$, and let $v$ be the unique vertex in $X_{t'} \setminus X_t$.
Intuitively, partial solutions at $t$ correspond to partial solutions at $t'$ with the label of $v$ (if any) forgotten.
Thus
\[
\Tab_t[S,\iota,\tau] = \max \left( \Tab_{t'}[S,\iota,\tau] , \max_{\iota',\tau'} \Tab_{t'}[S \cup \{v\},\iota',\tau'] \right),
\]
where the second term only appears if $|S| < \ell$.
The maximum in the second term is computed over all pairs $\iota',\tau'$, such that $\iota'|_{S}=\iota$ and $\tau'_{\neg \iota'(v)}=\tau$.

\paragraph*{Case: $t$ is a join node.}
Let $t',t''$ be the children of $t$.
Thus
\[
\Tab_t[S,\iota,\tau] = \max_{\tau',\tau''} \Tab_{t'}[S,\iota,\tau'] + \Tab_{t''}[S,\iota,\tau''] - \wei(S),
\]
where the maximum is taken over all $\tau',\tau''$ such that $\tau' \oplus \tau'' = \tau$. Note that we subtract the weight of $S$,
as it is counted both for $t'$ and for $t''$.

The correctness of the procedure above follows directly from the properties of nice tree decompositions and \cref{prop:mso-type}.

For each node $t$ of $T$, the number of states is at most $|X_t|^{\ell} \cdot \ell! \cdot |\msotypes|$.
Each entry of $\Tab_\cdot[\cdot]$ can be computed in time proportional to the number of states for each node (here we use that $k,r,\phi$ are fixed).
We conclude that the running time is $f(k,r,\psi) \cdot |V(G)|^{\Oh(\textsf{Ramsey}(k+1,r+1))}|V(T)|$ for some function $f$ depending on $k,r$, and $\psi$.
\end{proof}

We remark that one can easily obtain { a variant of \cref{thm:cmsomain} in which a tree decomposition with small independence number is not needed in the input.
Indeed, } combining \cref{thm:cmsomain} {with \cref{thm:approximate-dallard}} yields \cref{thm:cmsomain-intro} (with $g(k,r) =\Oh(\textsf{Ramsey}(8k+1,r+1))$).

\begin{theorem}
\label{thm:cmsomain-intro}
For every $k$, $r$, and a \cmsotwo formula $\psi$ there exist positive integers $f(k,r,\psi)$ and $g(k,r)$ such that the following holds.
Let $G$ be a graph with $\tin(G) \leq k$, and let $\wei\colon V(G) \to \Q_+$ be a weight function.
Then in time $f(k,r,\psi) \cdot |V(G)|^{g(k,r)}$ we can find a set $F \subseteq V(G)$, such that
\begin{enumerate}
\item $G[F] \models \psi$,
\item $\omega(G[F]) \leq r$,
\item $F$ is of maximum weight subject to the conditions above,
\end{enumerate}
or conclude that no such set exists.
\end{theorem}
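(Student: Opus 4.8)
The plan is to reduce \cref{thm:cmsomain-intro} to \cref{thm:cmsomain} by first computing, in polynomial time, a tree decomposition of $G$ of bounded independence number, and then running the dynamic programming of \cref{thm:cmsomain} on it. Concretely, since by hypothesis $\tin(G)\le k$, I would invoke \cref{thm:approximate-dallard} on $G$ and $k$: in time $n^{\Oh(k)}$ (where $n=|V(G)|$) this does not report that $\tin(G)>k$ but instead outputs a tree decomposition $\cT=(T,\{X_t\}_{t\in V(T)})$ of $G$ with $\alpha(\cT)\le 8k$. Because the algorithm of \cref{thm:approximate-dallard} runs in polynomial time, its output has $|V(T)|=n^{\Oh(1)}$ nodes; moreover, by the standard facts recalled in \cref{sec:prelim}, $\cT$ can be turned into a nice tree decomposition with at most $n\cdot|V(T)|$ nodes without increasing $\alpha(\cT)$, though this normalization is already performed inside the proof of \cref{thm:cmsomain}, so it need not be done explicitly.

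With $\cT$ in hand, I would apply \cref{thm:cmsomain} to $G$, the tree decomposition $\cT$, the weight function $\wei$, the integer $r$, and the formula $\psi$, but with the independence-number parameter taken to be $8k$ in place of $k$ (legitimate, since $\alpha(\cT)\le 8k$). Its output is a set $F\subseteq V(G)$ of maximum weight satisfying $G[F]\models\psi$ and $\omega(G[F])\le r$, or a correct report that no such set exists, which is exactly the conclusion required by \cref{thm:cmsomain-intro}.

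For the running time, \cref{thm:cmsomain} takes time $f(8k,r,\psi)\cdot n^{\Oh(\textsf{Ramsey}(8k+1,r+1))}\cdot|V(T)|$. Adding the $n^{\Oh(k)}$ spent in \cref{thm:approximate-dallard} and using $|V(T)|=n^{\Oh(1)}$, the total is $f'(k,r,\psi)\cdot n^{g(k,r)}$, where $g(k,r)=\Oh(\textsf{Ramsey}(8k+1,r+1))$ and $f'$ absorbs the constant $f(8k,r,\psi)$ along with the polynomial factors; this matches the stated bounds, giving $g$ as claimed.

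As for the main obstacle: there is essentially no new difficulty here, since all the substance lies in the two results being combined — the bottom-up \cmsotwo-type dynamic programming of \cref{thm:cmsomain} (together with the Ramsey argument bounding how much of an optimal solution can meet a single bag, which is what forces the $\textsf{Ramsey}(k+1,r+1)$ in the exponent) and the approximation algorithm of \cref{thm:approximate-dallard}. The only points that require any care are bookkeeping: confirming that the approximate-decomposition algorithm succeeds (guaranteed by $\tin(G)\le k$) and that its output has polynomially many bags, so that the $|V(T)|$ factor is harmless and can be folded into the polynomial part of the running time.
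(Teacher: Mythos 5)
Your proposal is correct and is exactly the paper's argument: the paper derives \cref{thm:cmsomain-intro} by combining \cref{thm:approximate-dallard} (to get a tree decomposition with independence number at most $8k$) with \cref{thm:cmsomain} applied with parameter $8k$, yielding $g(k,r)=\Oh(\textsf{Ramsey}(8k+1,r+1))$, just as you describe.
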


\subsection{A generalization}

\cref{thm:cmsomain-intro} can be slightly generalized so that we can maximize the weight of some carefully chosen subset of the solution.
The formula $\psi$ uses this set as a free variable.

\begin{theorem}\label{thm:cmsomainX}
For every $k$, $r$, and a \cmsotwo formula $\psi$ with one free vertex-set variable, there exist positive integers $f(k,r,\psi)$ and $g(k,r)$ such that the following holds.
Let $G$ be a graph with $\tin(G) \leq k$, and let $\wei\colon V(G) \to \Q_+$ be a weight function.
Then in time $f(k,r,\psi) \cdot |V(G)|^{g(k,r)}$ we can find sets $X \subseteq F \subseteq V(G)$, such that:
\begin{enumerate}
\item $(G[F],X) \models \psi$,
\item $\omega(G[F]) \leq r$,
\item $X$ is of maximum possible weight, subject to conditions above,
\end{enumerate}
or conclude that no such set exists.
\end{theorem}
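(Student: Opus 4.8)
The plan is to re-run the dynamic programming from the proof of \cref{thm:cmsomain}, but now over \cmsotwo types of \emph{annotated} boundaried graphs — that is, boundaried graphs carrying a distinguished vertex subset that plays the role of the free variable — and to optimize the weight of that distinguished subset instead of the weight of the whole solution. Throughout, $X$ denotes the distinguished subset and $F$ the ambient induced subgraph, exactly as in the statement.

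As in \cref{thm:cmsomain}, I would first apply \cref{thm:approximate-dallard} and make the resulting decomposition nice, obtaining $\cT=(T,\{X_t\}_{t\in V(T)})$ with $\alpha(\cT)\le 8k$, and replace $\psi$ by $\phi\coloneqq\psi\land\phi_{\omega\le r}$, where $\phi_{\omega\le r}$ bounds the clique number by $r$; crucially, $\phi$ still has a single free vertex-set variable. If $F$ is any feasible ambient set, then since $\tin(G)\le k$ and $\omega(G[F])\le r$, every bag of $\cT$ meets $F$ in fewer than $\ell\coloneqq\textsf{Ramsey}(8k+1,r+1)$ vertices, so $\tw(G[F])\le\ell$, and this bound is oblivious to the choice of $X\subseteq F$. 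Consequently $G[F]$ together with $X$ can be built from annotated $\ell$-boundaried graphs with at most two vertices (each vertex marked in or out of $X$) by a sequence of gluing and forget operations, where gluing unions the distinguished subsets — they agree on the glued boundary — and forgetting leaves them unchanged. Encoding the distinguished subset by a fresh unary predicate turns $\phi$ into a sentence over the expanded signature, so by the general relational-structure version of \cref{prop:mso-type} one obtains a finite type set $\msotypes=\msotypes^{\ell,p,q}$ of annotated $\ell$-boundaried graphs, an acceptance set $\msotypes[\phi]$, and composition operations $\oplus$ and $(\cdot)_{\neg l}$ compatible with annotated gluing and forget — all of constant size and hard-coded, with $p$ minimal such that $\phi\in\cpmsotwo$ and $q$ the quantifier rank of $\phi$.

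The table $\Tab_{\cdot}[\cdot]$ is indexed by a node $t$ of $T$ and a \emph{state} $(S,\iota,Y,\tau)$, where $S\subseteq X_t$ with $|S|\le\ell$, $\iota\colon S\to[\ell]$ is injective, $Y\subseteq S$, and $\tau\in\msotypes$; the value $\Tab_t[S,\iota,Y,\tau]$ is the maximum of $\wei(W)$ over all pairs $W\subseteq F_t\subseteq V(G_t)$ with $F_t\cap X_t=S$, $W\cap X_t=Y$, and $\msotype((G[F_t],\iota,W))=\tau$, or $-\infty$ if no such pair exists. The recurrences at leaf, introduce, forget, and join nodes are the evident annotated analogues of those in \cref{thm:cmsomain}: a leaf reads off hard-coded values; an introduce node adding $v$ copies the child entry if $v\notin S$, and otherwise composes the child's type with the hard-coded type $\tau_S$ of the annotated graph $G[S]$ with boundary $\iota$ and distinguished set $Y$, maximizing over child types $\tau'$ with $\tau_S\oplus\tau'=\tau$; a forget node maximizes over extensions of $(\iota,Y)$ to the forgotten vertex and over types $\tau'$ with $\tau'_{\neg\iota'(v)}=\tau$; a join node sets $\Tab_t[S,\iota,Y,\tau]=\max(\Tab_{t'}[S,\iota,Y,\tau']+\Tab_{t''}[S,\iota,Y,\tau'']-\wei(Y))$ over $\tau'\oplus\tau''=\tau$, subtracting $\wei(Y)$ rather than $\wei(S)$ because only the distinguished set is weighted. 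The answer is $\max\{\Tab_{\root(T)}[\emptyset,\emptyset,\emptyset,\tau]\colon\tau\in\msotypes[\phi]\}$, with the witnessing sets $X\subseteq F$ recovered by standard backtracking; correctness follows from the properties of nice tree decompositions together with \cref{prop:mso-type}, the number of states per node is $|X_t|^{\Oh(\ell)}\cdot 2^\ell\cdot\ell!\cdot|\msotypes|$, and the running time matches that of \cref{thm:cmsomain-intro} with $g(k,r)=\Oh(\textsf{Ramsey}(8k+1,r+1))$. The only genuinely new point — and the one I would be most careful about — is verifying that \cref{prop:mso-type} transfers to annotated boundaried graphs, i.e., that the distinguished subset behaves functorially under gluing and forget; this is immediate once the subset is viewed as a unary predicate and the folklore general-relational-structure version of \cref{prop:mso-type} is invoked.
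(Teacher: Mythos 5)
Your proposal is correct, but it follows a genuinely different route from the paper's proof.

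The paper handles the free set variable by a black-box reduction to \cref{thm:cmsomain} via the ``forked version'' gadget of Gartland et al.: one attaches three pendant leaves to every vertex of $G$ and a pendant $P_3$ to every vertex, shifts the weight of $v$ to the far endpoint of $v$'s pendant path, and replaces $\psi$ by a sentence $\forked{\psi}$ (with no free variable) that recognizes forked graphs and decodes $(F,X)$ from degrees. Then \cref{thm:cmsomain} is invoked once on the gadget graph. You instead run the dynamic program of \cref{thm:cmsomain} directly on $G$, enriching the state space: you carry, besides $(S,\iota,\tau)$, the intersection $Y=X\cap X_t$, and you work with \cmsotwo types of $\ell$-boundaried graphs over the expanded signature that has one extra unary predicate encoding the free set variable, optimizing $\wei$ of that predicate rather than of the whole induced subgraph. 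Both arguments are sound and give the same running time up to constants. The paper's route is shorter once \cref{thm:cmsomain} and the Gartland et al.\ machinery are in place, and it avoids having to restate the type calculus for the expanded signature; the cost is the somewhat opaque gadget and the need to verify that the forked graph inherits a tree decomposition of small independence number. Your route is more self-contained and transparent — it keeps the DP on $G$ itself and makes the correctness argument a routine adaptation of \cref{prop:mso-type} to annotated structures, which, as you note, is folklore once the distinguished subset is treated as a unary predicate — at the price of an extra $2^\ell$ factor in the state space and having to redo (rather than reuse) the transition cases. One minor point to watch: at an introduce node with $v\in S$, the value should be incremented by $\wei(v)$ when $v\in Y$ (and only then), and at a join node the subtraction should be $\wei(Y)$ as you write; your phrasing ``the evident annotated analogues'' covers this, but it is worth making the weight bookkeeping explicit since it is precisely where the annotated version diverges from \cref{thm:cmsomain}.
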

\begin{proof}[Proof sketch] The proof follows by a simple trick used by Gartland et al.~\cite[Theorem 38]{gartland2020finding}.
We just sketch the argument here and refer the reader to~\cite{gartland2020finding} for more details.

For a graph $G$ and a subset $M \subseteq V(G)$, by $\forked{G}$ we denote the \emph{forked version of} $(G,M)$, which is obtained as follows.
We start the construction with a copy of $G$.
For every vertex $v$ of $G$ we introduce to $\forked{G}^M$ three vertices of degree 1, each adjacent only to $v$.
For every vertex $v$ of $M$ we introduce to $\forked{G}^M$ a two-edge path, whose one endvertex is identified with $v$.
If $G$ is equipped with weights $\wei$, we set the weights $\forked{\wei}$ of all vertices of $\forked{G}^M$ to 0, with the exception that the value of $\wei(v)$ for $v \in M$ is transferred to the other endvertex of the two-edge path attached to~$v$.

Note that if $F'$ is a forked version of some $(F,M)$, then $F$ and $M$ can be uniquely decoded from $F'$.
Indeed, $F$ is the subgraph of $F'$ induced by vertices of degree at least four, and vertices of $M$ are those vertices of $F$ that are adjacent (in $F'$) to a vertex of degree 2.
Furthermore, $\omega(F') \leq \max(\omega(F),2)$.

We apply the algorithm of {\cref{thm:approximate-dallard} to compute in time $|V(G)|^{\Oh(k)}$} a tree decomposition $\cT=(T,\{X_t\}_{t \in V(T)})$ of $G$ with independence number at most $8k$.
Moreover, observe that, given the tree decomposition $\cT$, we can easily obtain a tree decomposition $\forked{\cT}=(\forked{T},\{\forked{X}_t\}_{t \in V(\forked{T})})$ of $\forked{G}^{V(G)}$ with $\alpha(\forked{\cT}) \leq \alpha(\cT)$.
The idea is to call the algorithm from \cref{thm:cmsomain} on $\forked{G}^{V(G)}$ and a formula $\forked{\psi}$ that for a graph $F'$ ensures that
\begin{itemize}
\item if $F'$ is a forked version of some $(F,M)$, then $F' \models \forked{\psi}$ if and only if $(F,M) \models \psi$,
\item otherwise $F' \not\models \forked{\psi}$.
\end{itemize}
Such a formula can be easily constructed in \cmsotwo~\cite{gartland2020finding}.
We observe that the solution found by the algorithm satisfies the conditions stated in the theorem; again, see~\cite[Theorem 38]{gartland2020finding} for details.
\end{proof}

\subsection{Algorithmic applications and limitations}
Using the framework of \cref{thm:cmsomain-intro} we can solve several well-studied problems {for graphs of bounded tree-independence number}:
\begin{sloppypar}
\begin{itemize}
\item \textsc{Max Weight Independent Set},
\item \textsc{Max Weight Induced Forest}, which is equivalent by complementation to \textsc{Min Weight Feedback Vertex Set},
\item \textsc{Max Weight Induced Bipartite Subgraph}, which is equivalent by complementation to \textsc{Min Weight Odd Cycle Transversal},
\item \textsc{Max Weight Induced Odd Cactus}, which is equivalent by complementation to \textsc{Min Weight Even Cycle Transversal} (see e.g.~\cite{DBLP:conf/iwpec/BergougnouxBBK20}),
\item \textsc{Max Weight Induced Planar Subgraph}, which is equivalent to \textsc{Planarization} (see e.g.~\cite{DBLP:journals/dam/Pilipczuk17}),
\item {\textsc{Defective Coloring}, where we ask for a vertex coloring using  $\chi_{\textrm{d}}$ colors in which each color class induced a subgraph of maximum degree bounded by $\Delta^*$, where both $\chi_{\textrm{d}}$ and $\Delta^*$ are fixed (see, e.g.,~\cite{zbMATH07538442}), and}
\item {\textsc{Max $k$-Dependent Set}, that is, maximum set of vertices inducing a subgraph of maximum degree at most $k$ (see, e.g.,~\cite{zbMATH05818768}),  which is equivalent by complementation to \textsc{Bounded-Degree Vertex Deletion}.}
\end{itemize}
\end{sloppypar}

{In particular, these results generalize Theorems 23 and 25 from~\cite{zbMATH07538442}, giving polynomial-time algorithms for \textsc{Defective Coloring} for fixed $\chi_{\textrm{d}}$ and $\Delta^*$ in graphs of bounded treewidth and chordal graphs, respectively.
They also solve an open problem from~\cite{zbMATH05818768} regarding the complexity of \textsc{Max $k$-Dependent Set} on interval graphs (see \cite[Table 2]{zbMATH05818768}).}

Using the extension from \cref{thm:cmsomainX} we can solve some other problems, for example:
\begin{itemize}
\item find the maximum number of pairwise independent induced cycles.
\end{itemize}
Note that the difference is that we maximize the \emph{number} of cycles, and not their \emph{total size} (or weight), as we could do using \cref{thm:cmsomain-intro}.
On other other hand, the number of induced cycles in a graph might be very large, so this problem cannot be solved using \cref{max-weight-independent-subgraph-packing-for-yw} (or \cref{max-weight-distance-d-subgraph-packing-for-bounded-yw}).
However, we can solve our problem by calling \cref{thm:cmsomainX} with the following parameters:
\begin{itemize}
    \item $r=3$,
    \item $\psi$ is the formula saying that $G[F]$ is 2-regular (i.e., each component of $G[F]$ is a cycle) and each vertex from $X$ must be in a distinct connected component of $G[F]$, and
    \item the weight of each vertex is 1.
\end{itemize}

In conclusion, let us point out that in \cref{thm:cmsomain-intro,thm:cmsomain,thm:cmsomainX} we need to assume that the clique number of the solution is bounded by a constant.
Indeed, the property of being a clique can be easily expressed by a \cmsotwo formula as follows (here $x,y$ are vertex variables and $e$ is an edge variable):
\[
    \phi_{\textrm{clique}} \coloneqq \forall_{x,y} \left( x \neq y \to \exists_e \left(\textsf{inc}(e,x) \land \textsf{inc}(e,y)\right)  \right).
\]
However, \textsc{Max Clique} is \NP-hard on graphs with independence number 2, and thus of tree-independence number at most 2. Indeed, this follows from the fact that \textsc{Max Independent Set} is \NP-hard on triangle-free graphs~\cite{Po74}, by complementing the input graph.

\begin{sloppypar}
\section{Complexity of computing induced matching treewidth and bounds}\label{sec:structural}

In this section, we collect some basic complexity and algorithmic observations about induced matching treewidth, establish its close relation to treewidth for graphs with bounded degree, and give constructions of two families of graphs with unbounded degree and unbounded induced matching treewidth.

\subsection{Remarks on complexity of computing induced matching treewidth}

\Cref{lem:Yolov} together with known hardness results on the tree-independence number from~\cite{dallard2022firstpaper,dallard2022computing} and on the independence number from~\cite{MR2403018} imply the following hardness results for induced matching treewidth.

\begin{theorem}\label{thm:hardness}
The following lower bounds hold.
\begin{enumerate}
    \item For every constant $k\geq 4$, it is \NP-complete to decide whether $\yw(G)\leq k$ for a graph~$G$.
    \item For every $\varepsilon>0$, there is no polynomial-time algorithm for approximating induced matching treewidth of an $n$-vertex graph to within a factor of $n^{1-\varepsilon}$ unless $\P = \NP$.
    \item There is no constant-factor \FPT-approximation algorithm for induced matching treewidth, unless $\FPT = \Wone$.
    \item For any computable functions $f,g$, there is no $g(k)$-approximation algorithm for computing induced matching treewidth in time $f(k) \cdot n^{o(k)}$, unless the Gap-ETH fails.\footnote{Gap-ETH states that for some constant $\epsilon > 0$, distinguishing between a satisfiable 3-\textsf{SAT} formula and one that is not even $(1-\epsilon)$-satisfiable requires exponential time (see~\cite{DBLP:conf/icalp/ManurangsiR17,DBLP:journals/eccc/Dinur16}).}
\end{enumerate}
\end{theorem}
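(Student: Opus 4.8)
The plan is to obtain all four lower bounds by transferring known hardness results for the tree-independence number through the identity $\tin(G)=\yw(G\odot K_1)$ of \cref{lem:Yolov}. The corona $G\odot K_1$ has exactly $2|V(G)|$ vertices and is computable from $G$ in linear time, and Yolov's proof of this identity is constructive in both directions: in polynomial time one can turn a tree decomposition of $G\odot K_1$ with $\mu(\cdot)\le \ell$ into a tree decomposition of $G$ with $\alpha(\cdot)\le \ell$, and vice versa (this is the same mechanism exploited in the proof of \cref{thm:approximate-yolov}, there via $L^2(G)$). Hence any algorithm computing or approximating $\yw$, whether in polynomial or in parameterized time, yields an algorithm of the same type for $\tin$, with the number of vertices increased only by a factor of $2$ and the parameter value unchanged.

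For the first item I would begin with membership in \NP. Starting from any tree decomposition $\cT$ with $\mu(\cT)\le k$ and repeatedly contracting an edge $tt'$ of $T$ with $X_t\subseteq X_{t'}$---an operation that only shrinks the family of bags and hence cannot increase $\mu$---yields, by a standard argument, a tree decomposition of $G$ with $\mu(\cdot)\le k$ and $\Oh(|V(G)|)$ nodes; such a decomposition is a polynomial-size certificate, and for fixed $k$ one verifies in polynomial time that it admits no induced matching of size $k+1$ touching a bag, by brute force over all $(k+1)$-subsets of $E(G)$. For \NP-hardness I would invoke the result of Dallard et al.~\cite{dallard2022computing} that deciding $\tin(G)\le k$ is \NP-complete for every fixed $k\ge 4$, and compose it with the polynomial-time map $G\mapsto G\odot K_1$; since $\yw(G\odot K_1)=\tin(G)$, this is a valid reduction, and the threshold $k\ge 4$ is preserved verbatim.

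For the remaining three items I would invoke the known inapproximability of the tree-independence number---no polynomial-time $n^{1-\varepsilon}$-approximation unless $\P=\NP$; no constant-factor \FPT-approximation parameterized by $\tin$ unless $\FPT=\Wone$; and no $g(k)$-approximation in time $f(k)\cdot n^{o(k)}$ unless Gap-ETH fails---which themselves rest on the inapproximability of the independence number~\cite{MR2403018}. Suppose, for contradiction, that a forbidden algorithm for $\yw$ existed. Running it on $G\odot K_1$ and converting the output back to $G$ as above produces, respectively: a polynomial-time $(2n)^{1-\varepsilon}$-approximation for $\tin$ on $n$-vertex graphs, hence (since $(2n)^{1-\varepsilon}\le n^{1-\varepsilon/2}$ for all large $n$) an $n^{1-\varepsilon/2}$-approximation for $\tin$, contradicting the first bound; a constant-factor \FPT-approximation for $\tin$, the parameter being unchanged because $\yw(G\odot K_1)=\tin(G)$, contradicting the second; and a $g(k)$-approximation for $\tin$ running in time $f(k)\cdot(2n)^{o(k)}=f'(k)\cdot n^{o(k)}$, contradicting the third.

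I do not expect a genuine obstacle; the only points requiring care are bookkeeping. One must check that the constructive direction of \cref{lem:Yolov} really produces, in polynomial time, a bounded-$\alpha$ tree decomposition of $G$ from a bounded-$\mu$ tree decomposition of $G\odot K_1$ (so that \emph{approximation} algorithms, not merely exact ones, transfer), and that the $n\mapsto 2n$ blow-up in the vertex count and the $2^{o(k)}$ factor in the running time are absorbed by the slack present in each of the three inapproximability regimes---both of which are immediate.
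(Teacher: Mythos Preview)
Your proposal is correct and, for item~1, identical to the paper's proof: membership in \NP plus the reduction $G\mapsto G\odot K_1$ from the \NP-completeness of deciding $\tin(G)\le k$~\cite{dallard2022computing}, using $\tin(G)=\yw(G\odot K_1)$.

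For items~2--4 there is a minor difference in factoring. You reduce from the inapproximability of $\tin$ via the single step $G\mapsto G\odot K_1$, treating the $\tin$ lower bounds as known. The paper instead composes two reductions at once: it recalls from~\cite{dallard2022firstpaper} that $\alpha(G)=\tin(G')$ where $G'$ is the join of two copies of $G$, chains this with the corona to get $\alpha(G)=\yw(G'\odot K_1)$ on a $4n$-vertex graph, and then invokes the hardness of approximating $\alpha$ directly (Zuckerman~\cite{MR2403018}, Lin~\cite{DBLP:conf/stoc/Lin21}, Chalermsook et al.~\cite{DBLP:journals/siamcomp/ChalermsookCKLM20}). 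The three $\tin$ inapproximability statements you invoke are not stated verbatim in~\cite{dallard2022firstpaper,dallard2022computing}, but they follow immediately from $\alpha(G)=\tin(G')$ and the same three sources, so your route is just the paper's argument split at the midpoint. Neither approach buys anything the other does not; the paper's is marginally more self-contained because it names the ultimate source of hardness for each item explicitly.
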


\begin{proof}
Fix an integer $k\ge 4$.
Clearly, the problem of testing if $\yw(G)\leq k$ for a given graph $G$ belongs to \NP.
To establish \NP-hardness, we make a reduction from the problem of testing if a given graph $G$ satisfies $\tin(G)\le k$.
As shown by Dallard et al.~\cite{dallard2022computing}, this problem is \NP-complete.
By \Cref{lem:Yolov}, $\tin(G) = \yw(G\circ K_1)$ and hence $\tin(G)\le k$ if and only if $\yw(G\circ K_1)\le k$.
Since computing the corona of $G$ can be done in polynomial time, hardness follows.

To obtain the inapproximability result, let us recall the original reduction establishing \NP-hardness of the problem of computing the tree-independence number due to Dallard et al.~\cite{dallard2022firstpaper}.
The reduction was from the independent set problem and was based on the fact that for any graph $G$, if $G'$ is the graph obtained from two disjoint copies of $G$ by adding all possible edges between them, then $\alpha(G) = \tin(G')$.
Combining this result with~\Cref{lem:Yolov}, we obtain that $\alpha(G) = \tin(G') = \yw(G' \circ K_1)$.
Given an $n$-vertex graph $G$, the graph $G'\circ K_1$ can be computed in polynomial time from $G$ and has $4n$ vertices.
{Now,} approximately computing the induced matching treewidth of the transformed graph would imply an equally good approximation to the independence number of $G$.

Thus statements 2., 3., 4.~{follow} from analogous hardness results for approximating \textsc{Max Independent Set}, shown by, respectively,
Zuckerman~\cite{MR2403018}, Lin~\cite{DBLP:conf/stoc/Lin21}, and Chalersmook et al.~\cite{DBLP:journals/siamcomp/ChalermsookCKLM20}.
\end{proof}

Let us remark that \cref{thm:hardness} {does not leave} much space for improvements {to} \cref{thm:approximate-yolov}.
First, it implies that {we} cannot \emph{exactly} compute induce matching treewidth, even for small values (at least 4).
Second, we cannot hope for a nontrivial approximation in polynomial time.
Finally, the running time of the approximation algorithm cannot be improved to $f(k) \cdot n^{\Oh(1)}$ and even $f(k) \cdot n^{o(k)}$, under standard complexity assumptions.

\medskip
\Cref{thm:hardness} establishes \NP-completeness of recognizing graphs with induced matching width at most $k$, for every integer $k\ge 4$.
The complexity of recognizing graphs with induced matching treewidth at most $k$ is open for $k\in \{2,3\}$, but solvable in polynomial time for $k\le 1$, as we discuss next.
Before doing that, let us show a nice property of induced matching treewidth that is also satisfied by tree-independence number (see~\cite{dallard2022firstpaper}) and {by \emph{sim-width}, a graph parameter introduced in 2017 by Kang, Kwon, Str\o{}mme, and Telle (see~\cite{MR3721445}):} the parameter is monotone under induced minors.
Given two graphs $G$ and $H$, we say that $H$ is an \emph{induced minor} of $G$ if a graph isomorphic to $H$ can be obtained from $G$ by a sequence of vertex deletions and edge contractions.

\begin{proposition}\label{prop:induced-minors}
Let $G$ be a graph and let $H$ be an induced minor of $G$.
Then $\yw(H)\le \yw(G)$.
\end{proposition}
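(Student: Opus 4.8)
The plan is to show that the two elementary operations generating induced minors — vertex deletion and edge contraction — cannot increase $\yw$, and then conclude by induction on the number of operations. For vertex deletion the claim is essentially immediate: given a tree decomposition $\cT = (T, \{X_t\}_{t\in V(T)})$ of $G$ with $\mu_G(\cT) = \yw(G)$, the pair $\cT' = (T, \{X_t \setminus \{v\}\}_{t\in V(T)})$ is a tree decomposition of $G - v$, and any induced matching of $G - v$ touching a bag of $\cT'$ is also an induced matching of $G$ touching the corresponding bag of $\cT$ (it does not use $v$), so $\mu_{G-v}(\cT') \le \mu_G(\cT) = \yw(G)$, giving $\yw(G-v)\le\yw(G)$.

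The substantive case is edge contraction. Let $e = xy \in E(G)$, let $H = G/e$, and let $z$ denote the vertex of $H$ obtained by identifying $x$ and $y$; all other vertices of $H$ are identified with those of $G$. Starting from an optimal tree decomposition $\cT$ of $G$, I would build a tree decomposition $\cT'$ of $H$ on the same tree $T$ by replacing, in every bag, each occurrence of $x$ or $y$ by $z$; formally $X'_t = (X_t \setminus \{x,y\}) \cup \{z\}$ if $x \in X_t$ or $y \in X_t$, and $X'_t = X_t$ otherwise. One checks the three tree-decomposition axioms for $\cT'$: every vertex of $H$ lies in a bag; every edge of $H$ is covered (an edge $uw$ of $H$ with $u,w \ne z$ comes from an edge of $G$ covered by some $X_t$, and an edge $zw$ comes from an edge $xw$ or $yw$ of $G$ whose covering bag now contains $z$ and $w$); and the set of nodes whose bag contains a fixed vertex of $H$ is connected in $T$ — for $z$ this uses that $\{t : x\in X_t\}$ and $\{t : y\in X_t\}$ are each connected subtrees and share the node(s) covering the edge $xy$, so their union is connected.

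It remains to bound $\mu_H(\cT')$. I would argue by contradiction: suppose $M$ is an induced matching in $H$ of size $k+1 > \yw(G)$ touching some bag $X'_t$. Lift $M$ to a set $M'$ of edges of $G$: each edge of $M$ not incident to $z$ is already an edge of $G$; an edge $zw \in M$ lifts to $xw$ or $yw$, whichever is an edge of $G$ (choosing one arbitrarily if both are). Since $M$ has at most one edge at $z$, this lift is well-defined and $|M'| = |M| = k+1$. I claim $M'$ is an induced matching of $G$: distinct edges of $M'$ are disjoint because the only way two lifted edges could share a vertex is if both touched $z$, which is impossible; and there is no $G$-edge joining two different edges of $M'$, since such an edge would project to an $H$-edge joining the corresponding two edges of $M$, contradicting that $M$ is induced (here one must be a little careful about edges incident to $x$ or $y$ that are "absorbed" into $z$, but any such edge still yields an adjacency witnessed in $H$). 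Finally, $M'$ touches $X_t$: the edge of $M$ with an endpoint in $X'_t$ has its lifted endpoint in $X_t$ (if that endpoint was $z$, then $x \in X_t$ or $y \in X_t$, and we chose the lift accordingly). Thus $\mu_G(\cT) \ge k+1 > \yw(G)$, a contradiction, so $\mu_H(\cT') \le \yw(G)$ and hence $\yw(H) \le \yw(G)$.

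The main obstacle, and the step I would write out most carefully, is verifying that the lifted matching $M'$ is \emph{induced} in $G$ — specifically handling edges of $G$ incident to $x$ or $y$ that do not survive the contraction, and making sure the choice of lift for the (at most one) edge of $M$ touching $z$ does not create an unwanted adjacency. The routine parts — vertex deletion, the tree-decomposition axioms for $\cT'$, and the final induction combining the two operations — I would dispatch quickly. Composing, if $H$ is an induced minor of $G$ via a sequence of deletions and contractions, then $\yw$ is non-increasing along the sequence, so $\yw(H) \le \yw(G)$.
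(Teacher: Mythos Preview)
Your overall strategy and the construction of $\cT'$ for $H = G/xy$ match the paper's proof, and your argument that the lifted matching $M'$ is \emph{induced} in $G$ is sound. The gap is in the ``touches $X_t$'' step. You define the lift of $zw$ as whichever of $xw$, $yw$ is an edge of $G$, and then claim that when $z$ is the touching endpoint you ``chose the lift accordingly'' so that the lifted vertex lies in $X_t$. But when only one of $xw$, $yw$ is an edge of $G$ there is no freedom in the lift: it can happen that $z\in X'_t$, $w\notin X'_t$, only $yw\in E(G)$, yet $x\in X_t$ while $y,w\notin X_t$, and then the forced lift $yw$ does not touch $X_t$. (The path $a$--$x$--$y$--$w$ with its standard three-bag decomposition, taking $t$ to be the node with bag $\{a,x\}$ and $M = \{zw\}$ in $H$, already exhibits this.)

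The paper closes this with a case split you are missing. When $z\in X'_t$, it does \emph{not} lift $zw$ to $xw$ or $yw$ at all, but replaces it by the contracted edge $xy$ itself, taking $M' = (M\setminus\{zw\})\cup\{xy\}$; this is still an induced matching in $G$ (no other edge of $M$ is adjacent to $z$ in $H$, so no endpoint of any other edge is equal or adjacent to $x$ or $y$ in $G$), and $xy$ touches $X_t$ since $z\in X'_t$ forces $x\in X_t$ or $y\in X_t$. Your lift via $xw$ or $yw$ is only needed in the complementary case $z\notin X'_t$, where necessarily $w\in X'_t = X_t$ and touching is automatic. Adding this distinction fixes your argument.
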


\begin{proof}
We show how to modify any tree decomposition $\mathcal{T} = (T, \{X_t\}_{t\in V(T)})$ of $G$ into a tree decomposition $\mathcal{T}'$ of $H$ such that the induced matching treewidths of the two decompositions are related as follows: $\mu_H(\mathcal{T}')\le \mu_G(\mathcal{T})$.
This will imply the desired inequality  $\yw(H)\le \yw(G)$.

Clearly, it suffices to consider the case when $H$ is obtained from $G$ either by a single vertex deletion or by a single edge contraction.

Consider first the case when $H$ is the graph obtained from $G$ by deleting some vertex, say $v\in V(G)$.
Then deleting $v$ from all the bags of $\mathcal{T}$ that contain $v$ results in a tree decomposition $\mathcal{T}'$ of $H$.
Furthermore, since every induced matching in $H$ is also an induced matching in $G$, it holds that $\mu_G(\mathcal{T})\ge \mu_H(\mathcal{T}')$, as claimed.

Next, consider the case when $H$ is the graph obtained from $G$ by contracting an edge $uv\in E(G)$ into a single vertex $z$.
In this case, it is not difficult to see that replacing each bag $X_t$ of $\mathcal{T}$ that contains either $u$ or $v$ (or both) with the bag $X_t'\coloneqq (X_t\setminus\{u,v\})\cup\{z\}$ results in a tree decomposition $\mathcal{T}'$ of $H$.
We show that $\mu_G(\mathcal{T})\ge \mu_H(\mathcal{T}')$ by verifying that for every node $t\in V(T)$ and every induced matching $M'$ in $H$ that touches the bag $X_t'$, there exists an induced matching $M$ in $G$ that touches the bag $X_t$ and such that $|M| = |M'|$.
This will indeed suffice, since taking $M'$ to be a maximum induced matching in $H'$ that touches some bag, say $X_t'$, of $\mathcal{T}'$, then the corresponding induced matching $M$ in $G$ that touches the bag $X_t$ will imply that
$\mu_G(\mathcal{T})\ge |M| = |M'| = \mu_H(\mathcal{T}')$.

So, let $t\in V(T)$ be a node of $T$ and let $M'$ be an induced matching in $H$ that touches the bag $X_t'$.
We consider two cases depending on whether the vertex $z$ is an endpoint of an edge in $M'$ or not.

{Suppose} first that the vertex $z$ is not an endpoint of any edge in $M'$.
Since the graph $H-z$ is an induced subgraph of $G$, the matching $M'$ is also an induced matching in $G$.
Furthermore, since the matching $M'$ touches the bag $X_t'$ in vertices different from $z$, the fact that $X_t'\setminus\{z\}\subseteq X_t$ implies that $M'$ also touches the bag $X_t$.
Hence we can take $M = M'$ in this case.

{Suppose} next that the vertex $z$ is an endpoint of an edge in $M'$.
In this case there is a unique edge $e\in M'$ such that $z$ is an endpoint of $e$.
We consider two further subcases depending on whether $z$ belongs to the bag $X_t'$ or not.
{Suppose} first that $z\in X_t'$.
In this case, $X_t'=(X_t\setminus\{u,v\})\cup\{z\}$ and the set $M \coloneqq (M'\setminus \{e\})\cup\{uv\}$ is an induced matching in $G$ such that $|M| = |M'|$.
Furthermore, since the matching $M\setminus \{uv\} = M'\setminus \{e\}$ touches the set $X_t\setminus \{u,v\} = X_t'\setminus \{z\}$ and the edge $uv$ touches $X_t$ (since $z\in X_t'$), we infer that $M$ touches $X_t$.

Finally, {suppose} that $z\not\in X_t'$.
Then $X_t' = X_t$.
Let $y$ be the endpoint of $e$ other than $z$.
Since the edge $e$ touches the bag $X_t$ and $z\not\in X_t'$, we have $y\in X_t$.
Moreover, $y$ is a vertex of $G$ such that $uy\in E(G)$ or $vy\in E(G)$.
We may assume without loss of generality that $uy\in E(G)$.
Then, the set $M\coloneqq (M'\setminus \{e\})\cup\{uy\}$ is an induced matching in $G$ such that $|M| = |M'|$.
Furthermore, since the matching $M\setminus \{uy\} = M'\setminus \{e\}$ touches the set $X_t= X_t'$ and the edge $uy$ touches $X_t$, we infer that $M$ touches $X_t$.
This completes the proof.
\end{proof}

\Cref{prop:induced-minors} implies the following.
For an integer $k\ge 0$, let $\mathcal{F}_k$ denote the {family} of all \emph{forbidden induced minors} for the class of graphs with induced matching treewidth at most $k$, that is, $\mathcal{F}_k$ consists precisely of the graphs $G$ such that $\yw(G)>k$ but every proper induced minor $H$ of $G$ satisfies $\yw(H)\le k$.
Then, by
\Cref{prop:induced-minors}, a graph $G$ has induced matching treewidth at most $k$ if and only if $G$ does not have any graph from $\mathcal{F}_k$ as an induced minor.

It is not difficult to see that the only graphs with induced matching treewidth equal to $0$ are exactly the edgeless graphs; hence $\mathcal{F}_0 = \{K_2\}$.
To characterize graphs with induced matching treewidth at most $1$, we will need the following result due to Scheidweiler and Wiederrecht~\cite[Theorem 3.6]{MR3804753}.
We denote by $C_k$ the $k$-vertex cycle graph.

\begin{theorem}\label{thm:SW}
There exists a finite set $\mathcal{F}$ of graphs such that for every graph $G$, the graph $L^2(G)$ is chordal if and only if $G$ does not contain any graph from the set $\mathcal{F}\cup\{C_k:k\ge 6\}$ as an induced subgraph.
\end{theorem}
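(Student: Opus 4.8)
The plan is to describe the minimal forbidden induced subgraphs for the class of graphs whose square of the line graph is chordal, and to show that, apart from the cycles $C_k$ with $k\ge 6$, there are only finitely many of them. First I would observe that this class is hereditary: if $H$ is an induced subgraph of $G$, then $E(H)\subseteq E(G)$, and since $H$ is induced, for any two edges $e,f\in E(H)$ the subgraph of $H$ spanned by the endpoints of $e$ and $f$ coincides with the corresponding subgraph of $G$; hence $L^2(H)$ is exactly the subgraph of $L^2(G)$ induced by $E(H)$, so chordality of $L^2$ is inherited by induced subgraphs. Therefore the class is characterized by its (a priori infinite) family $\mathcal{M}$ of minimal forbidden induced subgraphs, and — since a graph is chordal iff it contains no induced cycle of length at least $4$ (a \emph{hole}) — a graph $H$ lies in $\mathcal{M}$ iff $L^2(H)$ has a hole while $L^2(H')$ is chordal for every proper induced subgraph $H'$ of $H$. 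Fixing $H\in\mathcal{M}$ and a hole $e_1,\dots,e_k$ (indices modulo $k$, $k\ge 4$) of $L^2(H)$, minimality forces $V(H)=\bigcup_i V(e_i)$, so $|V(H)|\le 2k$. It remains to prove $\mathcal{M}=\mathcal{F}\cup\{C_k:k\ge 6\}$ with $\mathcal{F}$ finite.

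The cycles are handled directly. Since $L(C_\ell)=C_\ell$, we have $L^2(C_\ell)=C_\ell^2$, the square of a cycle, which is a complete graph (hence chordal) for $\ell\le 5$ and contains a hole for every $\ell\ge 6$ (a short routine check: the vertices $0,1,3,4$ induce a $C_4$ in $C_6^2$, and for even $\ell\ge 8$ the vertices $0,2,4,\dots,\ell-2$ induce a $C_{\ell/2}$ in $C_\ell^2$, with the odd cases treated analogously). Moreover every proper induced subgraph of $C_\ell$ is a disjoint union of paths with no edges between distinct paths, and $L^2(P_a)=P_{a-1}^2$ is an interval graph, hence chordal; so $L^2$ of every proper induced subgraph of $C_\ell$ is chordal. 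Consequently $C_\ell\in\mathcal{M}$ precisely when $\ell\ge 6$.

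The heart of the argument is the claim that there is an absolute constant $N$ such that every $H\in\mathcal{M}$ with $|V(H)|>N$ is a cycle. Granting the claim, such an $H$ is a cycle $C_\ell$, necessarily with $\ell\ge 6$ by the previous paragraph, so $\mathcal{F}:=\{H\in\mathcal{M}:H\text{ is not a cycle}\}\subseteq\{H:|V(H)|\le N\}$ is finite, which finishes the proof. To prove the claim, fix $H\in\mathcal{M}$ with a hole $e_1,\dots,e_k$ and note $k\ge |V(H)|/2>N/2$. Each consecutive pair $e_i,e_{i+1}$ either shares a vertex (a \emph{type-A} junction) or is disjoint and joined by an edge of $H$ (a \emph{type-B} junction), while for non-consecutive $i,j$ the edges $e_i,e_j$ are vertex-disjoint and have no edge of $H$ between them; in particular every edge of $H$ either lies inside some $V(e_i)$ or crosses a single junction. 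Threading the $e_i$ together — passing through the shared vertex at each type-A junction and through a joining edge at each type-B junction — produces a closed walk $W$ in $H$ of length between $k$ and $2k$. The non-consecutive disjointness implies that vertices of $W$ far apart along $W$ are non-adjacent in $H$, so $W$ is already an induced cycle of $H$ unless it meets one of a bounded list of local ``redundant'' configurations (a type-A junction whose two outer endpoints are also adjacent, a type-B junction with more than one joining edge, or an extra edge incident to a junction vertex); in each such case one can replace a short sub-run of the hole by a shorter one that avoids at least one vertex of $H$, contradicting minimality. Hence, for $N$ large enough, $H$ must itself be an induced cycle, proving the claim.

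I expect the main obstacle to be making this last step precise. Because adjacency in $L^2$ is non-local — it depends on pairs of edges of $H$ together with the edges joining their endpoints — deleting a single vertex or handling a single redundant edge can change $L^2$-adjacencies several positions away around the hole. Carrying out the shortcut step therefore requires a careful case distinction over the junction types on both sides of a redundancy and over the ways in which several redundancies interact, checking in each case that the modified configuration is either a genuine induced cycle of $H$ or a proper induced subgraph of $H$ whose square of the line graph still contains a hole — the latter contradicting $H\in\mathcal{M}$. Extracting a clean explicit value of $N$, equivalently an explicit description of $\mathcal{F}$, is the most laborious part of the proof.
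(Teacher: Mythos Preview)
The paper does not give its own proof of this statement; \cref{thm:SW} is quoted from Scheidweiler and Wiederrecht~\cite{MR3804753} and used as a black box in the proof of the subsequent proposition. So there is no in-paper argument to compare against; the original reference in fact proceeds by exhibiting the explicit finite list~$\mathcal{F}$.

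Your outline is reasonable and the reductions you set up are correct: the class is hereditary (so determined by its minimal forbidden induced subgraphs), the cycles $C_\ell$ with $\ell\ge 6$ are genuinely minimal, and for $H\in\mathcal{M}$ with a hole $e_1,\dots,e_k$ in $L^2(H)$ minimality forces $V(H)=\bigcup_i V(e_i)$. The threading-into-a-closed-walk idea is natural. But the part you flag as ``the main obstacle'' is not just laborious bookkeeping --- it contains a real gap. Two concrete issues. First, the walk $W$ need not visit every endpoint of every $e_i$: if both junctions incident to $e_i=ab$ are type-B and both joining edges land at the same endpoint~$a$, then $W$ never passes through~$b$, so you cannot yet conclude $V(W)=V(H)$; you must show that either such a configuration is impossible or that $H-b$ still has a non-chordal $L^2$, and the obvious replacement path through the two joining edges at $a$ typically creates chords in $L^2$ (e.g.\ the joining edge $xa$ is adjacent in $L^2$ to $e_{i+1}$ via the other joining edge $ay$). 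Second, your shortcut step (``replace a short sub-run of the hole by a shorter one that avoids a vertex'') needs to produce an \emph{induced} cycle in $L^2(H')$ for a proper induced subgraph~$H'$, not merely a shorter cycle in $L^2(H)$; because $L^2$-adjacency depends on edges one or two positions away, shortening locally can introduce new chords, and ruling these out is exactly the non-local case analysis you have not carried out. In short, the strategy is plausible but the proof as written is a sketch, not a proof, and completing it essentially amounts to redoing the case analysis of~\cite{MR3804753}.
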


We now have everything ready to prove the following.

\begin{proposition}
The class of graphs with induced matching treewidth at most $1$ is characterized by finitely many forbidden induced minors and admits a polynomial-time recognition algorithm.
\end{proposition}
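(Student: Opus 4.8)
The plan is to chain together \cref{lem:Yolov}, the characterization of chordal graphs via tree-independence number, \cref{thm:SW}, and the induced-minor monotonicity from \cref{prop:induced-minors}. Write $\mathcal{C}$ for the class of graphs $G$ with $\yw(G)\le 1$. By \cref{lem:Yolov} together with the fact that a graph $H$ is chordal if and only if $\tin(H)\le 1$, we have $G\in\mathcal{C}$ if and only if $L^2(G)$ is chordal, and by \cref{thm:SW} this happens if and only if $G$ contains no induced subgraph from the set $\mathcal{S}:=\mathcal{F}\cup\{C_k\colon k\ge 6\}$, where $\mathcal{F}$ is the finite family of \cref{thm:SW}. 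Note that every graph in $\mathcal{S}$ contains a member of $\mathcal{S}$ as an induced subgraph, namely itself, and hence has induced matching treewidth at least $2$; in particular $C_6\notin\mathcal{C}$.

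The first step is to show that $\mathcal{F}_1$ is finite; recall that by \cref{prop:induced-minors} a graph lies in $\mathcal{C}$ if and only if it has no induced minor in $\mathcal{F}_1$, so this suffices for the first assertion. I claim $\mathcal{F}_1\subseteq\mathcal{F}\cup\{C_6\}$. Let $G\in\mathcal{F}_1$. Since $G\notin\mathcal{C}$, the graph $G$ has an induced subgraph $H$ isomorphic to a member of $\mathcal{S}$; an induced subgraph is a special case of an induced minor, and $H\notin\mathcal{C}$ by the previous paragraph, so the minimality of $G$ (every proper induced minor of $G$ lies in $\mathcal{C}$) forces $H=G$, i.e., $G\in\mathcal{S}$ up to isomorphism. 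If $G=C_k$ with $k\ge 7$, then $C_6$ is a proper induced minor of $C_k$ (contract $k-6$ consecutive edges, each contraction turning $C_j$ into $C_{j-1}$), whence $C_6\in\mathcal{C}$, contradicting $C_6\notin\mathcal{C}$. Therefore $G\in\mathcal{F}\cup\{C_6\}$, so $\mathcal{F}_1$ is finite and the forbidden induced minor characterization follows.

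For the recognition algorithm, by the description of $\mathcal{C}$ via excluded induced subgraphs it suffices, given $G$, to test whether $G$ contains any member of $\mathcal{S}$ as an induced subgraph. For each of the finitely many $H\in\mathcal{F}$ this is a brute-force search over all $|V(H)|$-element subsets of $V(G)$, running in time $n^{O(1)}$ since $|V(H)|$ is bounded by a constant. It remains to decide whether $G$ has an induced cycle of length at least $6$, and this can be done in polynomial time as follows: enumerate all ordered $5$-tuples $(v_1,\dots,v_5)$ that induce a path $v_1v_2v_3v_4v_5$ in $G$ (meaning $v_iv_{i+1}\in E(G)$ and there are no other edges among these five vertices); there are $O(n^5)$ of them. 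For each such tuple, let $G'$ be the graph obtained from $G$ by deleting $N[v_2]\cup N[v_3]\cup N[v_4]$ and then re-adding the vertices $v_1$ and $v_5$, and test by breadth-first search whether $G'$ contains a $v_1$-$v_5$ path. A shortest such path is chordless in $G'$ and has length at least $2$ since $v_1\not\sim v_5$; concatenating it with $v_1v_2v_3v_4v_5$ yields an induced cycle of length at least $6$ in $G$. Conversely, taking five consecutive vertices of any induced cycle of length at least $6$ shows that the corresponding iteration succeeds. If no member of $\mathcal{S}$ is detected, output that $\yw(G)\le 1$; otherwise output that $\yw(G)\ge 2$.

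The only step that is not immediate from the quoted results is the long-induced-cycle subroutine, and that is where I expect the bulk of the verification to lie: one must check carefully that $G'$ is built so that shortest $v_1$-$v_5$ paths in it correspond exactly to ways of closing $v_1\cdots v_5$ into an induced cycle of $G$ — in particular that an internal vertex of such a path cannot be adjacent to $v_2,v_3,v_4$ (handled by deleting their closed neighborhoods) nor create a chord from $v_1$ or $v_5$ (handled by minimality of the path). Everything else — assembling the equivalences, the minimality argument bounding $\mathcal{F}_1$, and the brute-force test for the fixed patterns in $\mathcal{F}$ — is routine bookkeeping.
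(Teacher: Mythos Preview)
Your argument for finiteness of $\mathcal{F}_1$ is essentially identical to the paper's: both chain \cref{lem:Yolov}, the chordality characterization via $\tin\le 1$, and \cref{thm:SW}, then observe that every forbidden induced minor is a forbidden induced subgraph and that $C_6$ is an induced minor of every longer cycle, giving $\mathcal{F}_1\subseteq \mathcal{F}\cup\{C_6\}$.

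Where you diverge is the recognition algorithm. The paper takes the one-line route: compute $L^2(G)$ and test it for chordality (which is classical and linear-time). You instead test directly for the forbidden induced subgraphs, brute-forcing the finitely many patterns in $\mathcal{F}$ and running a bespoke $O(n^5)$-iteration subroutine for induced cycles of length at least six. Your subroutine is correct (the deletion of $N[v_2]\cup N[v_3]\cup N[v_4]$ and shortest-path minimality together kill all potential chords), but it is more work than necessary here. The paper's approach is cleaner and avoids the need to argue correctness of a cycle-detection gadget; your approach has the modest advantage of being self-contained and of making the forbidden-subgraph characterization do double duty. One presentational nit: ``re-adding $v_1$ and $v_5$'' should be stated as taking the induced subgraph on $(V(G)\setminus (N[v_2]\cup N[v_3]\cup N[v_4]))\cup\{v_1,v_5\}$, so that their original adjacencies are retained.
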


\begin{proof}
By \Cref{lem:Yolov}, for every graph $G$, it holds that $\yw(G) = \tin(L^2(G))$.
Recall also that a graph $G$ is chordal if and only if $\tin(G)\le 1$.
It follows that a graph $G$ has induced matching treewidth at most $1$ if and only if the square of its line graph is chordal.
By \Cref{thm:SW}, the class of graphs with this property is characterized by a family of forbidden induced subgraphs consisting of all cycles of length at least $6$ and a family $\mathcal{F}$ of finitely many other graphs.
Since each forbidden induced minor is also a forbidden induced subgraph, the family of forbidden induced minors is a subfamily of $\mathcal{F}\cup\{C_k:k\ge 6\}$.
Using also the fact that the $6$-cycle is an induced minor of any longer cycle, we infer that the set $\mathcal{F}_1$ of forbidden induced minors for the class of graphs with induced matching treewidth at most~$1$ consists of the cycle $C_6$ and a subfamily of $\mathcal{F}$; in particular, it is finite.

Given a graph $G$, testing if the square of its line graph is chordal can be done in polynomial time (using, e.g.,~\cite{doi:10.1137/0205021}).
Therefore, graphs with induced matching treewidth at most $1$ can be recognized in polynomial time.
\end{proof}

\subsection{Induced matching treewidth in graphs with bounded degree}

Recall that the tree-independence number of a graph is an
upper bound on its induced matching treewidth, hence, any graph class with unbounded induced matching treewidth is necessarily of unbounded tree-independence number.
On the other hand, large tree-independence number does not necessarily imply large induced matching treewidth, as shown, for example, by the family of complete bipartite graphs.
The balanced complete bipartite graphs $K_{n,n}$ satisfy $\tin(K_{n,n}) = n$ (see~\cite{dallard2022firstpaper}) and $\yw(K_{n,n}) = 1$, as witnessed by the trivial tree decomposition having a unique bag equal to the entire vertex set.

\begin{sloppypar}
{
Interestingly, in the absence of complete bipartite graphs as subgraphs, bounded induced matching treewidth implies bounded tree-independence number and even bounded treewidth.
This follows from a result of Brettell et al.~\cite[Proposition 18]{DBLP:journals/corr/abs-2308-05817} who proved a related result for sim-width.
Brettell et al.~showed that for any two integers $k,t \in \mathbb{N}$ there exists an integer $r$ such that each graph with sim-width at most $k$ and no subgraph isomorphic to $K_{t,t}$ has treewidth at most~$r$.
It is known that bounded induced matching treewidth implies bounded sim-width.
This is explicitly shown in~\cite[Theorem 24]{DBLP:journals/corr/abs-2302-10643} (see also~\cite{MR4657707}) but in fact the argument boils down to the proof of~\cite[Proposition 3.1]{MR3721445}.
The formal statement of the result for induced matching treewidth is as follows.

\begin{theorem}\label{thm:tw-biclique}
For any two integers $k,t \in \mathbb{N}$ there exists an integer $r$ such that each graph with induced matching treewidth at most $k$ and no subgraph isomorphic to $K_{t,t}$ has treewidth at most~$r$.
\end{theorem}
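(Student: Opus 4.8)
The plan is to obtain the statement by chaining two results from the literature, along the lines sketched in the paragraph preceding the theorem. The first is a parameter comparison: there is a function $h\colon \mathbb{N}\to\mathbb{N}$ such that every graph $G$ has sim-width at most $h(\yw(G))$. This is exactly the assertion that bounded induced matching treewidth implies bounded sim-width, established in~\cite[Theorem 24]{DBLP:journals/corr/abs-2302-10643} (see also~\cite{MR4657707}); as already noted, the argument is essentially the one proving that every chordal graph has sim-width at most one~\cite[Proposition 3.1]{MR3721445}, carried out relative to a tree decomposition $\mathcal{T}$ of $G$ with $\mu_G(\mathcal{T})\le\yw(G)$ in place of a clique tree. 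The second is the structural theorem of Brettell et al.~\cite[Proposition 18]{DBLP:journals/corr/abs-2308-05817}: for all integers $k',t$ there is an integer $\rho(k',t)$ such that every graph with sim-width at most $k'$ and no subgraph isomorphic to $K_{t,t}$ has treewidth at most $\rho(k',t)$.

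With these two facts in hand, the theorem follows by a single composition step. Fix $k,t\in\mathbb{N}$ and set $r\coloneqq\rho\big(h(k),t\big)$, a number depending only on $k$ and $t$. Let $G$ be an arbitrary graph with $\yw(G)\le k$ and with no subgraph isomorphic to $K_{t,t}$. By the first result the sim-width of $G$ is at most $h(\yw(G))\le h(k)$, and since $G$ is also $K_{t,t}$-subgraph-free, the second result applied with parameters $h(k)$ and $t$ gives $\tw(G)\le\rho(h(k),t)=r$, which is precisely the conclusion we want.

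I would expect no real obstacle in the composition itself; all the content lives in the two imported ingredients, which differ markedly in difficulty. The parameter comparison is the routine half: one turns a tree decomposition $\mathcal{T}=(T,\{X_t\})$ witnessing $\mu_G(\mathcal{T})\le k$ into a branch decomposition of $G$ and bounds, at every cut, the largest semi-induced matching crossing it (a family of edges $a_ib_i$ with the $a_i$ on one side of the cut, the $b_i$ on the other, and $a_ib_j$ a non-edge whenever $i\ne j$) in terms of $k$. The subtlety worth flagging is that sim-width is controlled by \emph{semi-induced} crossing matchings, which is strictly weaker than the \emph{induced} matchings touching a bag that control $\mu_G(\mathcal{T})$: one cannot read the bound off the bags directly, and a Ramsey-type clean-up is needed to extract, from a long semi-induced matching across a cut, either a long induced matching touching a bag or another excluded configuration — this is essentially the move behind~\cite[Proposition 3.1]{MR3721445}, and it is why $h$ need not be the identity. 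The substantive ingredient, and the step that would be the main obstacle in any self-contained treatment, is Brettell et al.'s theorem: intuitively, in a $K_{t,t}$-subgraph-free graph a cut of small sim-value can be turned into a small vertex separator, after which a Menger/bramble-type argument caps the treewidth, so reproving it would mean redoing the work of~\cite[Proposition 18]{DBLP:journals/corr/abs-2308-05817}. Since both ingredients are already in the literature, the honest plan is to cite them and assemble the conclusion as above.
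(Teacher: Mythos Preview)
Your proposal is correct and follows exactly the approach the paper takes: the paper derives \cref{thm:tw-biclique} by combining the fact that bounded induced matching treewidth implies bounded sim-width (\cite[Theorem 24]{DBLP:journals/corr/abs-2302-10643}, cf.~\cite{MR4657707,MR3721445}) with Brettell et al.'s result \cite[Proposition 18]{DBLP:journals/corr/abs-2308-05817} that bounded sim-width plus no $K_{t,t}$ subgraph implies bounded treewidth. Your additional commentary on the Ramsey-type clean-up and the distinction between semi-induced and induced matchings is accurate and goes slightly beyond what the paper spells out, but the core argument is identical.
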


Note that graphs with maximum degree less than $t$ have no subgraph isomorphic to $K_{t,t}$.
Hence, \Cref{thm:tw-biclique} implies that any class of graphs with bounded degree and unbounded treewidth has unbounded induced matching treewidth.
This includes, for example, the classes of (subdivided) square or hexagonal grids (also known as \emph{walls}) and their line graphs.

An equivalent formulation is that in classes of graphs with bounded induced matching treewidth, bounded degree implies bounded treewidth.
This implication (that bounded degree implies bounded treewidth) was recently established by Abrishami, Chudnovsky, and Vu\v{s}kovi\'{c}~\cite{ABRISHAMI2022144} for the class of even-hole-free graphs, and by Korhonen~\cite{KORHONEN2023206} for any class of graphs excluding some fixed grid (or equivalently, some fixed planar graph) as an induced minor, as well as for any class of graphs excluding all sufficiently large walls and their line graphs as induced subgraphs.

The proof of \Cref{thm:tw-biclique} via~\cite{DBLP:journals/corr/abs-2308-05817} is based on Ramsey-type arguments and the binding function is quite large (in particular, it is exponential in $t$).
We now give a simple argument showing that for graphs with bounded degree, the resulting bound on treewidth is linear in $k$ and polynomial in the maximum degree.
We first show that} for graphs with bounded degree, induced matching treewidth is linearly related to the tree-independence number.
As usual, for a graph $G$, we denote by $\Delta(G)$ the maximum degree of a vertex  in $G$.
\end{sloppypar}

\begin{theorem}\label{thm:delta}
For every graph $G$ with at least one edge, it holds that $\tin(G) \leq 2\yw(G) \cdot \Delta(G)^2$.
\end{theorem}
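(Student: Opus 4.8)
The goal is to bound $\tin(G)$ from above in terms of $\yw(G)$ and $\Delta := \Delta(G)$. Take a tree decomposition $\cT = (T, \{X_t\}_{t\in V(T)})$ of $G$ witnessing $\mu(\cT) = \yw(G) =: k$. I will show that this same decomposition already has bounded independence number, namely $\alpha(\cT) \le 2k\Delta^2$. So the whole statement reduces to a purely local claim about a single bag: if $X_t$ contains an independent set $I$ of size $s$, then $G$ contains an induced matching of size at least $s/(2\Delta^2)$ that touches $X_t$; since such a matching has at most $k$ edges, $s \le 2k\Delta^2$.

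**The local claim.** Fix a bag $X := X_t$ and an independent set $I \subseteq X$ with $|I| = s$. Since $G$ has at least one edge and any isolated vertex can be ignored here — wait, vertices of $I$ could be isolated, so first note that if too many vertices of $I$ are isolated in $G$ then $\yw(G)$ being the parameter of a connected-ish structure doesn't immediately help; instead I observe that isolated vertices can simply be deleted, as deleting a vertex does not increase $\yw$ (indeed $\yw$ is induced-minor-monotone by \cref{prop:induced-minors}) and does not increase $\tin$ either, and a graph all of whose vertices are isolated has $\tin \le 1 \le 2k\Delta^2$. So assume every vertex of $G$, in particular every vertex of $I$, has a neighbor. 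For each $v \in I$ pick a neighbor $u(v) \in N(v)$; this gives a set of $s$ edges $e_v := v\,u(v)$, each touching $X$, but they need not form an induced matching. The plan is to greedily extract a large induced sub-collection. Build an auxiliary conflict graph $C$ on vertex set $I$ where $v \sim_C w$ iff the pair $\{e_v, e_w\}$ is not induced, i.e. the four endpoints $v, u(v), w, u(w)$ span more than just those two edges, or the edges share an endpoint ($u(v)=u(w)$, etc.). Every such conflict is witnessed by an edge of $G$ among these four vertices incident to $v$, $u(v)$, $w$, or $u(w)$; since $v, w \in I$ are nonadjacent, the conflict is witnessed by an edge incident to $u(v)$ or $u(w)$ (or a coincidence $u(v) = u(w)$ / $u(v) = w$, but $u(v)\neq w$ since $I$ is independent and $u(v) \in N(v)$... actually $u(v) = w$ is impossible, and $u(v) \in N(w)$ or $v \in N(u(w))$ or $u(v) \in N(u(w))$ or $u(v)=u(w)$ are the remaining cases). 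In every case the conflict is "charged" to an edge of $G$ at $u(v)$ or at $u(w)$. The key counting step: a fixed vertex $x$ of $G$ can serve as $u(v)$ for at most $\Delta$ different $v \in I$ (since $x$ has $\le \Delta$ neighbors), and through $x$ at most $\Delta^2$ conflict-pairs are created ($\le \Delta$ choices of the $v$ with $u(v)=x$, times $\le \Delta$ edges at $x$ reaching into the other pair). Summing over all $x$ that occur as some $u(v)$ — of which there are at most $s$ — gives $|E(C)| \le s\Delta^2$, hence the average degree in $C$ is at most $2\Delta^2$, so by the standard greedy bound $C$ has an independent set $J$ of size at least $s/(2\Delta^2 + 1)$, and I can afford the slightly weaker $s/(2\Delta^2)$ after absorbing constants (or tighten the charging). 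Then $\{e_v : v \in J\}$ is an induced matching in $G$ touching $X$, so $s/(2\Delta^2) \le |J| \le \mu(\cT) \le k$, giving $s \le 2k\Delta^2$.

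**Main obstacle.** The delicate point is the charging argument: making sure every conflict pair is attributed to at least one "hub" vertex $x = u(v)$ (or $u(w)$), and that each hub is responsible for only $O(\Delta^2)$ pairs, so that the total conflict count is $O(s\Delta^2)$ rather than $O(s^2)$. One must be careful with the degenerate overlaps ($u(v) = u(w)$, $u(v) \in N(u(w))$, $u(v) \in N(w)$) and check that in each the pair is counted at a hub with bounded multiplicity; choosing the hub to be, say, the $u$-endpoint with smaller label breaks symmetry cleanly. A secondary subtlety is that $J$ being independent in $C$ guarantees pairwise-induced edges, but one should double-check that three or more of the $e_v$ together cannot create a chord not seen pairwise — they cannot, since "induced matching" is characterized pairwise (no edge between endpoints of distinct matching edges, and the edges are disjoint), exactly the condition encoded in $C$. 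With the constant in the greedy step possibly contributing a $+1$, I would either state the bound as $2\yw(G)\cdot\Delta(G)^2$ by being slightly more generous in the charging (e.g. bounding $|E(C)| \le s(\Delta^2-1) \le s\Delta^2 - s$ and using $s \le k\cdot$ something) or simply note $s/(2\Delta^2+1) \le k$ implies $s \le 2k\Delta^2 + k \le 2k\Delta^2\cdot\frac{2\Delta^2+1}{2\Delta^2}$... cleanest is to verify the sharper charging $|E(C)| \le \frac{1}{2}s(2\Delta^2-2)$ or just quote the bound as stated, since the paper only needs a linear-in-$k$, polynomial-in-$\Delta$ dependence and later uses it inside $\Oh(\cdot)$ for \cref{cor:treewidth} anyway.
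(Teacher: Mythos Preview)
Your overall strategy is right and matches the paper's: take a tree decomposition $\cT$ with $\mu(\cT)=\yw(G)=:k$, fix a bag $X_t$ and an independent set $I\subseteq X_t$ of size $s$, pick a neighbor $u(v)$ for each $v\in I$, and extract from the edges $e_v=vu(v)$ a large induced matching touching $X_t$. The paper does the same local extraction, only phrased as an iterative discarding procedure (pick an edge, delete everything within distance~$2$, repeat) and applied after first enlarging each bag to $N[X_t]$; that enlargement is not actually needed, so your more direct version is fine.

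The genuine gap is the charging bound $|E(C)|\le s\Delta^2$. Your per-hub count of $\Delta^2$ does not hold: for a fixed hub $x$ and a fixed edge $xz$ incident to $x$, the number of $w\in I$ with $z\in\{w,u(w)\}$ is not at most $1$ but up to $\Delta+1$ (namely $w=z$, or $u(w)=z$ for any of the up to $\Delta$ vertices $w\in N(z)\cap I$). So the product $\Delta\times\Delta$ undercounts. Even if the bound $|E(C)|\le s\Delta^2$ were true, it only gives average degree $\le 2\Delta^2$ and hence $\alpha(C)\ge s/(2\Delta^2+1)$, which yields $s\le 2k\Delta^2+k$, not the stated $2k\Delta^2$.

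The clean fix is to bound the \emph{maximum} degree of $C$ instead. Observe that $v\mapsto e_v$ is injective (since $v$ is an endpoint of $e_v$ and $I$ is independent) and that $v\sim_C w$ exactly when $e_v$ and $e_w$ are adjacent in $L^2(G)$; thus $C$ embeds as an induced subgraph of $L^2(G)$. For an edge $e=xy$, any edge adjacent to $e$ in $L^2(G)$ has an endpoint in $N[x]\cup N[y]$, a set of size at most $2\Delta$, so there are at most $2\Delta\cdot\Delta-1=2\Delta^2-1$ such edges. Hence $\Delta(C)\le 2\Delta^2-1$ and the greedy bound gives $\alpha(C)\ge s/(2\Delta^2)$ on the nose, yielding exactly $s\le 2k\Delta^2$. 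With this correction your argument is complete and in fact slightly cleaner than the paper's, since it avoids both the bag enlargement and the invariant maintenance in the iterative procedure.
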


\begin{proof}
    Let $G$ be a graph with at least one edge.
    If $G$ has an isolated vertex, we can delete it without changing the value of any of the three parameters $\tin(G)$, $\yw(G)$, and $\Delta(G)$, since they are all strictly positive.
    We may therefore assume that $G$ has no isolated vertices.
    To shorten the notation we will denote $\Delta \coloneqq \Delta(G)$.
    Let $\cT = (T,\{X_t\}_{t \in V(T)})$ be a tree decomposition of $G$ witnessing $\yw(G)$.
    For all $t\in V(T)$, let
    $X'_t = N[X_t]$ and let $\cT' = (T,\{X'_t\}_{t \in V(T)})$.
    It is straightforward to verify that $\cT'$ is a tree decomposition of $G$.
    We want to argue that the independence number of each bag of $\cT'$ is at most $2\yw(G) \cdot \Delta^2$.

    Consider a bag $X'_t$ of $\cT'$.
    It consists of $X_t$ and $Y = N(X_t)$.
    Let $I$ be a largest independent set in $G[X'_t]$ and let $k$ be its size.
    We iteratively construct an induced matching $M$ in $G$ whose every edge intersects $X_t$ and $|M| \geq k/(2\Delta^2)$.
    As $|M| \leq \yw(G)$, this will give $|I|=k \leq \yw(G) \cdot 2\Delta^2$.

    At the beginning, the matching $M$ is empty.
    In each step of the procedure we add one edge to $M$ and discard some set of vertices.
    The procedure stops when all the vertices in $I$ are discarded.
    At each step of the procedure we want to preserve the following properties:
    \begin{enumerate}
        \item Each $v \in I$ that is not discarded has a non-discarded neighbor in $X'_t$.
        \item Each $v \in I \cap Y$ that is not discarded has a non-discarded neighbor in $X_t$.
    \end{enumerate}
    Clearly the above properties are satisfied at the beginning, before any vertex is discarded, as the graph has no isolated vertices and every vertex from $I\cap Y$ has a neighbor in $X_t$.

    Now pick any vertex $v$ in $I$ that was not discarded yet.
    By properties 1.~and 2.~$v$ has a non-discarded neighbor $w$, and if $v \in Y$, then $w$ can be chosen to be in $X_t$.
    We then include the edge $vw$ in the induced matching $M$.

    Next, we discard the following vertices from $X'_t$:
    \begin{itemize}
        \item $v$ and $w$,
        \item every non-discarded neighbor of $v$ or $w$, and
        \item every non-discarded vertex in $I$ that is at distance 2 from $\{v,w\}$ in $G[X'_t]$.
    \end{itemize}
    Notice that since we removed $N[\{u,v\}] \cap X'_t$, the edges chosen in every subsequent iteration will indeed form an induced matching.
    In each iteration we discard at most $2\Delta^2$ vertices from $I$, so,
    provided that properties 1. and 2. are preserved, the number of iterations we perform to build an induced matching $M$ of the desired size is at most $k/(2\Delta^2)$.
    Note also that every edge of $M$ intersects~$X_t$.

    Finally, we prove that at each step of the procedure properties 1. and 2. hold.
    To this end, it suffices to show that for each $u \in I$ that is not discarded, its entire neighborhood in $X'_t$ is not discarded.
    Suppose for a contradiction that there exists some vertex $u' \in N(u) \cap X'_t$ that is discarded.
    Clearly $u' \notin I$, so $u'$ must be equal or adjacent to $v$ or $w$.
    However, this means that $u$ is at distance at most 2 from $\{v,w\}$ in $X'_t$, so $u$ got discarded, a contradiction.
\end{proof}

\begin{corollary}\label{cor:treewidth}
For every graph $G$ with at least one edge, it holds that
\[\tw(G) \leq 2\yw(G) \cdot \Delta(G)^2(\Delta(G)+1)\,.\]
\end{corollary}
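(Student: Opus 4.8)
The plan is to combine \Cref{thm:delta} with the elementary observation that in a graph of bounded maximum degree, bounded independence number forces bounded order, and then apply this bag-by-bag to an optimal tree decomposition.

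First I would recall that for any graph $H$ with maximum degree at most $\Delta$, one has $|V(H)| \le \alpha(H)\cdot(\Delta+1)$. Indeed, fix a maximum independent set $I$ of $H$; by maximality it is dominating, so $V(H) = I \cup N_H(I)$, and $|N_H(I)| \le \sum_{v\in I}\deg_H(v) \le |I|\cdot \Delta$, giving $|V(H)| \le |I|(\Delta+1) = \alpha(H)(\Delta+1)$.

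Next I would take a tree decomposition $\cT = (T,\{X_t\}_{t\in V(T)})$ of $G$ witnessing $\tin(G)$, i.e.\ with $\alpha_G(\cT) = \tin(G)$. Applying the bound above to each induced subgraph $G[X_t]$ (whose maximum degree is at most $\Delta(G)$) yields $|X_t| \le \alpha(G[X_t])\cdot(\Delta(G)+1) \le \tin(G)\cdot(\Delta(G)+1)$ for every $t$. Hence the width of $\cT$, and therefore $\tw(G)$, is at most $\tin(G)\cdot(\Delta(G)+1)$ (in fact one less, but this suffices). Finally, plugging in the bound $\tin(G) \le 2\yw(G)\cdot\Delta(G)^2$ from \Cref{thm:delta} gives
\[
\tw(G) \le \tin(G)\cdot(\Delta(G)+1) \le 2\yw(G)\cdot\Delta(G)^2(\Delta(G)+1),
\]
as claimed.

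There is essentially no real obstacle here: the argument is a short composition of \Cref{thm:delta} with a standard domination estimate. The only point requiring a (trivial) sanity check is that $G$ having at least one edge ensures the parameters are positive so that \Cref{thm:delta} applies, and that removing isolated vertices — if one wanted the cleaner statement — does not affect $\tw$, $\tin$, or $\Delta$; but since the inequality is already in the desired form, no such reduction is even needed.
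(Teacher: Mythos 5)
Your proof is correct and takes essentially the same route as the paper: bound the size of each bag of a suitable tree decomposition by independence number times $(\Delta+1)$ and then invoke \Cref{thm:delta}. The only cosmetic difference is that the paper derives $|X_t| \le \alpha(G[X_t])\cdot(\Delta+1)$ via the greedy bound $\chi \le \Delta+1$ and $|V|\le\alpha\chi$, whereas you use the domination argument for a maximal independent set; both give the identical estimate.
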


\begin{proof}
Let $G$ be a graph with at least one edge, let $\Delta$ be its maximum degree and let $k$ be its induced matching treewidth.
By \Cref{thm:delta}, $G$ admits a tree decomposition in which each bag induces a subgraph with independence number at most $2k\Delta^2$.
Since the maximum degree of any such subgraph is at most $\Delta$, its chromatic number is at most $\Delta+1$.
Consequently, every bag has at most $2k\Delta^2(\Delta+1)$ vertices.
\end{proof}

\subsection{Two graph families with unbounded induced matching treewidth}

We conclude this section by identifying two graph families with unbounded maximum degree and unbounded induced matching treewidth.
For the proofs, we need the following well-known property of tree decompositions (see, e.g.,~\cite{dallard2022firstpaper}).

\begin{lemma}\label{lem:nvinbag}
For every graph $G$ and every tree decomposition $\cT$ of $G$, there exists a vertex $v\in V$ such that $N[v]$ is contained in some bag of $\cT$.
\end{lemma}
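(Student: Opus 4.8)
The plan is to root the tree $T$ of the given decomposition $\cT = (T,\{X_t\}_{t\in V(T)})$ at an arbitrary node and to exploit the fact that for each vertex $v\in V(G)$ the set $T_v := \{t\in V(T) : v\in X_t\}$ induces a connected subtree of $T$. We may assume $G$ is non-null, as otherwise there is nothing to prove. For each $v$, let $\mathrm{top}(v)$ be the node of $T_v$ of minimum depth; since $T_v$ is connected, $\mathrm{top}(v)$ is an ancestor (in the weak sense, allowing equality) of every node of $T_v$.

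First I would pick a vertex $v$ for which $\mathrm{top}(v)$ is as deep as possible and set $t := \mathrm{top}(v)$; such $v$ exists because $V(G)$ is finite and nonempty. The claim is that $N[v]\subseteq X_t$. Since $t\in T_v$ we already have $v\in X_t$, so it suffices to show $u\in X_t$ for every neighbour $u$ of $v$. As $uv\in E(G)$, some bag $X_s$ contains both $u$ and $v$, that is, $s\in T_u\cap T_v$.

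The key step is to deduce that $t$ itself lies in $T_u$. Because $s\in T_v$, the node $t=\mathrm{top}(v)$ is an ancestor of $s$; because $s\in T_u$, the node $\mathrm{top}(u)$ is an ancestor of $s$. Two ancestors of the same node are comparable in the ancestor order, and by the maximality in the choice of $v$ we have $\mathrm{depth}(\mathrm{top}(u)) \le \mathrm{depth}(t)$, so $\mathrm{top}(u)$ is an ancestor of $t$. Hence $t$ lies on the tree path from $\mathrm{top}(u)$ to $s$; this path is contained in the connected set $T_u$, so $t\in T_u$, i.e. $u\in X_t$. This yields $N[v]\subseteq X_t$ and finishes the proof.

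The main (and essentially only) obstacle is the ancestor bookkeeping in the key step: one must check that $\mathrm{top}(v)$ is an ancestor of $s$ using connectivity of $T_v$ and minimality of depth of $\mathrm{top}(v)$, that $\mathrm{top}(u)$ is an ancestor of $t$ rather than the reverse (this is exactly where maximizing $\mathrm{depth}(\mathrm{top}(v))$ over all vertices is used), and that $t$ consequently lies on the path inside $T_u$. An alternative route avoiding rooting is to repeatedly delete any leaf $\ell$ of $T$ whose bag is contained in that of its unique neighbour, which keeps $\cT$ a tree decomposition and never creates or enlarges bags; if the process ends at a single node its bag is all of $V(G)$ and any vertex works, and otherwise any leaf $\ell$ has a private vertex $v\in X_\ell$ with $T_v=\{\ell\}$, forcing every neighbour of $v$ into $X_\ell$. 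I would present the rooted argument as the proof and mention this variant as a remark.
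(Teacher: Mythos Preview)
The paper does not actually prove this lemma; it is stated as a well-known property of tree decompositions with a citation to~\cite{dallard2022firstpaper}. Your proof is correct: the rooted argument that picks a vertex $v$ with $\mathrm{top}(v)$ of maximum depth is a standard and clean way to establish the result, and the ancestor bookkeeping you spell out is sound (connectivity of $T_v$ gives a unique minimum-depth node, comparability of two ancestors of $s$ plus the extremal choice forces $\mathrm{top}(u)$ above $t$, and then $t$ lies on the $\mathrm{top}(u)$--$s$ path inside $T_u$). The alternative leaf-deletion argument you sketch is also correct and is essentially the usual ``reduced tree decomposition'' proof; either would be acceptable here.
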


For a positive integer $n$, the \emph{$n$-dimensional hypercube graph} is the graph $Q_n$ with vertex set $\{0,1\}^n$ in which two vertices $x = (x_1,\ldots, x_n)$ and $y= (y_1,\ldots, y_n)$ are adjacent if and only if there exists a unique $i\in \{1,\ldots, n\}$ such that $x_i\neq y_i$.

\begin{proposition}
The $n$-dimensional hypercube graph $Q_n$ has $\yw(Q_n)\geq \lfloor\frac{n}{2}\rfloor$.
\end{proposition}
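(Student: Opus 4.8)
The plan is to reduce, via \cref{lem:nvinbag}, to exhibiting an induced matching of size $\lfloor n/2\rfloor$ all of whose edges touch a single bag. So let $\cT=(T,\{X_t\}_{t\in V(T)})$ be an arbitrary tree decomposition of $Q_n$. By \cref{lem:nvinbag} there are a vertex $v$ and a node $t$ with $N[v]\subseteq X_t$, and since $Q_n$ is vertex-transitive we may assume $v=\mathbf 0$. Then $N[v]=\{\mathbf 0,e_1,\dots,e_n\}$, where $e_i\in\{0,1\}^n$ denotes the vertex having a single $1$, in coordinate $i$; the crucial consequence is that every edge of $Q_n$ incident with a vertex of $N[v]$ touches the bag $X_t$.

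First I would write down the explicit candidate
\[
M=\bigl\{\,\{e_{2i-1},\,e_{2i-1}+e_{2i}\}\ :\ 1\le i\le\lfloor n/2\rfloor\,\bigr\},
\]
where $e_{2i-1}+e_{2i}$ denotes the vertex with $1$'s exactly in coordinates $2i-1$ and $2i$. Each pair $\{e_{2i-1},e_{2i-1}+e_{2i}\}$ is an edge of $Q_n$ (its two endpoints differ only in coordinate $2i$), and it touches $X_t$ because $e_{2i-1}\in N[v]\subseteq X_t$; moreover the $2\lfloor n/2\rfloor$ listed vertices are pairwise distinct, since they have pairwise distinct supports. Hence $M$ is a matching of size $\lfloor n/2\rfloor$ every edge of which touches a bag of $\cT$.

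The only point calling for a short verification is that $M$ is an \emph{induced} matching. For $i\ne j$, the four vertices $e_{2i-1}$, $e_{2i-1}+e_{2i}$, $e_{2j-1}$, $e_{2j-1}+e_{2j}$ have supports contained in the disjoint coordinate blocks $\{2i-1,2i\}$ and $\{2j-1,2j\}$, so any vertex of the $i$-th pair and any vertex of the $j$-th pair differ in at least two coordinates, and are therefore non-adjacent in $Q_n$. Thus $M$ is an induced matching touching $X_t$, which gives $\mu_{Q_n}(\cT)\ge|M|=\lfloor n/2\rfloor$; as $\cT$ was an arbitrary tree decomposition of $Q_n$, we conclude $\yw(Q_n)\ge\lfloor n/2\rfloor$. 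I do not foresee a genuine obstacle here: the argument is elementary, and the only care needed is the disjoint-support bookkeeping in the last step (one could equally well avoid invoking vertex-transitivity by running the same computation relative to an arbitrary $v$, flipping coordinates of $v$ in the blocks $\{2i-1\}$ and $\{2i-1,2i\}$).
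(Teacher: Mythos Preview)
Your proof is correct and follows essentially the same approach as the paper: invoke \cref{lem:nvinbag}, use vertex-transitivity to assume $v=\mathbf 0$, and exhibit the induced matching $\{\,\{e_{2i-1},e_{2i-1}+e_{2i}\}\,\}$ touching $N[v]$ (the paper uses the trivially relabeled version $\{\,\{e_{2i},e_{2i}+e_{2i-1}\}\,\}$). Your verification that the matching is induced is in fact more detailed than the paper's, which simply asserts it.
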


\begin{proof}
Consider an arbitrary tree decomposition $\cT = (T,\{X_t\}_{t\in V(T)})$ of $Q_n$.
By \cref{lem:nvinbag}, there exists a bag $X_t$ that contains $N[v]$ for some $v\in V(Q_n)=\{0,1\}^n$.
By the symmetry of $Q_n$, we may assume without loss of generality that $v = (0,\ldots,0)$.

Let $v_i$ denote the vertex whose only non-zero entry is at the $i$-th position and let $w_{i,j}$ denote the vertex whose only non-zero entries are the $i$-th and $j$-th position.

An induced matching $M$ touching $N[v]$ (and thus touching the bag) of size $\lfloor\frac{n}{2}\rfloor$ is as follows
\[M=\left\{v_{2i}w_{2i,2i-1} \large\mid i\in \left\lfloor\frac{n}{2}\right\rfloor\right\}.\]
This implies that $\mu(\cT)$ is at least $\lfloor\frac{n}{2}\rfloor$.
Since $\cT$ was arbitrary, we infer that $\yw(Q_n)\geq \lfloor\frac{n}{2}\rfloor$.
\end{proof}

Our final result gives a simple construction of a family of $P_4$-free graphs with unbounded induced matching number.
Recall that for a positive integer $n$, we denote by $nK_2$ the disjoint union of $n$ copies of $K_2$.

\begin{proposition}\label{prop:matching-biclique}
Let $n$ be a positive integer and let $G_n$ denote the graph obtained from two disjoint copies of the graph $nK_2$ by adding all possible edges between them.
Then $\yw(G_n)\geq n$.
\end{proposition}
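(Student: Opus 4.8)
The plan is to show that any tree decomposition $\cT = (T, \{X_t\}_{t\in V(T)})$ of $G_n$ must have a bag $X_t$ that admits an induced matching of size at least $n$ touching it. Write $A = \{a_1 a_1', \ldots, a_n a_n'\}$ and $B = \{b_1 b_1', \ldots, b_n b_n'\}$ for the edge sets of the two copies of $nK_2$, so that $V(G_n) = \{a_i, a_i' : i \in [n]\} \cup \{b_j, b_j' : j \in [n]\}$, with all edges present between the $a$-part and the $b$-part. First I would invoke \cref{lem:nvinbag}: there is a vertex $v \in V(G_n)$ with $N[v]$ contained in some bag $X_t$. By symmetry of the construction we may assume $v = a_1$, so $N[a_1] = \{a_1, a_1'\} \cup \{b_j, b_j' : j \in [n]\} \subseteq X_t$; in particular $X_t$ contains the entire $b$-part together with one edge $a_1 a_1'$ of the $a$-part.

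The key combinatorial step is to exhibit an induced matching of size $n$ in $G_n$ all of whose edges touch $X_t$. The natural candidate is $M = \{a_1 a_1'\} \cup \{b_j b_j' : 2 \le j \le n\}$: this is a set of $n$ pairwise disjoint edges. Each edge lies entirely inside $X_t$ (the $b$-edges are in the $b$-part, which is contained in $X_t$, and $a_1 a_1' \in X_t$ as well), so $M$ certainly touches the bag. The one thing to check is that $M$ is \emph{induced}, i.e.\ there is no edge of $G_n$ between an endpoint of one edge of $M$ and an endpoint of another. The edges of $G_n$ are: the matching edges within each $nK_2$, and all edges between the $a$-part and the $b$-part. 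No two distinct $b$-edges in $M$ are joined, since the $b$-part induces exactly $nK_2$ and $b_j b_j'$, $b_k b_k'$ for $j \ne k$ lie in different components. And there is no edge between $\{a_1, a_1'\}$ and $\{b_j, b_j'\}$ — wait, this is false: every $a$-vertex is adjacent to every $b$-vertex. So $M$ as stated is \emph{not} induced, and I need to be more careful.

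The correct choice is to pick the matching entirely within one of the two copies of $nK_2$. Since $X_t$ contains the whole $b$-part, take $M = \{b_j b_j' : j \in [n]\}$, which is an induced matching of size $n$ in $G_n$ (the $b$-part induces $nK_2$, a graph which is itself an induced matching), and every edge of $M$ lies inside $X_t$, hence touches it. This gives $\mu_{G_n}(\cT) \ge n$. Since $\cT$ was an arbitrary tree decomposition of $G_n$, we conclude $\yw(G_n) \ge n$, as claimed. The only subtlety in the whole argument — the point I'd flag as the place to be careful — is exactly this: one must route the induced matching through a part that induces $nK_2$ (namely the $b$-part, which by \cref{lem:nvinbag} we have forced into a single bag), rather than mixing vertices across the complete bipartite join, since the join destroys independence.
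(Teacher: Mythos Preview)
Your proof is correct and follows essentially the same approach as the paper: invoke \cref{lem:nvinbag} to find a bag containing $N[v]$ for some vertex $v$, observe that this closed neighborhood contains an entire copy of $nK_2$ (the side opposite to $v$), and note that this copy is itself an induced matching of size $n$ touching the bag. Your false start with the mixed matching is a useful cautionary remark but can be omitted from the final write-up.
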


\begin{proof}
Let $\cT = (T, \{X_t\}_{t\in V(T)})$ be an arbitrary tree decomposition of $G_n$.
By \cref{lem:nvinbag}, there exists some bag $X_t$ containing $N[v]$ for some $v\in V(G_n)$. By definition of $G_n$, the bag $X_t$ contains all the vertices of one of the copies of $nK_2$, which form an induced matching of size~$n$.
\end{proof}
\end{sloppypar}

\section{Conclusion}\label{sec:conclusion}
The obvious direction for further research is to show \cref{conjecture}.
A somewhat easier, but still very interesting problem, would be to show the following result, which is a significant common generalization of both \cref{thm:fvs} and \cref{cor:ptas}.

\begin{conjecture}\label{con:treewidth}
Let $k,r \in \mathbb{N}$ be fixed.
Let $G$ be a graph with induced matching treewidth at most $k$, equipped with a weight function $\wei\colon V(G) \to \Q_+$.
In polynomial time we can find a maximum-weight induced subgraph of $G$ with treewidth at most $r$.
\end{conjecture}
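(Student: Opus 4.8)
The plan is to prove \cref{con:treewidth} by lifting the strategy behind \cref{thm:fvs} from induced forests (treewidth $\le 1$) to induced subgraphs of treewidth at most $r$. As there, I would start from a nice tree decomposition $\cT=(T,\{X_t\}_{t\in V(T)})$ with $\mu(\cT)=\Oh(k)$, obtained via \cref{thm:approximate-yolov}, and run a bottom-up dynamic programming over $\cT$. The role played in \cref{lem:ff} by the signature $(Z,\pi)$ would now be played by a richer object: the pair recording $Z=F\cap X_t$ together with a bounded-size \cmsotwo-type (in the sense of \cref{prop:mso-type}) of an appropriate boundaried version of $G[F\cap V_t]$ — crucially with a small set such as $\skel{F}\cap X_t$, not the whole of $Z$, playing the role of the boundary — augmented with enough data to certify that the global solution has treewidth $\le r$. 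Given such an encoding, the combinatorial heart of the argument is an analogue of \cref{lem:ff}: for a fixed ``canonical'' maximum-weight induced subgraph $F$ of treewidth $\le r$, at every node $t$ there are only polynomially many possible signatures, and they can be enumerated in time $n^{\Oh(k)}$. Feeding these enumerated signatures into the standard bounded-treewidth dynamic programming for ``induced subgraph of treewidth $\le r$'' (in the flavour of \cref{lem:algo-fvs} and the algorithm of \cref{thm:cmsomain}) would then finish the proof, exactly as \cref{lem:ff,lem:algo-fvs} combine to give \cref{thm:fvs}.

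The substance is therefore the structural lemma. I would attempt it via a skeleton/limb decomposition of $F$ generalizing the one in \cref{sec:fvs}. Fix a width-$\le r$ tree decomposition $\mathcal{D}$ of $H:=G[F]$ and, guided by it, designate a set $\skel{F}\subseteq V(H)$ of ``structurally important'' vertices so that (i) every connected component of $H-\skel{F}$ has at most $c(r)$ vertices, and (ii) for any tree decomposition $\cT$ of $G$ with $\mu(\cT)\le k$, one has $|\skel{F}\cap X_t|\le f(k,r)$. Property (ii) should follow, as in \cref{clm:sizeS}, by an induced-matching extraction argument: a large intersection $\skel{F}\cap X_t$ must expose many pairwise non-adjacent edges of $H$ touching $X_t$, and since $H$ is induced in $G$ these form an induced matching in $G$, contradicting $\mu(\cT)\le k$. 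Given (i), the set $\limb{F}:=F\setminus\skel{F}$ is a disjoint union of connected induced subgraphs of $G$ on at most $c(r)$ vertices that are pairwise non-adjacent in $G$, i.e., it corresponds to an independent set of $G^\circ[\mathcal{H}_{c(r)}]$, where $\mathcal{H}_{c(r)}$ is the set of all connected induced subgraphs of $G$ on at most $c(r)$ vertices. By \cref{lem:H-G}, $\yw(G^\circ[\mathcal{H}_{c(r)}])\le\yw(G)\le k$, so Yolov's enumeration of how maximal independent sets meet a bag (\cref{prop:yolov-misets}, applied to $G^\circ[\mathcal{H}_{c(r)}]$, whose vertex set has size $n^{\Oh(c(r))}$) gives a polynomial-size family of candidates for how $\limb{F}$ can intersect $X_t$. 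Guessing $\skel{F}\cap N[X_t]$ explicitly (small, by a variant of (ii)), guessing the attachment of each limb-component to it, and reading off $Z$ and the boundaried type would complete the enumeration.

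Two places look genuinely hard, and I expect the first of them to be the real obstacle. First, one must pin down a \emph{canonical} choice of $\skel{F}$ for bounded-treewidth graphs for which both (i) and especially (ii) hold; in the forest case this was just ``degree $\ge 2$ plus representatives'', but for treewidth $\ge 2$ there is no obvious degree-based notion, and a bad choice — e.g., one that is not ``spread out'' across the bags of $\cT$ — would violate (ii). One would likely need a suitably normalized, minimally-branching tree decomposition $\mathcal{D}$ of $H$ (or a canonical one à la Carmesin et al.) together with a thread-contraction analysis mirroring \cref{clm:sizeS} but now bounding branch bags, adhesion sets, and long ``paths of bags'' of $\mathcal{D}$. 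Second, the impostor step of \cref{lem:ff} used that ``$F\cup\{v\}$ is not a forest'' is a \emph{local} event ($v$ sees two vertices of one component of $F$); its analogue here, ``$\tw(G[F\cup\{v\}])>r$'', is not local, so certifying that a guessed limb-candidate genuinely cannot belong to the optimum — and hence that the enumerated signature set still contains the signature of a maximum-weight solution — requires a different device, perhaps carrying the treewidth certificate inside the \cmsotwo-type maintained by the dynamic programming and arguing about a lexicographically canonical optimum rather than an inclusion-maximal one. Resolving (i) and (ii) for general $r$ is, I believe, the crux; the case $r=2$ (induced subgraphs of treewidth $\le 2$) would be a natural first target.
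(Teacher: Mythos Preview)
This statement is \cref{con:treewidth}, an \emph{open conjecture} in the paper; there is no proof to compare your proposal against. The paper explicitly lists it as unresolved and remarks only that ``the right way of approaching \cref{con:treewidth} is to prove an analogue of \cref{prop:yolov-misets} and \cref{lem:ff}\,\ldots\ However, the space of states in these algorithms is quite complicated \ldots\ and it is far from obvious how to re-interpret it in the current setting.''

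Your proposal is not a proof but a research plan, and you are candid about this. The plan is well aligned with the paper's own suggested direction: lift \cref{lem:ff} from forests to treewidth-$\le r$ graphs via a skeleton/limb decomposition, enumerate signatures, and plug into the bounded-treewidth DP. Your idea of treating the limbs as an independent set in $G^\circ[\mathcal{H}_{c(r)}]$ and invoking \cref{lem:H-G} plus \cref{prop:yolov-misets} is a natural and new ingredient not mentioned in the paper's discussion.

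The two obstacles you flag are exactly the ones that make this a conjecture rather than a theorem. A few further cautions. First, your parenthetical ``Guessing $\skel{F}\cap N[X_t]$ explicitly (small, by a variant of (ii))'' is too optimistic: already for forests, $\widehat S = \skel{F}\cap N[X_t]$ is \emph{not} small, and the paper needs the extra step of extracting a bounded-size $Q\subseteq\widehat S$ (\cref{clm:sizeQ}); an analogue of this step for general $r$ would need its own argument. Second, carrying a \cmsotwo type whose boundary is only $\skel{F}\cap X_t$ rather than all of $Z = F\cap X_t$ is delicate: at join nodes you must glue two partial solutions along their full interface in $X_t$, and it is not clear that the skeleton alone determines enough of the type to do this correctly once limb-components can have up to $c(r)$ vertices straddling the bag. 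Third, as you note, the impostor argument in \cref{lem:ff} relies on the purely local criterion ``$v$ has two neighbours in one component of $F$''; for $r\ge 2$ there is no such local witness for ``adding $v$ raises treewidth above $r$'', and replacing inclusion-maximality by a lexicographic canonicalisation, while plausible, would require showing that the canonical optimum still has a polynomially enumerable signature at each bag. None of these gaps is obviously fatal, but none is presently closed either --- which is precisely why the statement remains a conjecture.
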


We believe that the right way of approaching \cref{con:treewidth} is to prove an analogue of \cref{prop:yolov-misets} and \cref{lem:ff}:
 for each node of a tree decomposition of bounded induced matching treewidth, there is only a polynomial number of states of the natural dynamic programming algorithm (developed for bounded-treewidth graphs~\cite{DBLP:journals/jal/BodlaenderK96}) that represent an inclusion-maximal induced subgraph of treewidth at most $r$. However, the space of states in these algorithms is quite complicated (much more complicated than when finding a largest induced forest) and it is far from obvious how to re-interpret it in the current setting.

\medskip
Another interesting question is to try to find a ``better parameter'' than induced matching width.
If we think of a parameter useful for solving \MWIS, it is natural to expect that for a simple graph such that $K_{n,n} \odot K_1$, i.e., a biclique with additional pending edge added to each vertex, such a parameter would be bounded. However, by \cref{lem:Yolov}, we have ${\yw(K_{n,n} \odot K_1)} = \tin(K_{n,n}) = n$.

Recently, Bergougnoux et al.~\cite{MR4657707} introduced \emph{one-sided mim-width} (or \emph{o-mim-width}), which can be seen as a unification of tree-independence number and \emph{mim-width}, which is another well-established graph width parameter useful in solving \MWIS and many other problems~\cite{DBLP:conf/soda/BergougnouxDJ23}.
{Just like the definition of induced matching treewidth, the definitions of mim-width and one-sided mim-width are} also based on the size of a maximum induced matching in certain subgraphs of the graph {(however, instead of tree decompositions they are based on branch decompositions)}.
While induced matching treewidth lower-bounds the tree-independence number,
it turns out that it is incomparable with one-sided mim-width (and mim-width).
{Bergougnoux et al.~\cite{MR4657707} proved a result analogous to \cref{thm:fvs} for graphs given with a decomposition with bounded one-sided mim-width.}

{
Another interesting parameter based on induced matchings and that in fact captures all these notions (o-mim-width, mim-width, tree-independence number, and induced matching treewidth) is sim-width.
As already mentioned, bounded induced matching treewidth implies bounded sim-width, but not vice versa; for example, the graphs from \Cref{prop:matching-biclique}, which have unbounded induced matching treewidth, have bounded sim-width (see, e.g.,~\cite[Proposition 3.6]{MR3721445}).
One of the main open problems related to sim-width, first asked in~\cite{MR3721445}, is whether \MWIS is solvable in polynomial time for graphs with bounded sim-width.
If the answer is positive, then results from~\cite{MR4563598,DBLP:journals/corr/abs-2308-05817} imply that the same is true for \textsc{Max Weight Distance-$d$ Packing} for all even $d$.
The drawback however is that a branch decomposition of small sim-width is needed in input, as it is another open problem to determine whether such decompositions can be obtained in polynomial time.

The relationships between all these parameters and some others are summarized in \Cref{fig:parameters}.

\begin{figure}[ht]
\centering
\includegraphics[width = 0.8\textwidth]{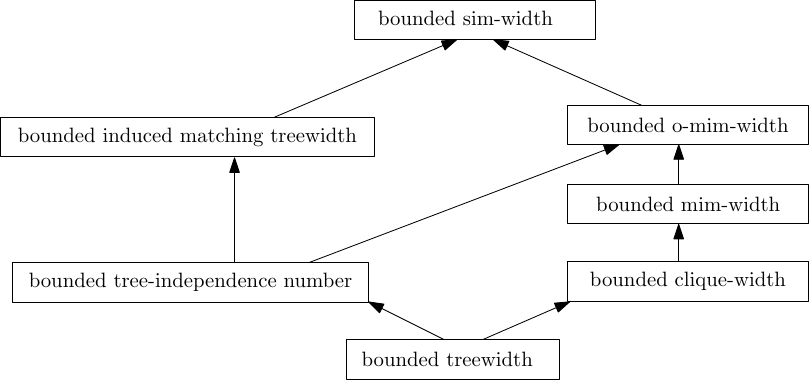}
\caption{Relations between some graph width parameters.}\label{fig:parameters}
\end{figure}

}

\medskip
Recall that classes of bounded tree-independence number are $(\tw,\omega)$-bounded, i.e., treewidth is bounded by a function of the clique number~\cite{dallard2021treewidth}. By \cref{thm:tw-biclique} we know that this is also the case for classes of bounded induced matching treewidth that additionally exclude some fixed biclique as a subgraph.
Furthermore, excluding bicliques is necessary, as $\tw(K_{n,n})=n$ and $\yw(K_{n,n})=1$.
We believe that the following result, which is a common generalization of \cref{thm:delta,thm:tw-biclique}, could also hold.

\begin{conjecture}\label{conj:imt-tin}
For any two integers $k,t \in \mathbb{N}$ there exists an integer $r$ such that each graph with induced matching treewidth at most $k$ and no induced subgraph isomorphic to $K_{t,t}$ has tree-independence number at most $r$.
\end{conjecture}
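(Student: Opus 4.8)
The statement of \cref{conj:imt-tin} is a common generalization of \cref{thm:delta} (the bounded-degree case, where absence of $K_{t,t}$ as a subgraph is automatic) and of \cref{thm:tw-biclique} (where $K_{t,t}$ is excluded as a \emph{subgraph} rather than merely as an induced subgraph), so any proof must use the induced-matching structure more carefully than either of these. The plan is to fix a tree decomposition $\cT=(T,\{X_t\}_{t\in V(T)})$ of $G$ with $\mu_G(\cT)=\yw(G)\le k$ and to show that the ``expanded'' decomposition $\cT'=(T,\{N[X_t]\}_{t\in V(T)})$ — which, as in the proof of \cref{thm:delta}, is again a tree decomposition of $G$ — satisfies $\alpha_G(\cT')\le r$ for some $r=r(k,t)$; this yields $\tin(G)\le r$. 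The whole problem then reduces to a local statement: if for some node $t$ the set $I\subseteq N[X_t]$ is independent with $|I|=m$ large, then $G$ contains either an induced matching of size $k+1$ touching $X_t$, or an induced $K_{t,t}$ — both of which are forbidden.

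To attack this local statement, I would first assign to every $v\in I$ a neighbour $n(v)$, chosen inside $X_t$ whenever $v\notin X_t$ (possible, since then $v\in N(X_t)$), so that each edge $v\,n(v)$ touches $X_t$; note $n(v)\notin I$ since $I$ is independent. Form the \emph{conflict graph} $C$ on vertex set $I$ by making $v,v'$ adjacent exactly when the sets $\{v,n(v)\}$ and $\{v',n(v')\}$ either share a vertex or are joined by an edge of $G$. Then an independent set of size $s$ in $C$ is precisely an induced matching of size $s$ touching $X_t$, so $\alpha(C)\le k$; by Ramsey's theorem $C$ contains a clique $J$ of size growing with $m$ (of order $m^{1/k}$). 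For $v,v'\in J$ the edges $v\,n(v)$ and $v'\,n(v')$ conflict, and since $I$ is independent the conflict is of one of the types $n(v)\sim n(v')$, $n(v)\sim v'$, or $v\sim n(v')$. A two-colouring of the pairs of $J$ according to whether the conflict can be realized ``on the witness side'' ($n(v)\sim n(v')$) or not, a further Ramsey step, and a Dilworth-type argument to make the ``cross'' adjacencies consistent should, in the principal case where the chosen witnesses are pairwise non-adjacent and distinct, produce a large independent $V\subseteq J$ and a large independent $W\subseteq\{n(v):v\in V\}$ completely joined to $V$ in $G$, i.e.\ an induced $K_{t,t}$, a contradiction.

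The main obstacle is the degenerate case that this extraction must also dispose of: the case in which the chosen witnesses $n(v)$, $v\in J$, are pairwise \emph{adjacent} (so they form a large clique and cannot be the independent side of a biclique), or in which many of them coincide in one vertex (a large induced star). Neither configuration is forbidden in a graph of small induced matching treewidth, and no re-choice of the witnesses makes it vanish in general. The point is rather that such locally clique-like or star-like pieces are \emph{harmless} for the tree-independence number — a clique or a star contributes only $O(1)$ to the independence number of a bag — and the real content of a proof would be an absorption lemma: instead of the crude decomposition $\big(T,\{N[X_t]\}\big)$ one should modify $\cT$ locally, collapsing each such near-clique or near-star that straddles a bag (using that $\yw$ and $\tin$ are monotone under contractions, \cref{prop:induced-minors}), while keeping $\mu$ under control. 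Making this local surgery compatible with the global tree structure, so that the parameters multiply rather than iterate uncontrollably, is the step I expect to be hard. If a direct surgery proves too delicate, two alternative routes worth pursuing are: (i) deducing the statement from the sim-width analogue of \cref{thm:tw-biclique} (recall that bounded $\yw$ implies bounded sim-width), upgrading ``treewidth'' to ``tree-independence number'' under the weaker induced hypothesis; and (ii) going through the identity $\yw(G)=\tin(L^2(G))$ of \cref{lem:Yolov}, translating induced-$K_{t,t}$-freeness of $G$ into a forbidden-induced-subgraph condition on $L^2(G)$ to which a structural result on graphs of bounded tree-independence number can be applied.
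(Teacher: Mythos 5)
There is nothing in the paper to compare against: \cref{conj:imt-tin} is stated and left \emph{open} — the authors offer no proof, only the remark that it would be a common generalization of \cref{thm:delta} and \cref{thm:tw-biclique}. So the question is whether your proposal closes the conjecture on its own, and it does not. Your opening reduction — pass to $\cT'=(T,\{N[X_t]\}_{t\in V(T)})$ and show $\alpha_G(\cT')\le r(k,t)$ — is not merely hard in a degenerate case; it is false. Take $G=K_{1,m}$ with centre $c$ and the clique-tree decomposition with bags $\{c,v\}$: then $\yw(G)=\tin(G)=1$, $G$ has no induced $K_{2,2}$, yet $N[\{c,v\}]=V(G)$ has independence number $m$. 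You recognize this (the ``large induced star'' case), but that recognition means the local statement you reduce to cannot be the right one, and everything after it is conditional on an ``absorption lemma'' that is never formulated precisely, let alone proved. That surgery step is exactly where the content of the conjecture lies.

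Two further cautions about the repairs you sketch. First, \cref{prop:induced-minors} gives monotonicity of $\yw$ under contractions, but the hypothesis ``no \emph{induced} $K_{t,t}$'' is not preserved under edge contraction (contracting can create new induced bicliques), so you cannot freely contract near-cliques or near-stars and keep both hypotheses; you would also need to control $\tin$ of the original graph in terms of $\tin$ of the contracted one, which goes in the wrong direction. Second, route (i) via \cref{thm:tw-biclique} or its sim-width analogue only applies when $K_{t,t}$ is excluded as a \emph{subgraph}; under the weaker induced exclusion the graph can contain arbitrarily large bicliques as subgraphs (e.g.\ half-graphs), and that is precisely the regime the conjecture is about. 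The Ramsey extraction in your second paragraph is a sensible ingredient for a ``large independent set in $N[X_t]$ forces a large touching induced matching, an induced $K_{t,t}$, or a structured clique/star piece'' trichotomy, and such a trichotomy may well be the right first step — but as written the argument establishes nothing beyond what \cref{thm:delta} and \cref{thm:tw-biclique} already give.
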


\medskip
\begin{sloppypar}
The notion of $(\tw,\omega)$-boundedness can be seen as a variant of the extensively studied \hbox{\emph{$\chi$-boundedness}}, where we ask for which classes the chromatic number is upper-bounded by a function of the clique number~\cite{DBLP:journals/jgt/ScottS20}. We believe that graphs of bounded induced matching treewidth are $\chi$-bounded.
\end{sloppypar}

\begin{conjecture}
For any two integers $k,c \in \mathbb{N}$ there exists an integer $r$ such that each graph with induced matching treewidth at most $k$ and clique number at most $c$ has chromatic number at most $r$.
\end{conjecture}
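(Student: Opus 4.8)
The plan is to split the argument into a (routine) per-bag estimate and a (genuinely hard) global combining step. First I would apply \cref{thm:approximate-yolov} to obtain a tree decomposition $\cT=(T,\{X_t\}_{t\in V(T)})$ of $G$ with $\mu(\cT)\le 8k$, which we may assume is nice and rooted. The key local observation is that each bag induces a graph with no large induced matching: any induced matching of $G$ whose edges all lie inside $X_t$ in particular touches $X_t$, so it has at most $8k$ edges. Hence $G[X_t]$ is $(8k+1)K_2$-free, and since $\omega(G[X_t])\le\omega(G)\le c$, the classical fact that graphs with no induced $sK_2$ are $\chi$-bounded (going back, for $s=2$, to Wagon, with a polynomial bound in $\omega$) gives $\chi(G[X_t])\le g(k,c)$ for some function $g$; the same bound holds for every adhesion $X_s\cap X_t$.

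The difficulty --- and the step I expect to be \textbf{the main obstacle} --- is to merge these bounded bag-colourings into a single proper colouring of $G$ using a number of colours depending only on $k$ and $c$. In the bounded-treewidth and chordal cases one wins because the relevant separators (adhesions) are cliques and one can use the clique-sum identity $\chi(G_1\oplus_K G_2)=\max(\chi(G_1),\chi(G_2))$; here the adhesions are arbitrarily large and are not cliques, and turning them into cliques destroys the bound on $\omega$. A naive top-down greedy extension of colourings along $T$ also does not obviously work: passing from a bag to a child requires extending an already-fixed colouring of their (fat) adhesion, and a careless choice of colours forces fresh colours at every level, so whether the total stays bounded depends on making globally compatible choices at all bags at once. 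Thus the hypothesis $\yw(G)\le k$ must be used beyond the per-bag estimate, to constrain how long ``chains'' of overlapping fat adhesions can be.

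Two routes seem worth trying. The first is to reduce to the biclique-free case: by \cref{thm:tw-biclique} (or, with a better bound, via \cref{conj:imt-tin}, should it be proved) a graph of bounded induced matching treewidth with no $K_{t,t}$ subgraph has bounded treewidth, hence bounded chromatic number, so it would remain to account for large bicliques --- which, as $\omega(G)\le c$, are induced bipartite subgraphs --- for instance by contracting inclusion-maximal induced bicliques (legitimate since $\yw$ is induced-minor-monotone by \cref{prop:induced-minors}) and arguing the quotient is still well-structured and controls $\chi(G)$, or by a Gy\'arf\'as-path/levelling argument governing how such dense chunks attach to the rest of the graph. The second route is to prove a self-contained combining lemma --- roughly, that a graph admitting a tree decomposition whose bags and adhesions all lie in a fixed polynomially $\chi$-bounded class, and whose adhesions moreover have no induced $sK_2$, is $\chi$-bounded --- for instance via a BFS-layered colouring in which a bounded number of consecutive layers are coloured together and a constant number of such layer-colourings are combined cyclically. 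In either route the bottleneck is the same: controlling the colours forced across a long path of $T$ through many overlapping fat adhesions, which is precisely where a new idea appears to be needed.
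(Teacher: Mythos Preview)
The statement you are attempting is not a theorem in the paper but a \emph{conjecture}: it appears in \cref{sec:conclusion} as an open problem, and the paper gives no proof of it. There is therefore nothing on the paper's side to compare your proposal against.

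Your outline is not a proof and you say so yourself. The per-bag step is correct: an induced matching of $G[X_t]$ is in particular an induced matching of $G$ touching $X_t$, so $G[X_t]$ is $(8k+1)K_2$-free, and together with $\omega\le c$ this gives $\chi(G[X_t])$ bounded by a function of $k$ and $c$ (Wagon and its extensions). You also correctly isolate the genuine obstacle, namely that bounded $\chi$ on each bag does not obviously propagate to bounded $\chi$ on $G$ when adhesions are not cliques; neither of your two suggested routes currently closes this gap. In particular, the first route via \cref{thm:tw-biclique} only handles the $K_{t,t}$-subgraph-free case, and the reduction you sketch (contracting maximal induced bicliques) does not preserve the clique-number bound and offers no control on how many colours the quotient needs relative to $G$; the second route asks for a combining lemma that, as far as I know, is itself open at this level of generality. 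So what you have is a reasonable research plan for an open problem, not a proof, and the paper's authors are in the same position.
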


\medskip
Finally, we believe that it would be interesting to explore for what natural classes of graphs induced matching treewidth is bounded. We remark that such results are known for the closely related tree-independence number~\cite{abrishami2023tree,dallard2022secondpaper}.

\end{document}